\newtheorem{thm}{Theorem}[section]
\newtheorem{lemma}[thm]{Lemma}
\newtheorem{prop}[thm]{Proposition}
\newtheorem{definition}[thm]{Definition}
\newtheorem{remark}[thm]{Remark}
\newcommand{\NN}{\nonumber}
\newcommand{\ve}[1]{\textit{\textbf{#1}}}
\newcommand{\abs}[1]{\left|{#1}\right|}
\newcommand{\sign}[1]{\textnormal{sign}\hspace{-0.1em}\left(#1\right)}
\renewcommand{\H}[2]{\textnormal{H}_{#1}\hspace{-0.2em}\left(#2\right)}
\newcommand{\Hma}[2]{\textnormal{H}_{#1}\left(#2\right)}
\newcommand{\M}[2]{\textnormal{M}\hspace{-0.2em}\left(#1,#2\right)}
\newcommand{\Mma}[2]{\textnormal{M}\left(#1,#2\right)}
\newcommand{\Mp}[2]{\textnormal{M}^+\hspace{-0.2em}\left(#1,#2\right)}
\renewcommand{\S}[2]{\textnormal{S}_{#1}\hspace{-0.2em}\left(#2\right)}
\renewcommand{\SS}[3]{\textnormal{S}_{#1}\hspace{-0.2em}\left(#2;#3\right)}
\newcommand{\R}{\mathbb R}
\newcommand{\C}{\mathbb C}
\newcommand{\Z}{\mathbb Z}
\newcommand{\N}{\mathbb N}
\newcommand{\Mvec}{{\bf \rm M}}
\newcommand{\ie}{i.e.,\ }
\newcommand{\eg}{e.g.,\ }
\renewcommand{\Re}{\operatorname{Re}}
\renewcommand{\Im}{\operatorname{Im}}
\newcommand{\ep}{\varepsilon}
\let\set\mathbb
\newcommand{\shuffle}{\, \raisebox{1.2ex}[0mm][0mm]{\rotatebox{270}{$\exists$}} \,}
\newcounter{mmacnt}
\def\restartmma{\setcounter{mmacnt}{0}}
\newenvironment{mma}{
 \par\smallskip
 \catcode`|=\active
 \parskip=0pt\parindent=0pt 
 \small
 \def\In##1\\{%
   \def\linebreak{\hfill\break\null\qquad}%
   \refstepcounter{mmacnt}
   \hangindent=2.5em\hangafter=0
   \leavevmode
   \llap{\tiny\sffamily In[\arabic{mmacnt}]:=\kern.5em}%
   \mathversion{bold}\footnotesize$\displaystyle##1$\normalsize
   \mathversion{normal}\par
 }%
 \def\Print##1\\{%
   \def\linebreak{\hfill\break}%
   \hangindent=2.5em\hangafter=0
   \leavevmode ##1\par}%
 \def\Out##1\\{%
   \def\linebreak{$\hfill\break\null\hfill$}%
   \kern\abovedisplayskip\par
   \hangindent=2.5em\hangafter=0
   \leavevmode
   \llap{\tiny\sffamily Out[\arabic{mmacnt}]=\kern.5em}
   \footnotesize$\displaystyle##1$\normalsize\hfill\null\par
   \kern\belowdisplayskip
 }%
 \def\Warning##1##2\\{%
   \def\linebreak{\hfill\break}%
   \hangindent=2.5em\hangafter=0
   \leavevmode
   {\scriptsize##1 : ##2}\par}%
}{%
 \par\smallskip
}
\newenvironment{fshaded}{%
\MakeFramed {\FrameRestore}
}%
{\endMakeFramed}
\begin{document}
\setlength{\baselineskip}{0.515cm}
\sloppy
\thispagestyle{empty}
\begin{flushleft}
DESY 12--210
\hfill 
\\
DO--TH 13/01\\
SFB/CPP-12-91\\
LPN 12-125\\
January 2013\\
\end{flushleft}

\mbox{}
\vspace*{\fill}
\begin{center}

{\LARGE\bf Analytic and Algorithmic Aspects of\\[0.4cm]
Generalized\hspace*{-0.1cm} Harmonic\hspace*{-0.1cm} Sums and Polylogarithms}

\vspace{4cm}
\large
Jakob Ablinger$^a$,  Johannes Bl\"umlein$^b$, and Carsten Schneider$^a$

\vspace{1.5cm}
\normalsize
{\it $^a$~Research Institute for Symbolic Computation (RISC),\\
                          Johannes Kepler University, Altenbergerstra\ss{}e 69,
                          A--4040, Linz, Austria}\\

\vspace*{3mm}
{\it  $^b$ Deutsches Elektronen--Synchrotron, DESY,}\\
{\it  Platanenallee 6, D-15738 Zeuthen, Germany}
\\

\end{center}
\normalsize
\vspace{\fill}
\begin{abstract}
\noindent In recent three--loop calculations of massive Feynman integrals within
Quantum Chromodynamics (QCD) and, e.g., in recent combinatorial problems the
so-called generalized harmonic sums (in short $S$-sums) arise. They are characterized
by rational (or real) numerator weights also different from $\pm 1$. In this article
we explore the algorithmic and analytic properties of these sums systematically. We
work out the Mellin and inverse Mellin transform which connects the sums under
consideration with the associated Poincar\'{e} iterated integrals, also called
generalized harmonic polylogarithms. In this regard, we obtain explicit analytic
continuations by means of asymptotic expansions of the $S$-sums which started to
occur frequently in current QCD calculations. In addition, we derive algebraic and
structural relations, like differentiation w.r.t. the external summation index and
different multi-argument relations, for the compactification of $S$-sum
expressions. Finally, we calculate algebraic relations for infinite $S$-sums,
or equivalently for generalized harmonic polylogarithms evaluated at special values.
The corresponding algorithms and relations are encoded in the computer algebra
package {\tt HarmonicSums}.
\end{abstract}

\vspace*{\fill}
\noindent
\numberwithin{equation}{section}
\newpage

\section{Introduction}
\label{sec:1}

\vspace*{1mm}
\noindent
In loop calculations of Feynman Diagrams in Quantum Field Theories special classes
of multiply nested sums and iterated integrals appear both in
intermediate steps and in the final results. E.g., this applies both to the massless
case and diagrams characterized by a single mass scale, or in cases two sufficiently
different mass scales appear and one may consider precise power expansions in
their ratio. We will focus in the following on single differential quantities. They
can be characterized either by a dimensional variable $x$ with $x \in \mathbb{R}$,
$0 \leq x \leq 1$ with exponential growth, or an integer variable $N$. Both
representations are equivalent and are related by a Mellin
transform\footnote{Here $f:\set R\to\set R$ can be distribution valued function,
cf.~\cite{YOSIDA}.}
\begin{eqnarray}
\label{eq:MEL1}
\Mvec[f(x)](n) := \int_0^1x^{n}f(x)\,dx.
\end{eqnarray}
The $n$-dependence may occur generically for Feynman diagrams with local
operator insertions which appear, e.g., within the light--cone expansion
\cite{Wilson:1969zs,Zimmermann:1970,Brandt:1972nw,Frishman:1971qn,Blumlein:1996vs}
in deep--inelastic scattering. As has been shown in
Refs.~\cite{Blumlein:2009ta,Blumlein:2010zv,Weinzierl:2010ps}
these Feynman integrals can always be represented by multiply nested sums. However,
in general the specific type of these sums is not easily predicted. It is assumed
that the classes of sums do steadily extend going to higher loops and/or allowing for
more masses and external off-shellness for a growing number of legs.

Systematic analyses
for a large class of single-scale processes at the 2-loop level in the massless case
revealed \cite{Blumlein:2005im,Blumlein:2006rr} that the results for all processes
could be
expressed in terms of six basic functions in Mellin-space only. Here one allows
as well to use argument-duplication and differential relations.
The corresponding functions are multiple harmonic sums.\footnote{In the physics
literature their use dates back to \cite{GonzalezArroyo:1979df}, see also
\cite{Floratos:1981hs,Mertig:1995ny}.}

For a non-negative integer $n$ and nonzero integers $a_i$ $(1\leq
i\leq k)$ the harmonic sums~\cite{Blumlein:1998if,Vermaseren:1998uu} are defined as
\begin{eqnarray}
\label{Equ:HarmonicSumsIntro}
\S{a_1,\ldots ,a_k}n= \sum_{n \geq i_1 \geq i_2 \geq \cdots \geq i_k \geq 1}
\frac{\sign{a_1}^{i_1}} {i_1^{\abs {a_1}}}\cdots
\frac{\sign{a_k}^{i_k}}{i_k^{\abs {a_k}}}.	
\end{eqnarray}
If $a_1\neq1$, the limit $n\to\infty$ exists and the values can be expressed
in terms of multiple zeta values, resp.\ Euler-Zagier values \cite{EZ,Zagier1994};
for further
properties and an extensive list of literature we refer to~\cite{Blumlein:2009cf}.
Likewise, more general classes of sums also result into new classes of special
numbers.

In physics applications harmonic sums emerge for the first time in case of the
fermion-fermion anomalous dimension (the Mellin transform of the corresponding
splitting function) \cite{Fermi:1924tc} as the classical harmonic sum $S_1(n)$ (also called harmonic numbers).
A first non-trivial nested harmonic sum appears at the 2--loop level with
$S_{-2,1}(n)$, \cite{GonzalezArroyo:1979df}.

Harmonic sums satisfy a quasi-shuffle algebra~\cite{Hoffman} and can be formulated
in terms of indefinite nested integrals.
The Mellin transform (\ref{eq:MEL1})
enables one to map harmonic sums to their integral representation in terms of
weighted harmonic polylogarithms, and the inverse transform allows one
to transform these integral representations back to harmonic
sums~\cite{Remiddi:1999ew}. These properties give rise to the analytic continuation
\cite{Blumlein:2000hw,Alekhin:2003ev,Blumlein:2005jg,Kotikov:2005gr,Blumlein:2007dj,
Blumlein:2009fz,Blumlein:2009ta} of harmonic sums. In particular, one can calculate the asymptotic expansion of harmonic
sums~\cite{Blumlein:2009fz,Blumlein:2009ta}; for the special case $a_i\geq1$ see also~\cite{Costermans:05}. Moreover, various relations between
harmonic sums are derived which allows to compactify harmonic sum expressions
coming from QCD calculations. Here in particular the algebraic
relations~\cite{Moch:2001zr,Blumlein:2003gb,Vermaseren:1998uu}
induced by the quasi-shuffle algebra and the structural relations
\cite{Blumlein:2009ta,Blumlein:2009fz} based on the differentiation of harmonic sums
contribute substantially.

Harmonic sums play a prominent role, e.g., in the field of combinatorics or number theory. Here the summation package~\texttt{Sigma}~\cite{Schneider:2007} based on difference field
theory~\cite{Karr1981,Schneider:2005,Schneider:2005b,Schneider:2005c,Schneider:2008} contributed to discover and prove identities such as~\cite{PS:03,CS:04,DPS,Apery,PS:07a}; as a special case these algorithms contain hypergeometric summation~\cite{PWZ:96}.
Meanwhile these algorithms are combined with the \texttt{HarmonicSums} package~\cite{Ablinger:12} (containing the analytic and algebraic properties summarized above) and lead to new packages
\cite{Ablinger:2010pb,Blumlein:2010zv,Blumlein:2012hg,Ablinger:2012ph} dealing with definite multi-sums.

In recent calculations based on these packages
new classes of functions have been arising in intermediate results but also in the final
result of given Feynman integrals. During such calculation but also for further
considerations of the given result it is crucial that the knowledge of these new function
classes is explored to a similar advanced level as the harmonic sums. For instance, the new
class of cyclotomic harmonic sums has been analyzed in detail in~\cite{Ablinger:2011te}.

In particular, the generalized harmonic sums or so-called $S$-sums
\cite{Moch:2001zr} occur in higher order calculations
\cite{Vermaseren:2005qc,Ablinger:2010ha,Ablinger:2010ty,
Ablinger:2012qm,Ablinger:2012sm,ABSW13}.

These sums also may emerge in
connection with cyclotomic index alphabets \cite{Ablinger:2011te} and nested binomial
and inverse binomial weights, cf. \cite{ABSW13,ABRS13}. In the context of statistics or number theory typical calculations can be found, e.g., in~\cite{PS:03,PS:07b,OS:08}. In case of pure harmonic
weights they are defined by
\begin{equation}\label{Equ:SSumsIntro}
	\S{a_1,\ldots ,a_k}{x_1,\ldots ,x_k;n}= \sum_{n\geq i_1 \geq i_2 \geq
        \cdots \geq i_k \geq 1} \frac{x_1^{i_1}}{i_1^{a_1}}\cdots
	\frac{x_k^{i_k}}{i_k^{a_k}},
\end{equation}
with positive integers $a_i$ and (usually) real numbers $x_i\neq0$ $(1\leq i\leq k)$. Note
that restricting $x_i=\pm1$ gives back the class of harmonic
sums~\eqref{Equ:HarmonicSumsIntro}, or in case that $x_i \in \{u\in\set C | u^n  = 1, n
\in \mathbb{N}\}$ the complex representations of the cyclotomic sums
\cite{Ablinger:2011te}. There are mutual applications of these functions
in loop calculations, for a survey see e.g.~\cite{Weinzierl:2006qs}.
Special cases of the $S$-sums are,
cf.~\cite{Moch:2001zr}~: the harmonic sums \cite{Vermaseren:1998uu,Blumlein:1998if} and
the multiple zeta values resp. Euler-Zagier values, cf. e.g.~\cite{Blumlein:2009cf},
the multiple polylogarithms \cite{GON1,Borwein:1999js}, two-dimensional harmonic
polylogarithms including the letters $\{0,1,-1,1-z,-z\}$
\cite{Gehrmann:2000zt,Gehrmann:2001jv}, the harmonic polylogarithms over the alphabet
$\{0,1,-1\}$ \cite{Remiddi:1999ew}, the Nielsen integrals over the alphabets
$\{0,1\}$, resp. $\{0,-1\}$
\cite{Nielsen1909,Kolbig:1969zza,Devoto:1983tc,Kolbig:1983qt}, and the classical
polylogarithms \cite{LEWIN:1958,LEWIN:1981,Devoto:1983tc,Zagier:2007}. Furthermore
\cite{Moch:2001zr}, the generalized hypergeometric functions
$_{P+1}F_P$
\cite{Klein:1906,Bailey:1935,Slater:1966,Erdelyi:1953,Exton:1976,Exton:1978}
the Appell-functions $F_{1,2}$  and the Kamp\'e De F\'eriet functions
\cite{Appell:1925,Appell:1926,Bailey:1935,Slater:1966,Erdelyi:1953,Exton:1976,Exton:1978}
can be represented in terms of $S$-sums. Code-implementations of four general
algorithms being described in \cite{Moch:2001zr} are given in
\cite{Weinzierl:2002hv,Moch:2005uc,Vollinga:2004sn}. In field theoretic calculations
the $\varepsilon$-expansion in the dimensional parameter is of importance and needs
to be solved for the $S$-sums and the analytic continuation in their parameters.
For special classes the $\varepsilon$-expansion has been studied in
Refs.~\cite{Moch:2001zr,
Weinzierl:2004bn,Kalmykov:2006pu,Kalmykov:2007pf,Huber:2005yg,Huber:2007dx,
Ablinger:2010ty,Blumlein:2010zv,Bytev:2011ks}.

In~\cite{Moch:2001zr} it is heavily exploited that these sums satisfy a quasi-shuffle
algebra. $S$--sums form a Hopf algebra
\cite{Hopf1941,MILMO1965,Sweedler1969,Kreimer:1997dp,Connes:1998qv,Cartier2006}, as has
been shown in Ref.~\cite{Moch:2001zr}, see also \cite{Duhr:2012fh}. The quasi-shuffle
property leads to new techniques
to expand certain definite nested sums over hypergeometric terms in terms of $S$-sums.
In addition, it has been worked out in~\cite{Borwein:1999js} that the $S$-sums at infinity
can be expressed in terms of Goncharov's multiple polylogarithms~\cite{GON1}; see
also~\cite{Borwein:1999js}. Moreover, relations of infinite $S$-sums with $x_i$ being
roots of unity have been explored in~\cite{Ablinger:2011te}.

In this article we aim at extending these results further to important technologies that
are currently available for harmonic sums. First we observe that $S$-sums can be expressed
in terms of a variant of multiple polylogarithms involving also the parameter $n$;
sending $n$
to infinity gives back the standard generalized polylogarithms. As a consequence,
$S$-sums can be analytically continued up to countable many poles.
This gives rise to
introduce Poincar\'{e}  iterated integrals~\cite{Poincare1884,LAPPO,CHEN} which
we call generalized harmonic polylogarithms and which extend the harmonic
polylogarithms
from~\cite{Remiddi:1999ew}. It will turn out that an extended version of the Mellin
transform
applied to weighted generalized polylogarithms can be expressed in terms of $S$-sums.
Conversely, the inverse Mellin transform enables one to express a certain class of $S$-sums
in terms of weighted generalized polylogarithms. In addition, we extend the ideas
of~\cite{Remiddi:1999ew} to calculate the series expansion of generalized polylogarithms
where
the coefficients of the expansion are given in terms of $S$-sums. This enables one to
transform and back-transform generalized polylogarithms evaluated at constants in terms
of infinite $S$-sums. In addition, we obtain algorithms to calculate the differentiation of
$S$-sums and calculate the asymptotic expansion of such sums. In general, we obtain complete
algorithms for $S$-sums~\eqref{Equ:SSumsIntro} with $x_i\in[-1,1]$ and $x_i\neq0$.
For arising calculations in QCD we succeeded in extending the algorithms for larger ranges
of $x_i$ by using, e.g, the integral representation of $S$-sums. Finally, we calculate
algebraic, structural and duplication relations for $S$-sums, e.g., for the alphabet
$x_i\in\{-\tfrac{1}{2},\tfrac{1}{2},1,-2,2\}$ to assist the compactification of $S$-sum
expression. This alphabet is of special importance in case of heavy flavor corrections
to deep-inelastic scattering at the 3--loop level. Results for related alphabets
are obtained analogously. In addition, inspired by the work in
Ref.~\cite{Blumlein:2009cf}
for multiple zeta values we deal
with new relations for infinite $S$-sums coming from the relations of finite sums, shuffle
relations from the generalized polylogarithms and duality relations based on the argument
transformations of the generalized polylogarithms.

All the underlying algorithms have been implemented in J.~Ablinger's Mathematica
package {\tt HarmonicSums}. For the available commands illustrated by examples we
refer to Appendix~\ref{App:HarmonicSums}, see also~\cite{Ablinger:12}.

\smallskip

The outline of this article is as follows. In Section~\ref{Sec:BasicProp} we define the
basic properties of $S$-sums with their integral representation and the closely connected
generalized polylogarithms.
In Section~\ref{SSRelatedArguments} we state important argument transformations of
generalized polylogarithms that will be used throughout this article. In
Section~\ref{Sec:PowerSeriesMPL} we show how one can calculate the coefficients of the
series expansion of generalized polylogarithms, and in Section~\ref{SSinfval} we use
this result to derive an algorithm that transforms in both directions generalized
polylogarithms evaluated at constants and $S$-sums at infinity.
In Section~\ref{Sec:AnalyticCont} we elaborate new aspects on the analytic continuation of $S$-sums and the special case of harmonic sums. In Section~\ref{SSMellin}
we define an appropriate Mellin transform for generalized polylogarithms and derive
algorithms that calculate the Mellin transform in terms of $S$-sums and that perform the
inverse Mellin transform in terms of generalized polylogarithms. In
Section~\ref{SSdifferentiation} we show how one can differentiate $S$-sums with the
(inverse) Mellin transform or with the integral representation. In
Section~\ref{Sec:Relation} we explore the algebraic relations induced by the quasi-shuffle
algebra, the structural relations based on the differentiation of $S$-sums and duplication
relations. In Section~\ref{Sec:InfiniteSSums} we consider relations of infinite $S$-sums.
Finally, we present techniques to calculate the asymptotic expansion of $S$-sums in
Section~\ref{Sec:AsymptoticExp}. In conclusion we discuss in
Section~\ref{Sec:QCDExample} examples of generalized harmonic sums emerging in massless
and massive 3-loop calculations and their representation in the complex $n$-plane. In the Appendix we summarize the available commands of the \texttt{HarmonicSums} package which supplement the presented algorithms.

\section{Basic properties of $S$-Sums and generalized polylogarithms}
\label{Sec:BasicProp}

\vspace*{1mm}
\noindent
Throughout this article let $\set K$ be a field containing the rational numbers $\set Q$, let $\set R^+$ be the positive real numbers, and let $\set N=\{1,2,\dots\}$ be the positive integers. For $a_i\in \N$ and\footnote{For a
set $A$ we define $A^*:=A\setminus\{0\}$.} $x_i\in \set K^*$ we define the class of \textit{generalized harmonic sums}~\cite{Moch:2001zr}, in short they are called
\textit{$S$-sums}, by
\begin{equation}\label{Equ:SSumDef}
	\S{a_1,\ldots ,a_k}{x_1,\ldots ,x_k;n}= \sum_{i_1=1}^n\frac{x_1^{i_1}}{i_1^{a_1}}
\sum_{i_2=1}^{i_1}\frac{x_2^{i_2}}{i_2^{a_2}}\dots\sum_{i_k=1}^{i_{k-1}}\frac{x_k^{i_k}}
{i_k^{a_k}};
\end{equation}
for a different writing see~\eqref{Equ:SSumsIntro}. $k$ is called the depth and
$w=\sum_{i=0}^ka_i$ is called the weight of the $S$-sum $\S{a_1,\ldots ,a_k}{x_1,\ldots
,x_k;n}$.

Restricting the $S$-sums to the situation $x_i=\pm 1$ yields to the well understood class
of harmonic sums~\cite{Blumlein:1998if,Vermaseren:1998uu}. For harmonic sums the limit
$n\to\infty$ of~\eqref{Equ:SSumDef} exists if and only if $a_1\neq 1$ or $x_1=-1$;
see~\cite{Zagier1994}. For $S$-sums this result extends as follows; for a detailed proof
see~\cite{Ablinger:12}.
\begin{thm}
Let $a_1, a_2, \ldots a_k \in \N$ and $x_1, x_2, \ldots x_k \in \R^*$ for $k \in \N.$
The $S$-sum $\S{a_1,a_2,\ldots,a_k}{x_1,x_2,\ldots,x_k;n}$ is absolutely convergent,
when
$n\rightarrow \infty$, if and only if one of the following conditions holds:
\begin{itemize}
 \item [1.] $\abs{x_1}<1 \wedge \abs{x_1 x_2}\leq 1 \wedge \ldots \wedge \abs{x_1 x_2
\cdots x_k}\leq 1$
 \item [2.] $a_1>1 \wedge \abs{x_1}=1 \wedge \abs{x_2}\leq 1 \wedge \ldots \wedge
\abs{x_2 \cdots x_k}\leq 1.$
\end{itemize}
In addition the $S$-sum is conditional convergent (convergent but not absolutely
convergent)
if and only if
\begin{itemize}
 \item [3.] $a_1=1 \wedge x_1=-1 \wedge \abs{x_2}\leq 1 \wedge \ldots \wedge
\abs{ x_2 \cdots x_k}\leq 1.$
\end{itemize}
\label{SSconsumthm}
\end{thm}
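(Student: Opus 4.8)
The plan is to reduce everything to the case of positive weights and then induct on the depth $k$, the engine being sharp control of the growth rate of an $S$-sum in its upper summation limit. First I would dispose of absolute convergence. Since $\abs{\prod_{\ell}x_\ell^{i_\ell}/i_\ell^{a_\ell}}=\prod_\ell \abs{x_\ell}^{i_\ell}/i_\ell^{a_\ell}$, the sum $\S{a_1,\ldots,a_k}{x_1,\ldots,x_k;n}$ is absolutely convergent iff the positive-weight sum $\S{a_1,\ldots,a_k}{\abs{x_1},\ldots,\abs{x_k};n}$ converges; the latter has only non-negative terms and so converges iff its partial sums are bounded. Writing $P_j=\abs{x_1\cdots x_j}$ and noting that $P_1=\abs{x_1}=1$ in condition~2 collapses $P_j=\abs{x_2\cdots x_j}$, conditions~1 and~2 together say exactly that $P_j\le1$ for all $j$, together with $P_1<1$ or ($P_1=1$ and $a_1>1$). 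Thus it suffices to prove, for positive weights $y_\ell=\abs{x_\ell}$, that the sum converges iff this holds.

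For the positive-weight statement I would induct on $k$, proving simultaneously the criterion and a two-sided asymptotic bound on $T(i):=\S{a_2,\ldots,a_k}{y_2,\ldots,y_k;i}$, exploiting the recursion $\S{a_1,\ldots}{y_1,\ldots;n}=\sum_{i=1}^n \tfrac{y_1^i}{i^{a_1}}T(i)$. The lemma to carry along is: there exist $R\ge1$, $\beta\in\R$, a non-negative integer $\gamma$, and constants $0<c_1\le c_2$ with $c_1\,R^i i^{-\beta}(\ln i)^\gamma \le T(i)\le c_2\,R^i i^{-\beta}(\ln i)^\gamma$ for large $i$, where $R=\max(1,y_2,y_2y_3,\ldots,y_2\cdots y_k)$ is the largest partial product of the inner weights, and moreover $R>1\Rightarrow\beta\ge1$ while $R=1\Rightarrow\beta=0$. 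This is deduced from the inner hypothesis by the elementary fact that for $c>1$ one has $\sum_{j\le i}\tfrac{c^j}{j^b}(\ln j)^\gamma\sim \tfrac{c}{c-1}\,\tfrac{c^i}{i^b}(\ln i)^\gamma$ (last-term dominated), while for $c=1$ the sum is bounded if $b>1$ and grows like $(\ln i)^{\gamma+1}$ if $b=1$; tracking the three cases $y_2R'\gtrless1$, with $R'$ the depth-$(k-1)$ rate, closes the induction and yields $\beta\ge1$ whenever $R>1$.

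Granting the lemma, the outer summand satisfies $\tfrac{y_1^i}{i^{a_1}}T(i)\asymp \tfrac{(y_1R)^i}{i^{a_1+\beta}}(\ln i)^\gamma$, and since $y_1R=\max_j P_j=:P_{\max}$ the trichotomy is immediate. If $P_{\max}<1$ the terms decay geometrically (convergence, a sub-case of condition~1); if $P_{\max}>1$ they blow up, and divergence is forced by the lower bound obtained from the tuples $i_1=\cdots=i_j=i$, $i_{j+1}=\cdots=i_k=1$, which already gives $\sum_i P_j^i/i^{A}=\infty$ with $A=a_1+\cdots+a_j$. The genuine work is the boundary $P_{\max}=1$: there $y_1R=1$ makes the summand $\asymp (\ln i)^\gamma/i^{a_1+\beta}$, convergent iff $a_1+\beta>1$. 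By the lemma $R>1$ forces $\beta\ge1$, so $a_1+\beta\ge2$ and we converge — this is the case $P_1=y_1<1$ of condition~1. When $R=1$ we have $\beta=0$ and $P_1=y_1=1$, so convergence holds iff $a_1>1$ (condition~2), while $a_1=1$ gives the harmonic divergence $\sum T(i)/i\ge c_1\sum 1/i$. This is exactly the assertion that, at the boundary, divergence occurs precisely for $P_1=1$ and $a_1=1$, completing the absolute-convergence part.

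Finally I would treat conditional convergence (condition~3: $a_1=1$, $x_1=-1$, $P_j\le1$ for $j\ge2$). Non-absolute-convergence is immediate, since the absolute sum has $\abs{x_1}=1$, $a_1=1$, failing conditions~1--2. For convergence I would apply Abel summation to $\sum_i \tfrac{(-1)^i}{i}T(i)$ using $B_i=\sum_{\ell\le i}(-1)^\ell/\ell=-\ln2+\rho_i$ with $\abs{\rho_i}\le C/i$: the $-\ln2$ piece telescopes and cancels the boundary term up to $\rho_nT(n)=O((\ln n)^{k-1}/n)\to0$, leaving $-\sum_i\rho_i\,\Delta T(i)$ with $\Delta T(i)=\tfrac{x_2^{i+1}}{(i+1)^{a_2}}\S{a_3,\ldots,a_k}{x_3,\ldots,x_k;i+1}$; the extra $\abs{\rho_i}\le C/i$ raises the leading power from $a_2$ to $a_2+1>1$, so this residual is dominated by the convergent $\S{a_2+1,a_3,\ldots,a_k}{\abs{x_2},\ldots,\abs{x_k};\infty}$ (condition~2, as $\abs{x_2}\le1$), giving convergence. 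The hard part throughout is the asymptotic lemma: pinning $\beta$ with no $\ep$-loss (in particular $\beta\ge1$ once $R>1$) is precisely what separates the boundary cases; and to exclude conditional convergence when some $P_j>1$ or when $x_1=1,\,a_1=1$, one further needs that the leading coefficient of the \emph{signed} inner sum is non-zero, so that $\abs{u_{i_1}}\not\to0$ or the harmonic tail survives — this rests on the last-term dominance of geometric-type series of ratio-modulus $>1$ and, at depth one, on $\Li_a(x)\neq0$ for real $x\in[-1,1]\setminus\{0\}$.
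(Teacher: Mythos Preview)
The paper does not actually prove this theorem in the text; it states the result and refers the reader to Ablinger's thesis~\cite{Ablinger:12} for the details. So there is no in-paper argument to compare against, and I can only assess your proposal on its own terms.

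Your treatment of absolute convergence is sound. Reducing to positive weights $y_\ell=|x_\ell|$ and carrying the two-sided growth estimate $T(i)\asymp R^i i^{-\beta}(\ln i)^\gamma$ through the depth induction is clean; the key bookkeeping point $R>1\Rightarrow\beta\ge1$ (because the new $\beta$ is $a_2+\beta'\ge1$) versus $R=1\Rightarrow\beta=0$ is exactly what separates the boundary cases, and your trichotomy on $P_{\max}=y_1R$ recovers conditions~1 and~2 correctly. The Abel-summation argument that condition~3 implies convergence is also correct: the constant $-\ln 2$ telescopes against $T(n)-T(1)$, the boundary piece $\rho_nT(n)$ dies by the poly-log bound on $|T(n)|$, and the extra $1/i$ from $|\rho_i|$ pushes the residual series into an absolutely convergent $(k{-}1)$-depth sum via condition~2.

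The genuine gap is the converse for conditional convergence---showing the \emph{signed} sum diverges whenever conditions 1--3 all fail. Both residual cases (some $P_j>1$, or $x_1=1,\,a_1=1$ with all $P_j\le1$) reduce, in your sketch, to the non-vanishing of the leading coefficient of the signed inner sum: in the second case you need $L:=\S{a_2,\ldots,a_k}{x_2,\ldots,x_k;\infty}\neq 0$ so that $\sum T(i)/i\sim L\sum 1/i$ diverges, and in the first you need the signed asymptotic constant $c\neq0$ so that the terms do not tend to zero. You only establish this at depth one ($\Li_a(x)\neq0$ on $[-1,1]\setminus\{0\}$), and that argument does not obviously propagate. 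Worse, if $L=0$ were possible with $|x_2|<1$, then $T(i)\to0$ geometrically and $\sum T(i)/i$ would converge, contradicting the theorem---so either you must prove non-vanishing in general, or find an argument that bypasses it (for instance, by extracting a provably divergent piece via a deeper Abel decomposition rather than relying on the limit value). As written, sufficiency of conditions 1--3 is complete, but necessity is not.
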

\noindent
If a $S$-sum is finite, i.e., the limit $n\to\infty$ exists, we define
$$\S{a_1,\ldots ,a_k}{x_1,\ldots ,x_k;\infty}:=\lim_{n\rightarrow \infty}
\S{a_1,\ldots ,a_k}{x_1,\ldots ,x_k;n}.$$

\noindent An other crucial property is that for the product of two $S$-sums with the
same
upper summation limit the following quasi-shuffle relation
holds~(see~\cite{Moch:2001zr}): for $n\in \N,$
\begin{eqnarray}
	&&\S{a_1,\ldots ,a_k}{x_1,\ldots ,x_k;n}\S{b_1,\ldots ,b_l}{y_1,\ldots ,y_l;n}=\nonumber\\
	&&\hspace{2cm}\sum_{i=1}^n \frac{a_1^i}{i^{a_1}}\S{a_2,\ldots ,a_k}{x_2,\ldots ,x_k;i}\S{b_1,\ldots ,b_l}{y_1,\ldots ,y_l,i} \nonumber\\
	&&\hspace{2cm}+\sum_{i=1}^n \frac{b_1^i}{i^{\abs {b_1}}}\S{a_1,\ldots ,a_k}{x_1,\ldots ,x_k,i}\S{b_2,\ldots ,b_l}{y_2,\ldots ,y_l;i} \nonumber\\
	&&\hspace{2cm}-\sum_{i=1}^n \frac{(a_1\cdot b_1)^i}{i^{a_1+b_1}}\S{a_2,\ldots ,a_k}{x_2,\ldots ,x_k,i}\S{b_2,\ldots ,b_l}{y_2,\ldots ,y_l;i}.
	\label{SSsumproduct}
\end{eqnarray}

Since the right hand side consists of $S$-sums where at least one component of the product has smaller depth, this formula applied iteratively leads to a linear combination of $S$-sums with integer coefficients. The underlying quasi-shuffle algebra induced by this product property will be exploited further in Section~\ref{Sec:Relation}.

Finally, we will exploit heavily the fact that each $S$-sum can be written as an indefinite nested integral. Namely, for $\S{1}{b;n}$ with $b\in\R^*$ we have
$$\S{1}{b;n}=\int_0^{b}{\frac{x_1^n-1}{x_1-1}dx_1}$$
and by induction on $m\geq1$
we obtain
$$\S{m}{b;n}=\int_0^b{\frac{1}{x_m}\int_0^{x_m}{\frac{1}{x_{m-1}} \cdots \int_0^{x_3}{\frac{1}{x_2}\int_0^{x_2}{\frac{x_1^n-1}{x_1-1}dx_1}dx_2}\cdots dx_{m-1}}dx_m}.$$

Now consider, e.g., the sum $\S{2,1}{a,b;n}$ with $a,b\in\R^*.$ Due to the integral representation for $S$-sums with depth 1 we obtain
\begin{eqnarray*}
 \S{2,1}{a,b;n}&=&\sum_{j=1}^n\frac{a^j}{j^2}\S{1}{b;j}\\
	&=&\sum_{j=1}^n\frac{a^j}{j^2}\int_{0}^b\frac{x^j-1}{x-1}dx \\
	&=&\int_{0}^b{\frac{1}{x-1}\sum_{j=1}^n\frac{a^j}{j^2}\left(x^j-1\right)dx}\\
	&=&\int_{0}^b{\frac{1}{x-1}\left(\S{2}{a\;x;n}-\S{2}{a;k}\right)dx}\\	 &=&\int_{0}^b{\frac{1}{x-1}\int_a^{ax}{\frac{1}{y}\int_0^y{\frac{z^n-1}{z-1}dz}dy}dx}\\	 &=&\int_0^{ab}{\frac{1}{x-a}\int_a^{x}{\frac{1}{y}\int_0^y{\frac{z^n-1}{z-1}dz}dy}dx}.
\end{eqnarray*}

\noindent Looking at this example one arrives straightforwardly at the following theorem; the proof follows by induction on $k$ and the integral representations of $S$-sums with depth $k=1$.
\begin{thm}\label{SSintrep}
Let $m_i\in\N,$ $b_i\in\R^*$ and define
\begin{eqnarray*}
&&I_{m_1,m_2,\ldots,m_k}(b_1,b_2,\ldots,b_k;z)=\\
&&\hspace{0.4cm}\int_0^{b_1\cdots b_k}{\frac{dx_{k}^{m_k}}{x_{k}^{m_k}}\int_0^{x_{k}^{m_k}}{\frac{dx_{k}^{m_k-1}}{x_{k}^{m_k-1}} \cdots
\int_0^{x_{k}^3}{\frac{dx_{k}^2}{x_{k}^2}\int_0^{x_{k}^2}{\frac{dx_{k}^1}{x_{k}^1-b_1\cdots b_{k-1}}}}}}\\
&&\hspace{0.4cm}\int^{x_{k}^1}_{b_1\cdots b_{k-1}}{\frac{dx_{k-1}^{m_{k-1}}}{x_{{k-1}}^{m_{k-1}}}\int_0^{x_{{k-1}}^{m_{k-1}}}{\frac{dx_{{k-1}}^{m_{k-1}-1}}{x_{{k-1}}^{m_{k-1}-1}} \cdots
\int_0^{x_{{k-1}}^3}{\frac{dx_{{k-1}}^2}{x_{{k-1}}^2}}}}\int_0^{x_{{k-1}}^2}\hspace{-0.4em}{\frac{dx_{{k-1}}^1} {x_{{k-1}}^1-b_1\cdots b_{k-2}}}\\
&&\hspace{0.4cm}\int^{x_{{k-1}}^1}_{b_1\cdots b_{k-2}}{\frac{dx_{{k-2}}^{m_{k-2}}}{x_{{k-2}}^{m_{k-2}}}\int_0^{x_{{k-2}}^{m_{k-2}}}{\frac{dx_{{k-2}}^{m_{k-2}-1}}{x_{{k-2}}^{m_{k-2}-1}} \cdots
\int_0^{x_{{k-2}}^3}{\frac{dx_{{k-2}}^2}{x_{{k-2}}^2}}}}\int_0^{x_{{k-2}}^2}\hspace{-0.4em}{\frac{dx_{{k-2}}^1} {x_{{k-2}}^1-b_1\cdots b_{k-3}}}\\
&&\vspace{0.1cm}\\
&&\hspace{0cm}\hbox to 0.4\textwidth{}\vdots\\
&&\vspace{0.1cm}\\
&&\hspace{0.4cm}\int^{x_{4}^1}_{b_1b_2b_3}{\frac{dx_{3}^{m_3}}{x_{3}^{m_3}}\int_0^{x_{3}^{m_3}}{\frac{dx_{3}^{m_3-1}}{x_{3}^{m_3-1}} \cdots \int_0^{x_{3}^3}{\frac{dx_{3}^2}{x_{3}^2}\int_0^{x_{3}^2}{\frac{dx_{3}^1}{x_{3}^1-b_1b_2}}}}}\\
&&\hspace{0.4cm}\int^{x_{3}^1}_{b_1b_2}{\frac{dx_{2}^{m_2}}{x_{2}^{m_2}}\int_0^{x_{2}^{m_2}}{\frac{dx_{2}^{m_2-1}}{x_{2}^{m_2-1}} \cdots \int_0^{x_{2}^3}{\frac{dx_{2}^2}{x_{2}^2}\int_0^{x_{2}^2}{\frac{dx_{2}^1}{x_{2}^1-b_1}}}}}\\
&&\hspace{0.4cm}\int^{x_{2}^1}_{b_1}
{\frac{dx_{1}^{m_1}}{x_{1}^{m_1}}\int_0^{x_{1}^{m_1}}{\frac{dx_{1}^{m_1-1}}{x_{1}^{m_1-1}}
\cdots \int_0^{x_{1}^3}{\frac{dx_{1}^2}{x_{1}^2}\int_0^{x_{1}^2}{\frac{\left({x_{1}^1}
\right)^z-1}{x_{1}^1-1}dx_{1}^1}}}}.
\end{eqnarray*}
Then for all $n\in\set N$,
$$\S{m_1,m_2,\ldots,m_k}{b_1,b_2,\ldots,b_k;n}=I_{m_1,m_2,\ldots,m_k}(b_1,b_2,\ldots,b_k;n).$$
\end{thm}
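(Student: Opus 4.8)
The plan is to proceed by induction on the depth $k$, taking as base case $k=1$ the integral representation of depth-one $S$-sums established just above (itself obtained by induction on $m_1$, so that $\S{m_1}{b;n}$ agrees with the single block $I_{m_1}(b;z)$). The computation carried out for $\S{2,1}{a,b;n}$ is the blueprint for the inductive step. The one structural point to keep in mind is that in $I_{m_1,\dots,m_k}$ the \emph{outermost} summation index (the pair $m_1,b_1$) corresponds to the \emph{innermost} block of iterated integrals, while the innermost summation index $(m_k,b_k)$ sits in the outermost block; thus passing from depth $k-1$ to depth $k$ amounts to inserting a new innermost block and rescaling all remaining variables.

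For the inductive step, I would first peel off the outermost sum,
\[
\S{m_1,\dots,m_k}{b_1,\dots,b_k;n}=\sum_{i=1}^n\frac{b_1^{\,i}}{i^{m_1}}\,\S{m_2,\dots,m_k}{b_2,\dots,b_k;i},
\]
which is a clean depth-$(k-1)$ $S$-sum inside the summation. By the induction hypothesis each inner summand equals $I_{m_2,\dots,m_k}(b_2,\dots,b_k;i)$, and crucially the index $i$ enters this iterated integral \emph{only} through the innermost kernel $\frac{t^{\,i}-1}{t-1}$, where $t$ denotes the innermost integration variable. Since the sum over $i$ is finite, I may interchange it with all the integrations and let it act on that kernel alone, giving
\[
\sum_{i=1}^n\frac{b_1^{\,i}}{i^{m_1}}\,\frac{t^{\,i}-1}{t-1}=\frac{1}{t-1}\Bigl(\S{m_1}{b_1 t;n}-\S{m_1}{b_1;n}\Bigr).
\]

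Next I would re-expand the difference $\S{m_1}{b_1 t;n}-\S{m_1}{b_1;n}$ by the depth-one representation: the two terms share an identical nested structure and differ only in the outer integration bound, so their difference collapses to a single block whose outermost integration runs from $b_1$ to $b_1 t$ and whose innermost integrand is the kernel carrying the correct upper index $n$. This installs a brand-new innermost block inside the integral supplied by the induction hypothesis. It then remains to match the expression to the stated normal form, and this is the step I expect to be the main obstacle: one performs the change of variables that rescales every integration variable of the $I_{m_2,\dots,m_k}$ part by $b_1$ (replacing each old variable $v$ by $v/b_1$). A short computation shows this substitution leaves every scale-invariant factor $\tfrac{1}{v}$ unchanged, sends each bound $b_2\cdots b_j$ to $b_1 b_2\cdots b_j$ and each denominator $v-b_2\cdots b_j$ to $v-b_1 b_2\cdots b_j$, and turns $\frac{1}{t-1}$ together with $dt$ into $\frac{dt}{t-b_1}$ while converting the new inner bound $b_1 t$ back into $t$; the outcome is precisely $I_{m_1,\dots,m_k}(b_1,\dots,b_k;n)$.

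The genuine difficulty is thus entirely bookkeeping: tracking how the products $b_1\cdots b_j$ propagate through every bound and every denominator under the global rescaling, and verifying that the block boundaries line up. No analytic subtlety arises from the sum-integral interchange, since the sum is finite; and the iterated integrals are well defined because each potential singularity $v=b_1\cdots b_j$ is cancelled by the adjacent inner block, whose outer limit starts at $b_1\cdots b_j$ and whose value therefore vanishes linearly as $v\to b_1\cdots b_j$.
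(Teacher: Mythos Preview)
Your proposal is correct and follows exactly the approach the paper indicates: the paper does not write out a full proof but states that it ``follows by induction on $k$ and the integral representations of $S$-sums with depth $k=1$,'' pointing to the worked computation of $\S{2,1}{a,b;n}$ as the template. Your inductive step---peeling off the outermost summation, pushing the finite sum through the integrals to act on the innermost kernel, rewriting the result as a difference of depth-one $S$-sums, and then performing a global rescaling by $b_1$---is precisely that template made general, and your identification of the rescaling bookkeeping as the only real work matches the paper's implicit stance.
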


\noindent Note that sending $n\to\infty$ we obtain back multiple polylogarithms stated,
e.g., in~\cite{GON1,Borwein:1999js,Moch:2001zr}.

For later constructions based on analytic reasoning (see also Section~\ref{Sec:AnalyticCont}), we will use the fact that one can transform $S$-sums to a particular form of integral representations, the so-called generalized harmonic polylogarithms using the inverse Mellin transform; see
Section~\ref{SSMellin}. Conversely, with the Mellin transform one can express a weighted generalized harmonic polylogarithm in terms of $S$-sums.
In other words, the $S$-sums are strongly twisted with generalized harmonic polylogarithms which can be defined as follows.

For $a\in\R$ and
\begin{eqnarray*}
q=\left\{
		\begin{array}{ll}
				a, &  \textnormal{if }a>0  \\
				\infty, & \textnormal{otherwise}
		\end{array} \right.
\end{eqnarray*}
we introduce the auxiliary function $f_a:(0,q)\mapsto \R$ by
$$f_a(x)=\left\{
		\begin{array}{ll}
				\frac{1}{x}, &  \textnormal{if }a=0  \\
				\frac{1}{\abs{a}-\sign{a}\,x}, & \textnormal{otherwise}.
		\end{array}
		\right.$$
Then the \textit{generalized harmonic polylogarithms} (in short generalized
polylogarithm) \textnormal{H} of the word $m_1,\dots,m_w$ (resp.\ vector
$\ve m=(m_1,\dots,m_w)$) with $m_i \in \R$ is defined as follows.
For $x\in (0,q)$ with $q:=\displaystyle{\min_{m_i>0}{m_i}}$,
\begin{eqnarray}
\H{}{x}&=&1,\nonumber\\
\H{m_1,m_{2},\ldots,m_w}{x} &=&\left\{
		  	\begin{array}{ll}
			\frac{1}{w!}(\log{x})^w,&
\textnormal{if }(m_1,\ldots,m_w)=(0,\ldots,0)\\
\int_0^x{f_{m_1}(y) \H{m_{2},\ldots,m_w}{y}dy},& \textnormal{otherwise}.
					\end{array} \right.  \nonumber
\end{eqnarray}
Generalized polylogarithms are also dealt with in Refs.~\cite{
Brown:2008um,Brown:2009ta,Bogner:2012dn,Ablinger:2012qm,Ablinger:2012ej} as
hyperlogarithms.
The length $w$ of the vector $\ve m$ is called the weight of the generalized polylogarithm $\H{\ve m}x.$

Restricting $\H{m_1,m_{2},\ldots,m_w}{x}$ to $m_i\in\{-1,0,1\}$ gives the class of
harmonic polylogarithms~\cite{Remiddi:1999ew}.
Typical examples are
\begin{eqnarray*}
\H{1}x&=&\int_{0}^x\frac{1}{1-x_1}dx_1=-\log(1-x)\\
\H{-2}x&=&\int_{0}^x\frac{1}{2+x_1}dx_1=\log(2+x)-\log(2)\\
\H{2,0,-\frac{1}{2}}x&=&\int_{0}^x\frac{1}{2-x_1}\int_{0}^{x_1}\frac{1}{x_2}\int_{0}^{x_2}\frac{1}{\frac{1}{2}+x_3}dx_3dx_2dx_1.
\end{eqnarray*}

A generalized polylogarithm $\H{\ve m}x=\H{m_1,\ldots,m_w}x$ with $q:=\min_{m_i>0}{m_i}$ is an analytic function for $x\in (0,q).$ For the limits $x\rightarrow 0$ and  $x\rightarrow q$ the following holds.
From the definition we have that $\H{\ve m}0~=~0$ if and only if $\ve m\neq \ve 0_w.$
In addition, $\H{\ve m}q$ is finite if and only if one of the following cases holds
    \begin{itemize}
      \item $m_1\neq q$
      \item $w\geq2$, $m_1=1$ and $m_v=0$ for all $v$ with $1<v\leq w.$
    \end{itemize}
We define $\H{\ve m}0:=\lim_{x\rightarrow 0^+} \H{\ve m}x$ and $\H{\ve m}1:=\lim_{x\rightarrow 1^-} \H{\ve m}x$ if the limits exist.\\
Note that for generalized polylogarithms of the form $\H{m_1,\ldots,m_k,1,0\ldots,0}x$ with $c:=\min_{m_i>0}{m_i}$ (and $c>1$) we can extend the definition range from
$x\in (0,1)$ to $x \in (0,c).$ These generalized polylogarithms are analytic functions for $x \in (0,c).$ The limit $x\rightarrow c$ exists if and only if $m_1\neq c.$

Summarizing, a generalized polylogarithm $\H{m_1,m_2,\ldots,m_w}x$ with $q:=\min_{m_i>0}{m_i}$ is finite at $c\in\set R$ for $c\geq0$ if one of the following cases hold:
\begin{enumerate}
\item $c<q$
\item $c=q$ and $m_1\neq q$
\item $(m_1,\ldots,m_w)=(m_1,\ldots,m_k,1,0,\dots,0)$ with $q':=\min_{m_i>0}{m_i}$ and $c<q'$.
\end{enumerate}
Throughout this article, whenever we state that $\H{m_1,m_2,\ldots,m_w}c$ is finite or well defined, we assume that one of the three conditions from above holds.

For the derivatives we have for all $x\in (0,\min_{m_i>0}{m_i})$ that $$ \frac{d}{d x} \H{\ve m}{x}=f_{m_1}(x)\H{m_{2},m_{3},\ldots,m_w}{x}. $$
Besides that, the product of two generalized polylogarithms of the same argument can be expressed using the formula
\begin{equation}
\label{SShpro}
\H{\ve p}x\H{\ve q}x=\sum_{\ve r= \ve p \shuffle \ve q}\H{\ve r}x;
\end{equation}
here the shuffle operation $\ve p \shuffle \ve q$ denotes all merges of $\ve p$ and $\ve q$ in which the relative orders of the elements of $\ve p$ and $\ve q$ are preserved. Thus, similarly to $S$-sums, one can write any product of generalized polylogarithms by a linear combination of generalized polylogarithms over the integers.

As utilized in~\cite{Remiddi:1999ew} for harmonic polylogarithms, we use this shuffle
product to
remove trailing or leading constants in the vector of the arising generalized polylogarithms. E.g., for $a\in\set R$ the shuffle product gives
\begin{align*}
\H{a}x\H{m_1,\ldots,m_w}{x}=&\H{a,m_1,\ldots,m_w}{x} + \H{m_1,a,m_{2},\ldots,m_w}{x}\\
&+\H{m_1,m_{2},a,m_{3},\ldots,m_w}{x} + \cdots + \H{m_1,\ldots,m_w,a}{x},
\end{align*}
or equivalently
\begin{align*}
\H{m_1,\ldots,m_w,a}{x} =& \H{a}{x}\H{m_1,\ldots,m_w}{x} - \H{a,m_1,\ldots,m_w}{x}\\
&-\H{m_1,a,m_{2},\ldots,m_w}{x} -\cdots- \H{m_1,\ldots,a,m_w}{x}.
\end{align*}
If $m_w$ is $a$ as well, we can move the last term to the left and can divide by two. This leads to
\begin{eqnarray}
\H{m_1,\ldots,a,a}{x} &=& \frac{1}{2}\left(\H{a}{x}\H{m_1,\ldots,m_{w-1},a}{a} - \H{a,m_1,\ldots,m_{w-1},a}{x}\right.\nonumber\\
&&\left.-\H{m_1,a,m_{2},\ldots,m_{w-1},a}{x} -\cdots- \H{m_1,\ldots,a,m_{w-1}}{x}\right).\nonumber
\end{eqnarray}
Applying this tactic iteratively leads to a polynomial representation in $H_a(x)$ with coefficients in terms of generalized polylogarithms with no trailing $a$'s. In particular, if a generalized polylogarithm is analytic for $(0,q)$ then also the occurring generalized polylogarithms of the linear combination are analytic for $(0,q)$. A typical example for $a=0$ is, e.g.,
\begin{eqnarray*}
\textnormal{H}_{2,-3,0,0}(x)&=&\frac{1}{2} \textnormal{H}_0(x){}^2 \textnormal{H}_{2,-3}(x)-\textnormal{H}_0(x) \textnormal{H}_{0,2,-3}(x)-\textnormal{H}_0(x) \textnormal{H}_{2,0,-3}(x)\\
&&+\textnormal{H}_{0,0,2,-3}(x)+\textnormal{H}_{0,2,0,-3}(x)+\textnormal{H}_{2,0,0,-3}(x).
\end{eqnarray*}
Note that there is one extra complication: given a
generalized polylogarithm in the form $\H{m_1,\ldots,m_k,1,0\ldots,0}x$
with $1<q:=\min_{m_i>0}{m_i}$ then we cannot use the strategy to extract trailing
zeroes if $x\geq1$ since this would lead
to infinities: The shuffle algebra would suggest to rewrite $\H{2,1,0}{x}$ being analytic for $x\in(0,2)$ as
$\H{0}x\H{2,1}{x}-\H{0,2,1}{x}-\H{2,0,1}{x},$
however for $x>1$ for instance $\H{2,0,1}{x}$ is not defined.  In later considerations these difficulties have to be handled separately.

Completely analogously, leading constants can be removed in generalized polylogarithms.
In addition, using the quasi shuffle algebra~\eqref{SSsumproduct} it is possible to
extract leading (or trailing) indices of a $S$-sum. For example we have for $a\in\N$ and $b\in\set K^*$
\begin{align*}
\textnormal{S}_{a,2,1}\left(b,\tfrac{1}{2},\tfrac{1}{3};n\right)=&\textnormal{S}_a(b;n) \textnormal{S}_{2,1}\left(\tfrac{1}{2},\tfrac{1}{3};n\right)+\textnormal{S}_{2+a,1}\left(\tfrac{b}{2},\tfrac{1}{3};n\right)+\textnormal{S}_{2,1+a}\left(\tfrac{1}{2},\tfrac{b}{3};n\right)\\	 &-\textnormal{S}_{2,a,1}\left(\tfrac{1}{2},b,\tfrac{1}{3};n\right)-\textnormal{S}_{2,1,a}\left(\tfrac{1}{2},\tfrac{1}{3},b;n\right).
\end{align*}
For $S$-sums $\S{a_1,\ldots,a_k}{b_1,\ldots,b_k;n}$ with $a_i\in\N$ and $b_i\in
[-1,1]\setminus\{0\}$ this can be used to get a polynomial representation in $\S{1}n$
with coefficients in terms of convergent $S$-sums, like, e.g.,
\begin{align*}
 \textnormal{S}_{1,1,2}\left(1,1,\tfrac{1}{2};n\right)=&\tfrac{1}{2} \textnormal{S}_1(n){}^2 \textnormal{S}_2\left(\tfrac{1}{2};n\right)+\textnormal{S}_1(n) \left(\textnormal{S}_3\left(\tfrac{1}{2};n\right)-\textnormal{S}_{2,1}\left(\tfrac{1}{2},1;n\right)\right)\\
					&+\tfrac{1}{2} \textnormal{S}_4\left(\tfrac{1}{2};n\right)-\tfrac{1}{2} \textnormal{S}_{2,2}\left(\tfrac{1}{2},1;n\right)+\tfrac{1}{2}\textnormal{S}_{2,2}\left(1,\tfrac{1}{2};n\right)\\
&-\textnormal{S}_{3,1}\left(\tfrac{1}{2},1;n\right)+\textnormal{S}_{2,1,1}\left(\tfrac{1}{2},1,1;n\right).
\end{align*}
For general $S$-sums this strategy usually fails to extract the divergency, \eg in
\begin{align*}
\textnormal{S}_{1,2,1}\left(5,\tfrac{1}{2},\tfrac{1}{3};n\right)=&\textnormal{S}_1(5;n) \textnormal{S}_{2,1}\left(\tfrac{1}{2},\tfrac{1}{3};n\right)+\textnormal{S}_{2,2}\left(\tfrac{1}{2},\tfrac{5}{3};n\right)+\textnormal{S}_{3,1}\left(\tfrac{5}{2},\tfrac{1}{3};n\right)\\	 &-\textnormal{S}_{2,1,1}\left(\tfrac{1}{2},\tfrac{1}{3},5;n\right)-\textnormal{S}_{2,1,1}\left(\tfrac{1}{2},5,\tfrac{1}{3};n\right),
\end{align*}
the sums $\textnormal{S}_1(5;n),\textnormal{S}_{3,1}\left(\tfrac{5}{2},\tfrac{1}{3};n\right)$ and $\textnormal{S}_{2,1,1}\left(\tfrac{1}{2},5,\tfrac{1}{3};n\right)$ are divergent.

\section{Identities between Multiple Polylogarithms of Related Arguments}
\label{SSRelatedArguments}

\vspace*{1mm}
\noindent
In the following we want to look at several special transforms of the argument of generalized polylogarithms which will be useful for later considerations.
In general, we start with a defined generalized polylogarithm $f(x)=\H{m_1,\ldots, m_w}{\frac{a x+b}{cx+d}}$ with particular chosen integers $a,b,c,d$ and $x\in\set R$ and derive representations in terms of generalized polylogarithms in $x$. Since such a transformation holds not only for a point $x\in\set R$, but in a region around $x$, all the transformations have the following additional property: if the input expression $f(x)$ is analytic in a certain region around $x$, also the arising generalized polylogarithms of the transformed representation are analytic in this region.

\subsection{$x+b\rightarrow x$}
\label{SStransformplusxplusb}

\begin{lemma}
Let $m_i \in \R\setminus (0,1)$ and $x>0$ such that the generalized polylogarithm $\H{m_1,\ldots, m_l}{x+1}$ is defined. If $(m_1,\ldots,m_l)\neq (1,0,\ldots,0),$
\begin{align*}
\H{m_1,\ldots, m_l}{x+1}=&
\H{m_1,\ldots, m_l}{1}+\H{m_1-1}x\H{m_2,\ldots, m_l}{1}+\\
&+\cdots+\H{m_1-1,\ldots, m_{l-1}-1}{x}\H{m_l}1+\H{m_1-1,\ldots, m_l-1}{x}.
\intertext{If $(m_1,\ldots,m_l)=(1,0,\ldots,0),$}
\H{m_1,\ldots,m_l}{x+1}=&\H{1,0,\ldots, 0}{1}-\H{0,-1,\ldots, -1}{x}.
\end{align*}
\label{SSeinplustrafo}
\end{lemma}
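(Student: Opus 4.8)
The plan is to argue by induction on the length $l$ of the index vector, in each step peeling off the leading index and integrating. The whole proof rests on one elementary identity for the kernels $f_a$: shifting the argument by $1$ lowers the index by $1$. Concretely, for $m_1\in\R\setminus(0,1)$ with $m_1\neq1$ a short case check (the three sub-cases $m_1=0$, $m_1>1$, $m_1<0$) gives $f_{m_1}(x+1)=f_{m_1-1}(x)$, whereas for the exceptional value $m_1=1$ one finds $f_1(x+1)=\tfrac{1}{1-(x+1)}=-\tfrac1x=-f_0(x)$. This single sign flip is precisely the source of the dichotomy in the statement, and I would isolate it as a preliminary lemma.

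For the inductive step I would set $g(x):=\H{m_1,\ldots,m_l}{x+1}$ and use the differentiation rule together with the fundamental theorem of calculus to write $g(x)=\H{m_1,\ldots,m_l}{1}+\int_0^x f_{m_1}(s+1)\,\H{m_2,\ldots,m_l}{s+1}\,ds$. In the generic case ($m_1\neq1$) I replace $f_{m_1}(s+1)$ by $f_{m_1-1}(s)$ and substitute the induction hypothesis for the length-$(l-1)$ tail $\H{m_2,\ldots,m_l}{s+1}$. Each summand of the hypothesis has the shape $\H{m_2-1,\ldots,m_j-1}{s}\,\H{m_{j+1},\ldots,m_l}{1}$ with a constant second factor, so that $\int_0^x f_{m_1-1}(s)\,\H{m_2-1,\ldots,m_j-1}{s}\,ds=\H{m_1-1,m_2-1,\ldots,m_j-1}{x}$ directly by the iterated-integral definition of $\H{}{}$. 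Assembling these integrals reproduces exactly the claimed linear combination, the constant term $\H{m_1,\ldots,m_l}{1}$ being the $j=0$ contribution; the base case $l=1$ is the same computation with an empty tail. Fixing the constant of integration uses $\H{m_1-1,\ldots,m_j-1}{0}=0$, which holds because in the generic regime the leading shifted index $m_1-1\neq0$, so none of the shifted vectors reduces to $\ve 0$.

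The special case is the companion computation in which $m_1=1$ forces the opposite sign. Here the vector is $(1,0,\ldots,0)$, the tail $\H{0,\ldots,0}{s+1}=\tfrac{1}{(l-1)!}\log^{l-1}(s+1)$ is an explicit power of a logarithm, and $f_1(s+1)=-f_0(s)$ turns the integral into $-\int_0^x f_0(s)\,\H{-1,\ldots,-1}{s}\,ds=-\H{0,-1,\ldots,-1}{x}$, once one records the elementary evaluation $\H{-1,\ldots,-1}{s}=\tfrac{1}{(l-1)!}\log^{l-1}(1+s)$ (itself an easy induction). This yields the second formula, with the constant $\H{1,0,\ldots,0}{1}$ coming again from $g(0)$ and the vanishing $\H{0,-1,\ldots,-1}{0}=0$ pinning it down.

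The main obstacle is the bookkeeping around unit indices, i.e. making the dichotomy watertight, and here two points need care. First, one must check that every polylogarithm written down is finite: the constants $\H{m_{j+1},\ldots,m_l}{1}$ and the shifted integrands must stay within the range where the iterated integrals converge, which is exactly where the hypotheses $m_i\in\R\setminus(0,1)$ and ``$\H{m_1,\ldots,m_l}{x+1}$ is defined'' are consumed. Second, and more delicately, the generic inductive step silently assumed that the tail again satisfies the generic clean-shift formula; this is legitimate only if no tail produced along the way has leading index equal to $1$ unless the entire remaining tail is of the pure form $(1,0,\ldots,0)$, since otherwise the $-f_0$ sign flip would contaminate the assembly. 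Establishing that the definedness hypothesis together with $m_i\in\R\setminus(0,1)$ indeed confines every unit index to such a pure trailing block is the crux of the argument and the place where I would spend the most effort: the borderline vectors ending in $1,0,\ldots,0$ are precisely the ``extra complication'' flagged earlier, and they must be quarantined by the hypotheses so that the sign flip is localized to the genuinely exceptional vector of the second formula.
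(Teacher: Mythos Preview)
Your approach is exactly the paper's: induction on $l$, write $g(x)=g(0)+\int_0^x f_{m_1}(s+1)\,\H{m_2,\ldots,m_l}{s+1}\,ds$, use the kernel shift $f_{m_1}(s+1)=f_{m_1-1}(s)$ for $m_1\neq1$, and insert the induction hypothesis for the tail term by term. The paper's written proof does precisely this (case-splitting $m_1>1$, $m_1<0$, $m_1=0$ at the base, and $m_1\neq0$ versus $m_1=0$ at the inductive step).

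However, the spot you flag as ``the crux'' is a genuine obstruction, and your proposed resolution does not work. You correctly observe that the definedness hypothesis forces any index equal to $1$ to sit in a single trailing block $(1,0,\ldots,0)$; but this does \emph{not} localize the sign flip to the full-vector case. Take $(m_1,m_2,m_3)=(2,1,0)$: all hypotheses of the lemma are satisfied for $0<x<1$, yet running your induction with the special formula for the tail $\H{1,0}{s+1}=\H{1,0}{1}-\H{0,-1}{s}$ yields
\[
\H{2,1,0}{x+1}=\H{2,1,0}{1}+\H{1}{x}\,\H{1,0}{1}\;{-}\;\H{1,0,-1}{x},
\]
whereas the generic formula of the lemma predicts the last term with a \emph{plus} sign (the intermediate term $\H{1,0}{x}\H{0}{1}$ vanishes since $\H{0}{1}=0$). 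A direct derivative check confirms the minus sign is the correct one. So the two-case dichotomy of the stated lemma is incomplete for vectors carrying a \emph{proper} trailing $(1,0,\ldots,0)$ block, and no argument from the stated hypotheses will close this. The paper's own proof passes over this point in silence; the three-case split in the companion Lemmas~\ref{SStransformplusa1}--\ref{SStransformplusa3} for general $b$ is exactly what is needed to accommodate it.
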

\begin{proof}
We proceed by induction on $l.$ For $l=1$  and $m_1>1$ we have:
\begin{eqnarray*}
\H{m_1}{x+1}&=&\int_0^{x+1}\frac{1}{m_1-u}du=\H{m_1}1+\int_1^{x+1}\frac{1}{m_1-u}du\\
&=&\H{m_1}1+\int_0^{x}\frac{1}{m_1-(u+1)}du=\H{m_1}1+\H{m_1-1}x.
\end{eqnarray*}
For $l=1$  and $m_1<0$ we have:
\begin{eqnarray*}
\H{m_1}{x+1}&=&\int_0^{x+1}\frac{1}{\abs{m_1}+u}du=\H{m_1}1+\int_1^{x+1}\frac{1}{\abs{m_1}+u}du\\
&=&\H{m_1}1+\int_0^{x}\frac{1}{\abs{m_1}+(u+1)}du=\H{m_1}1+\H{m_1-1}x.
\end{eqnarray*}
For $l=1$  and $m_1 = 0$ we have:
\begin{eqnarray*}
\H{0}{x+1}&=&\H{0}1+\int_1^{x+1}\frac{1}{u}du=\int_0^{x}\frac{1}{1+u}du=\H{0}1+\H{-1}{x}.
\end{eqnarray*}
Suppose the theorem holds for $l-1.$ If $m_1\neq 0$ we get
\begin{eqnarray*}
\H{m_1,\ldots, m_l}{x+1}&=&\H{m_1,\ldots, m_l}{1}+\int_1^{x+1}{\frac{\H{m_2,\ldots,m_l}{u}}{\abs{m_1}-\sign{m_1}u}}du\\
&=&\H{m_1,\ldots, m_l}{1}+\int_0^{x}{\frac{\H{m_2,\ldots,m_l}{u+1}}{\abs{m_1}-\sign{m_1}{(u+1)}}}du\\
&=&\H{m_1,\ldots, m_l}{1}+\int_0^{x}\frac{1}{\abs{m_1}-\sign{m_1}{(u+1)}}\biggl(\H{m_2,\ldots, m_l}{1}\biggr.\\
&&+\H{m_2-1}x\H{m_3,\ldots, m_l}{1}+\H{m_2-1,m_3-1}x\H{m_4,\ldots, m_l}{1}+\cdots\\
&&\biggl.+\H{m_2-1,\ldots, m_{l-1}-1}{x}\H{m_l}1+\H{m_2-1,\ldots, m_l-1}{x}\biggr)du\\
&=&\H{m_1,\ldots, m_l}{1}+\H{m_1-1}x\H{m_2,\ldots, m_l}{1}\\
&&+\H{m_1-1,m_2-1}x\H{m_3,\ldots, m_l}{1}+\cdots+\\
&&+\H{m_1-1,\ldots, m_{l-1}-1}{x}\H{m_l}1+\H{m_1-1,\ldots, m_l-1}{x}.
\end{eqnarray*}
If $m_1= 0$ we get
\begin{eqnarray*}
\H{0,m_2,\ldots, m_l}{x+1}&=&\H{0,m_2,\ldots, m_l}{1}+\int_1^{x+1}{\frac{\H{m_2,\ldots,m_l}{u}}{u}}du\\
&=&\H{0,m_2,\ldots, m_l}{1}+\int_0^{x}{\frac{\H{m_2,\ldots,m_l}{u+1}}{(1+u)}}du\\
&=&\H{0,m_2,\ldots, m_l}{1}+\int_0^{x}\frac{1}{1+u}\biggl(\H{m_2,\ldots, m_l}{1}+\biggr.\\
&&+\H{m_2-1}x\H{m_3,\ldots, m_l}{1}+\H{m_2-1,m_3-1}x\H{m_4,\ldots, m_l}{1}+\\
&&\biggl.+\cdots+\H{m_2-1,\ldots, m_{l-1}-1}{x}\H{m_l}1+\H{m_2-1,\ldots, m_l-1}{x}\biggr)du\\
&=&\H{0,m_2,\ldots, m_l}{1}+\H{-1}x\H{m_2,\ldots, m_l}{1}\\
&&+\H{-1,m_2-1}x\H{m_3,\ldots, m_l}{1}+\cdots+\\
&&+\H{-1,\ldots, m_{l-1}-1}{x}\H{m_l}1+\H{-1,\ldots, m_l-1}{x}.
\end{eqnarray*}
\end{proof}

The proofs of the following three lemmas are similar to the proof of the previous one, hence we will omit them here.
\begin{lemma}
Let $b>0$, $m_1\in \R \setminus (0,b]$, $m_i \in \R \setminus (0,b)$ for $i\in \{2,3,\ldots l\}$ such that $(m_j,\ldots,m_l)\neq(1,0,\ldots,0)$ for all $j\in \{1,\ldots l\}.$
If $\H{m_1,\ldots, m_l}{x+b}$ is defined for $x>0$,
\begin{eqnarray*}
\H{m_1,\ldots, m_l}{x+b}&=&\H{m_1,m_2,\ldots, m_l}{b}+\H{m_1-b}x\H{m_2,\ldots, m_l}{b}\\
&&+\H{m_1-b,m_2-b}x\H{m_3,\ldots, m_l}{b}+\cdots+\\
&&+\H{m_1-b,\ldots, m_{l-1}-b}{x}\H{m_l}b+\H{m_1-b,\ldots, m_l-b}{x}.
\end{eqnarray*}
\label{SStransformplusa1}
\end{lemma}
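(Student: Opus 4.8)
The plan is to mirror the induction on the depth $l$ that was carried out for the case $b=1$ in Lemma~\ref{SSeinplustrafo}, the only new ingredient being the shift by a general $b>0$. The structural identity driving the whole argument is the pointwise relation
\[
f_{m}(v+b)=f_{m-b}(v)\qquad(v>0),
\]
valid for every admissible index $m$. I would first establish this by a short case check: for $m>b$ one has $f_m(v+b)=\tfrac{1}{m-(v+b)}=\tfrac{1}{(m-b)-v}=f_{m-b}(v)$ since $m-b>0$; for $m<0$ one has $f_m(v+b)=\tfrac{1}{\abs m+v+b}=\tfrac{1}{\abs{m-b}+v}=f_{m-b}(v)$ since $m-b<0$ and $\abs{m-b}=\abs m+b$; and for $m=0$ one gets $f_0(v+b)=\tfrac{1}{v+b}=f_{-b}(v)$. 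The hypothesis $m_1\notin(0,b]$ (and $m_i\notin(0,b)$ for $i\ge 2$) is exactly what guarantees that the shifted letters land in the admissible range, so that each iterated integral appearing below is well defined.

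For the base case $l=1$ I would split the defining integral at $u=b$,
\[
\H{m_1}{x+b}=\int_0^{x+b}f_{m_1}(u)\,du=\H{m_1}{b}+\int_b^{x+b}f_{m_1}(u)\,du,
\]
substitute $u=v+b$ in the last integral and invoke the shift identity to rewrite it as $\int_0^x f_{m_1-b}(v)\,dv=\H{m_1-b}{x}$. For $m_1=0$ the same bookkeeping holds after reading $\H{0}{b}=\log b$ and checking the identity by matching derivatives and the value at $x=0$, exactly as in the $b=1$ proof.

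For the inductive step I assume the claim at depth $l-1$ and again split $\int_0^{x+b}=\int_0^b+\int_b^{x+b}$ in the definition of $\H{m_1,\dots,m_l}{x+b}$. The first piece is $\H{m_1,\dots,m_l}{b}$, the leading term of the asserted formula. After the substitution $u=v+b$ and the shift identity, the second piece becomes $\int_0^x f_{m_1-b}(v)\,\H{m_2,\dots,m_l}{v+b}\,dv$, and here I expand $\H{m_2,\dots,m_l}{v+b}$ by the induction hypothesis into the sum
\[
\H{m_2,\dots,m_l}{b}+\H{m_2-b}{v}\H{m_3,\dots,m_l}{b}+\cdots+\H{m_2-b,\dots,m_l-b}{v}.
\]
Distributing $f_{m_1-b}(v)$ over this sum and integrating term by term, each integral $\int_0^x f_{m_1-b}(v)\,\H{m_2-b,\dots,m_j-b}{v}\,dv$ collapses, by the very definition of the generalized polylogarithm, to $\H{m_1-b,m_2-b,\dots,m_j-b}{x}$, while the trailing factor $\H{m_{j+1},\dots,m_l}{b}$ is constant in $v$ and passes through the integral. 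Collecting terms reproduces precisely the right-hand side of the lemma.

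The only genuinely delicate point, as in Lemma~\ref{SSeinplustrafo}, is not the algebra of the integration but keeping track of definedness. Splitting the integral at $u=b$ and expanding the inner polylogarithm is legitimate only when every factor $\H{m_1-b,\dots,m_j-b}{x}$, every constant $\H{m_{j+1},\dots,m_l}{b}$, and the inner object $\H{m_2,\dots,m_l}{v+b}$ are finite; this is exactly what the restrictions $m_1\notin(0,b]$, $m_i\notin(0,b)$ and the exclusion of any trailing block $(m_j,\dots,m_l)=(1,0,\dots,0)$ secure, the latter being the one configuration for which extracting the shift would produce a divergent constant $\H{1,0,\dots,0}{b}$. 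Verifying that these hypotheses propagate correctly through the induction, so that the depth-$(l-1)$ sub-tuples still satisfy them, is the main bookkeeping obstacle; once that is in place the computation is routine.
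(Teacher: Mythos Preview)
Your proposal is correct and is exactly the approach the paper intends: the paper omits the proof of this lemma, stating only that it is ``similar to the proof of the previous one'' (Lemma~\ref{SSeinplustrafo}), and your induction on $l$ with the integral split at $u=b$, the substitution $u=v+b$, and the termwise integration after invoking the induction hypothesis is precisely that argument carried over from $b=1$ to general $b$. The shift identity $f_m(v+b)=f_{m-b}(v)$ you isolate is the correct unifying observation behind the three separate case computations in the proof of Lemma~\ref{SSeinplustrafo}.
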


\begin{lemma}
Let $(m_1,\ldots,m_l)=(1,0,\ldots,0)$ and let $x>0$ such that $\H{m_1,\ldots, m_l}{x+b}$ is defined. For $1>b>0$ and $1-b>x>0$,
\begin{align*}
\H{m_1,\ldots, m_l}{x+b}=&\H{m_1,m_2,\ldots, m_l}{b}+\H{m_1-b}x\H{m_2,\ldots, m_l}{b}+\H{m_1-b,m_2-b}x\H{m_3,\ldots, m_l}{b}\\
&+\cdots+\H{m_1-b,\ldots, m_{l-1}-b}{x}\H{m_l}b+\H{m_1-b,\ldots, m_l-b}{x}.
\intertext{For $x>0$ and $b\geq 1$,}
\H{m_1,\ldots, m_l}{x+b}
=&\H{m_1,m_2,\ldots, m_l}{b}-\H{m_1-b}x\H{m_2,\ldots, m_l}{b}-\H{m_1-b,m_2-b}x\H{m_3,\ldots, m_l}{b}\\
&-\cdots-\H{m_1-b,\ldots, m_{l-1}-b}{x}\H{m_l}b-\H{m_1-b,\ldots, m_l-b}{x}.
\end{align*}
\label{SStransformplusa2}
\end{lemma}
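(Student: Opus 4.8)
The plan is to prove the identity directly from the defining integral of $\H{m_1,\dots,m_l}{\,\cdot\,}$, keeping the two ranges of $b$ parallel until a single sign separates them at the end. Write the word as $(m_1,\dots,m_l)=(1,0,\dots,0)$, so $m_1=1$ and $m_2=\dots=m_l=0$. First I would split the defining integral at the point $b$,
\[
\H{1,0,\dots,0}{x+b}=\H{1,0,\dots,0}{b}+\int_b^{x+b} f_1(u)\,\H{0,\dots,0}{u}\,du ,
\]
where the inner word carries $l-1$ zeros, and then substitute $u=v+b$ to rewrite the second summand as $\int_0^x f_1(v+b)\,\H{0,\dots,0}{v+b}\,dv$. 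This reduces everything to understanding two things: the inner factor $\H{0,\dots,0}{v+b}$, and the weight $f_1(v+b)$.

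For the inner factor I would invoke Lemma~\ref{SStransformplusa1}, whose hypotheses are satisfied by the all-zeros word (indeed $0\notin(0,b]$ and $0\notin(0,b)$, and no tail of $(0,\dots,0)$ equals $(1,0,\dots,0)$). Since here $m_i-b=-b$ for every $i$, this gives
\[
\H{0,\dots,0}{v+b}=\H{0,\dots,0}{b}+\H{-b}{v}\,\H{0,\dots,0}{b}+\dots+\H{-b,\dots,-b}{v},
\]
one term for each split of the $l-1$ zeros into a shifted prefix (letters $-b$) evaluated at $v$ and a trailing block of zeros evaluated at $b$. (Equivalently one may expand $\bigl(\log(v+b)\bigr)^{l-1}=\bigl(\H{0}{b}+\H{-b}{v}\bigr)^{l-1}$ and use the shuffle relation $\H{-b}{v}^{\,j}=j!\,\H{-b,\dots,-b}{v}$.)

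The crux is the weight $f_1(v+b)=\dfrac{1}{(1-b)-v}$. When $0<b<1$ one has $1-b>0$, so $f_1(v+b)=f_{1-b}(v)$; when $b\ge 1$ one has $1-b\le 0$, so $f_1(v+b)=\dfrac{-1}{(b-1)+v}=-f_{1-b}(v)$ (for $b=1$ this reads $-1/v=-f_0(v)$). Inserting the expansion of the previous paragraph and integrating each summand by the derivative rule $\tfrac{d}{dx}\H{1-b,\ve r}{x}=f_{1-b}(x)\,\H{\ve r}{x}$ converts $\int_0^x f_{1-b}(v)\,\H{-b,\dots,-b}{v}\,dv$ into $\H{1-b,-b,\dots,-b}{x}$; reassembling the terms produces exactly the claimed sum $\sum_j \H{1-b,-b,\dots,-b}{x}\,\H{0,\dots,0}{b}$, with a global $+$ sign for $b<1$ and a global $-$ sign for $b\ge 1$. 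This is precisely the sign difference between the two displayed formulas.

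The main obstacle is not the algebra but the bookkeeping of that single sign together with the domains. For $b<1$ the factors $\H{1-b,\dots}{x}$ have their only positive letter equal to $1-b$ and hence become singular at $x=1-b$, which is exactly why the statement restricts to $1-b>x>0$. For $b\ge 1$ one has $1-b\le 0$, so every letter of the reassembled polylogarithms is non-positive and they are analytic on $(0,\infty)$, leaving $x>0$ unrestricted; here one must additionally check that $\H{1,0,\dots,0}{b}$ and the integrand are genuinely finite even though $v+b$ may exceed $1$, because the pole of $f_1$ at argument $1$ is cancelled by the vanishing of the accompanying $\H{0,\dots,0}{\,\cdot\,}$ factor (the removable-singularity phenomenon recorded in the finiteness discussion above). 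Verifying this finiteness, and noting that the base word $(1)$ with no trailing zeros forces $b<1$ since $\H{1}{x+b}$ is undefined for $b\ge 1$, completes the argument.
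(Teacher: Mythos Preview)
Your proof is correct and follows essentially the same route the paper has in mind: the paper omits the proof and simply refers back to the argument for Lemma~\ref{SSeinplustrafo}, which is precisely the split-at-$b$, substitute $u=v+b$, and expand-the-inner-word pattern you carry out. Your explicit appeal to Lemma~\ref{SStransformplusa1} for the all-zeros inner word, your identification of $f_1(v+b)=\pm f_{1-b}(v)$ according as $1-b>0$ or $1-b\le 0$ as the source of the sign dichotomy, and your treatment of the removable singularity at $u=1$ for $l\ge 2$ (together with the observation that $l=1$ is confined to the $b<1$ case) are all on point.
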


\begin{lemma}
Let $b>0$, $m_1\in \R \setminus (0,b]$, $m_i \in \R \setminus (0,b)$ for $i\in \{2,3,\ldots k\}$ and $(m_{k+1},\ldots,m_l)= (1,0,\ldots,0).$ Let $x>0$ such that $\H{m_1,\ldots, m_l}{x+b}$ is defined. For $1-b>x>0$ and $1>b>0$,
\begin{align*}
H_{m_1,\ldots, m_l}&(x+b)=\H{m_1,m_2,\ldots, m_l}{b}+\H{m_1-b}x\H{m_2,\ldots, m_l}{b}+\cdots+\\
&+\H{m_1-b,\ldots, m_{l-1}-b}{x}\H{m_l}b+\H{m_1-b,\ldots, m_l-b}{x}.
\intertext{For $b\geq 1$,}
H_{m_1,\ldots, m_l}&(x+b)=
\H{m_1,m_2,\ldots, m_l}{b}+\H{m_1-b}x\H{m_2,\ldots, m_l}{b}+\cdots+\\
&+\H{m_1-b,\ldots, m_{k}-b}{x}\H{m_{k+1},\ldots,m_l}b-\Big[
\H{m_1-b,\ldots, m_{k+1}-b}{x}\H{m_{k+2},\ldots,m_l}b+\\
&+\cdots+\H{m_1-b,\ldots, m_{l-1}-b}{x}\H{m_l}b-\H{m_1-b,\ldots, m_l-b}{x}\Big].
\end{align*}
\label{SStransformplusa3}
\end{lemma}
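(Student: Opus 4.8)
The plan is to proceed by induction on the length $k$ of the leading block $(m_1,\dots,m_k)$, following the pattern of the proof of Lemma~\ref{SSeinplustrafo}. In the base case $k=1$ the word reads $(m_1,1,0,\dots,0)$, so that after removing the first letter the remaining tail is exactly $(1,0,\dots,0)$ and Lemma~\ref{SStransformplusa2} applies to it directly. For the inductive step I would first split the defining integral at the point $b$: using the derivative rule $\frac{d}{dy}\H{\ve m}{y}=f_{m_1}(y)\H{m_2,\dots,m_l}{y}$ and the fundamental theorem of calculus,
$$\H{m_1,\dots,m_l}{x+b}=\H{m_1,\dots,m_l}{b}+\int_b^{x+b}f_{m_1}(u)\,\H{m_2,\dots,m_l}{u}\,du,$$
and then substitute $u=v+b$. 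The key elementary computation, checked by distinguishing the cases $m_1>b$, $m_1=0$ and $m_1<0$, is that $f_{m_1}(v+b)=f_{m_1-b}(v)$; this is exactly where the hypothesis $m_1\in\R\setminus(0,b]$ enters, since it keeps $m_1-b$ outside the singular locus and prevents a pole from being crossed at the lower endpoint $u=b$. One thus obtains
$$\H{m_1,\dots,m_l}{x+b}=\H{m_1,\dots,m_l}{b}+\int_0^x f_{m_1-b}(v)\,\H{m_2,\dots,m_l}{v+b}\,dv.$$

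Next I would insert the induction hypothesis for $\H{m_2,\dots,m_l}{v+b}$, a word of the same shape but with a strictly shorter leading block, which expresses it as a signed sum of products $\H{m_2-b,\dots,m_j-b}{v}\,\H{m_{j+1},\dots,m_l}{b}$. Multiplying each summand by $f_{m_1-b}(v)$ and integrating over $v\in(0,x)$, the identity
$$\int_0^x f_{m_1-b}(v)\,\H{m_2-b,\dots,m_j-b}{v}\,dv=\H{m_1-b,m_2-b,\dots,m_j-b}{x},$$
which is just the definition of the generalized polylogarithm, promotes each $v$-polylogarithm by the single letter $m_1-b$, while the constant factors $\H{m_{j+1},\dots,m_l}{b}$ pass through the integral unchanged. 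Prepending the split-at-zero term $\H{m_1,\dots,m_l}{b}$ then reassembles the asserted expansion: the leading block produces the all-plus terms up to $\H{m_1-b,\dots,m_k-b}{x}\H{m_{k+1},\dots,m_l}{b}$, and the trailing $(1,0,\dots,0)$ block produces the bracketed terms.

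The main obstacle is the sign and domain bookkeeping for the trailing block in the regime $b\ge1$. There the naive all-plus decomposition breaks down, because the constant factors $\H{m_{j+1},\dots,m_l}{b}$ built from the tail $(1,0,\dots,0)$ are no longer finite in the elementary sense: the letter $1$ satisfies $1\le b$, so $\H{1,0,\dots,0}{b}$ is not covered by the basic finiteness conditions. Re-expanding these pieces by Lemma~\ref{SStransformplusa2} is precisely what generates the minus signs collected inside the bracket. The admissible ranges reflect this dichotomy: for $0<b<1$ one restricts to $x<1-b$, so that $x+b<1$ keeps every occurring generalized polylogarithm inside its ordinary domain and the all-plus formula holds, whereas for $b\ge1$ one relies on the extended definition (the third finiteness case listed above) that makes $\H{m_1,\dots,m_k,1,0,\dots,0}{\cdot}$ analytic on $(0,c)$, where $c$ is the smallest positive letter occurring in the leading block. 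The delicate part is therefore twofold: checking that each intermediate polylogarithm stays within its domain of definition throughout the induction (in particular that the shifted leading letter remains admissible, so that the shift identity $f_{m_1}(v+b)=f_{m_1-b}(v)$ holds with no spurious sign), and verifying that the sign pattern produced by the repeated application of Lemma~\ref{SStransformplusa2} coincides with the stated bracket. The remaining manipulations are routine and run exactly parallel to the proof of Lemma~\ref{SSeinplustrafo}.
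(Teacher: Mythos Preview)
Your proposal is correct and follows essentially the same approach the paper intends: the paper omits the proof entirely, stating only that it is ``similar to the proof of the previous one'' (Lemma~\ref{SSeinplustrafo}), and your outline is precisely that argument spelled out---split the defining integral at $b$, use $f_{m_1}(v+b)=f_{m_1-b}(v)$, and feed in the expansion of the shorter word, with Lemma~\ref{SStransformplusa2} handling the $(1,0,\dots,0)$ tail and producing the sign flip for $b\ge 1$. Your induction on $k$ is equivalent to the paper's induction on $l$ once one notes that peeling off $m_1$ either shortens the leading block (when $k\ge 2$) or lands exactly on the case of Lemma~\ref{SStransformplusa2} (when $k=1$).
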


\subsection{$b-x\rightarrow x$}
\label{SSbxx}

In this Subsection we assume that $b>0$ and we consider indices $m_i\in \R\setminus
(0,b)$. The following method transforms a defined generalized polylogarithm $H_{m_1,\dots,m_w}(b-x)$ for some $x\in\set R$ in terms of generalized polylogarithm with argument $x$ and evaluations at constants.
Proceeding recursively on the weight $w$ of the generalized polylogarithm we have
\begin{equation}\label{SStrafobx1}
\begin{split}
 \H{m_1}{b-x}&=\H{m_1}b-\H{b-m_1}x\quad \text{for }m_1\neq b,\\
 \H{b}{b-x}&=\H{0}x-\H{0}b.
\end{split}
\end{equation}
Now let us look at higher weights $w>1$. We consider $\H{m_1,m_2,\ldots,m_w}{b-x}$ with $m_i\in \R\setminus(0,b)$ and suppose that we can already apply the transformation for generalized polylogarithms
 of weight $<w$. If $m_1=b,$ we can remove leading $b's$ (see
Section~\ref{Sec:BasicProp}) and end up with generalized polylogarithms without leading $b's$ and powers of $\H{b}{b-x}.$ For the powers of $\H{b}{b-x}$ we
 can use (\ref{SStrafobx1}); therefore, only the cases in which the first index $m_1\neq b$ are to be considered. For $b\neq 0$:
\begin{eqnarray*}
\H{m_1,m_2,\ldots,m_w}{b-x}&=&\H{m_1,m_2,\ldots,m_w}b-\int_0^x\frac{\H{m_2,\ldots,m_w}{b-t}}{\abs{-m_1+b}-\sign{-m_1+b}t}dt.
\end{eqnarray*}
Since we know the transform for weights $<w,$ we can apply it to $\H{m_2,\ldots,m_w}{b-t}$. Linearizing the expression with the shuffle product using the definition of the generalized polylogarithm we finally obtain the required weight $w$ identity.
With the steps above one obtains, e.g.,
\begin{equation*}
\textnormal{H}_{3,-2,-1}(2-x) =-\textnormal{H}_{-1}(x) \textnormal{H}_{-2,-1}(2)+\textnormal{H}_{-1}(2) \textnormal{H}_{-1,4}(x)-\textnormal{H}_{-1,4,3}(x)+\textnormal{H}_{3,-2,-1}(2).
\end{equation*}

\begin{remark}
Analyzing the underlying algorithm we see that in the trans\-form of $\H{m_1,m_2,\ldots,m_w}{b-x}$ we find only one generalized polylogarithm of weight $w$ and argument $x$, namely $\H{-(m_1-b),-(m_2-b),\ldots.-(m_w-b)}{x}$.
All other polylogarithms at argument $x$ that appear have lower weights.
In general the indices of the generalized polylogarithms at argument $x$ that appear in the transformed expression are in the set $\{-(m_1-b),-(m_2-b),\ldots.-(m_w-b)\}.$
Note that if we look at the reverse transform \ie $x\rightarrow b-x$ we can derive a similar property.
\label{SSbxxRemark}
\end{remark}

\subsection{$\frac{1-x}{1+x} \rightarrow x$}
\label{SS1x1x}

\noindent
Next, we present an algorithm that performs the transformation $\frac{1-x}{1+x}\rightarrow x$, like, e.g.,
\begin{align*}
\textnormal{H}_{-2,1}\left(\frac{1-x}{1+x}\right)=&\textnormal{H}_{-1}(1) \left(-\textnormal{H}_{-3}(x)+\textnormal{H}_{-1}(x)\right)+\textnormal{H}_{-3,-1}(x)-\textnormal{H}_{-3,0}(x)+\textnormal{H}_{-2,1}(1)\\
	&-\textnormal{H}_{-1,-1}(x)+\textnormal{H}_{-1,0}(x).
\end{align*}
Here we consider indices with $m_i\in \R\setminus(0,1)$.
Proceeding recursively on the weight $w$ of the generalized polylogarithm we have for $0<\frac{1-x}{1+x}<1$, $y_1<-1$, $-1<y_2<0$ and $y_3>1:$
\begin{eqnarray}
\H{y_1}{\frac{1-x}{1+x}}&=&-\H{0}{-y_1}-\H{-1}{x}+\H{0}{1-y_1}+\H{\frac{1-y_1}{1+y_1}}x\nonumber\\
\H{-1}{\frac{1-x}{1+x}}&=&\H{-1}{1}-\H{-1}x\nonumber\\
\H{y_2}{\frac{1-x}{1+x}}&=&-\H{0}{-y_2}-\H{-1}{x}+\H{0}{1-y_2}-\H{\frac{1-y_2}{1+y_2}}x\nonumber\\
\H{0}{\frac{1-x}{1+x}}&=&-\H{1}{x}+\H{-1}x\label{SStrafo1x1x0}\\
\H{1}{\frac{1-x}{1+x}}&=&-\H{-1}{1}-\H{0}{x}+\H{-1}{x}\label{SStrafo1x1x}\\
\H{y_3}{\frac{1-x}{1+x}}&=&\H{0}{y_3}+\H{-1}{x}-\H{0}{y_3-1}-\H{\frac{1-y_3}{1+y_3}}x.\nonumber
\end{eqnarray}
Now let us look at higher weights $w>1.$ We consider $\H{m_1,m_2,\ldots,m_w}{\frac{1-x}{1+x}}$ with $m_i\in \R\setminus(0,1)$ and suppose that we can already apply the transformation for generalized polylogarithms
 of weight $<w.$ If $m_1=1,$ we can remove leading ones and end up with generalized polylogarithms without leading ones and powers of $\H{1}{\frac{1-x}{1+x}}.$ For the powers of $\H{1}{\frac{1-x}{1+x}}$ we
 can use (\ref{SStrafo1x1x}); therefore, only the cases in which the first index $m_1\neq 1$ are to be considered. For $y_1<0$, $ y_1\neq-1$ and $y_2>1$ we get:
\begin{eqnarray}
\H{y_1,m_2,\ldots,m_w}{\frac{1-x}{1+x}}&=&\H{y_1,m_2,\ldots,m_w}1-\int_0^x\frac{1}{1+t}\H{m_2,\ldots,m_w}{\frac{1-t}{1+t}}dt\nonumber\\
					& &+\int_0^x\frac{1}{\frac{1-y_1}{1+y_1}+t}\H{m_2,\ldots,m_w}{\frac{1-t}{1+t}}dt\nonumber\\
\H{-1,m_2,\ldots,m_w}{\frac{1-x}{1+x}}&=&\H{-1,m_2,\ldots,m_w}1-\int_0^x\frac{1}{1+t}\H{m_2,\ldots,m_w}{\frac{1-t}{1+t}}dt\nonumber\\
\H{0,m_2,\ldots,m_w}{\frac{1-x}{1+x}}&=&\H{0,m_2,\ldots,m_w}1-\int_0^x\frac{1}{1-t}\H{m_2,\ldots,m_w}{\frac{1-t}{1+t}}dt\nonumber\\
		 &&-\int_0^x\frac{1}{1+t}\H{m_2,\ldots,m_w}{\frac{1-t}{1+t}}dt\label{Equ:RestrictedTrans}\\
\H{y_2,m_2,\ldots,m_w}{\frac{1-x}{1+x}}&=&\H{y_2,m_2,\ldots,m_w}1+\int_0^x\frac{1}{1+t}\H{m_2,\ldots,m_w}{\frac{1-t}{1+t}}dt\nonumber\\
					& &-\int_0^x\frac{1}{\frac{1-y_2}{1+y_2}+t}
\H{m_2,\ldots,m_w}{\frac{1-t}{1+t}}dt.
\end{eqnarray}
Note that the transformation~\eqref{Equ:RestrictedTrans} can be only applied if $(m_2,\dots,m_w)$ is not the zero vector. However, by definition the arising special case
$\H{0,\ldots,0}{\frac{1-x}{1+x}}$ reduces to $\H{0}{\frac{1-x}{1+x}}$ which
we handled in (\ref{SStrafo1x1x0}).

Since we know the transform for weights $<w$, we can apply it to $\H{m_2,\ldots,m_w}{\frac{1-t}{1+t}}$.  Linearizing the arising products of generalized polylogarithms with the shuffle product and using the definition of
the generalized polylogarithms we obtain the required weight $w$ identity.

\subsection{$k x\rightarrow x$}

\noindent
\begin{lemma}
 If $m_i \in \R, m_l\neq 0, x\in\R^+$ and $k\in\R^+$ such that the generalized
polylogarithm
\end{lemma}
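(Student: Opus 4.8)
The plan is to establish the clean dilation identity
\[
\H{m_1,\ldots,m_l}{kx}=\H{m_1/k,\ldots,m_l/k}{x},
\]
the point being that scaling the argument of a generalized polylogarithm by a positive factor $k$ should simply rescale every index by $1/k$. I would prove this by induction on the weight $l$, at each step substituting $y=kt$ in the outermost integral of the defining iterated integral and tracking how the one-form $f_{m_1}(y)\,dy$ transforms. The whole argument reduces to a single computation together with the observation that the definition of $\H{\ve m}{x}$ is \emph{homogeneous} under this substitution.

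The base case $l=1$ (here $m_1=m_l\neq 0$) is a direct calculation: from $\H{m_1}{kx}=\int_0^{kx}f_{m_1}(y)\,dy$ the substitution $y=kt$ gives $\int_0^x k\,f_{m_1}(kt)\,dt$, and since $k\in\R^+$,
\[
k\,f_{m_1}(kt)=\frac{k}{\abs{m_1}-\sign{m_1}\,kt}=\frac{1}{\tfrac{\abs{m_1}}{k}-\sign{m_1}\,t}=f_{m_1/k}(t),
\]
using $\abs{m_1/k}=\abs{m_1}/k$ and $\sign{m_1/k}=\sign{m_1}$ for $k>0$; this is precisely $\H{m_1/k}{x}$. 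The same scaling identity $k\,f_{m_1}(kt)=f_{m_1/k}(t)$ also holds when $m_1=0$, where $k\,f_0(kt)=k/(kt)=1/t=f_0(t)$ and $0/k=0$, so the outer one-form always transforms correctly. For the inductive step I would write
\[
\H{m_1,\ldots,m_l}{kx}=\int_0^{kx}f_{m_1}(y)\,\H{m_2,\ldots,m_l}{y}\,dy=\int_0^x f_{m_1/k}(t)\,\H{m_2,\ldots,m_l}{kt}\,dt,
\]
apply the induction hypothesis to replace $\H{m_2,\ldots,m_l}{kt}$ by $\H{m_2/k,\ldots,m_l/k}{t}$, and then recognize the resulting integral as $\H{m_1/k,\ldots,m_l/k}{x}$ directly from the definition of the generalized polylogarithm.

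The main point to be careful about, and the reason the hypothesis $m_l\neq 0$ is indispensable, is that the argument must never fall back on the pure-logarithm branch $\tfrac{1}{w!}(\log x)^w$ of the definition, which does \emph{not} scale cleanly: one has $\H{0,\ldots,0}{kx}=\tfrac{1}{w!}(\log k+\log x)^w$, producing spurious $\log k$ terms. Because $m_l\neq 0$, every suffix $(m_j,\ldots,m_l)$ contains the nonzero entry $m_l$ and is therefore never the zero vector, so the integral branch of the definition applies at every level of the recursion and the induction goes through verbatim (intermediate or leading zeros among $m_1,\ldots,m_{l-1}$ cause no trouble, since $0/k=0$ preserves those indices). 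Finally I would confirm that this is an equality of analytic functions on a common domain: with $q=\min_{m_i>0}m_i$ the left-hand side is defined for $kx\in(0,q)$, while the right-hand side has $\min_{m_i/k>0}(m_i/k)=q/k$ and is defined for $x\in(0,q/k)$, and these ranges coincide under $x\mapsto kx$, so the transformed polylogarithms inherit analyticity on the same region as required for later use.
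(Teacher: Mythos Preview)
Your proof is correct and follows essentially the same approach as the paper: induction on the weight $l$, with the substitution $y=kt$ in the outermost integral and the verification that $k\,f_{m_1}(kt)=f_{m_1/k}(t)$ (treating $m_1=0$ and $m_1\neq 0$ separately). Your added remarks on why $m_l\neq 0$ is needed and on the matching domains of analyticity are sound and slightly more explicit than what the paper records.
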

\noindent
{\it
$\H{m_1,\ldots, m_l}{k\cdot x}$ is defined then we have
\begin{eqnarray}
\H{m_1,\ldots, m_l}{k\cdot x}=\H{\frac{m_1}{k},\ldots,\frac{m_l}{k}}{x}.
\end{eqnarray}}
\begin{proof}
 We proceed by induction on $l.$ For $l=1$ we have:
$$
\H{m_1}{k\cdot x}=\int_0^{kx}\frac{1}{\abs{m_1}-\sign{m_1}u}du=\int_0^x\frac{k}{\abs{m_1}-\sign{m_1}ku}du=\H{\frac{m_1}{k}}x.
$$
Suppose the lemma holds for $l.$ For $m_1\neq 0$ we get
\begin{eqnarray*}
\H{m_1,\ldots, m_{l+1}}{k\cdot x}&=&\int_0^{kx}\frac{\H{m_2,\ldots,m_{l+1}}{u}}{\abs{m_1}-\sign{m_1}u}du
	=\int_0^{x}\frac{\H{m_2,\ldots,m_{l+1}}{ku}}{\abs{m_1}-\sign{m_1}ku}kdu\\
&=&\int_0^{x}\frac{\H{\frac{m_2}{k},\ldots,\frac{m_{l+1}}{k}}{u}}{\frac{\abs{m_1}}{k}-\sign{m_1}u}du=\H{\frac{m_1}{k},\ldots,\frac{m_{l+1}}{k}}{x}.
\end{eqnarray*}
For $m_1 = 0$ we get
\begin{eqnarray*}
\H{0,m_2,\ldots, m_{l+1}}{k\cdot x}&=&\int_0^{kx}\frac{\H{m_2,\ldots,m_{l+1}}{u}}{u}du =\int_0^{x}\frac{\H{m_2,\ldots,m_{l+1}}{ku}}{ku}kdu\\
&=&\int_0^{x}\frac{\H{\frac{m_2}{k},\ldots,\frac{m_{l+1}}{k}}{u}}{u}du=\H{\frac{0}{k},\frac{m_2}{k},\ldots,\frac{m_{l+1}}{k}}{x}.
\end{eqnarray*}
\end{proof}

\noindent The proof of the following lemma is straightforward (compare
\cite{Remiddi:1999ew}, where it is stated for harmonic polylogarithms).
\begin{lemma}
 If $m_i \in \R, m_l\neq 0$ and $x\in\R^-$ such that the generalized polylogarithm $\H{m_1,\ldots, m_l}{-x}$ is defined then we have
\begin{eqnarray}
\H{m_1,\ldots, m_l}{-x}=(-1)^{l-k}\H{-m_1,\ldots, -m_l}{x},
\end{eqnarray}
here $k$ denotes the number of $m_i$ which equal zero.
\label{SStransformminusx}
\end{lemma}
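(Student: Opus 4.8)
The plan is to mimic the proof of the preceding $k\cdot x\to x$ lemma: proceed by induction on the length $l$ of the word, unfold the iterated-integral definition of $\H{m_1,\ldots,m_l}{-x}$, and in each integral perform the substitution $y\mapsto -y$. Throughout, the right-hand side $\H{-m_1,\ldots,-m_l}{x}$ with $x\in\R^-$ is to be read via the same integral representation, i.e.\ as the continuation of the generalized polylogarithm to negative arguments along the segment from $0$ to $x$; the hypothesis that $\H{m_1,\ldots,m_l}{-x}$ is defined guarantees that no pole of the integrands is crossed, so every integral below is well defined.

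For the base case $l=1$ we have $m_1=m_l\neq0$, so $\H{m_1}{-x}=\int_0^{-x}\frac{dy}{\abs{m_1}-\sign{m_1}\,y}$. The substitution $y\mapsto -y$ contributes a factor $-1$ from the Jacobian and turns the denominator into $\abs{m_1}+\sign{m_1}\,y=\abs{-m_1}-\sign{-m_1}\,y$, using $\abs{m_1}=\abs{-m_1}$ and $\sign{m_1}=-\sign{-m_1}$. Hence $\H{m_1}{-x}=-\H{-m_1}{x}=(-1)^{1-0}\H{-m_1}{x}$, matching the claim since here $k=0$.

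For the induction step, assume the identity for all words of length $l$ and consider a word $(m_1,\ldots,m_{l+1})$. If $m_1\neq0$, unfolding the definition and substituting $y\mapsto -y$ as above produces one overall factor $-1$, converts $f_{m_1}$ into $f_{-m_1}$, and replaces the inner polylogarithm $\H{m_2,\ldots,m_{l+1}}{y}$ by $\H{m_2,\ldots,m_{l+1}}{-y}$. Applying the induction hypothesis to this tail (whose number of vanishing indices equals $k$, since $m_1\neq0$) supplies a factor $(-1)^{l-k}$, so the total sign is $(-1)^{1+(l-k)}=(-1)^{(l+1)-k}$, as required. If instead $m_1=0$, the relevant kernel is $f_0(y)=1/y$, which is odd; the Jacobian $-1$ now cancels against $f_0(-y)=-f_0(y)$, so the substitution produces no overall sign. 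The tail now has $k-1$ vanishing indices, so the induction hypothesis contributes $(-1)^{l-(k-1)}=(-1)^{(l+1)-k}$; together with the preserved leading index $-0=0$ this yields $(-1)^{(l+1)-k}\H{0,-m_2,\ldots,-m_{l+1}}{x}$, completing the step.

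The only genuine subtlety, and the point I would check most carefully, is the bookkeeping of the sign $(-1)^{l-k}$ across the two cases: when the leading index is nonzero the Jacobian sign is ``real'' and $k$ is unchanged, whereas when the leading index is $0$ the Jacobian sign is absorbed by the odd kernel $1/y$ but $k$ drops by one in the tail, and these two effects conspire to give the identical exponent $(l+1)-k$ in both cases. Everything else is the routine substitution already used in the $k\cdot x\to x$ lemma.
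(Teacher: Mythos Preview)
Your proof is correct. The paper does not actually supply a proof of this lemma; it only remarks that the proof is straightforward and refers to the harmonic-polylogarithm case in Remiddi--Vermaseren. Your induction on the word length with the substitution $y\mapsto -y$ is precisely the expected argument, and your sign bookkeeping---distinguishing the $m_1\neq 0$ case (Jacobian contributes $-1$, $k$ unchanged) from the $m_1=0$ case (odd kernel absorbs the Jacobian, tail has $k-1$ zeros)---is accurate.
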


\subsection{$\frac{1}{x}\rightarrow x$}
\label{SS1dxx}

\noindent
Here we assume that $0<x<1$.
First, we consider only indices with $m_i\leq0$.
Proceeding recursively on the weight $w$ of the generalized polylogarithm we have for $y<0$ that
\begin{equation*}
\H{y}{\frac{1}{x}}=\H{\frac{1}{y}}{x}-\H{0}x-\H{0}{-y}\quad\text{ and }\quad
\H{0}{\frac{1}{x}}=-\H{0}{x}.
\end{equation*}
Now let us look at higher weights $w>1.$ We consider $\H{m_1,m_2,\ldots,m_w}{\frac{1}{x}}$ with $m_i\leq 0$ and suppose that we can already apply the transformation for generalized polylogarithms
of weight $<w.$ For $m_1<0$ we get (compare \cite{Remiddi:1999ew}):
\begin{eqnarray*}
\H{m_1,m_2,\ldots,m_w}{\frac{1}{x}}&=&\H{m_1,m_2,\ldots,m_w}1+\int_x^1\frac{1}{t^2(-m_1+1/t)}\H{m_2,\ldots,m_w}{\frac{1}{t}}dt\\	 &=&\H{m_1,m_2,\ldots,m_w}1+\int_x^1\frac{1}{t}\H{m_2,\ldots,m_w}{\frac{1}{t}}dt\\
& &-\int_x^1\frac{1}{-\frac{1}{m_1}+t}\H{m_2,\ldots,m_w}{\frac{1}{t}}dt\\
\H{0,m_2,\ldots,m_w}{\frac{1}{x}}&=&\H{0,m_2,\ldots,m_w}1+\int_x^1\frac{1}{t^2(1/t)}\H{m_2,\ldots,m_w}{\frac{1}{t}}dt\\				  &=&\H{0,m_2,\ldots,m_w}1+\int_x^1\frac{1}{t}\H{m_2,\ldots,m_w}{\frac{1}{t}}dt.\\
\end{eqnarray*}
Since we know the transform for weights $<w,$ we can apply it to $\H{m_2,\ldots,m_w}{\frac{1}{t}}$ and finally we obtain the required weight $w$ identity by using the definition of
the generalized polylogarithms.

We can extend this transformation by the following observations. An index $m>0$ in the
index set leads to a branch point at $m$ and a branch cut discontinuity in the complex
plane for $x\in(m,\infty).$~\footnote{For a definition of single-valued
polylogarithms see
\cite{Brown:2004A,Drummond:2012bg,Dixon:2012yy,Chavez:2012kn}.}
This corresponds to the branch point at $x=m$
and the branch cut discontinuity in the complex plane for $x\in(m,\infty)$ of $\log(m-x)=\log(m)-\H{m}x.$ However, the analytic properties of the logarithm are well known and we
can set for $0<x<1$ for instance
\begin{eqnarray}
\H{m}{\frac{1}{x}}&=&\H{\frac{1}{m}}{x}+\H{0}{m}+\H{0}{x}-i\pi, \label{SStrafo1dx11}
\end{eqnarray}
by approaching the argument $\frac{1}{x}$ form the lower half complex plane.
We can now consider an alphabet with at most one letter $m\geq1$ (Note that we
could as well assume $m>0$. However, for simplicity we restrict to $m\geq1$).
For such an alphabet the strategy is as follows: if a generalized polylogarithm has leading $m$'s, we remove them and end up with generalized polylogarithms without leading $m$'s and powers
of $\H{m}{\frac{1}{x}}.$ To deal with the generalized polylogarithms without leading
$m$, we can slightly modify the previous part of this Section and for the powers of $\H{m}{\frac{1}{x}}$ we can
use~(\ref{SStrafo1dx11}).
Following this tactic, we arrive, e.g., at the identity
\begin{align*}
\textnormal{H}_{-2,1}\left(\frac{1}{x}\right)=&-i \pi  \left(-\textnormal{H}_{-\frac{1}{2}}(1)+\textnormal{H}_{-\frac{1}{2}}(x)-\textnormal{H}_0(x)\right)+\textnormal{H}_{-2,1}(1)-\textnormal{H}_{-\frac{1}{2},0}(1)+\textnormal{H}_{-\frac{1}{2},0}(x)\\	 &-\textnormal{H}_{-\frac{1}{2},1}(1)+\textnormal{H}_{-\frac{1}{2},1}(x)-\textnormal{H}_{0,0}(x)+\textnormal{H}_{0,1}(1)-\textnormal{H}_{0,1}(x).
\end{align*}

\section{Power Series Expansion of Multiple Polylogarithms}\label{Sec:PowerSeriesMPL}

\vspace*{1mm}
\noindent
An algorithm to calculate series expansions for harmonic polylogarithms
$\H{m_1,m_{2},\ldots,m_w}{x}$ with $m_i\in\{-1,0,1\}$ has been presented
in~\cite{Remiddi:1999ew}.
Subsequently, we generalize these steps for generalized polylogarithms, i.e.,
$m_i\in\set R$. In general, $\H{m_1,m_{2},\ldots,m_w}{x}$ does not have a Taylor series
expansion if trailing zeroes occur, i.e., the expansion consists of a $\log(x)$-part
and a part free of $\log(x)$. The separation into these two parts can be
easily accomplished
by removing trailing zeroes as described in Section~\ref{Sec:BasicProp}. In other words, one gets a polynomial expression in $H_0(x)=\log(x)$ with coefficients in terms of generalized polylogarithms that have no trailing $0$'s. For each of these coefficients we are now in the position to calculate a Taylor series expansion as follows.

For polylogarithms of weight $1$ with $j\in \R$ and $x\in(0,\abs{j})$ we have
\begin{eqnarray*}
\H{j}x=\left\{
		\begin{array}{ll}
				-\sum_{i=1}^\infty \frac{j^{-i}(-x)^i}{i},&  \textnormal{if }j< 0\\
\\
				\sum_{i=1}^\infty \frac{j^{-i}x^i}{i},& \textnormal{if }j > 0.
		\end{array} \right.
\end{eqnarray*}
Let $\ve m=(m_1,\ldots,m_w)$ be a non-zero vector and $q=\min_{m_i\neq 0}{\abs{m_i}}$. Assuming that for $x \in (0,q)$
$$\H{\ve m}x=\sum_{i=1}^{\infty}\frac{\sigma^ix^i}{i^a}\S{\ve n}{\ve b;i}$$ one can write the relations for $j>0$ and $x \in (0, \min(j,q))$
\begin{eqnarray*}
\H{0,\ve m}x&=&\sum_{i=1}^{\infty}\frac{\sigma^ix^i}{i^{a+1}}\S{\ve n}{\ve b;i}\\
\H{j,\ve m}x&=&\sum_{i=1}^{\infty}\frac{x^i}{ij^i}\S{a,\ve n}{\sigma j,\ve b;i-1}\\
&=&\sum_{i=1}^{\infty}\frac{x^i}{ij^i}\S{a,\ve n}{\sigma j,\ve b;i}-\sum_{i=1}^{\infty}\frac{\sigma^ix^i}{i^{a+1}}\S{\ve n}{\ve b;i}\\
\H{-j,\ve m}x&=&-\sum_{i=1}^{\infty}\frac{x^i}{i(-j)^i}\S{a,\ve n}{-\sigma j,\ve b;i-1}\\
&=&-\sum_{i=1}^{\infty}\frac{x^i}{i(-j)^i}\S{a,\ve n}{-\sigma j,\ve b;i}+\sum_{i=1}^{\infty}\frac{\sigma^ix^i}{i^{a+1}}\S{\ve n}{\ve b;i}.
\end{eqnarray*}
\begin{proof}
We just prove the case that $j>0$. The other cases follow analogously.
 \begin{eqnarray*}
\H{j,\ve m}x&=&\int_0^x{\frac{1}{j-y}\sum_{i=1}^{\infty}\frac{\sigma^iy^i}{i^a}\S{\ve n}{\ve b;i}}dy\\
&=&\int_0^x{\frac{1}{j}\sum_{k=0}^\infty\left(\frac{y}{j}\right)^k\sum_{i=1}^{\infty}\frac{\sigma^iy^i}{i^a}\S{\ve n}{\ve b;i}}dy\\
&=&\frac{1}{j}\int_0^x{\sum_{k=0}^\infty\left(\frac{y}{j}\right)^k\sum_{i=0}^{\infty}\frac{\sigma^{i+1}y^{i+1}}{(i+1)^a}\S{\ve n}{\ve b;i+1}}dy\\
&=&\frac{1}{j}\int_0^x{\sum_{i=0}^\infty\sum_{k=0}^{i}\left(\frac{y}{j}\right)^{i-k}\frac{\sigma^{k+1}y^{k+1}}{(k+1)^a}\S{\ve n}{\ve b;k+1}}dy\\
&=&\int_0^x{\sum_{i=0}^\infty\frac{y^{i+1}}{j^{i+2}}\S{a,\ve n}{\sigma j,\ve b;i+1}}dy\\
&=&\sum_{i=1}^\infty\frac{x^{i+1}}{j^{i+1}(i+1)}\S{a,\ve n}{\sigma j,\ve b;i}\\
&=&\sum_{i=1}^\infty\frac{x^{i}}{ij^{i}}\S{a,\ve n}{\sigma j,\ve b,i-1}.
 \end{eqnarray*}
\end{proof}

Separating the $\log(x)=H_0(x)$ part from the log-free part and applying the formulas from above we get, e.g.,
$$\H{-2,0}{x}=\sum _{i=1}^{\infty } \frac{2^{-i} (-x)^i}{i^2}-\H{0}{x} \sum _{i=1}^{\infty } \frac{2^{-i} (-x)^i}{i}.$$

Summarizing, given $\H{m_1,m_{2},\ldots,m_w}{x}$ with $m_w\neq0$ and $q=\min_{m_i\neq 0}{\abs{m_i}}$, one can calculate a Taylor series expansion which holds for $x\in[0,q)$ and where the coefficients of the Taylor expansion are given explicitly in terms of $S$-sums.

\vspace*{1.5mm}
\noindent
\textbf{Inverse construction.} In addition, since the formulas from above can be reversed, also the other direction is possible: given a sum of the form
$\sum_i^{\infty}(c x)^i\frac{S_{\ve n}(\ve b; i)}{i^k}$
with $k\in \N\cup\{0\}$ and $c\in\R^*,$ one can compute a linear combination of (possibly weighted) generalized polylogarithms $h(x)$ such that for $x\in(0,j)$ with $j>0$ small enough
$\sum_i^{\infty}(c x)^i\frac{S_{\ve n}(\ve b; i)}{i^k}=h(x)$.

E.g., for $x\in(0,\frac{1}{2})$ we have
\begin{equation*}
\sum_{i=1}^{\infty}\frac{2^i x^i S_{1,1}\left(3,\frac{1}{2};i\right)}{i}=
\textnormal{H}_{0,0,\frac{1}{3}}(x)+\textnormal{H}_{0,\frac{1}{6},\frac{1}{3}}(x)+\textnormal{H}_{\frac{1}{2},0,\frac{1}{3}}(x)+\textnormal{H}_{\frac{1}{2},\frac{1}{6},\frac{1}{3}}(x).
\end{equation*}

\vspace*{1.5mm}
\noindent
\textbf{Asymptotic behavior.} Moreover, one can determine the
asymptotic behavior of a generalized polylogarithms $\H{\ve m}x$:
Define $y:=\frac{1}{x}.$ Using Section~\ref{SS1dxx} on $\H{\ve m}{\frac{1}{y}}=\H{\ve m}x$ we can rewrite $\H{\ve m}x$ in terms of generalized polylogarithms at argument $y$ together with some constants.
Now we obtain the power series expansion of the generalized polylogarithms at argument
$y$
around 0 easily using the previous part of this Section. Since sending $x$ to infinity corresponds to sending $y$ to
zero, we get the asymptotic behavior of $\H{\ve m}x.$

E.g., performing these steps yields
 \begin{align*}
  \H{-2,0}x=& \H{-2,0}1+\H{-\frac{1}{2},0}1-\H{-\frac{1}{2},0}{\frac{1}x}+\H{0,0}{\frac{1}x}\\
	   =&\frac{1}{2}\; \H0x^2-\H0x \left(\sum _{i=1}^{\infty } \frac{\left(-\frac{2}{x}\right)^{i}}{i}\right)-\sum _{i=1}^{\infty}\frac{\left(-\frac{2}{x}\right)^{i}}{i^2}+\H{-2,0}1+\H{-\frac{1}{2},0}1.
 \end{align*}

\section{Values of Multiple Polylogarithms Expressed by $S$-Sums at Infinity}
\label{SSinfval}

\vspace*{1mm}
\noindent
Consider the generalized polylogarithm $\H{m_1,m_2,\ldots,m_w}x$ with $q:=\min_{m_i>0}{m_i}$ and $c\in\set R,\ c \geq 0$ such that $\H{m_1,m_2,\ldots,m_w}c$ is finite.
To be more precise, one of the following cases holds:
\begin{enumerate}
\item $c<q$
\item $c=q$ and $m_1\neq q$
\item $(m_1,\ldots,m_w)=(m_1,\ldots,m_k,1,0,\dots,0)$ with $q':=\min_{m_i>0}{m_i}$ and $c<q'$.
\end{enumerate}

If the generalized polylogarithm $\H{\ve{m}}c$ has trailing zeroes, we first extract them.  If $0<c\leq 1,$ we end up in an expression of generalized polylogarithms without
trailing zeroes and powers of $\H{0}{c}$. Otherwise, if $c>1$, we cannot handle polylogarithms of the form $\H{m_1,m_2,\ldots,m_k,1,0,\ldots,0}c$
(see Section~\ref{Sec:BasicProp}) and keep them untouched. Note that by the shuffle product no extra indices are introduced and the first non-zero index is again $m_1$. Thus the arising polylogarithms are again finite at the evaluation point $c$.

We can use the following lemma (iteratively) to rewrite powers of $\H{0}{c}.$
\begin{lemma}
 Let $c>0.$ We have
\begin{eqnarray*}
  0<c< 1:&&\H{0}c=-\S{1}{1-c;\infty}\\
  c=1:&&\H{0}c=0\\
  1<c\leq 2:&&\H{0}c=-\S{1}{1-c;\infty}\\
  2<c:&&\H{0}c=\H{0}2-\H{0}{\frac{c}{2}}.
\end{eqnarray*}
\end{lemma}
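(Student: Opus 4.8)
The plan is to reduce the whole statement to two facts: the defining value $\H{0}{x}=\log x$ of the weight-one all-zero polylogarithm (Section~\ref{Sec:BasicProp}), and the closed form of the depth-one $S$-sum at infinity. First I would record that, from the definition in Section~\ref{Sec:BasicProp}, $\S{1}{b;n}=\sum_{i=1}^{n} b^{i}/i$, whence
\[
\S{1}{b;\infty}=\sum_{i=1}^{\infty}\frac{b^{i}}{i}=-\log(1-b).
\]
By Theorem~\ref{SSconsumthm} this series is absolutely convergent for $\abs{b}<1$ and conditionally convergent for $b=-1$; in the boundary case the value $-\log 2$ is obtained from the power series of $-\log(1-b)$ by Abel's limit theorem as $b\to -1^{+}$. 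This single identity will dispatch the first three cases simultaneously.

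For $0<c<1$ and for $1<c\le 2$ I would put $b:=1-c$. In the first range $b\in(0,1)$ and in the second $b\in[-1,0)$, so $\S{1}{1-c;\infty}$ exists by Theorem~\ref{SSconsumthm}, and
\[
-\S{1}{1-c;\infty}=\log\bigl(1-(1-c)\bigr)=\log c=\H{0}{c},
\]
which is exactly the asserted equality; the endpoint $c=2$ (that is $b=-1$) is precisely the conditionally convergent instance. The case $c=1$ is immediate, since $\H{0}{1}=\log 1=0$.

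For $c>2$ the defining series diverges because $\abs{1-c}>1$, so the idea is to peel off a constant factor and return to the admissible window $(0,2]$. Using additivity of the logarithm one has $\log c=\log 2+\log(c/2)$, i.e.\ $\H{0}{c}=\H{0}{2}+\H{0}{c/2}$; iterating this halving a finite number of times brings the argument into $(0,2]$, where the first three cases apply and express $\H{0}{c}$ through convergent $S$-sums at infinity. (Equivalently, in a single step $\log c=\log 2-\log(2/c)$ with $2/c\in(0,1)$, which lands directly in the first case.)

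The only genuine subtlety, and the step I would treat most carefully, is the boundary $c=2$, where $b=1-c=-1$ and the series is merely conditionally convergent: here absolute convergence is unavailable, so existence rests on Theorem~\ref{SSconsumthm} and the evaluation $-\S{1}{-1;\infty}=\log 2=\H{0}{2}$ requires Abel's limit theorem rather than a direct rearrangement. Everything else is the elementary power-series identity for $\log(1-b)$ together with the recorded value $\H{0}{x}=\log x$.
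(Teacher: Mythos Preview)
Your argument is correct and is the natural one: everything reduces to $\H{0}{x}=\log x$ together with the power-series evaluation $\S{1}{b;\infty}=-\log(1-b)$ for $|b|\le 1$, $b\ne 1$, whose convergence is exactly what Theorem~\ref{SSconsumthm} guarantees. The paper states this lemma without proof, so there is nothing to compare against; your treatment of the boundary case $c=2$ via Abel's theorem is the only point that needs any care, and you handle it.

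One remark on the fourth case. You write $\H{0}{c}=\H{0}{2}+\H{0}{c/2}$, which is the correct identity $\log c=\log 2+\log(c/2)$, but note that this differs from what is printed in the lemma: as stated there, $\H{0}{c}=\H{0}{2}-\H{0}{c/2}$ would give $\log c=\log(4/c)$, which is false for $c>2$. Your parenthetical alternative $\log c=\log 2-\log(2/c)$ shows you are aware that a minus sign is only correct with the inverted argument $2/c$. It would strengthen your write-up to say explicitly that the printed formula contains a typo (either the sign should be $+$, or the argument should be $2/c$), rather than silently correcting it.
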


To rewrite generalized polylogarithms of the form $\H{m_1,m_2,\ldots,m_w}c$ where $m_w\neq 0$ and $\min_{m_i\neq 0}{\abs{m_i}}>c$, we use the power series expansion
of $\H{m_1,m_2,\ldots,m_w}x$ and send $x\rightarrow c$. E.g., for one of the arising sums we obtain
$$
\sum_{i=1}^{\infty}\frac{\sigma^ix^i}{i^a}\S{\ve n}{\ve b;i} \rightarrow \S{a,\ve n}{c\sigma,\ve b;\infty}.
$$

What remains open are the situations of the type $\H{m_1,m_2,\ldots,m_k,1,0,\ldots,0}c$ with $c>1$ and polylogarithms
$\H{m_1,m_2,\ldots,m_w}c$ with $m_w\neq0$, $\max_{m_i<0}{m_i}> -c$,  $\min_{m_i>0}{m_i}\geq c$ and $m_1\neq c$.

As shortcut, one can rewrite $\H{m_1,m_2,\ldots,m_w}c$ with $(m_1,m_2,\ldots,m_w)=(1,0,\ldots,0)$ and $c>1$ by using
the following lemma iteratively.
\begin{lemma}
 For $c>1$ and  $(m_1,m_2,\ldots,m_w)=(1,0,\ldots,0)$ we have
$$
\H{m_1,m_2,\ldots,m_w}c=(-1)^{w+1}S_w(1;\infty)-\H{0,m_1-1,\ldots,m_w-1}{c-1}.
$$
\end{lemma}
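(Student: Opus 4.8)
The plan is to reduce the statement to the single constant evaluation $\H{1,0,\ldots,0}{1}=(-1)^{w+1}\S{w}{1;\infty}$ by invoking the argument shift already established in Lemma~\ref{SSeinplustrafo}. Since $(m_1,\ldots,m_w)=(1,0,\ldots,0)$ and $c>1$, I would set $x:=c-1>0$; then $\H{m_1,\ldots,m_w}{x+1}=\H{1,0,\ldots,0}{c}$ is defined, as it falls under case~3 of the finiteness conditions (empty leading block, hence $q'=\infty$, so every $c>0$ is admissible). The special case of Lemma~\ref{SSeinplustrafo} for the word $(1,0,\ldots,0)$ gives immediately
\[
\H{1,0,\ldots,0}{c}=\H{1,0,\ldots,0}{1}-\H{m_1-1,\ldots,m_w-1}{c-1},
\]
where $(m_1-1,\ldots,m_w-1)=(0,-1,\ldots,-1)$ is exactly the polylogarithm appearing on the right-hand side of the claim. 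Thus it remains only to identify the boundary constant $\H{1,0,\ldots,0}{1}$ with $(-1)^{w+1}\S{w}{1;\infty}$.

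For this evaluation I would peel off the leading letter using the defining recursion together with the closed form $\H{0,\ldots,0}{y}=\tfrac{1}{(w-1)!}(\log y)^{w-1}$ for the $(w-1)$ trailing zeros, obtaining
\[
\H{1,0,\ldots,0}{1}=\frac{1}{(w-1)!}\int_0^1\frac{(\log y)^{w-1}}{1-y}\,dy.
\]
Expanding $\tfrac{1}{1-y}=\sum_{n\geq0}y^{n}$ and using the elementary moment $\int_0^1 y^{n}(\log y)^{w-1}\,dy=(-1)^{w-1}(w-1)!/(n+1)^{w}$, the integral sums termwise to $(-1)^{w-1}\sum_{n\geq1}n^{-w}=(-1)^{w+1}\zeta(w)=(-1)^{w+1}\S{w}{1;\infty}$. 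Substituting this constant back into the displayed shift relation yields the asserted identity.

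Two points deserve care. First, the weight of $\H{0,-1,\ldots,-1}{c-1}$ equals $w$, matching the weight of $\H{1,0,\ldots,0}{c}$, so the identity is weight homogeneous; the right-hand vector should be read as $(m_1-1,\ldots,m_w-1)$. Second, the formula presupposes $w\geq2$, precisely the range in which $\S{w}{1;\infty}=\zeta(w)$ converges and in which $\H{1,0,\ldots,0}{1}$ is finite. The main (though modest) obstacle is the constant evaluation: because the word $(1,0,\ldots,0)$ carries $w-1$ trailing zeros, one cannot feed it directly into the Taylor-expansion machinery of Section~\ref{Sec:PowerSeriesMPL}, so one must treat the $\log$-weighted integral at $y=1$ by hand as above. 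An entirely self-contained alternative, avoiding any appeal to Lemma~\ref{SSeinplustrafo}, is to split the defining integral $\int_0^c=\int_0^1+\int_1^c$, recognise the first piece as $\H{1,0,\ldots,0}{1}$, and in the second substitute $y=1+t$; the integrand then becomes $\tfrac{1}{t}\H{-1,\ldots,-1}{t}$ with $\H{-1,\ldots,-1}{t}=\tfrac1{(w-1)!}(\log(1+t))^{w-1}$, so that the second piece equals $-\H{0,-1,\ldots,-1}{c-1}$, reproducing the same relation.
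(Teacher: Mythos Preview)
The paper states this lemma without proof, so there is no argument to compare against directly. Your proof is correct: the special case of Lemma~\ref{SSeinplustrafo} for the word $(1,0,\ldots,0)$ already gives
\[
\H{1,0,\ldots,0}{c}=\H{1,0,\ldots,0}{1}-\H{0,-1,\ldots,-1}{c-1},
\]
and your evaluation $\H{1,0,\ldots,0}{1}=\tfrac{1}{(w-1)!}\int_0^1\tfrac{(\log y)^{w-1}}{1-y}\,dy=(-1)^{w+1}\zeta(w)=(-1)^{w+1}\S{w}{1;\infty}$ is clean and valid for $w\geq 2$. The self-contained variant via $\int_0^c=\int_0^1+\int_1^c$ and $y=1+t$ is equally sound and is in fact exactly the computation underlying Lemma~\ref{SSeinplustrafo} in this special case.

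Your remark on the index vector is well taken: by weight homogeneity the right-hand polylogarithm must have weight $w$, so the intended vector is $(m_1-1,\ldots,m_w-1)=(0,-1,\ldots,-1)$. The printed leading ``$0$'' is either redundant (since $m_1-1=0$) or a slip; in any case your reading is the correct one.
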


To rewrite generalized polylogarithms of the form $\H{m_1,m_2,\ldots,m_w}c$ where $m_1\in \R \setminus (0, c]$,  $m_w\neq 0,$ $\min_{m_i > 0}{m_i}\geq c$ and
$s:=\max_{m_i<0}{m_i}>-c$ we use Lemma \ref{SStransformplusa1} and afterwards send $x\rightarrow c+s$, i.e., we get
\begin{align*}
\H{m_1,\ldots, m_w}{c}=&\H{m_1,m_2,\ldots, m_w}{-s}+\H{m_1+s}{c+s}\H{m_2,\ldots, m_w}{-s}\\
&+\H{m_1+s,m_2+s}{c+s}\H{m_3,\ldots, m_w}{-s}+\cdots+\\
&+\H{m_1+s,\ldots, m_{k-1}+s}{{c+s}}\H{m_l}{-s}+\H{m_1+s,\ldots, m_w+s}{{c+s}}.
\end{align*}
Note that $-s<c$ and $c+s<c$.
To rewrite generalized polylogarithms of the form $\H{m_1,m_2,\ldots,m_k,1,0,\ldots,0}c$ where $c>1$ we can proceed similar to the previous case: we use
Lemma \ref{SStransformplusa3} to rewrite $\H{m_1,m_2,\ldots,m_w}{x+1}$ and afterwards send $x\rightarrow c-1$. Observe that both cases reduce the problem to a simpler situation: We can either handle all arising polylogarithms by the recipes given above. If not, observe that the remaining polylogarithms are of the form $\H{n_1,\ldots, n_{l}}{u}$ with $q':=\min_{n_i\neq0}|n_i|$ and $u<c$. Note that the evaluation point gets smaller. In addition, the index vector $(n_1,\dots,n_{l})$ is given by a subset of $(m_1,\dots,m_w)$ up to the possible addition of $s<0$ where $s$ is the maximum of the arising negative indices of $(m_1,\dots,m_w)$. Thus applying this reduction iteratively (to the polylogarithms that cannot be handled by the cases from above) decreases step by step the positive evaluation point and in the same time decreases the negative indices and thus enlarges the absolute value of $s$ superlinearly. Consequently, from a certain point on $q'$ gets larger than $q$ and the difference between these values increases in each iteration. This guarantees that eventually $u<q'$, i.e., the arising polylogarithms can be transformed to $S$-sums by using the series expansion representation and the evaluation at $x=u$ as described above.

Summarizing, we can rewrite all finite generalized polylogarithms
$\H{m_1,m_2,\ldots,m_w}c$ in terms of $S$-sums at infinity, like, e.g.,
\begin{align*}
 \H{4,1,0}{3}=&-\S{-3}{\infty} + 2 \S{3}{\infty} + \S{-2}{\infty} \S{1}{\tfrac{1}{2};\infty} - \S{2}{\infty} \S{1}{\tfrac{2}{3};\infty}\\
	      &- \S{3}{-\tfrac{1}{2};\infty} + \S{-1}{\infty} \left(\S{-2}{\infty} - \S{1, 1} {\tfrac{1}{2}, -2;\infty}\right) \\
	      & - \S{1, 2} {\tfrac{1}{4}, 4;\infty}+ \S{1, 2} {\tfrac{1}{3}, -3;\infty} + \S{1, 2}{\tfrac{1}{2}, -1;\infty} \\
	      & + \S{2, 1}{-1, \tfrac{1}{2};\infty} - \S{2, 1}{\tfrac{1}{4}, 4;\infty}- \S{1, 1, 1}{\tfrac{1}{2}, -2, \tfrac{1}{2};\infty}.
\end{align*}
\noindent Note that this process can be reversed: given a finite $S$-sum at infinity we
can rewrite it, e.g., in terms of
generalized polylogarithms at one:
\begin{align*}
 \SS{2,3,1}{-\tfrac{1}{2},\tfrac{1}{3},2}{\infty}=&-\H{0, -2, 0, 0, -6, -3}1 + \H{0, -2, 0, 0, 0, -3}1 + \H{0, 0, 0, 0, -6, -3}1\\
						& - \H{0, 0, 0, 0, 0, -3}1.
\end{align*}
For a complete algorithm for the reverse direction we refer to~\cite[Algorithm~1]{Ablinger:12}.

We note that if we consider the generalized polylogarithm $\H{m_1,m_2,\ldots,m_w}x$ and $c\in\set R,\ c < 0$ such that $\H{m_1,m_2,\ldots,m_w}c$ is finite then we can use Lemma~\ref{SStransformminusx} to transform $\H{m_1,m_2,\ldots,m_w}x$ to an expression in terms of generalized polylogarithms at argument $-c.$
Hence we can again use the strategy presented above to rewrite this generalized
polylogarithm in terms of $S$-sums at infinity.

\section{Analytic Continuation of $S$-sums}
\label{Sec:AnalyticCont}

\vspace*{1mm}
\noindent

It has been remarked in~\cite{Yndurain:1983} that Carlson's
theorem~\cite{Carlson:14,Titchmarsh:1939}\footnote{The original theorem needs to
be extended to be even applicable in case of the harmonic sums.}
can be used to perform the analytic continuation of harmonic sums:  the evaluation of
the harmonic sums at integer points determines uniquely (among a certain class of
analytic functions with exponential growth) the analytic continuation in the complex
plane. We will elaborate this property in further details for harmonic sums and more
generally for $S$-sums.

A function $f:M\to\set C$ for a region $M\subset\set C$ is of Carlson-type (C-type) if
\begin{itemize}
\item it is analytic in the right half plane,
\item it is of exponential type in the half plane, i.e., there exist $c,C\in\set R^+$ such that
for all $z\in\set C$ with $\Im(z)\geq0$ we have that
$$|f(z)| \leq C e^{c|z|},$$
\item and there exist $c,C\in\R^+$ with $c<\pi$ such that
$$\abs{f(i y)}\leq C e^{c \abs{y}}.$$
\end{itemize}

For such functions the following theorem holds; for a detailed proof (together with even more general statements) we refer to~\cite[Sec.~9.2]{Boas:54}.

\begin{thm}[Carlson]
Let $h(z)$ be a function of C-type. If $h(n)=0$ for all $n\in\set N$ then $h=0$.
\end{thm}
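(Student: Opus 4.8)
The plan is to cancel the prescribed zeros of $h$ with an explicit function and then force the quotient to vanish by a growth argument in the right half plane. Concretely, I would introduce $g(z)=h(z)/\sin(\pi z)$. The zeros of $\sin(\pi z)$ are simple and located exactly at the integers, and in the open right half plane these are precisely the points $1,2,3,\dots\in\set N$ at which $h$ vanishes; hence every singularity of $g$ in $\{\Re z>0\}$ is removable and $g$ is analytic there. (The boundary point $z=0$ is at worst a simple pole of $g$; it is harmless and can be absorbed by working with $z\,g(z)$, which I do tacitly.) The target is then to prove $g\equiv 0$, for this immediately yields $h\equiv 0$.

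The argument rests on two competing estimates. On the imaginary axis, $\abs{\sin(\pi i y)}=\sinh(\pi\abs y)\ge\tfrac14 e^{\pi\abs y}$ for $\abs y\ge 1$, so the C-type bound $\abs{h(iy)}\le C e^{c\abs y}$ with $c<\pi$ gives
\[
\abs{g(iy)}\le 4C\,e^{(c-\pi)\abs y}\longrightarrow 0\qquad(\abs y\to\infty).
\]
Thus $g$ is bounded on the imaginary axis and in fact decays exponentially there, and this is exactly where the strict inequality $c<\pi$ is used. On the other hand, since $h$ is of exponential type and $\abs{\sin(\pi z)}$ is bounded below away from fixed disks around the integers, $g$ is itself of finite exponential type in the half plane, $\abs{g(z)}\le C'e^{\tau\abs z}$.

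The natural next move is a Phragm\'en--Lindel\"of argument for the half plane $\{\Re z>0\}$, a sector of opening $\pi$: $g$ is analytic inside, decays on the bounding imaginary axis, and has finite exponential type. I expect this to be the main obstacle, because a sector of opening $\pi$ has critical growth order $1$, so a function of exponential type lies exactly on the borderline where the textbook Phragm\'en--Lindel\"of principle fails (witness $e^{z}$, bounded on the imaginary axis yet unbounded in the half plane). What must rescue the argument is again the strict gap $\pi-c>0$, which lets one trade the exponential decay on the imaginary axis against the growth. The cleanest way to bypass the borderline entirely --- and, I believe, the route of \cite[Sec.~9.2]{Boas:54} --- is Carleman's formula for the right half plane, which equates a weighted count of the zeros $z_k=r_k e^{i\phi_k}$ of $h$ in $\{\Re z>0\}$,
\[
\sum_{1<r_k<R}\Bigl(\frac{1}{r_k}-\frac{r_k}{R^2}\Bigr)\cos\phi_k ,
\]
with boundary integrals of $\log\abs h$ over the segment $[-iR,iR]$ and over the semicircle of radius $R$.

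The contradiction then falls out by letting $R\to\infty$. The prescribed zeros $1,2,3,\dots$ (with $\phi_k=0$) already make the left-hand side grow like $\log R$, while every other zero in the open half plane has $\cos\phi_k>0$ and only increases it. On the right-hand side the semicircular integral is $O(1)$ after the normalising factor $1/R$ because $h$ has finite type, and the imaginary-axis integral is controlled by the bound $\log\abs{h(iy)}\le \log C+c\abs y$, so that it grows at most like $\tfrac{c}{\pi}\log R$. Equality in Carleman's formula then forces $\log R\le \tfrac{c}{\pi}\log R+O(1)$, which is impossible as $R\to\infty$ since $c<\pi$. Hence no nonzero $h$ of C-type can vanish at all of $\set N$, i.e.\ $h\equiv 0$. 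The hardest and most technical part is precisely this quantitative bookkeeping at the threshold: one must make the constant $\pi$ emerge sharply from the growth of $\abs{\sin(\pi z)}$, and one must verify that the boundary contribution of $\log\abs h$ is governed by the imaginary-axis type $c$ rather than by the possibly larger overall type $\tau$. This threshold phenomenon also explains why $\sin(\pi z)$ --- which realises zero-density $1$ with imaginary-axis type exactly $\pi$ --- is the extremal example, and why the strict inequality $c<\pi$ cannot be relaxed.
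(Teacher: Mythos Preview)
The paper does not prove this theorem at all; it simply states it and refers the reader to \cite[Sec.~9.2]{Boas:54} for the argument. Your sketch is precisely the classical route taken there: the decisive tool is Carleman's formula for a half plane, which balances the density of zeros against the boundary growth of $\log\abs{h}$, and the strict inequality $c<\pi$ is exactly what makes the bookkeeping fail for a nonzero $h$ with zeros on all of $\set N$. Your preliminary attempt via $h/\sin(\pi z)$ and Phragm\'en--Lindel\"of is a standard motivating heuristic, and you are right that it sits on the borderline and must be supplanted by the Carleman argument.

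Two small points worth tightening if you write this out in full. First, the paper's definition of C-type allows the exponential-type constant in the half plane and the constant $c<\pi$ on the imaginary axis to be different; your sentence ``governed by the imaginary-axis type $c$ rather than by the possibly larger overall type $\tau$'' already anticipates this, and indeed in Carleman's formula the semicircle term contributes only $O(1)$ regardless of $\tau$, while the imaginary-axis integral is the one that produces the $\tfrac{c}{\pi}\log R$ growth. Second, one should take a little care with the behaviour of $h$ near $z=0$ on the boundary (Carleman's formula is usually stated for functions analytic on the closed half plane and nonvanishing at the origin); this is a routine normalisation and does not affect the argument.
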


Throughout the remaining article we rely on the following consequence: Let $f(z)$ and $g(z)$ be functions of C-type. If $f(n)=g(n)$ for all $n\in\set N$, then $f=g$ in the domain where both functions are defined.

The following simple closure properties will be used later. The proofs are immediate.

\begin{lemma}\label{Lemma:Closure}
Let $f(z)$ and $g(z)$ be functions of C-type and let $a\in\set R^{+}$. Then
$f(z)+g(z)$, $a^z=e^{z\ln(a)}$ and $a^z f(z)$ are of C-type.
\end{lemma}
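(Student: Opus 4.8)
The plan is to verify the three defining properties of a C-type function directly for each of the three claimed functions, handling the sum, the exponential $a^z$, and the product $a^z f(z)$ in turn. The only genuinely delicate point is preserving the strict constraint $c<\pi$ on the imaginary axis in the third case; everything else reduces to a routine triangle-inequality estimate together with the closure of analyticity under sums and products.

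First I would treat $f(z)+g(z)$. Analyticity in the right half plane is inherited, since a sum of analytic functions is analytic. For the exponential-type bound, writing $\abs{f(z)}\le C_f e^{c_f\abs z}$ and $\abs{g(z)}\le C_g e^{c_g\abs z}$ for $\Im(z)\ge 0$, the triangle inequality gives $\abs{f(z)+g(z)}\le (C_f+C_g)\,e^{\max(c_f,c_g)\abs z}$. The same estimate on the imaginary axis uses the respective constants $c_f',c_g'<\pi$, and since $\max(c_f',c_g')<\pi$ as well, the strict bound survives.

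Next I would treat $a^z=e^{z\ln a}$, which is entire and hence analytic in the right half plane. The key computation is $\abs{a^z}=e^{\Re(z)\ln a}$, valid because $\ln a\in\set R$; bounding $\abs{\Re(z)\ln a}\le\abs{\ln a}\,\abs z$ yields the half-plane estimate with $c=\abs{\ln a}$ (and any fixed positive $c$ in the degenerate case $a=1$, where $a^z\equiv1$). On the imaginary axis the estimate is sharper: $\abs{a^{iy}}=e^{\Re(iy\ln a)}=e^{0}=1$, so $\abs{a^{iy}}\le e^{c\abs y}$ holds with any $c<\pi$.

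Finally, for $a^z f(z)$ I would combine the two: the product of analytic functions is analytic, and multiplying the modulus bounds gives $\abs{a^z f(z)}\le C_f\,e^{(\abs{\ln a}+c_f)\abs z}$ in the upper half plane. The crucial step is the imaginary-axis bound: since $\abs{a^{iy}}=1$, we obtain $\abs{a^{iy}f(iy)}=\abs{f(iy)}\le C_f\,e^{c_f\abs y}$ with exactly the same $c_f<\pi$ that $f$ already satisfies. This is precisely where it matters that $a$ is a positive real: multiplication by $a^z$ leaves the exponential rate along the imaginary axis unchanged, so the strict bound $c<\pi$ is not disturbed. I expect this last observation to be the heart of the lemma, as the remaining arguments are only bookkeeping with the exponential constants.
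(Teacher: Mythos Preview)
Your argument is correct and is exactly the routine verification one would write out; the paper itself gives no proof at all, declaring the statement immediate. Your observation that $\abs{a^{iy}}=1$ for $a\in\set R^+$, so the strict constant $c<\pi$ on the imaginary axis is preserved under multiplication by $a^z$, is indeed the only point worth isolating, and you have handled it correctly.
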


\begin{prop}\label{Prop:PosSAnalyt}
For $m_i\in\N,$ $b_i\in\R^+$ and $z\in \C$ consider the integral $f(z)=I_{m_1,m_2,\ldots,m_k}(b_1,b_2,\ldots,b_k;z)$ from Theorem~\ref{SSintrep}.
Then $f(z)$ is of C-type.
\end{prop}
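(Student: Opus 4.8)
The starting point is a structural observation about the iterated integral $I_{m_1,\dots,m_k}(b_1,\dots,b_k;z)$ of Theorem~\ref{SSintrep}: the variable $z$ occurs \emph{only} in the innermost factor $\frac{(x_1^1)^z-1}{x_1^1-1}$, while all remaining factors form a $z$-independent kernel, and every integration variable ranges over a bounded subinterval of $(0,B]$ with $B=b_1\cdots b_k>0$. The plan is first to reorder the integrations by Fubini so that the innermost variable $t:=x_1^1$ is integrated last, which produces a single-integral representation
\[
f(z)=\int_0^B \frac{t^z-1}{t-1}\,W(t)\,dt,
\]
where $W$ is a $z$-independent weight obtained by integrating out all other variables. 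The main obstacle is precisely here, and it is measure-theoretic rather than analytic: one must justify the reordering and the absolute integrability of $W$ in the presence of the sign-changing factors $\tfrac{1}{x-b_1\cdots b_j}$. This follows from the same estimates that make the $S$-sum integral of Theorem~\ref{SSintrep} convergent (for instance, since $\tfrac{t-1}{t-1}\equiv1$, the case $z=1$ already gives $\int_0^B W(t)\,dt=f(1)<\infty$), together with the positivity $b_i>0$; once a $z$-independent integrable majorant for the full integrand is in hand, the three analytic estimates below are essentially immediate, and indeed one may also argue directly on the iterated integral without performing the reduction.

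Granting this representation, analyticity in the right half-plane $\Re(z)>0$ is routine: for fixed $t>0$ the map $z\mapsto t^z=e^{z\ln t}$ is entire, the factor $\frac{t^z-1}{t-1}$ has a removable singularity at $t=1$ (with value $z$ there), and on compact subsets of $\{\Re(z)>0\}$ the $z$-independent majorant permits differentiation under the integral sign (alternatively, Morera's theorem applies). For the exponential-type bound I would use $|t^z|=t^{\Re(z)}\le\max(1,B^{\Re(z)})$ for $t\in(0,B]$, so that $\bigl|\tfrac{t^z-1}{t-1}\bigr|$ is dominated, uniformly in $t$, by a constant multiple of $(1+|z|)\,e^{c|z|}$ with $c=\max(\ln B,0)$ (the factor $1+|z|$ absorbing the value $z$ near $t=1$). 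Integrating against $|W|$ then yields $|f(z)|\le C e^{c|z|}$, which is the required exponential type.

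The decisive step is the bound on the imaginary axis, and it is exactly here that the positivity of the $b_i$, hence of the base $t$, is used. For $z=iy$ with $y\in\R$ one has $|t^{iy}|=|e^{iy\ln t}|=1$ for every $t>0$, so
$\bigl|\tfrac{t^{iy}-1}{t-1}\bigr|\le\min\!\bigl(\tfrac{2}{|t-1|},\,|y|\,\tfrac{|\ln t|}{|t-1|}\bigr)$; since $\tfrac{|\ln t|}{|t-1|}$ stays bounded near $t=1$ and the first bound is bounded away from $t=1$, integrating against $|W|$ gives $|f(iy)|\le C(1+|y|)$. Thus $f$ grows only polynomially along the imaginary axis, so $|f(iy)|\le C'e^{c|y|}$ holds for \emph{every} $c>0$, in particular for some $c<\pi$: the hypothesis of Carlson's theorem is met with room to spare. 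Had the bases been negative or off the positive axis, $|t^{iy}|$ would grow exponentially in $|y|$ and this estimate would fail, which is why the proposition is restricted to $b_i\in\R^+$. Having verified the three defining bullet points, I conclude that $f(z)$ is of C-type.
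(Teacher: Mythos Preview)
Your overall strategy is the paper's: since $z$ enters only through the innermost factor $\tfrac{t^z-1}{t-1}$, bound that factor uniformly in $t$ and integrate against the remaining $z$-independent kernel. The paper carries this out directly on the iterated integral rather than via your single-integral reduction.

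Two points need correction. First, the integration variables do \emph{not} all lie in $(0,B]$ with $B=b_1\cdots b_k$. For instance with $k=2$, $b_1=4$, $b_2=\tfrac12$ one has $B=2$, yet $x_1^{m_1}$ ranges between $b_1=4$ and $x_2^1\in(0,2)$. The correct uniform upper bound is $d=\max_j\prod_{i\le j}b_i$, which is what the paper uses; replace $B$ by $d$ throughout. Second, your justification of $W\in L^1$ is incomplete: the observation $\int_0^d W(t)\,dt=f(1)<\infty$ controls $\int W$, not $\int|W|$, and the kernel contains sign-changing factors $\tfrac{1}{x_j^1-b_1\cdots b_{j-1}}$. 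What one actually needs is that the iterated integral with every factor replaced by its absolute value and the innermost $\tfrac{t^z-1}{t-1}$ replaced by $1$ is finite; this is exactly the ``constant'' integral the paper writes down in its proof, and it is finite because the inner integral $\int_{b_1\cdots b_{j-1}}^{x_j^1}\!(\cdots)$ vanishes to first order at $x_j^1=b_1\cdots b_{j-1}$, cancelling the apparent pole. Once you have that, Tonelli gives $\int|W|<\infty$ and your Fubini reduction goes through. Your own parenthetical remark that ``one may also argue directly on the iterated integral'' is precisely the paper's route and sidesteps the issue.

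Your imaginary-axis estimate is sharper than the paper's: you obtain $|f(iy)|\le C(1+|y|)$ by combining $|t^{iy}-1|\le 2$ away from $t=1$ with $|t^{iy}-1|\le|y|\,|\ln t|$ near $t=1$, whereas the paper is content with $|f(iy)|\le \bar c\,e^{|y|}$. Either suffices for C-type, but your polynomial bound is the more informative statement.
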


\begin{proof}
We note that $f(z)$ is analytic for $\Re{(z)}\geq 0$. Next we show that $f(z)$ is of exponential type. For $1\geq\abs{z}\geq 0$ with $\Re{(z)}\geq 0$ we can bound $|f(z)|$ by a constant.
For $\abs{z}>1$  and $d:=\max(b_1,b_1b_2,\ldots,b_1b_2\cdots b_k)\geq0$ we have
\begin{equation}\label{Equ:BoundF}
\begin{split}
&\abs{f(z)}=\\
&\hspace{0.5cm} \Biggl|\int_0^{b_1\cdots b_k}{\frac{dx_{k}^{m_k}}{x_{k}^{m_k}} \cdots \int_0^{x_{k}^2}{\frac{dx_{k}^1}{x_{k}^1-b_1\cdots b_{k-1}}}} \hspace{0.2cm}\cdots\hspace{0.2cm}  \int^{x_{2}^1}_{b_1}{\frac{dx_{1}^{m_1}}{x_{1}^{m_1}} \cdots \int_0^{x_{1}^2}{\frac{{x}^z-1}{x-1}dx}}\Biggr|\\
&\hspace{0.5cm} \leq\int_0^{b_1\cdots b_k}{\frac{dx_{k}^{m_k}}{\abs{x_{k}^{m_k}}} \cdots \int_0^{x_{k}^2}{\frac{dx_{k}^1}{\abs{x_{k}^1-b_1\cdots b_{k-1}}}}} \hspace{0.2cm}\cdots\hspace{0.2cm}  \int^{x_{2}^1}_{b_1}{\frac{dx_{1}^{m_1}}{\abs{x_{1}^{m_1}}} \cdots \int_0^{x_{1}^2}{\abs{\frac{{x}^z-1}{x-1}}dx}}.
\end{split}
\end{equation}
Now we show that there is a $c\in\R^+$ such that $\abs{\frac{x^z-1}{x-1}}\leq c (d+1)^{\abs{z}}$ for $0\leq x \leq d$ and all $z=a+b i$ for $a\in \R^+, b\in\R$.
If $d<1$, we have
\begin{eqnarray*}
 \abs{\frac{x^z-1}{x-1}}\leq \frac{x^a+1}{1-d} \leq  \frac{1}{1-d} (d+1)^{\abs{z}}.
\end{eqnarray*}
Finally, let  $d\geq 1.$ Since $\lim_{x\rightarrow 1}\abs{\frac{x^z-1}{x-1}}=\abs{z}\leq (d+1)^{\abs{z}}$, there are $\ep>0$ and $c>0$ such that $\abs{\frac{x^z-1}{x-1}}\leq c (d+1)^{\abs{z}}$ for $1-\ep\leq x \leq 1+\ep.$
It remains to show that $\abs{\frac{x^z-1}{x-1}}\leq c (d+1)^{\abs{z}}$ for $x\in(0,1-\ep)\cup(1+\ep,d):$
\begin{eqnarray*}
 \abs{\frac{x^z-1}{x-1}}\leq \frac{x^a+1}{\ep} \leq  \frac{1}{\ep} (d+1)^{\abs{z}}.
\end{eqnarray*}
In summary, we can choose $c\in\set R^{+}$ (independent of the choice of $z$) with $\abs{\frac{x^z-1}{x-1}}\leq c (d+1)^{\abs{z}}$ for $0\leq x \leq d$ and using~\eqref{Equ:BoundF} it follows that
\begin{eqnarray*}
f(z)&\leq& c\cdot e^{\log(d+1)\abs{z}} \underbrace{\int_0^{b_1\cdots b_k}{\abs{\frac{dx_{k}^{m_k}}{x_{k}^{m_k}}} \cdots \int_0^{x_{k}^2}{\abs{\frac{dx_{k}^1}{x_{k}^1-b_1\cdots b_{k-1}}}}} \hspace{0.2cm}\cdots\hspace{0.2cm}  \int^{x_{2}^1}_{b_1}{\abs{\frac{dx_{1}^{m_1}}{x_{1}^{m_1}}} \cdots \int_0^{x_{1}^2}{1dx}}}_{\in\R^+, \textnormal{ constant}}\\
&\leq& \bar{c}\cdot e^{\log(d+1)\abs{z}}
\end{eqnarray*}
for some $\bar{c}\in\R^+$ (again independent of the choice of $z$). To conclude that $f(z)$ is of C-type, we have to show that we can choose a $\bar{c}\in\R^+$ such that for all $b\in\R$ we have
\begin{eqnarray*}
&&\abs{f(b i,b_1,\ldots,b_k)}=\\
&&\hspace{0.5cm} \Biggl|\int_0^{b_1\cdots b_k}{\frac{dx_{k}^{m_k}}{x_{k}^{m_k}} \cdots \int_0^{x_{k}^2}{\frac{dx_{k}^1}{x_{k}^1-b_1\cdots b_{k-1}}}} \hspace{0.2cm}\cdots\hspace{0.2cm}  \int^{x_{2}^1}_{b_1}{\frac{dx_{1}^{m_1}}{x_{1}^{m_1}} \cdots \int_0^{x_{1}^2}{\frac{{x}^{b i}-1}{x-1}dx}}\Biggr|\leq \bar{c}\cdot e^{\abs{b}}.
\end{eqnarray*}
We show that there is a $c\in\R^+$ such that $\abs{\frac{x^{bi}-1}{x-1}}\leq c {\cdot e^{\abs{b}}}$ for $0\leq x \leq d$ and all $b\in\set R^{+}$: If $d<1$ we have
$\abs{\frac{x^{ib}-1}{x-1}}\leq \frac{2}{1-d} \leq  \frac{2}{1-d} \cdot e^{\abs{b}}$.
Finally let $d\geq 1.$ Since $\lim_{x\rightarrow 1}\abs{\frac{x^{bi}-1}{x-1}}=\abs{b}\leq e^{\abs{b}}$ there are $\ep>0$ and $c>0$ such that $\abs{\frac{x^{bi}-1}{x-1}}\leq c e^{\abs{b}}$ for $1-\ep\leq x \leq 1+\ep.$
It remains to show that $\abs{\frac{x^{bi}-1}{x-1}}\leq c e^{\abs{b}}$ for $x\in(0,1-\ep)\cup(1+\ep,d):$
$\abs{\frac{x^{bi}-1}{x-1}}\leq \frac{2}{\ep} \leq  \frac{2}{\ep} e^{\abs{b}}$.
As for the proof part concerning the exponential growth the integral can be bounded as desired.
\end{proof}

As a consequence the analytic continuation of $S$-sums $S_{m_1,m_2,\ldots,m_k}(b_1,b_2,\ldots,b_k;n)$ with $m_i\in\set N$ and $b_i\in\set R^{+}$ within the class of C-type functions is uniquely determined by the integral $I_{m_1,m_2,\ldots,m_k}(b_1,b_2,\ldots,b_k;z)$.

If not all $b_i\in\R^+$, we can split the integral at $0$ whenever the integration interval contains $0$. For example let us consider the integral representation of $\S{1,2,2}{-1,2,-\frac{1}3;n}$:
\begin{eqnarray*}
 \S{1,2,2}{-1,2,-\frac{1}3;n}=\int _0^{\frac{2}{3}}\int _0^{x_5}\int _{-2}^{x_4}\int _0^{x_3}\int _{-1}^{x_2}\frac{-1+x_1^n}{x_5 \left(2+x_4\right) x_3 \left(1+x_2\right) \left(-1+x_1\right)}dx_1dx_2dx_3dx_4dx_5.
\end{eqnarray*}
Splitting at zero yields
\begin{eqnarray*}
  && \left(\int _0^{\frac{2}{3}}\int _0^{x_5}\frac{1}{x_5 \left(2+x_4\right)}dx_4dx_5\right) \int _{-2}^0\int _{x_3}^0\int _{-1}^{x_2}-\frac{-1+(-1)^n(-x_1)^n}{x_3 \left(1+x_2\right) \left(-1+x_1\right)}dx_1dx_2dx_3\\
  &&+\left(\int _0^{\frac{2}{3}}\int _0^{x_5}\int_0^{x_4}\int _0^{x_3}\frac{1}{x_5 \left(2+x_4\right) x_3 \left(1+x_2\right)}dx_2dx_3dx_4dx_5\right) \int_{-1}^0 \frac{-1+(-1)^n(-x_1)^n}{-1+x_1} \, dx_1\\
  &&+\int _0^{\frac{2}{3}}\int _0^{x_5}\int _0^{x_4}\int _0^{x_3}\int _0^{x_2}\frac{-1+x_1^n}{x_5\left(2+x_4\right) x_3 \left(1+x_2\right) \left(-1+x_1\right)}dx_1dx_2dx_3dx_4dx_5.\\
\end{eqnarray*}
An analytic continuation of $(-1)^n$ does not exist (in particular, it is not of C-type), hence we cannot continue $\S{1,2,2}{-1,2,-\frac{1}3;n}$ analytically. However we can continue the functions
\begin{eqnarray*}
 f_e(n)&:=&\S{1,2,2}{-1,2,-\frac{1}3;2n}\\
  &=&\left(\int _0^{\frac{2}{3}}\int _0^{x_5}\frac{1}{x_5 \left(2+x_4\right)}dx_4dx_5\right) \int _{-2}^0\int _{x_3}^0\int _{-1}^{x_2}-\frac{-1+(-x_1)^{2n}}{x_3 \left(1+x_2\right) \left(-1+x_1\right)}dx_1dx_2dx_3\\
  &&+\left(\int _0^{\frac{2}{3}}\int _0^{x_5}\int_0^{x_4}\int _0^{x_3}\frac{1}{x_5 \left(2+x_4\right) x_3 \left(1+x_2\right)}dx_2dx_3dx_4dx_5\right) \int_{-1}^0 \frac{-1+(-x_1)^{2n}}{-1+x_1} \, dx_1\\
  &&+\int _0^{\frac{2}{3}}\int _0^{x_5}\int _0^{x_4}\int _0^{x_3}\int _0^{x_2}\frac{-1+x_1^{2n}}{x_5\left(2+x_4\right) x_3 \left(1+x_2\right) \left(-1+x_1\right)}dx_1dx_2dx_3dx_4dx_5\\
\end{eqnarray*}
and
\begin{eqnarray*}
f_o(n)&:=&\S{1,2,2}{-1,2,-\frac{1}3;2n+1}\\
  &= & \left(\int _0^{\frac{2}{3}}\int _0^{x_5}\frac{1}{x_5 \left(2+x_4\right)}dx_4dx_5\right) \int _{-2}^0\int _{x_3}^0\int _{-1}^{x_2}-\frac{-1-(-x_1)^{2n+1}}{x_3 \left(1+x_2\right) \left(-1+x_1\right)}dx_1dx_2dx_3\\
  &&+\left(\int _0^{\frac{2}{3}}\int _0^{x_5}\int_0^{x_4}\int _0^{x_3}\frac{1}{x_5 \left(2+x_4\right) x_3 \left(1+x_2\right)}dx_2dx_3dx_4dx_5\right) \int_{-1}^0 \frac{-1-(-x_1)^{2n+1}}{-1+x_1} \, dx_1\\
  &&+\int _0^{\frac{2}{3}}\int _0^{x_5}\int _0^{x_4}\int _0^{x_3}\int _0^{x_2}\frac{-1+x_1^{2n+1}}{x_5\left(2+x_4\right) x_3 \left(1+x_2\right) \left(-1+x_1\right)}dx_1dx_2dx_3dx_4dx_5\\
\end{eqnarray*}
analytically since $(-1)^n=1$ for $n$ even and $(-1)^n=-1$ for $n$ odd.
For a graphical illustration we refer to Figure~\ref{Fig:SSum}. More generally, there is the following result.

\begin{figure}
\centering
\includegraphics[width=10cm]{./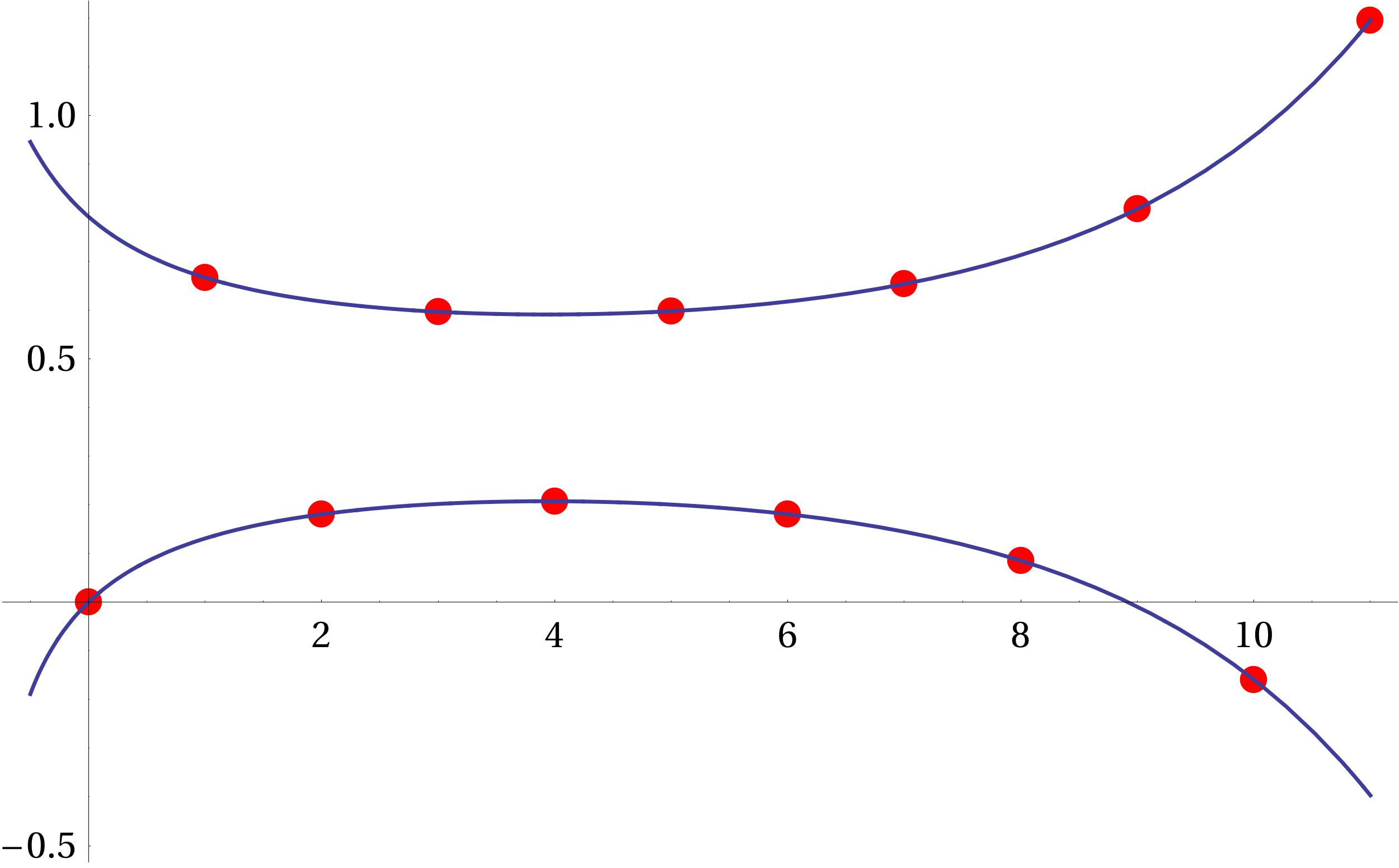}
\caption{\label{Fig:SSum}$\S{1,2,2}{-1,2,-\frac{1}3;n}$ for $n\in \{0,1,\ldots,11\}$, and $f_e\left(\frac{n}2\right)$,  $f_o\left(\frac{n-1}2\right)$ for $n\in (-\frac{1}2,11)$.}
\end{figure}

\begin{prop}\label{Prop:SCType}
For $m_i\in\N,$ $b_i\in\R^*$ and $z\in \C$ consider the integral $f(z)=I_{m_1,m_2,\ldots,m_k}(b_1,b_2,\ldots,b_k;z)$ from Theorem~\ref{SSintrep}.
Then $f(2z)$ and $f(2z+1)$ are of C-type.
\end{prop}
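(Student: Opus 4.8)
The plan is to reduce the statement to Proposition~\ref{Prop:PosSAnalyt} by exploiting the fact that in $f(z)=I_{m_1,\ldots,m_k}(b_1,\ldots,b_k;z)$ the variable $z$ enters \emph{only} through the innermost integrand $\frac{x^z-1}{x-1}$, every other integration being $z$-free. First I would split every integration whose interval contains $0$ at the point $0$, exactly as in the worked example preceding the statement; since the endpoints of the nested integrations are the partial products $b_1,b_1b_2,\ldots$, this splitting is forced precisely when some partial product is negative. After splitting, $f(z)$ becomes a finite $\Z$-linear combination of products of iterated integrals in which every integration variable keeps a fixed sign on its domain, and in which exactly one factor, the one containing the innermost integration, carries the whole $z$-dependence.

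Next I would apply the sign-flip substitution $x\mapsto -x$ to every variable ranging over a negative interval, mapping all domains to the positive half-line. The only effect on the $z$-dependent factor is that its innermost integrand becomes either $\frac{x^z-1}{x-1}$, on the pieces where the innermost variable stayed positive, or $\frac{(-1)^z x^z-1}{-x-1}$, on the pieces where it was negative. The factor $(-1)^z$ is exactly the obstruction to continuing $f(z)$ itself, since $(-1)^n$ admits no C-type continuation. The key observation is that fixing the parity removes this obstruction: for $z=2m$ with $m\in\N$ we have $(-1)^{2m}=1$, and for $z=2m+1$ we have $(-1)^{2m+1}=-1$, so the integer-argument functions $m\mapsto f(2m)$ and $m\mapsto f(2m+1)$ contain the sign factor as the genuine constant $1$, respectively $-1$. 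Their natural continuations are obtained by replacing $(-1)^n$ by that constant and $x^n$ by $x^{2z}$, respectively $x^{2z+1}$, on the positive domains. Thus $f(2z)$ and $f(2z+1)$ are finite $\Z$-linear combinations of products of $z$-free constant integrals with integrals whose innermost integrand is $\frac{x^{2z}-1}{x-1}$ or $\frac{x^{2z+1}-1}{x-1}$ over positive domains.

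It then remains to verify that each building block is of C-type, which I would do by revisiting the estimates in the proof of Proposition~\ref{Prop:PosSAnalyt} with $z$ replaced by $2z$ (resp.\ $2z+1$): analyticity in the right half-plane and the exponential-type bound carry over unchanged up to a harmless factor $2$ in the growth exponent. The delicate point is the third clause of the definition, the imaginary-axis bound with rate strictly below $\pi$. In the proof of Proposition~\ref{Prop:PosSAnalyt} that rate equals $1$, coming from $\lim_{x\to1}\bigl|\tfrac{x^{bi}-1}{x-1}\bigr|=\abs{b}\le e^{\abs{b}}$; after the substitution $z=iy$ the innermost exponent is $2iy$, so the rate doubles to $2$, and since $2<\pi$ the bound is still admissible. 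Finally, Lemma~\ref{Lemma:Closure}, together with the fact that the $z$-free factors are mere real constants, assembles these building blocks into $f(2z)$ and $f(2z+1)$, proving both are of C-type.

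I expect the main obstacle to be organizational rather than conceptual: carrying out the split-at-zero decomposition in the full generality of the nested integral of Theorem~\ref{SSintrep} and correctly bookkeeping the induced signs is combinatorially involved, even if each individual step is routine. It is worth underlining \emph{why} the specific dilations $2z$ and $2z+1$ work. A substitution $z\mapsto \lambda z+\mu$ collapses the alternating factor $(-1)^n$ to a constant in $n$ precisely when $\lambda$ is even; among such choices only $\lambda=2$ keeps the imaginary-axis rate, which equals $\lambda$, below $\pi$. An odd stretch such as $\lambda=3$ leaves the alternating factor $(-1)^m$ intact, whereas $\lambda=4$ already violates the rate bound $\pi$; this is precisely why one cannot do better than splitting by parity.
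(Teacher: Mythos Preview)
Your proposal is correct and follows exactly the route the paper intends: the paper's own proof is the single line ``analogous to Proposition~\ref{Prop:PosSAnalyt}'', and the worked example immediately preceding the proposition spells out precisely the split-at-zero decomposition and parity fix you describe. Your observation that the imaginary-axis rate doubles from $1$ to $2<\pi$, and your closing remark on why $\lambda=2$ is the only dilation that both collapses $(-1)^n$ and keeps the rate admissible, make explicit a point the paper leaves entirely to the reader.
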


\noindent The proof is analogous to Proposition~\ref{Prop:PosSAnalyt}.
Consequently the analytic continuation of  even and odd $S$-sums $S_{m_1,m_2,\ldots,m_k}(b_1,b_2,\ldots,b_k;2n)$ and $S_{m_1,m_2,\ldots,m_k}(b_1,b_2,\ldots,b_k;2n+1)$ with $m_i\in\set N$ and $b_i\in\set R^*$ within the class of C-type functions are uniquely determined with the integral representations $I_{m_1,m_2,\ldots,m_k}(b_1,b_2,\ldots,b_k;2z)$
and $I_{m_1,m_2,\ldots,m_k}(b_1,b_2,\ldots,b_k;2z+1)$, respectively.

We remark that in physics applications due to the definiteness of the crossing
relations in case of the respective process using the light--cone expansion
\cite{Wilson:1969zs,Zimmermann:1970,Brandt:1972nw,Frishman:1971qn,Blumlein:1996vs}
$n$ is either even {\it or} odd and the analytic continuation is carried out
from the respective subset.

\section{The Mellin Transform and Inverse Mellin Transform Between Multiple
Polylogarithms and $S$-sums}
\label{SSIntegralRep}
\label{SSMellin}

\vspace*{1mm}
\noindent
For a harmonic polylogarithm $h(x)$, i.e., a generalized polylogarithm with the index
set $\{-1,0,1\}$, the Mellin transform~\eqref{eq:MEL1}
can be expressed in terms harmonic sums and constants (written in terms of harmonic
polylogarithms at $1$ or equivalently in terms of generalized zeta values).
Extending the Mellin transform to
\begin{align*}
\Mp{h(x)}n&:=\M{h(x)}n=\int_{0}^1x^nh(x)dx\\
\Mp{\frac{h(x)}{1+x}}n&:=\M{\frac{h(x)}{1+x}}n=\int_{0}^1 x^n \frac{h(x)}{1+x}dx\\
\Mp{\frac{h(x)}{1-x}}n&:=\int_{0}^1\frac{(x^n-1)h(x)}{1-x}dx
\end{align*}
gives a 1--1 correspondence between the harmonic polylogarithms and the harmonic sums via the extended Mellin transform and its inverse transform.

In this Section we generalize this construction and look at the Mellin-transform of generalized polylogarithms with
indices in $\R\setminus(0,1)$.
It will turn out that these transforms can be expressed using a
subclass of the $S$-sums (containing the harmonic sums) together with constants (written in terms of generalized
polylogarithms evaluated at certain points or equivalently in terms of infinite $S$-sums; see Section~\ref{SSinfval}). In particular, using the inverse Mellin transform, the $S$-sums can be expressed in terms of the Mellin transform of generalized polylogarithms.

In order to establish this connection, we have to deal with functions like $f(x)=1/(a-x)$ with $a \in (0,1)$.  However, already for this simple case, the Mellin transform is not defined since the integral $\int_0^1\frac{x^n}{a-x}$ does not converge. Hence we modify the definition of the
Mellin transform to include these cases, like $1/(a-x),$ as follows.

Let $h(x)$ be a generalized polylogarithm with indices in $\R\setminus(0,1)$ or $h(x)=1$ for $x\in [0,1]$; let $a \in (0,\infty)$, $a_1\in (1,\infty)$, $a_2 \in (0,1]$. Then we define the
extended and modified Mellin-transform as follows:
\begin{eqnarray}
\Mp{h(x)}{n}:&=&\M{h(x)}{n}=\int_0^1{x^nh(x)dx},\nonumber\\
\Mp{\frac{h(x)}{a+x}}n:&=&\M{\frac{h(x)}{a+x}}{n}=\int_0^1{\frac{x^nh(x)}{a+x}dx},\nonumber\\
\Mp{\frac{h(x)}{a_1-x}}n:&=&\M{\frac{h(x)}{a_1-x}}{n}=\int_0^1{\frac{x^nh(x)}{a_1-x}dx},\nonumber\\
\Mp{\frac{h(x)}{a_2-x}}n:&=&\int_0^{\frac{1}{a_2}}{\frac{(x^n-1)h(a_2\; x)}{1-x}}dx=\int_0^1{\frac{((\frac{x}{a_2})^n-1)h(x)}{a_2-x}dx}.\label{SSmeldef4}
\end{eqnarray}

\noindent
In (\ref{SSmeldef4}) both extensions
$\int_0^{1/a_2} \frac{x^n-1}{1-x} h(a_2 x) dx$ and
$\int_0^1{ \frac{[({x}/{a_2})^n-1]}{a_2-x} h(x)}dx$ are equivalent, which can be
seen
using a simple substitution. Since from an algorithmic point of view it is easier to handle the second integral, we prefer this representation.
From now on we will call the extended and modified Mellin transform $M^+$ just Mellin transform and we will write $M$ instead of $M^+.$

For later considerations the following property (see Section~\ref{Sec:AnalyticCont}) will be helpful.

\begin{lemma}\label{Lemma:MHPLCType}
Let $m_i\in\R\setminus(0,1)$, $a\in\R^*$ and $m\in\{0,1\}$ and define
$f(z)=\M{\frac{H_{m_1,\dots,m_k}(x)}{(a\pm x)^m}}{z}$. Then $f(z)$, $f(2z)$ and $f(2z+1)$ are of C-type.
\end{lemma}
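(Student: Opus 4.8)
The plan is to mimic the proofs of Proposition~\ref{Prop:PosSAnalyt} and Proposition~\ref{Prop:SCType}, since each of the Mellin transforms in \eqref{SSmeldef4} is an integral of the same type as the iterated integrals $I_{m_1,\dots,m_k}$ of Theorem~\ref{SSintrep}. First I would reduce the stated generality to the three kernels that actually appear in \eqref{SSmeldef4}: for $m=0$ the weight is trivial, and for $m=1$ the sign of $a$ together with the choice of $\pm$ selects $1/(a+x)$ with $a>0$, $1/(a_1-x)$ with $a_1>1$, or the regularized $1/(a_2-x)$ with $a_2\in(0,1]$; e.g.\ a kernel $1/(a+x)$ with $a\in(-1,0)$ is rewritten as $-1/(\abs a-x)$ with $\abs a\in(0,1)$ and falls into the last case. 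I will also record that, since every index lies in $\R\setminus(0,1)$, the value $q=\min_{m_i>0}m_i$ satisfies $q\ge 1$, so $\H{m_1,\dots,m_k}{x}$ is analytic on $(0,1)$ with at worst integrable log-power singularities at the endpoints; hence $\int_0^1\abs{\H{m_1,\dots,m_k}{x}}dx<\infty$.

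Analyticity on the right half plane is routine: the integrand is analytic in $z$ and, for $\Re(z)\ge 0$, dominated by an $x$-integrable function independent of $z$, so one may differentiate under the integral sign. For the growth estimates I would distinguish the kernels. When the kernel is bounded on $[0,1]$ (the unweighted case and $1/(a+x)$, $1/(a_1-x)$), then, using $\abs{x^z}=x^{\Re(z)}\le 1$ on $(0,1]$, one gets $\abs{f(z)}\le C\int_0^1\abs{\H{m_1,\dots,m_k}{x}}dx$ for all $\Re(z)\ge 0$; thus $f$ is bounded there and trivially of C-type. The only substantial case is the regularized kernel, where $f(z)=\int_0^1\frac{((x/a_2)^z-1)\,\H{m_1,\dots,m_k}{x}}{a_2-x}dx$.

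Here the crux is to bound $\bigl|\frac{(x/a_2)^z-1}{a_2-x}\bigr|$, which is precisely the estimate performed in Proposition~\ref{Prop:PosSAnalyt}: the substitution $u=x/a_2$ turns it into $\tfrac{1}{a_2}\bigl|\frac{u^z-1}{u-1}\bigr|$ over $u\in[0,1/a_2]$ with $1/a_2\ge 1$. Reusing that bound gives $\bigl|\frac{u^z-1}{u-1}\bigr|\le c\,(1/a_2+1)^{\abs z}$ for $\Re(z)\ge 0$, so $\abs{f(z)}\le\bar c\,e^{\log(1/a_2+1)\abs z}$ and $f$ is of exponential type. On the imaginary axis $z=iy$ one has $\abs{(x/a_2)^{iy}}=1$, so the kernel is bounded away from $x=a_2$ and its modulus tends to $\abs y/a_2$ as $x\to a_2$; it therefore grows only polynomially in $\abs y$ and is dominated by $C\,e^{c\abs y}$ with $c$ arbitrarily small, in particular $c<\pi$. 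Closure under sums and multiplication by $a^z$ (Lemma~\ref{Lemma:Closure}) then lets me recombine the reduced pieces and absorb the factor $a_2^{-z}$, establishing that $f(z)$ is of C-type.

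Finally, for $f(2z)$ and $f(2z+1)$ I would argue as in Proposition~\ref{Prop:SCType}. Analyticity and the exponential bound are inherited after substituting $z\mapsto 2z$ or $z\mapsto 2z+1$, the exponential type being preserved. The genuine point — which does not follow automatically from $f(z)$ being of C-type, since the imaginary-axis constant could in principle double — is that the imaginary-axis bounds $\abs{f(2iy)}$ and $\abs{f(1+2iy)}$ still grow only polynomially in $\abs y$ (again because $\abs{(x/a_2)^{2iy}}=1$), so the required constant stays strictly below $\pi$. Should the passage to an explicit iterated-integral representation force a split at an interior point and produce a factor $(-1)^n$ as in Section~\ref{Sec:AnalyticCont}, the specializations $2z$ and $2z+1$ remove it via $(-1)^{2n}=1$ and $(-1)^{2n+1}=-1$. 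I expect the main obstacle to be exactly the uniform kernel estimate in the regularized case together with the verification that the imaginary-axis growth constant remains below $\pi$ after passing to the arguments $2z$ and $2z+1$; both are settled by reusing the computations of Propositions~\ref{Prop:PosSAnalyt} and \ref{Prop:SCType}.
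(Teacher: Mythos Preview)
The paper does not give an explicit proof of this lemma; it merely states it with a parenthetical reference to Section~\ref{Sec:AnalyticCont}, leaving the reader to adapt the arguments of Propositions~\ref{Prop:PosSAnalyt} and~\ref{Prop:SCType}. Your proposal does precisely that, and the case distinction according to the kernels in~\eqref{SSmeldef4} together with the reuse of the bound on $\bigl|\tfrac{u^z-1}{u-1}\bigr|$ is the intended route. One small remark: there is no external factor $a_2^{-z}$ to ``absorb'' via Lemma~\ref{Lemma:Closure}, since the regularized Mellin transform already contains $(x/a_2)^z-1$ inside the integral and your bound on the kernel yields the exponential-type estimate for $f(z)$ directly; the appeal to Lemma~\ref{Lemma:Closure} is only needed if you split the integrand into several pieces.
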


\subsection{Calculating the Mellin Transform in terms of $S$-sums}

Subsequently, we solve the following problem:\\
\textbf{Given} ${\ve m}=(m_1,\dots,m_k)$ with $m_i\in\R\setminus(0,1)$, $a\in\R^*$ and $m\in\{0,1\}$.\\
\textbf{Find} an expression $F(n)$ given as a linear combination of $S$-sums such that for all $n\in\N$ we have
\begin{equation}\label{Equ:MellRep}
\M{\frac{H_{\ve m}(x)}{(a\pm x)^m}}{n}=F(n).
\end{equation}

Note that this construction represents the Mellin transform of weighted generalized polylogarithms only for positive integer points. However, the uniquely determined analytic continuation (among all the C-type functions) for even and odd $S$-sums describes the Mellin transform in the complex plane; further details are given below of~\eqref{Equ:SSumLinCombForMHPL}.

In order to solve the specified problem,
we enhance the ideas of~\cite{Remiddi:1999ew} for
analytic aspects from the class of harmonic polylogarithms to the class of generalized polylogarithms.

Namely, due to the following Lemma we are able to reduce the calculation of the Mellin transform of generalized polylogarithms with indices in
$\R\setminus(0,1)$ weighted by $1/(a + x)$ or $1/(a-x)$ to the calculation of the Mellin transform of generalized polylogarithms with indices
in $\R\setminus(0,1)$ which are not weighted, \ie to the calculation of expressions of the form $\M{\H{\ve m}{x}}n.$

\begin{lemma}\label{Lemma:Mellin:ToNonWeighted}
For $n\in \N,$ $\ve m~=~(m_1,\overline{\ve{m}})~=~(m_1,m_2,\ldots,m_k)$ with $m_i\in \R \setminus (0,1)$ and
$a \in (0,\infty)$, $a_1\in (1,\infty)$, $a_2 \in (0,1)$, we have
\begin{eqnarray*}
\M{\frac{1}{a+x}}n &=&\left\{
	\begin{array}{ll}
		(-a)^n\left(\S{1}{-\frac{1}{a};n}+\H{-a}{1}\right),& \textnormal{if } 0< a< 1\\
		(-a)^n\left(\S{1}{-\frac{1}{a};n}-\S{1}{-\frac{1}{a};\infty}\right),& \textnormal{if } 1\geq a\\
		 \end{array} \right.\\
\M{\frac{1}{a-x}}n &=&\left\{
	\begin{array}{ll}
		-\S{1}{\frac{1}{a};n},& \textnormal{if } 0< a\leq 1\\
		a^n\left(-\S{1}{\frac{1}{a};n}+\S{1}{\frac{1}{a};\infty}\right),& \textnormal{if } 1<a
		 \end{array} \right.\\
\M{\frac{\H{\ve m}{x}}{a+x}}n &=&-n\;\M{\H{-a,\ve m}{x}}{n-1}+\H{-a,\ve m}{1},\\
\M{\frac{\H{\ve m}{x}}{1-x}}n &=&-n\;\M{\H{1,\ve m}{x}}{n-1},\\
\M{\frac{\H{\ve m}{x}}{a_1-x}}n &=&-n\;\M{\H{a_1,\ve m}{x}}{n-1}+\H{a_1,\ve m}1,\\
\M{\frac{\H{m_1,\overline{\ve{m}}}{x}}{a_2-x}}n &=&\left\{
		\begin{array}{ll}			 -\H{m_1,\overline{\ve{m}}}1\S{1}{\frac{1}{a_2};n}-\sum_{i=1}^{n}{\frac{\left(\frac{1}{a_2}\right)^i\Mma{\frac{\Hma{\overline{\ve{m}}}{x}}{\abs{m_1}+x}}i}{i}},& \textnormal{if } m_1< 0\\	 -\H{0,\overline{\ve{m}}}1\S{1}{\frac{1}{a_2};n}+\sum_{i=1}^{n}{\frac{\left(\frac{1}{a_2}\right)^i\Mma{\frac{\Hma{\overline{\ve{m}}}{x}}{x}}i}{i}},&  \textnormal{if } m_1=0  \\	 \sum_{i=1}^{n}{\frac{\left(\frac{1}{a_2}\right)^i\Mma{\frac{\Hma{\overline{\ve{m}}}{x}}{1-x}}i}{i}},& \textnormal{if } m_1 = 1, \\	 -\H{m_1,\overline{\ve{m}}}1\S{1}{\frac{1}{a_2};n}+\sum_{i=1}^{n}{\frac{\left(\frac{1}{a_2}\right)^i\Mma{\frac{\Hma{\overline{\ve{m}}}{x}}{m_1-x}}i}{i}},& \textnormal{if } m_1 > 1,
		 \end{array} \right.
\end{eqnarray*}
where the arising constants on the right hand side are finite.
\label{SSmelweighted}
\end{lemma}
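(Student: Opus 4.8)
The plan is to split the seven identities into three groups and reduce each to the integral representation $\S{1}{b;n}=\int_0^b\frac{t^n-1}{t-1}dt$ together with the derivative rule $\frac{d}{dx}\H{\ve m}{x}=f_{m_1}(x)\H{m_2,\ldots,m_k}{x}$. I would first dispatch the two scalar cases. For $\M{\frac{1}{a-x}}{n}$ with $0<a\le 1$ the modified transform is $\int_0^1\frac{(x/a)^n-1}{a-x}dx$, and the linear substitution $t=x/a$ in $\S{1}{\frac1a;n}=\int_0^{1/a}\frac{t^n-1}{t-1}dt$ produces exactly $-\M{\frac{1}{a-x}}{n}$. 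For $a>1$ the denominator no longer vanishes on $[0,1]$, so $\M{\frac{1}{a-x}}{n}=\int_0^1\frac{x^n}{a-x}dx$; expanding $\frac{1}{a-x}$ geometrically and integrating termwise gives $a^n\sum_{i>n}\frac{(1/a)^i}{i}=a^n\bigl(\S{1}{\frac1a;\infty}-\S{1}{\frac1a;n}\bigr)$. The two $\M{\frac{1}{a+x}}{n}$ formulas follow identically via $\S{1}{-\frac1a;n}$ and the substitution $x=-at$; their agreement is just the fact that $\S{1}{-\frac1a;\infty}=-\H{-a}{1}$ when $a\ge 1$, so they are one identity written with whichever object is finite.

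The three recursive formulas with $a>0$, $a=1$, and $a_1>1$ are integration by parts. Writing $c$ for the leading letter ($-a$, $1$, or $a_1$), the derivative rule turns the weighted polylogarithm into $\frac{d}{dx}\H{c,\ve m}{x}$ with the matching kernel $f_c$. For $a>0$ and for $a_1>1$, where the kernel is regular on $[0,1]$, I compute $\int_0^1 x^n\,d\H{c,\ve m}{x}=\bigl[x^n\H{c,\ve m}{x}\bigr]_0^1-n\int_0^1 x^{n-1}\H{c,\ve m}{x}dx$: the lower boundary term vanishes since $\H{c,\ve m}{0}=0$, the upper term gives $\H{c,\ve m}{1}$, and the remaining integral is $\M{\H{c,\ve m}{x}}{n-1}$. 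For the $\frac{1}{1-x}$ case I would instead start from the regularized integrand $\int_0^1(x^n-1)\,d\H{1,\ve m}{x}$, where now \emph{both} boundary terms vanish ($x^n-1$ kills the logarithmic singularity of $\H{1,\ve m}{x}$ at $x=1$, and $\H{1,\ve m}{0}=0$ at $x=0$), leaving precisely $-n\,\M{\H{1,\ve m}{x}}{n-1}$.

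The genuinely new part is $\M{\frac{\H{m_1,\overline{\ve m}}{x}}{a_2-x}}{n}$ with $a_2\in(0,1)$, where $a_2-x$ vanishes inside $[0,1]$. The key observation is that the modified kernel is harmless: $\frac{(x/a_2)^n-1}{a_2-x}=-\frac{d}{dx}\S{1}{\frac{x}{a_2};n}$ is in fact the polynomial $-\frac1{a_2}\sum_{j=0}^{n-1}(x/a_2)^j$. I would then insert $\H{m_1,\overline{\ve m}}{x}=\int_0^x f_{m_1}(t)\H{\overline{\ve m}}{t}\,dt$, interchange the order of integration (legitimate since the kernel is a polynomial and $\H{\overline{\ve m}}{t}$ is integrable), and evaluate the inner integral from the antiderivative above:
\[
\int_t^1\frac{(x/a_2)^n-1}{a_2-x}dx=\S{1}{\tfrac{t}{a_2};n}-\S{1}{\tfrac1{a_2};n}.
\]
Splitting this difference yields two pieces. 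The constant piece $-\S{1}{\frac1{a_2};n}\int_0^1 f_{m_1}(t)\H{\overline{\ve m}}{t}\,dt=-\S{1}{\frac1{a_2};n}\,\H{m_1,\overline{\ve m}}{1}$ is the advertised constant, finite exactly when $m_1\neq 1$. The variable piece, after expanding $\S{1}{\frac{t}{a_2};n}=\sum_{i=1}^n\frac{(1/a_2)^i}{i}t^i$, becomes $\sum_{i=1}^n\frac{(1/a_2)^i}{i}\int_0^1 t^i f_{m_1}(t)\H{\overline{\ve m}}{t}\,dt$, and $f_{m_1}\H{\overline{\ve m}}{}$ is precisely $\frac{\H{\overline{\ve m}}{x}}{\abs{m_1}+x}$, $\frac{\H{\overline{\ve m}}{x}}{x}$, or $\frac{\H{\overline{\ve m}}{x}}{m_1-x}$ in the subcases $m_1<0$, $m_1=0$, $m_1>1$, giving the stated sums of Mellin transforms.

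The main obstacle is the boundary subcase $m_1=1$, where this separation breaks down: both $\H{1,\overline{\ve m}}{1}$ and the individual integrals $\int_0^1 t^i\frac{\H{\overline{\ve m}}{t}}{1-t}dt$ diverge. Here I would \emph{not} split the difference $\S{1}{\frac{t}{a_2};n}-\S{1}{\frac1{a_2};n}$ but keep it intact, so that the regularizing $-1$ rides along with each power $t^i$; recombining then gives $\sum_{i=1}^n\frac{(1/a_2)^i}{i}\int_0^1\frac{(t^i-1)\H{\overline{\ve m}}{t}}{1-t}dt=\sum_{i=1}^n\frac{(1/a_2)^i}{i}\M{\frac{\H{\overline{\ve m}}{x}}{1-x}}{i}$, exactly the modified Mellin transform with no separate constant. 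Verifying that this cancellation of the $x=1$ singularity is uniform enough to justify interchanging the finite sum with the integral is the only delicate point; since all sums are finite and the singularity is logarithmic, this interchange is routine.
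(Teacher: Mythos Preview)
Your proposal is correct and follows essentially the same route as the paper. The only cosmetic differences are that the paper integrates by parts in the form $\int_0^1 x^n\,\H{c,\ve m}{x}\,dx$ and then shifts $n\mapsto n+1$, whereas you integrate $\int_0^1 x^n\,d\H{c,\ve m}{x}$ directly; and for the $a_2\in(0,1)$ case the paper phrases the computation as integration by parts with the polynomial antiderivative $v=-\S{1}{x/a_2;n}$ (and, in the subcase $m_1=1$, passes to a limit $\epsilon\to 1^-$ to see the boundary cancellation), whereas you unfold $\H{m_1,\overline{\ve m}}{x}=\int_0^x f_{m_1}\H{\overline{\ve m}}{}$ and swap the integrals, keeping $t^i-1$ together for $m_1=1$. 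These are the same identity viewed from the two ends of the integration-by-parts/Fubini duality.
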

\begin{proof}
For $a>0$ we get
\begin{eqnarray*}
\M{\frac{1}{a+x}}n
	 &=&\int_0^1{\frac{x^n}{a+x}dx}=\int_0^1{\frac{x^n-(-a)^n}{a+x}dx}+\int_0^1{\frac{(-a)^n}{a+x}dx}\\
	 &=&\int_0^1{(-a)^{n-1}\sum_{i=0}^{n-1}{\frac{x^i}{(-a)^i}dx}}+(-a)^n\H{-a}{1}\\
	&=&(-a)^{n-1}\sum_{i=0}^{n-1}{\frac{1}{(-a)^i(i+1)}}+(-a)^n\H{-a}{1}\\
	&=&(-a)^n\S{1}{-\frac{1}{a};n}+(-a)^n\H{-a}{1}.
\end{eqnarray*}
If $a \geq 1$ this is equal to $(-a)^n\S{1}{-\frac{1}{a};n}-\S{1}{-\frac{1}{a};\infty}.$\\
For $0<a\leq1$ we get
\begin{eqnarray*}
\M{\frac{1}{a-x}}n
	 &=&\frac{1}{a^n}\int_0^1{\frac{x^n-a^n}{a-x}dx}=-\frac{1}{a^n}\int_0^1{a^{n-1}\sum_{i=0}^{n-1}\left(\frac{x}{a}\right)^idx}\\
	 &=&-\frac{1}{a}\sum_{i=0}^{n-1}\frac{1}{a^i}\int_0^1{x^idx}=-\frac{1}{a}\sum_{i=0}^{n-1}\frac{1}{a^i(1+i)}\\
	&=&-\S{1}{\frac{1}{a};n}.
\end{eqnarray*}
For $a>1$ we get
\begin{eqnarray*}
\M{\frac{1}{a-x}}n
	 &=&\int_0^1{\frac{x^n}{a-x}dx}=\int_0^1{\frac{x^n-a^n}{a-x}dx}+a^n\int_0^1{\frac{1}{a-x}dx}\\
	 &=&-a^n\S{1}{\frac{1}{a};n}+a^n\H{a}{1}=-a^n\S{1}{\frac{1}{a};n}+a^n\S{1}{\frac{1}{a};\infty}.
\end{eqnarray*}
For $a>0$ we get
\begin{eqnarray*}
\int_0^1 x^n \H{-a,\ve m}{x}dx
		&=&\frac{\H{-a,\ve m}1}{n+1}-\frac{1}{n+1}\int_0^1 \frac{x^{n+1}}{(a+x)}\H{\ve m}{x}dx\\
		&=&\frac{\H{-a,\ve m}1}{n+1}-\frac{1}{n+1}\M{\frac{\H{\ve m}{x}}{a+x}}{n+1}.
\end{eqnarray*}
Hence we get
\begin{eqnarray*}
\M{\frac{\H{\ve m}{x}}{a+x}}{n+1} &=&-(n+1)\M{\H{-a,\ve m}{x}}{n}+\H{-a,\ve m}{1}.
\end{eqnarray*}
Similarly we get
\begin{eqnarray*}
\int_0^1 x^n \H{1,\ve m}{x}dx
		&=&\frac{1}{n+1}\lim_{\epsilon \rightarrow 1}\left(\epsilon^{n+1}\H{1,\ve m}{\epsilon}-\int_0^{\epsilon}
			{\frac{x^{n+1}-1}{1-x}\H{\ve m}x dx}+\H{1,\ve m}{\epsilon} \right)\\
		&=&\frac{1}{n+1}\int_0^{1}{\frac{x^{n+1}-1}{1-x}\H{\ve m}x dx}\\
		&=&\frac{1}{n+1}\M{\frac{\H{1,\ve m}{x}}{1-x}}{n+1}.
\end{eqnarray*}
And hence
\begin{eqnarray}
\M{\frac{\H{\ve m}{x}}{1-x}}{n+1} &=&-(n+1)\M{\H{1,\ve m}{x}}{n}.
\end{eqnarray}
For  $a_1\in (1,\infty)$ we get
\begin{eqnarray*}
\int_0^1 x^n \H{a_1,\ve m}{x}dx
		&=&\frac{\H{a_1,\ve m}1}{n+1}-\frac{1}{n+1}\int_0^1 \frac{x^{n+1}}{(a_1-x)}\H{\ve m}{x}dx\\
		&=&\frac{\H{a_1,\ve m}1}{n+1}-\frac{1}{n+1}\M{\frac{\H{\ve m}{x}}{a_1-x}}{n+1}.
\end{eqnarray*}
Hence we get
\begin{eqnarray*}
\M{\frac{\H{\ve m}{x}}{a_1-x}}{n+1} &=&-(n+1)\M{\H{a_1,\ve m}{x}}{n}+\H{a_1,\ve m}{1}.
\end{eqnarray*}
For $a_2 \in (0,1)$ and $m_1<0$ we get
\begin{eqnarray*}
\M{\frac{\H{m_1,\ve m}{x}}{a_2-x}}n &=&\frac{1}{a_2^n}\int_0^1{\frac{(x^n-a_2^n)\H{m_1,\ve m}{x}}{a_2-x}dx}\\
	&=&\left.-\H{m_1,\ve m}{x}\sum_{i=1}^n{\frac{\left(\frac{x}{a_2}\right)^i}{i}}\right|_0^1+\int_0^1{\frac{\H{\ve m}x}{\abs{m_1}+x}\sum_{i=1}^n{\frac{\left(\frac{x}{a_2}\right)^i}{i}}dx}\\
	&=&-\H{m_1,\ve m}1\S{1}{\frac{1}{a_2};n}+\sum_{i=1}^n{\frac{\left(\frac{1}{a_2}\right)^i}{i}}\int_0^1{\frac{x^i\H{\ve m}x}{\abs{m_1}+x}dx}\\
	&=&-\H{m_1,\ve m}1\S{1}{\frac{1}{a_2};n}+\sum_{i=1}^n{\frac{\left(\frac{1}{a_2}\right)^i}{i}}\M{\frac{\H{\ve m}x}{\abs{m_1}+x}}{i},
\end{eqnarray*}
\begin{eqnarray*}
\M{\frac{\H{0,\ve m}{x}}{a_2-x}}n &=&\frac{1}{a_2^n}\int_0^1{\frac{(x^n-a_2^n)\H{0,\ve m}{x}}{a_2-x}dx}\\
	&=&\left.-\H{0,\ve m}{x}\sum_{i=1}^n{\frac{\left(\frac{x}{a_2}\right)^i}{i}}\right|_0^1+\int_0^1{\frac{\H{\ve m}x}{x}\sum_{i=1}^n{\frac{\left(\frac{x}{a_2}\right)^i}{i}}dx}\\
	&=&-\H{0,\ve m}1\S{1}{\frac{1}{a_2};n}+\sum_{i=1}^n{\frac{\left(\frac{1}{a_2}\right)^i}{i}}\int_0^1{\frac{x^i\H{\ve m}x}{x}dx}\\
	&=&-\H{0,\ve m}1\S{1}{\frac{1}{a_2};n}+\sum_{i=1}^n{\frac{\left(\frac{1}{a_2}\right)^i}{i}}\M{\frac{\H{\ve m}x}{x}}{i},
\end{eqnarray*}
and
\begin{eqnarray*}
\M{\frac{\H{1,\ve m}{x}}{a_2-x}}n &=&\frac{1}{a_2^n}\lim_{\epsilon \rightarrow 1} \int_0^{\epsilon}{\frac{(x^n-a_2^n)\H{m_1,\ve m}{x}}{a_2-x}dx}\\
	&=&\lim_{\epsilon \rightarrow 1}\left(\left.-\H{1,\ve m}{x}\sum_{i=1}^n{\frac{\left(\frac{x}{a_2}\right)^i}{i}}\right|_0^\epsilon+\int_0^\epsilon{\frac{\H{\ve m}x}{1-x}\sum_{i=1}^n{\frac{\left(\frac{x}{a_2}\right)^i}{i}}dx}\right)\\
	&=&\lim_{\epsilon \rightarrow 1}\left(-\H{1,\ve m}{\epsilon}\S{1}{\frac{\epsilon}{a_2};n}+\sum_{i=1}^n{\frac{\left(\frac{1}{a_2}\right)^i}{i}}\int_0^\epsilon{\frac{x^i\H{\ve m}x}{1-x}dx}\right)\\
	&=&\lim_{\epsilon \rightarrow 1}\Biggl(-\H{1,\ve m}{\epsilon}\S{1}{\frac{\epsilon}{a_2};n}+\S{1}{\frac{1}{a_2};n}\H{1,\ve m}{\epsilon}\\&&+\sum_{i=1}^n{\frac{\left(\frac{1}{a_2}\right)^i}{i}}\int_0^\epsilon{\frac{(x^i-1)\H{\ve m}x}{1-x}dx}\Biggr)\\
	&=&\sum_{i=1}^n{\frac{\left(\frac{1}{a_2}\right)^i}{i}}\M{\frac{\H{\ve m}x}{1-x}}{i}.
\end{eqnarray*}
For $a_2 \in (0,1)$ and $m_1>0$ we get
\begin{eqnarray*}
\M{\frac{\H{m_1,\ve m}{x}}{a_2-x}}n &=&\frac{1}{a_2^n}\int_0^1{\frac{(x^n-a_2^n)\H{m_1,\ve m}{x}}{a_2-x}dx}\\
	&=&\left.-\H{m_1,\ve m}{x}\sum_{i=1}^n{\frac{\left(\frac{x}{a_2}\right)^i}{i}}\right|_0^1+\int_0^1{\frac{\H{\ve m}x}{m_1-x}\sum_{i=1}^n{\frac{\left(\frac{x}{a_2}\right)^i}{i}}dx}\\
	&=&-\H{m_1,\ve m}1\S{1}{\frac{1}{a_2};n}+\sum_{i=1}^n{\frac{\left(\frac{1}{a_2}\right)^i}{i}}\int_0^1{\frac{x^i\H{\ve m}x}{m_1-x}dx}\\
	&=&-\H{m_1,\ve m}1\S{1}{\frac{1}{a_2};n}+\sum_{i=1}^n{\frac{\left(\frac{1}{a_2}\right)^i}{i}}\M{\frac{\H{\ve m}x}{m_1-x}}{i}.
\end{eqnarray*}
\end{proof}

To calculate a non-weighted Mellin transform $\M{\H{\ve m}{x}}n$
we proceed by recursion. First let us state the base cases, \ie the Mellin transforms of generalized polylogarithms with depth 1.
\begin{lemma}\label{Lemma:MellinBaseCase} For $n\in \N$, $a \in \R \setminus (0,1)$  we have
\begin{eqnarray*}
\M{\H{a}{x}}n=\left\{
		 \begin{array}{ll}
			\frac{-1}{(n+1)^2}\left(1+a^{n+1}(n+1)\S{1}{\frac{1}{a};n}\right.\\
				\hspace{1.2cm}\left.-(a^{n+1}-1)(n+1)\S{1}{\frac{1}{a};\infty}  \right),& \textnormal{if } a \leq -1
\vspace{0.3cm}\\
			\frac{-1}{(n+1)^2}\left(1+a^{n+1}(n+1)\S{1}{\frac{1}{a};n}\right.\\
				\hspace{1.2cm}\left.+(a^{n+1}-1)(n+1)\H{a}1\right),&  \textnormal{if } -1 < a < 0
\vspace{0.3cm}\\
			\frac{-1}{(n+1)^2},&  \textnormal{if } a=0
\vspace{0.3cm}\\
			\frac{1}{(n+1)^2}\left(1+(n+1)\S{1}{n}\right),&  \textnormal{if } a=1
\vspace{0.3cm}\\
			\frac{1}{(n+1)^2}\left(1+a^{n+1}(n+1)\S{1}{\frac{1}{a};n}\right.\\
				\hspace{1.2cm}\left.-(a^{n+1}-1)(n+1)\S{1}{\frac{1}{a};\infty}  \right),& \textnormal{if } a > 1
		 \end{array} \right.
\end{eqnarray*}
where the arising constants are finite.
\label{SSweight1mel}
\end{lemma}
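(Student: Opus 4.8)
The plan is to reduce everything to the depth-$0$ Mellin transforms $\M{\frac{1}{a+x}}{n}$ and $\M{\frac{1}{a-x}}{n}$ already computed in Lemma~\ref{SSmelweighted}, by a single integration by parts. The case $a=0$ is immediate, since $\H{0}{x}=\log x$ and $\int_0^1 x^n\log x\,dx=-\tfrac{1}{(n+1)^2}$ by elementary integration, which is exactly the claimed value. For $a\neq0$ I would use the definition of the depth-one polylogarithm, namely $\H{a}{x}=\int_0^x f_a(y)\,dy$ with $\tfrac{d}{dx}\H{a}{x}=f_a(x)=\tfrac{1}{\abs{a}-\sign{a}\,x}$ and $\H{a}{0}=0$, to write
\[
\M{\H{a}{x}}{n}=\frac{x^{n+1}}{n+1}\H{a}{x}\Big|_0^1-\frac{1}{n+1}\int_0^1\frac{x^{n+1}}{\abs{a}-\sign{a}\,x}\,dx=\frac{\H{a}{1}}{n+1}-\frac{1}{n+1}\M{f_a(x)}{n+1}.
\]

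For $a<0$ we have $f_a(x)=\tfrac{1}{\abs{a}+x}$ and for $a>1$ we have $f_a(x)=\tfrac{1}{a-x}$; in both regimes $\H{a}{1}$ is finite, so the boundary term causes no difficulty and I would simply substitute the relevant formula from Lemma~\ref{SSmelweighted}, evaluated at index $n+1$, expressing the result through $\S{1}{\tfrac1a;n+1}$ together with either $\H{a}{1}$ or $\S{1}{\tfrac1a;\infty}$. To reach the stated closed forms I would then apply the index shift $\S{1}{\tfrac1a;n+1}=\S{1}{\tfrac1a;n}+\tfrac{1}{a^{n+1}(n+1)}$, which produces the isolated $\tfrac{1}{(n+1)^2}$ summand, and collect terms. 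The only bookkeeping that remains is how the boundary constant is recorded in each sub-case: for depth one one has $\H{a}{1}=-\S{1}{\tfrac1a;\infty}$ whenever the infinite sum converges (that is, for $a\le-1$ or $a>1$), whereas for $-1<a<0$ the sum $\S{1}{\tfrac1a;\infty}$ diverges and the finite constant must be kept as $\H{a}{1}$. This is precisely the distinction between the first two and the last two sub-cases of the statement, and a short computation in each regime reproduces the displayed expressions; a matching check (for instance $-1<a<0$) confirms that the factored form $\tfrac{-1}{(n+1)^2}\bigl(1+a^{n+1}(n+1)\S{1}{\tfrac1a;n}+(a^{n+1}-1)(n+1)\H{a}{1}\bigr)$ comes out correctly.

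The one genuinely delicate point is $a=1$, where $\H{1}{x}=-\log(1-x)$ blows up at $x=1$ and the boundary term above is divergent. Here I would regularize exactly as in the proof of Lemma~\ref{SSmelweighted}, writing $\M{\H{1}{x}}{n}=\lim_{\ep\to1^-}\int_0^{\ep}x^n\H{1}{x}\,dx$, integrating by parts on $[0,\ep]$, and splitting $\tfrac{x^{n+1}}{1-x}=\tfrac{x^{n+1}-1}{1-x}+\tfrac{1}{1-x}$ so that the divergent contribution $\H{1}{\ep}$ cancels against the boundary term $\tfrac{\ep^{n+1}}{n+1}\H{1}{\ep}$ in the limit, the surviving factor $\ep^{n+1}-1$ suppressing the logarithm. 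Using the standard value $\int_0^1\tfrac{x^{m}-1}{1-x}\,dx=-\S{1}{m}$ then leaves $\tfrac{\S{1}{n+1}}{n+1}$, and the same index shift turns this into $\tfrac{1}{(n+1)^2}\bigl(1+(n+1)\S{1}{n}\bigr)$, as claimed.

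Finally, the assertion that all arising constants are finite follows in each branch from the regime in which they are evaluated: every occurring $\H{a}{1}$ sits in a range where the depth-one polylogarithm is finite, and every occurring $\S{1}{\tfrac1a;\infty}$ appears only for $\abs{\tfrac1a}\le1$, where Theorem~\ref{SSconsumthm} guarantees (absolute or conditional) convergence. I expect the integration-by-parts reduction and the substitution to be entirely routine, with the $\ep\to1^-$ regularization of the $a=1$ case being the only step that requires genuine care.
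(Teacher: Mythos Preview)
Your proposal is correct and follows essentially the same route as the paper: integration by parts to reduce to the weight-$0$ Mellin transforms, the geometric-sum identity (which you outsource to Lemma~\ref{SSmelweighted} while the paper redoes it inline), the index shift $\S{1}{\tfrac1a;n+1}=\S{1}{\tfrac1a;n}+\tfrac{1}{a^{n+1}(n+1)}$, and the $\ep\to1^-$ regularization for $a=1$. The only cosmetic difference is your explicit appeal to Lemma~\ref{SSmelweighted}, which slightly streamlines the write-up but changes nothing substantive.
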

\begin{proof}
First let $a\leq-1.$ By integration by parts we get:
\begin{eqnarray*}
\M{\H{a}{x}}n&=&\int_0^1{x^n\H{a}x dx}=\left.\frac{x^{n+1}}{n+1}\H{a}{x}\right|_0^1-\int_0^1{\frac{x^{n+1}}{n+1}\frac{1}{\abs{a}+x}dx}\\
&=&\frac{\H{a}1}{n+1}-\frac{1}{n+1}\left(\int_0^1{\frac{x^{n+1}-a^{n+1}}{\abs{a}+x}dx}+a^{n+1}\int_0^1{\frac{1}{\abs{a}+x}dx}\right)\\
&=&\frac{\H{a}1}{n+1}(1-a^{n+1})-\frac{1}{n+1}\int_0^1{a^n\sum_{i=0}^n\frac{x^{i}}{a^i}dx}\\
&=&\frac{\H{a}1}{n+1}(1-a^{n+1})-\frac{a^n}{n+1}\sum_{i=0}^n\frac{1}{a^i(i+1)}\\
&=&\frac{-\S{1}{\frac{1}{a};\infty}}{n+1}(1-a^{n+1})-\frac{a^{n+1}}{n+1}\S{1}{\frac{1}{a};n+1}\\
&=&\frac{-1}{n+1}\left(\frac{1}{n+1}+a^{n+1}\S{1}{\frac{1}{a};n}-(a^{n+1}-1)\S{1}{\frac{1}{a};\infty}\right).
\end{eqnarray*}
For $-1<a<0$ we obtain:
\begin{eqnarray*}
\M{\H{a}{x}}n&=&\int_0^1{x^n\H{a}x dx}=\left.\frac{x^{n+1}}{n+1}\H{a}{x}\right|_0^1-\int_0^1{\frac{x^{n+1}}{n+1}\frac{1}{\abs{a}+x}dx}\\
&=&\frac{\H{a}1}{n+1}-\frac{1}{n+1}\left(\int_0^1{\frac{x^{n+1}-a^{n+1}}{\abs{a}+x}dx}+a^{n+1}\int_0^1{\frac{1}{\abs{a}+x}dx}\right)\\
&=&\frac{\H{a}1}{n+1}(1-a^{n+1})-\frac{1}{n+1}\int_0^1{a^n\sum_{i=0}^n\frac{x^{i}}{a^i}dx}\\
&=&\frac{\H{a}1}{n+1}(1-a^{n+1})-\frac{a^n}{n+1}\sum_{i=0}^n\frac{1}{a^i(i+1)}\\
&=&\frac{\H{a}1}{n+1}(1-a^{n+1})-\frac{a^{n+1}}{n+1}\S{1}{\frac{1}{a};n+1}\\
&=&\frac{-1}{(n+1)^2}\left(1+a^{n+1}(n+1)\S{1}{\frac{1}{a};n}+(a^{n+1}-1)(n+1)\H{a}1\right).
\end{eqnarray*}
For $a=0$ it follows that
\begin{eqnarray}
\M{\H{0}{x}}n&=&\int_0^1{x^n\H{0}x dx}=\left.\frac{x^{n+1}}{n+1}\H{0}{x}\right|_0^1-\int_0^1{\frac{x^{n+1}}{n+1}\frac{1}{x}dx}\nonumber\\
&=&-\int_0^1{\frac{x^n}{n+1}dx}=\left.-\frac{x^{n+1}}{(n+1)^2}\right|_0^1=-\frac{1}{(n+1)^2}.\nonumber
\end{eqnarray}
For $a=1$ we get:
\begin{eqnarray}
\int_0^1 x^n \H{1}{x}dx
		&=&\lim_{\epsilon \rightarrow 1} \int_0^{\epsilon} x^n \H{1}{x}dx \nonumber\\
		&=&\lim_{\epsilon \rightarrow 1}\left( \left. \frac{x^{n+1}}{n+1}\H{1}{x}\right|_0^{\epsilon}-\int_0^{\epsilon}
			\frac{x^{n+1}}{(n+1)(1-x)}dx \right) \nonumber\\
		&=&\frac{1}{n+1}\lim_{\epsilon \rightarrow 1}\left(\epsilon^{n+1}\H{1}{\epsilon}-\int_0^{\epsilon}
			{\frac{x^{n+1}-1}{1-x}x dx}-\H{1}{\epsilon} \right) \nonumber\\
		&=&\frac{1}{n+1}\left(\lim_{\epsilon \rightarrow 1}(\epsilon^{n+1}-1)\H{1}{\epsilon}+\lim_{\epsilon \rightarrow 1}
			{\int_0^{\epsilon}\sum_{i=0}^n{x^i}dx}\right)\nonumber\\
		&=&\frac{1}{n+1}\left(0+\int_0^{1}\sum_{i=0}^n{x^i}dx\right)\nonumber\\
		&=&\frac{1}{n+1} \sum_{i=0}^n{\frac{1}{i+1}}=\frac{1}{(n+1)^2}\left(1+(n+1)\S{1}{n}\right). \nonumber
\end{eqnarray}
Finally for $a > 1$ we conclude that
\begin{eqnarray*}
\M{\H{a}{x}}n&=&\int_0^1{x^n\H{a}x dx}=\left.\frac{x^{n+1}}{n+1}\H{a}{x}\right|_0^1-\int_0^1{\frac{x^{n+1}}{n+1}\frac{1}{a-x}dx}\\
&=&\frac{\H{a}1}{n+1}-\frac{1}{n+1}\left(\int_0^1{\frac{x^{n+1}-a^{n+1}}{a-x}dx}+a^{n+1}\int_0^1{\frac{1}{a-x}dx}\right)\\
&=&\frac{\H{a}1}{n+1}(1-a^{n+1})+\frac{1}{n+1}\int_0^1{a^n\sum_{i=0}^n\frac{x^{i}}{a^i}dx}\\
&=&\frac{\H{a}1}{n+1}(1-a^{n+1})+\frac{a^n}{n+1}\sum_{i=0}^n\frac{1}{a^i(i+1)}\\
&=&\frac{\S{1}{\frac{1}{a};\infty}}{n+1}(1-a^{n+1})+\frac{a^{n+1}}{n+1}\S{1}{\frac{1}{a};n+1}\\
&=&\frac{1}{n+1}\left(\frac{1}{n+1}+a^{n+1}\S{1}{\frac{1}{a};n}-(a^{n+1}-1)\S{1}{\frac{1}{a};\infty}  \right).
\end{eqnarray*}
\end{proof}

The higher depth results for $\M{\H{\ve m}{x}}n$ can now be obtained by recursion:
\begin{lemma}\label{Lemma:MellinRecursion} For $n\in \N$, $a \in \R\setminus(0,1)$ and $\ve m \in (\R\setminus(0,1))^k$,
\begin{eqnarray*}
\M{\H{a,\ve m}{x}}n=\left\{
		\begin{array}{ll}
			\frac{(1-a^{n+1})\H{a,\ve m}1}{n+1}-\frac{a^{n}}{n+1}\sum_{i=0}^{n}{\frac{\M{\H{\ve m}{x}}i}{a^i}},& \textnormal{if } a < 0\\
			\frac{\H{0,\ve m}1}{n+1}-\frac{1}{n+1} \M{\H{\ve m}{x}}n,&  \textnormal{if } a=0  \\
			\frac{1}{n+1} \sum_{i=0}^n{\M{\H{\ve m}{x}}n},&  \textnormal{if } a=1  \\
			\frac{(1-a^{n+1})\H{a,\ve m}1}{n+1}+\frac{a^{n}}{n+1}\sum_{i=0}^{n}{\frac{\M{\H{\ve m}{x}}i}{a^i}},& \textnormal{if } a > 1
		 \end{array} \right.
\end{eqnarray*}
where the arising constants are finite.
\label{SSmelnotweighted}
\end{lemma}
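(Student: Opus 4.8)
The plan is to establish all four cases by the same mechanism already used for the depth-one base case in Lemma~\ref{Lemma:MellinBaseCase}, namely integration by parts followed by a polynomial division of $x^{n+1}$ against the single linear pole hidden in $f_a$; the trailing factor $\H{\ve m}x$ is merely carried along. Taking $u=\H{a,\ve m}x$ and $dv=x^n\,dx$ and using the derivative rule $\tfrac{d}{dx}\H{a,\ve m}x=f_a(x)\H{\ve m}x$, one obtains
\[
\M{\H{a,\ve m}x}n=\frac{\H{a,\ve m}{1}}{n+1}-\frac{1}{n+1}\int_0^1 x^{n+1}f_a(x)\H{\ve m}x\,dx,
\]
where the boundary contribution at $x=0$ vanishes because $x^{n+1}$ dominates the at-worst-logarithmic growth of $\H{a,\ve m}x$ there, and the contribution at $x=1$ supplies the constant $\H{a,\ve m}{1}$. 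The case $a=0$ is then immediate: with $f_0(x)=1/x$ the remaining integral is literally $\M{\H{\ve m}x}n$, and the formula drops out at once.

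For $a<0$ and $a>1$ I would carry out the polynomial division. Here $f_a(x)=1/(x-a)$ (for $a<0$) or $f_a(x)=1/(a-x)$ (for $a>1$), the single pole sitting at $x=a$ in both cases, so the finite identity $x^{n+1}-a^{n+1}=(x-a)\sum_{i=0}^n a^i x^{n-i}$ splits the integrand into a polynomial piece and a remainder $a^{n+1}f_a(x)$. Integrating the polynomial piece termwise and reindexing $j=n-i$ gives $\pm\,a^n\sum_{j=0}^n a^{-j}\M{\H{\ve m}x}{j}$, while the remainder integrates to $a^{n+1}\int_0^1 f_a(x)\H{\ve m}x\,dx=a^{n+1}\H{a,\ve m}{1}$ directly from the defining integral of the generalized polylogarithm. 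Collecting the two contributions reproduces the stated expressions, the sign discrepancy between the two cases tracing back to $x-a$ versus $a-x$ in the denominator.

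The main obstacle is the case $a=1$, where $f_1(x)=1/(1-x)$ is non-integrable at $x=1$ and the boundary term $\epsilon^{n+1}\H{1,\ve m}\epsilon/(n+1)$ diverges, forcing the integration by parts to be performed on $[0,\epsilon]$ and regularised as $\epsilon\to1^-$, exactly as in the base case. The decisive step is to write $x^{n+1}/(1-x)=-\sum_{i=0}^n x^i+1/(1-x)$: the last term integrates to $\H{1,\ve m}\epsilon$, which cancels the divergent boundary term up to a factor $\epsilon^{n+1}-1$, and since $\H{1,\ve m}\epsilon$ grows at most like a power of $\log(1-\epsilon)$ whereas $\epsilon^{n+1}-1\to0$ linearly, their product tends to $0$. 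What remains is $\tfrac1{n+1}\sum_{i=0}^n\int_0^1 x^i\H{\ve m}x\,dx=\tfrac1{n+1}\sum_{i=0}^n\M{\H{\ve m}x}i$, the asserted identity (read with summand $\M{\H{\ve m}x}i$ rather than the misprinted $\M{\H{\ve m}x}n$). Finiteness of the constants $\H{a,\ve m}{1}$ for $a\neq1$ follows from the hypothesis $m_i\in\R\setminus(0,1)$; in the $a=1$ case no such constant appears in the final formula, so nothing need be assumed about the possibly divergent value $\H{1,\ve m}{1}$, and the regularisation is precisely what avoids it.
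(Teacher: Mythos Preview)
Your proof is correct and follows essentially the same route as the paper: integration by parts with $u=\H{a,\ve m}x$, $dv=x^n\,dx$, then polynomial division $x^{n+1}-a^{n+1}=(x-a)\sum_{i=0}^n a^i x^{n-i}$ for the cases $a<0$ and $a>1$, and the $\epsilon\to1^-$ regularisation with the cancellation $(\epsilon^{n+1}-1)\H{1,\ve m}\epsilon\to0$ for $a=1$. You also correctly spotted the typo in the $a=1$ formula (the summand should read $\M{\H{\ve m}x}i$, not $\M{\H{\ve m}x}n$), which the paper's own computation confirms.
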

\begin{proof}
We get the following results by integration by parts. For $a<0$ we get:
\begin{eqnarray*}
&&\M{\H{a,\ve m}{x}}n
=\int_0^1{x^n\H{a,\ve m}x dx}=\left.\frac{x^{n+1}}{n+1}\H{a,\ve m}{x}\right|_0^1-\int_0^1{\frac{x^{n+1}\H{\ve m}x}{n+1}\frac{1}{\abs{a}+x}dx}\\
&&\hspace{2cm}=\frac{\H{a,\ve m}1}{n+1}-\frac{1}{n+1}\left(\int_0^1{\frac{x^{n+1}-a^{n+1}}{\abs{a}+x}\H{\ve m}x dx}+a^{n+1}\int_0^1{\frac{\H{\ve m}x}{\abs{a}+x}dx}\right)\\
&&\hspace{2cm}=\frac{\H{a,\ve m}1}{n+1}(1-a^{n+1})-\frac{1}{n+1}\int_0^1{a^n\sum_{i=0}^n\frac{x^{i}\H{\ve m}x}{a^i}dx}\\
&&\hspace{2cm}=\frac{\H{a,\ve m}1}{n+1}(1-a^{n+1})-\frac{a^n}{n+1}\sum_{i=0}^n\frac{1}{a^i}\M{\H{\ve m}x}i.
\end{eqnarray*}
For $a=0$ we get:
\begin{eqnarray*}
\int_0^1 x^n \H{0,\ve m}{x}dx &=& \left. \frac{x^{n+1}}{n+1}\H{0,\ve m}{x}\right|_0^1-\int_0^1 \frac{x^n}{n+1}\H{\ve m}{x}dx\\
				&=&\frac{\H{0,\ve m}1}{n+1}-\frac{1}{n+1} \M{\H{\ve m}{x}}n.
\end{eqnarray*}
For $a=1$ we it follows that
\begin{eqnarray}
\int_0^1 x^n \H{1,\ve m}{x}dx
		&=&\lim_{\epsilon \rightarrow 1} \int_0^{\epsilon} x^n \H{1,\ve m}{x}dx \nonumber\\
		&=&\lim_{\epsilon \rightarrow 1}\left( \left. \frac{x^{n+1}}{n+1}\H{1,\ve m}{x}\right|_0^{\epsilon}-\int_0^{\epsilon}
			\frac{x^{n+1}}{(n+1)(1-x)}\H{\ve m}{x}dx \right) \nonumber\\
		&=&\frac{1}{n+1}\lim_{\epsilon \rightarrow 1}\left(\epsilon^{n+1}\H{1,\ve m}{\epsilon}-\int_0^{\epsilon}
			{\frac{x^{n+1}-1}{1-x}\H{\ve m}x dx}-\H{1,\ve m}{\epsilon} \right) \nonumber\\
		&=&\frac{1}{n+1}\left(\lim_{\epsilon \rightarrow 1}(\epsilon^{n+1}-1)\H{1,\ve m}{\epsilon}+\lim_{\epsilon \rightarrow 1}
			{\int_0^{\epsilon}\sum_{i=0}^n{x^i\H{\ve m}x}dx}\right)\nonumber\\
		&=&\frac{1}{n+1}\left(0+\int_0^{1}\sum_{i=0}^n{x^i\H{\ve m}x}dx\right)\nonumber\\
		&=&\frac{1}{n+1} \sum_{i=0}^n{\M{\H{\ve m}{x}}i}. \nonumber
\end{eqnarray}
For $a>1$ we conclude that
\begin{eqnarray*}
&&\M{\H{a,\ve m}{x}}n
=\int_0^1{x^n\H{a,\ve m}x dx}=\left.\frac{x^{n+1}}{n+1}\H{a,\ve m}{x}\right|_0^1-\int_0^1{\frac{x^{n+1}\H{\ve m}x}{n+1}\frac{1}{a-x}dx}\\
&&\hspace{2cm}=\frac{\H{a,\ve m}1}{n+1}-\frac{1}{n+1}\left(\int_0^1{\frac{x^{n+1}-a^{n+1}}{a-x}\H{\ve m}x dx}+a^{n+1}\int_0^1{\frac{\H{\ve m}x}{a-x}dx}\right)\\
&&\hspace{2cm}=\frac{\H{a,\ve m}1}{n+1}(1-a^{n+1})+\frac{1}{n+1}\int_0^1{a^n\sum_{i=0}^n\frac{x^{i}\H{\ve m}x}{a^i}dx}\\
&&\hspace{2cm}=\frac{\H{a,\ve m}1}{n+1}(1-a^{n+1})+\frac{a^n}{n+1}\sum_{i=0}^n\frac{1}{a^i}\M{\H{\ve m}x}i.
\end{eqnarray*}
\end{proof}

Summarizing, using Lemma~\ref{SSmelweighted} together with Lemma~\ref{SSweight1mel} and Lemma~\ref{SSmelnotweighted} we are able to calculate the Mellin transform of
generalized polylogarithms with indices in $\R \setminus (0,1).$ In addition, the polylogarithms can be weighted by $1/(a + x)$ or $1/(a-x)$ for $a\in\R.$
These Mellin transforms can be expressed using $S$-sums.

As an example we compute the Mellin transform of  $$\frac{\textnormal{H}_{0,-2}(x)}{x+2}.$$
Using Lemma~\ref{Lemma:Mellin:ToNonWeighted} we can write
\begin{eqnarray*}
 \M{\frac{\H{0,-2}x}{x+2}}{n}=-n\cdot\M{\H{-2,0,-2}x}{n-1}+\H{-2,0,-2}1.
\end{eqnarray*}
Now we apply Lemma~\ref{Lemma:MellinRecursion} which leads to
\begin{eqnarray*}
\M{\H{-2,0,-2}x}{n-1}=\frac{1-(-2)^{n}}{n}\H{-2,0,-2}1-\frac{(-2)^{n-1}}{n}\sum_{i=0}^{n-1}\frac{\M{\H{0,-2}x}{i}}{(-2)^i}.
\end{eqnarray*}
Again using Lemma~\ref{Lemma:MellinRecursion} we find
\begin{eqnarray*}
\M{\H{0,-2}x}{i}=\frac{1}{i+1}\H{0,-2}1-\frac{1}{i+1}\sum_{i=0}^n\M{\H{-2}x}{i}.
\end{eqnarray*}
Finally applying Lemma~\ref{Lemma:MellinBaseCase} we get
\begin{eqnarray*}
\M{\H{-2}x}{i}=-\frac{\left(1+(-2)^{i+1}(i+1)\S{1}{-\tfrac{1}{2};i}-((-2)^{i+1}-1)(i+1)\S{1}{-\tfrac{1}{2};\infty}\right)}{(i+1)^2}.
\end{eqnarray*}
Putting this together yields
\begin{align*}
 \M{\frac{\H{0,-2}x}{x+2}}{n}=&(-2)^n \textnormal{H}_{-2,0,-2}(1)\\
 &\hspace*{-3cm}+(-2)^{n-1} \sum _{i=0}^{n-1}\tfrac{(-2)^{-i}}{i+1} \Big({H}_{0,-2}(1)+\tfrac{1+(-2)^{1+i} (1+i) \textnormal{S}_1\left(-\frac{1}{2};i\right)-\left(-1+(-2)^{1+i}\right) (1+i) \textnormal{S}_1\left(-\frac{1}{2};\infty \right)}{(1+i)^2}\Big).
\end{align*}
To this end, transforming this expression to $S$-sums we arrive at
\begin{equation}\label{Exp:Mellin}
\begin{split}
 \M{\frac{\H{0,-2}x}{x+2}}{n}=&-(-1)^n(2)^n \Big[\textnormal{H}_{-2,0,-2}(1)+\textnormal{H}_{0,-2}(1) \textnormal{S}_1\left(-\tfrac{1}{2};n\right)-\textnormal{S}_2(n) \textnormal{S}_1\left(-\tfrac{1}{2};\infty \right)\\
			      &+\textnormal{S}_1\left(-\tfrac{1}{2};\infty \right) \textnormal{S}_2\left(-\tfrac{1}{2};n\right)+\textnormal{S}_{2,1}\left(1,-\tfrac{1}{2};n\right)\Big].
\end{split}
\end{equation}

\medskip

Looking at the presented algorithm the representation $F(n)$ in~\eqref{Equ:MellRep} will be of the form
\begin{equation}\label{Equ:SSumLinCombForMHPL}
F(n)=(-1)^n\Big[r_0a_0^nf_0(n)+\dots+r_da_d^nf_d(n)\Big]+\Big[r_{d+1}a_{d+1}^nf_{d+1}(n)+\dots+r_ea_e^nf_e(n)\Big]
\end{equation}
with $d\leq e$, $r_i\in\set R$, $a_i\in\set R^{+}$ and where each $f_i(n)$ stands for
a $S$-sum.

Consider the C-type function $f(z):=\M{\frac{H_{\ve m}(x)}{(a\pm x)^m}}{z}$; see Lemma~\ref{Lemma:MHPLCType}.
We remark that the right hand side of~\eqref{Equ:MellRep} for even and odd $n$ has
also an analytic continuation by replacing the $S$-sums with the integral
representation of Theorem~\ref{SSintrep}. In particular the function is of C-type due to Proposition~\ref{Prop:SCType} and~Lemma~\ref{Lemma:Closure}. Because of Carlson's Theorem it follows that the constructed function agrees with $f(z)$. In other words, the representation of the Mellin transform with $S$-sums for integer values gives the representation in the complex plane.

\subsection{Calculating the Inverse Mellin Transform in terms of generalized polylogarithms}
\label{SSInvMellin}

\noindent
Subsequently, we consider a subset of the generalized harmonic polylogarithms, namely the set
\begin{eqnarray*}
\bar{H}:=\biggl\{\H{m_1,m_2,\ldots,m_k}x\biggl| m_i \in \R, \abs{\frac{1}{m_{j_k}}}\leq 1 \textnormal{ and } \abs{\frac{m_{j_i}}{m_{j_{i-1}}}}\leq 1 \textnormal{ for } 2\leq i\leq k \textnormal{ where} \\
\{m_{j_1},m_{j_2},\ldots,m_{j_k}\} \textnormal{ is the set of nonzero indices, and } j_u<j_v \textnormal{ if } u<v \biggr\}
\end{eqnarray*}
and call the elements $\bar{H}$-polylogarithms; note that $m_i\in\set R\setminus(-1,1)$ with $m_i\neq0$. For this subclass,
it turns out that the Mellin transform of properly weighted generalized polylogarithms can be expressed using the $S$-sums from the set
\begin{equation}\label{Equ:SBar}
\begin{split}
&\bar{S}:=\biggl\{\S{a_1,a_2,\ldots,a_k}{b_1,b_2,\ldots,b_k;n}\biggl|a_i\in \Z^* \textnormal{ for } 1\leq i\leq k ;b_1\in \R^*;\\
  &\hspace{5.8cm} b_i \in [-1,1]\setminus \{0\}\textnormal{ for } 2\leq i\leq k\biggr\};
  \end{split}
\end{equation}
the elements of this set are also called $\bar{S}$-sums.

Restricting to this class, we solve the following problem:\\
\textbf{Given} an $\bar{S}$-sum $\S{\ve m}{\ve b; n}$.\\
\textbf{Find} a linear combination $F(n)$ of Mellin transforms of weighted generalized polylogarithms from $\bar{H}$ such that for all $n\in\set N$  we have
\begin{equation}\label{Equ:SToMRep}
\S{\ve m}{\ve b; n}=F(n).
\end{equation}
In other words,
there is a certain bijection between the set of properly weighted $\bar{H}-$polylogarithms and $\bar{S}$-sums via the Mellin-transform and its inverse Mellin transform.

In order to present the algorithm for the inverse Mellin transform, we define the
following order on $S$-sums.
Let $\S{\ve m_1}{\ve b_1; n}$ and $\S{\ve m_2}{\ve b_2;n}$ be $S$-sums with weights
$w_1$, $w_2$ and depths $d_1$ and $d_2,$ respectively. Then
$$
		  	\begin{array}{ll}
						\S{\ve m_1}{\ve b_1; n} \prec \S{\ve m_2}{\ve b_2;n}, \ \textnormal{if} \ w_1<w_2, & \textnormal{or } \ (w_1=w_2 \ \textnormal{and} \ d_1<d_2).
			\end{array}
$$
We say that a $S$-sum $s_1$ is more complicated than a $S$-sum $s_2$ if $s_2 \prec
s_1$.
For a set of $S$-sums we call a $S$-sum {\upshape most complicated} if it is a largest
element with respect to $\prec$.

Then the inverse Mellin transform from $\bar{S}$ to $\bar{H}$ relies on the following main properties:
\begin{lemma}\label{Lemma:MostComplicated}

\vspace*{1mm}\noindent
\begin{enumerate}
\item  In the Mellin transform of a generalized polylogarithm $\H{\ve m}x \in \bar{H}$ with $\ve m=(m_1,m_2,\ldots,m_k)$
weighted by $1/(c-x)$ or
$1/(c+x)$ ($c\in\R^*$) there is only one {\upshape most complicated}
$S$-sum $\S{\ve a}{\ve b; n}\in \bar{S}$ with $\ve a=(a_1,\ldots,a_l)$ and $\ve
b=(b_1,\ldots,b_l)$
,\ie
\begin{equation}\label{SSsinglmostcomp1}
\M{\frac{\H{\ve m}{x}}{c\pm x}}{n}=p\cdot\S{\ve a}{\ve b; n}+t
\end{equation}
where $p\in\R^*$ and $t$ consists of a linear combination of $\bar{S}$-sums over $\set R$ that are smaller than $\S{\ve a}{\ve b; n}$. Note that if $\H{\ve m}x$ is of weight $w,$ then $\S{\ve a}{\ve b; n}$ is of weight $w+1.$
In particular, if $m_k\neq1$ then $a_l\neq 1$ or $b_l\neq 1$; these properties will be used later in Section~\ref{Sec:AsymptoticExp}.
\item Given $\S{\ve a}{\ve b; n}\in \bar{S}$ with $\ve a=(a_1,\ldots,a_l)$ and $\ve b=(b_1,\ldots,b_l)$, there is an algorithm (see~\cite[Algorithms~3]{Ablinger:12}) that computes an
$\H{\ve m}{x}\in\bar{H}$ with $\ve m=(m_1,m_2,\ldots,m_k)$, $d\in\{-1,1\}$, and $c\in\set R^*$ such that $\S{\ve a}{\ve b; n}$ is the most complicated $S$-sum that occurs in $\M{\frac{\H{\ve m}{x}}{c+d\,x}}{n}$. Note that if $\S{\ve a}{\ve b; n}$ is of weight $w,$ then $\H{\ve m}{x}$ is of weight $w-1.$ In particular, if $a_l\neq1$ or $b_l\neq1$ then $m_k\neq1$. Moreover, if $b_i\in[-1,1]$ with $b_i\neq0$ for all $i$ then $c>1$; again these properties will be used later in Section~\ref{Sec:AsymptoticExp}.
\end{enumerate}
\end{lemma}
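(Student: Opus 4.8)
The plan is to prove the two parts in tandem, since part~2 asserts that the assignment built in part~1—sending a weighted $\bar H$-polylogarithm to the most complicated $S$-sum of its Mellin transform—is onto $\bar S$; so I would prove part~1 and then invert its construction. Because the recursion of Lemma~\ref{SSmelweighted} sends several weighted transforms to bare (unweighted) ones, I would first establish, as an auxiliary, the structure of $\M{\H{\ve m}{x}}{n}$ for $\H{\ve m}{x}\in\bar H$, and only then treat the weighted transforms using the bare case as a black box.

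For the bare transform I would induct on the weight $w$ of $\H{\ve m}{x}$. The base case $w=1$ is Lemma~\ref{SSweight1mel}, whose only $S$-sum is $\S{1}{\tfrac1a;n}$. For the step I feed the hypothesis into Lemma~\ref{SSmelnotweighted}; after the reindexing $i\mapsto i+1$ that turns the internal $\tfrac1{i+1}$ of $\M{\H{\ve m}{x}}{i}$ into a genuine weight $\tfrac1j$, the identity
\begin{equation*}
\sum_{i=1}^n\frac{z^i}{i}\,\S{\ve a'}{\ve b';i}=\S{1,\ve a'}{z,\ve b';n}
\end{equation*}
prepends one new outer letter whenever a \emph{nonzero} index is stripped, while a stripped \emph{zero} index (the $a=0$ branch of Lemma~\ref{SSmelnotweighted}) merely multiplies by a further $\tfrac1{n+1}$. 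Consequently the distinguished $S$-sum of the bare transform is an all-ones sum of depth and weight equal to the number of nonzero letters of $\ve m$, carried by an overall factor of the shape $(\pm1)^n a_0^{\,n}/(n+1)^{1+\#\{\text{zeros}\}}$; every remaining summand is a boundary constant $\H{\cdots}{1}$ or is inherited from a $\prec$-smaller sum, hence strictly $\prec$-smaller. Membership in $\bar S$ is forced by the defining ratio constraints of $\bar H$, which are exactly what keeps each previously-outer weight $b_1'\in\R^*$ inside $[-1,1]\setminus\{0\}$ once the identity pushes it one place inward; and the innermost-index claim is read off the base case, since $m_k=1$ gives precisely $\S{1}{1;\cdot}$ while $m_k\neq1$ gives $b_l=\pm1/m_k\neq1$ (or a trailing zero forces $a_l\neq1$).

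Part~1 for the weighted transform then follows by a second induction on $w$ through the two branches of Lemma~\ref{SSmelweighted}. The branch for $1/(a_2-x)$ with $a_2\in(0,1)$ reduces directly to the weighted transform of a depth-one-smaller polylogarithm and applies the identity above, adding one to both depth and weight. The branches for $1/(a+x)$, $1/(1-x)$ and $1/(a_1-x)$ reduce to a bare transform $\M{\H{\mp c,\ve m}{x}}{n-1}$ through the prefactor $-n$ and the shift $n\mapsto n-1$; these precisely (i) add the one outer letter coming from the weighting and (ii) convert the accumulated powers $1/(n+1)^{s}$ into a raised leading index via the generalized identity $\sum_{i}\tfrac{z^i}{i^{a}}\S{\ve a'}{\ve b';i}=\S{a,\ve a'}{z,\ve b';n}$, so that the total weight of the unique most complicated $S$-sum becomes $w+1$, as claimed, and the surviving lower terms remain $\prec$-smaller. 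For part~2 I would invert this: peeling the letters of the target $\S{\ve a}{\ve b;n}\in\bar S$ from the outside in, a leading $(a_1,b_1)=(1,z)$ is undone by the identity (reconstructing a weighting $1/(a_2-x)$ with $a_2=1/z$ and stripping one depth from $\ve m$), a leading index $a_1>1$ is undone by reinserting $a_1-1$ zero letters, and the innermost letter fixes $m_k$ together with the base weighting $1/(c+d\,x)$, $d\in\{-1,1\}$. One checks from the $\bar H$ ratio constraints that the reconstructed $\ve m$ lies in $\bar H$ and that $c>1$ when every $b_i\in[-1,1]$, and re-invokes part~1 to confirm that $\S{\ve a}{\ve b;n}$ is indeed the most complicated output; this is the procedure of~\cite[Algorithm~3]{Ablinger:12}.

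The real work is the uniqueness claim, not the production of $\S{\ve a}{\ve b;n}$. I would have to check that no two branches of the recursion ever emit $S$-sums tying for $\prec$-maximal, that the boundary terms $\H{\cdots}{1}$ and all inherited transforms stay strictly below weight $w+1$, and—most delicately—that the combined effect of the factor $-n$, the shift $n\mapsto n-1$ and the accumulated $1/(n+1)$ powers converts into exactly the intended raised index and never into spurious extra complexity. Equivalently, the sign- and power-prefactors $(\pm1)^n a_i^{\,n}$ of the form~\eqref{Equ:SSumLinCombForMHPL} must be prevented from accidentally making two distinct $S$-sums coincide; keeping that bookkeeping exact is where the proof's difficulty is concentrated.
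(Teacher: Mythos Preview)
The paper omits the proof here, stating only that the arguments of \cite{Ablinger:2010kw} for harmonic sums carry over. Your inductive strategy through Lemmas~\ref{SSmelweighted}--\ref{SSmelnotweighted} is the natural one, and your remarks on $\bar S$-membership (via the $\bar H$ ratio constraints) and on the innermost index are on target.

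There is, however, a genuine error in your auxiliary claim for the bare transform. You assert that the most complicated $S$-sum in $\M{\H{\ve m}{x}}{n}$ is an all-ones sum $\S{1,\ldots,1}{\cdot;n}$ of depth and weight equal to the number of nonzero letters, carried by $(n+1)^{-(1+\#\{\text{zeros}\})}$. This is false whenever the zeros are not all leading. In Lemma~\ref{SSmelnotweighted} the nonzero branch is $\tfrac{a^n}{n+1}\sum_{i=0}^n a^{-i}\M{\H{\ve m}{x}}{i}$; if the inner transform carries $(i+1)^{-s}$, the shift $j=i+1$ produces $\S{s,\ve a'}{\cdot}$, prepending index $s$, not $1$. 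For instance $\M{\H{b,0}{x}}{n}$ has leading term proportional to $b^{n+1}(n+1)^{-1}\S{2}{1/b;n}$, not $(n+1)^{-2}\S{1}{1/b;n}$; and the paper's worked example~\eqref{Exp:Mellin} (via the bare transform of $\H{-2,0,-2}{x}$) has leading sum $\S{2,1}{1,-\tfrac12;n}$, not $\S{1,1}{\cdot;n}$. The correct picture is that $a_i$ equals one plus the number of zeros between the $i$-th and $(i{+}1)$-th nonzero letters of $\ve m$ (with $a_l$ absorbing the trailing zeros); your own parenthetical ``a trailing zero forces $a_l\neq1$'' already contradicts the all-ones claim.

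The same confusion enters your $|c|\geq1$ weighted branch: the reduction $-n\,\M{\H{\mp c,\ve m}{x}}{n-1}$ contains \emph{no} sum over $i$, so the identity $\sum_i z^i i^{-a}\S{\cdot}{i}=\S{a,\cdot}{n}$ is not what ``converts the accumulated powers into a raised leading index.'' What actually happens is that prepending the nonzero letter $\mp c$ forces zero leading zeros in $(\mp c,\ve m)$, so by the corrected bare-transform analysis the $S$-sum already has full weight $w+1$ and the external power is $n^{-1}$, which the factor $-n$ simply cancels. With this repair the induction and the inversion for part~2 go through as you outline.
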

Note that parts of these properties have been exploited already in~\cite{Remiddi:1999ew}
to
construct the inverse Mellin transform between harmonic sums and harmonic polylogarithms.
The rigorous proofs for this special case given in~\cite{Ablinger:2010kw} carry over to
this more general case and are omitted here.

The computation of the inverse Mellin transform of a linear combination of $\bar{S}$-sums
now is straightforward (compare~\cite{Ablinger:2010kw,Remiddi:1999ew}):
\begin{itemize}
	\item Locate the most complicated $\bar{S}$-sum.
	\item Exploit property~(2) from above and compute an
$\bar{H}$-polylogarithm weighted by $1/(c-x)$ or $1/(c+x)$ such that the $\bar{S}$-sum is the {\itshape most complicated} $\bar{S}$-sum in the Mellin transform of this weighted
generalized polylogarithm.
	\item Add it and subtract it with an appropriate factor (see the next step).
	\item Perform the Mellin transform to the subtracted version. This will cancel
        the original $\bar{S}$-sum (by choosing the appropriate factor).
	\item Repeat the above steps until there are no more $\bar{S}$-sums.
	\item Let $c$ be the remaining constant term; replace it by
        $\textnormal{M}^{-1}(c)$, or equivalently, multiply $c$ by $\delta(1-x)$
        where $\delta(x)$ stands for the Dirac-$\delta$-distribution \cite{YOSIDA}.
\end{itemize}

In the following we will illustrate the computation of the inverse Mellin transformation by means of an example. We consider the
sum $\textnormal{S}_{2,1}\left(1,-\tfrac{1}{2};n\right)$. It is the most complicated $\bar{S}$-sum in the Mellin transformation of
$\frac{\textnormal{H}_{0,-2}(x)}{x+2}$ which can be computed by Algorithm~3 in~\cite{Ablinger:12}. So in a first step we get
$$\textnormal{S}_{2,1}\left(1,-\tfrac{1}{2};n\right)=\textnormal{S}_{2,1}\left(1,-\tfrac{1}{2};n\right)-\left(-\tfrac{1}{2}\right)^n\M{\tfrac{\textnormal{H}_{0,-2}(x)}{x+2}}{n} +\left(-\tfrac{1}{2}\right)^n\M{\tfrac{\textnormal{H}_{0,-2}(x)}{x+2}}{n}.$$
Calculating the first of the two Mellin transforms (see~\eqref{Exp:Mellin}) we get
\begin{align*}
\textnormal{S}_{2,1}\left(1,-\tfrac{1}{2};n\right)=&\left(-\tfrac{1}{2}\right)^n\M{\frac{\textnormal{H}_{0,-2}(x)}{x+2}}{n}+\textnormal{S}_2(n) \textnormal{S}_1\left(-\tfrac{1}{2};\infty \right)-\textnormal{S}_2\left(-\tfrac{1}{2};n\right) \textnormal{S}_1\left(-\tfrac{1}{2};\infty
   \right)\\
 &+\textnormal{S}_1\left(-\tfrac{1}{2};n\right) \textnormal{S}_2\left(-\tfrac{1}{2};\infty \right)+\textnormal{S}_3\left(-\tfrac{1}{2};\infty \right)-\textnormal{S}_{1,2}\left(-\tfrac{1}{2},1;\infty \right).
\end{align*}
In the right hand side of this expression we find a linear combination of the three $\bar{S}$-sums $\textnormal{S}_2\left(-\tfrac{1}{2};n\right),\textnormal{S}_2(n)$ and $\textnormal{S}_1\left(-\tfrac{1}{2};n\right)$, which we want to express using the
Mellin transformation of generalized polylogarithms. The most complicated $\bar{S}$-sums from these three is $\textnormal{S}_2\left(-\tfrac{1}{2};n\right)$. In particular, it is the most complicated $\bar{S}$-sum in the Mellin transformation of
$\frac{\textnormal{H}_0(x)}{x+2}.$
Hence following the above method we rewrite $\textnormal{S}_{2,1}\left(1,-\tfrac{1}{2};n\right)$ to the expression
\begin{align*} &-\left(\textnormal{S}_2\left(-\tfrac{1}{2};n\right)-\left(-\tfrac{1}{2}\right)^n\M{\frac{\textnormal{H}_0(x)}{x+2}}{n}+\left(-\tfrac{1}{2}\right)^n\M{\frac{\textnormal{H}_0(x)}{x+2}}{n}\right) \textnormal{S}_1\left(-\tfrac{1}{2};\infty
   \right)\\ &+\left(-\tfrac{1}{2}\right)^n\M{\frac{\textnormal{H}_{0,-2}(x)}{x+2}}{n}+\textnormal{S}_2(n) \textnormal{S}_1\left(-\tfrac{1}{2};\infty \right)+\textnormal{S}_1\left(-\tfrac{1}{2};n\right) \textnormal{S}_2\left(-\tfrac{1}{2};\infty \right)\\
  &+\textnormal{S}_3\left(-\tfrac{1}{2};\infty \right)-\textnormal{S}_{1,2}\left(-\tfrac{1}{2},1;\infty \right).
\end{align*}
Performing the first of these occurring Mellin transformations  yields
\begin{eqnarray*}
&&\left(-\tfrac{1}{2}\right)^n \M{\frac{\textnormal{H}_{0,-2}(x)}{x+2}}{n}+\left(-\tfrac{1}{2}\right)^n \textnormal{S}_1\left(-\tfrac{1}{2};\infty \right) \M{\frac{\textnormal{H}_0(x)}{x+2}}{n}+\textnormal{S}_2(n) \textnormal{S}_1\left(-\tfrac{1}{2};\infty \right)\\
&&-\textnormal{S}_{1,2}\left(-\tfrac{1}{2},1;\infty\right)+\textnormal{S}_1\left(-\tfrac{1}{2};n\right)
   \textnormal{S}_2\left(-\tfrac{1}{2};\infty \right)-\textnormal{S}_1\left(-\tfrac{1}{2};\infty \right) \textnormal{S}_2\left(-\tfrac{1}{2};\infty \right)+\textnormal{S}_3\left(-\tfrac{1}{2};\infty \right).
\end{eqnarray*}
Continuing this process for the remaining $\bar{S}$-sums we arrive at
\begin{equation}\label{Equ:InverseMellinExp}
\begin{split}
\textnormal{S}_{2,1}&\left(1,-\tfrac{1}{2};n\right)= \textnormal{S}_1\left(-\tfrac{1}{2};\infty \right)\M{\frac{\textnormal{H}_0(x)}{1-x}}{n}+\textnormal{S}_3\left(-\tfrac{1}{2};\infty \right)-
   \textnormal{S}_{1,2}\left(-\tfrac{1}{2},1;\infty \right)\\
&+(-1)^n\left(\tfrac{1}{2}\right)^n\Big[\textnormal{S}_1\left(-\tfrac{1}{2};\infty \right)
\M{\frac{\textnormal{H}_0(x)}{x+2}}{n}+ \textnormal{S}_2\left(-\tfrac{1}{2};\infty \right)
   \M{\frac{1}{x+2}}{n}+
      \M{\frac{\textnormal{H}_{0,-2}(x)}{x+2}}{n}\Big].
\end{split}
\end{equation}
We remark that there is a second way to compute the inverse Mellin transform of an $\bar{S}$-sum by using the integral representation of Theorem~\ref{SSintrep}: Repeated
suitable integration by parts of the given integral representation  provides the inverse Mellin transform in terms of integrals that can be expressed in terms of generalized polylogarithms.

\medskip

Looking at the presented algorithm the representation $F(n)$ in~\eqref{Equ:SToMRep} will be of the form
\begin{equation}\label{Equ:SSumToMellin}
F(n)=(-1)^n\Big[r_0a_0^nf_0(n)+\dots+r_da_d^nf_d(n)\Big]+\Big[r_{d+1}a_{d+1}^nf_{d+1}(n)+\dots+r_ea_e^nf_e(n)\Big]
\end{equation}
with $d\leq e$, $r_i\in\set R$, $a_i\in\set R^{+}$ and where each $f_i(n)$ stands for a Mellin transform of weighted harmonic polylogarithm from $\bar{H}$.

We remark that the right hand side of~\eqref{Equ:SToMRep} for even and odd $n$
gives rise to an analytic continuation. In particular the function is of C-type
due to Lemma~\ref{Lemma:MHPLCType} and~Lemma~\ref{Lemma:Closure}. Because of the
generalized Carlson Theorem it follows that the obtained function agrees with the
integral representation of the even, resp., odd $S$-sum given in Theorem~\ref{SSintrep}.

In many QCD applications\label{RemarksOnQCDAndXSpace} one starts with a complicated function $f(x)$ in $x$-space
where one knows that it can be expressed in terms of a linear combination of weighted
generalized polylogarithms (or related classes of them) over real numbers. It is
possible to choose either $n$ as an even or odd positive integer, depending on the
physics problem. In some cases the corresponding other choice may even yield a
vanishing result due to a symmetry inherent in the process. Then one
tactic is to perform the simplification on $\M{f(x)}{n}$ (e.g., with the symbolic summation packages~\cite{Ablinger:2010pb,Blumlein:2010zv,Blumlein:2012hg,Ablinger:2012ph}) and to express it in terms
of $S$-sums. Note that by Lemma~\ref{Lemma:MHPLCType} $\M{f(x)}{n}$ is of C-type.
Finally, applying the inverse Mellin transform to each of these $S$-sums and
combining these calculations, one obtains an expression $F(n)$ in the
format~\eqref{Equ:SToMRep}. Usually, the alternating sign and the powers $a_i^n$ vanish.
Since also the output function $F(n)$ can be analytically continued, the
continuation is of C-type and $\M{f(x)}{n}=F(n)$ for almost all integer $n$, their
analytic continuations agree by Carlson's Theorem. Now drop the Mellin operator
from $F(n)$ which gives a linear combination $h(x)$ of generalized polylogarithms such that $\M{h(x)}{n}=F(n)$. This finally yields that $f(x)=h(x)$.

\section{Differentiation of $S$-Sums}
\label{SSdifferentiation}

\vspace*{1mm}
\noindent As worked out in Section~\ref{Sec:AnalyticCont} the even and odd S-sums can be uniquely analytically continued within the class of C-type functions.
This allows one to consider
differentiation with respect to $n$. As it turns out the obtained expressions are
again integral
representations of $S$-sum expressions. This will make it possible to define the
differentiation of
$S$-sums in an analytically consistent way.

We present two strategies to compute the differentiation of $S$-sums. The first
approach
follows the ideas from~\cite{Blumlein:2009ta,Blumlein:2009fz} for harmonic sums: compute
the Mellin
transform, differentiate the arising integrals (which can be analytically continued in $n$), and apply afterwards the inverse Mellin transform to bring these expressions back to $S$-sums. This method is restricted to $\bar{S}$-sums (for a definition see~\eqref{Equ:SBar}) using our current technologies. Finally, we will present a second approach using directly the integral representation of $S$-sums that will work in general.

In both approaches one can start with an even $S$-sum $\S{\ve a}{\ve b;2n}$ or odd $S$-sum $\S{\ve a}{\ve b;2n+1}$ (where the uniqueness property of the analytic
continuation is given). For convenience, one can also work with the usual $S$-sum and obtains an output of the form
$$\S{\ve a}{\ve b;n}=(-1)^n F(n)+G(n),$$
which encodes both cases: Setting $n\to2n$, resp.\ $n\to2n+1$,
yields the differentiation (uniquely determined among the C-type functions)
for the even resp.\ odd case.

\subsection{Differentiation by means of the Mellin transform and its inverse}

\noindent
Using the inverse Mellin transform we express an $\bar{S}$-sum to the form~\eqref{Equ:SSumToMellin} where the $f_i(n)$ stands for an expression of the form $\M{\frac{\H{\ve m}x}{k\pm x}}n$
with $\H{\ve m}x \in \bar{H}$. Note that from now on the possibly occurring $(-1)^n$ in~\eqref{Equ:SSumToMellin} remains untouched and we deal only with the remaining objects which can be analytically continued. Next we differentiate these summands. For ``$+$'' and $k\in\R^+$ this leads to:
\begin{eqnarray*}
&&\frac{d}{d n}\M{\frac{\H{\ve m}x}{k+x}}n=\frac{d}{d n}\int_0^1\frac{x^n\H{\ve m}x}{k+x}dx=\int_0^1\frac{x^n\H{0}x\H{\ve m}x}{k+x}dx\\
&&\hspace{1cm}=\int_0^1\frac{x^n(\H{0,m_1,\ldots,m_l}x+\H{m_1,0,m_2\ldots,m_l}x+\cdots+\H{m_1,\ldots,m_l,0}x)}{k+x}dx\\
&&\hspace{1cm}=\M{\frac{\H{0,m_1,\ldots,m_l}x}{k+x}}n+\cdots+ \M{\frac{\H{m_1,\ldots,m_l,0}x}{k+x}}n.
\end{eqnarray*}
For ``$-$'' and $k\in (1,\infty)$ this leads to:
\begin{eqnarray*}
&&\frac{d}{d n}\M{\frac{\H{\ve m}x}{k-x}}n=\frac{d}{d n}\int_0^1\frac{x^n\H{\ve m}x}{k-x}dx=\int_0^1\frac{x^n\H{0}x\H{\ve m}x}{k-x}dx\\
&&\hspace{1cm}=\int_0^1\frac{x^n\H{0,m_1,\ldots,m_l}x+\H{m_1,0,m_2\ldots,m_l}x+\cdots+\H{m_1,\ldots,m_l,0}x}{k-x}dx\\
&&\hspace{1cm}=\M{\frac{\H{0,m_1,\ldots,m_l}x}{k- x}}n+\cdots+ \M{\frac{\H{m_1,\ldots,m_l,0}x}{k-x}}n.
\end{eqnarray*}
Finally, for ``$-$'' and $k\in(0,1]$ this leads to:
\begin{eqnarray}
\frac{d}{d n}\M{\frac{\H{\ve m}x}{k-x}}n&=&\frac{d}{d n}\int_0^1\frac{\left(\frac{x^n}{k^n}-1\right)\H{\ve m}x}{k-x}dx=
\int_0^1\frac{\frac{x^n}{k^n}\left(\H{0}x-\H{0}k\right)\H{\ve m}x}{k-x}dx\nonumber\\
&=&\int_0^1\frac{\left(\frac{x^n}{k^n}-1\right)\H{0}x\H{\ve m}x}{k-x}dx-\H{0}k\int_0^1\frac{\left(\frac{x^n}{k^n}-1\right)\H{\ve m}x}{k-x}dx\nonumber\\
&&+\int_0^1\frac{\H{0}{\frac{x}{k}}\H{\ve m}x}{k-x}dx.
\label{SSdiffconst1}
\end{eqnarray}
In the following lemmas we will see that $\int_0^1\frac{\Hma{0}{\frac{x}{k}}\H{\ve m}x}{k-x}dx$ is finite and in other words, the last integral in (\ref{SSdiffconst1}) is well defined. Moreover,
we will see how we can evaluate this constant in terms of generalized polylogarithms.
Hence in all three cases we eventually arrive at Mellin transforms of generalized polylogarithms. To this end, we perform the Mellin transform of these expressions and arrive at a final expression in terms of $\bar{S}$-sums together with infinite $\bar{S}$-sums.

\begin{lemma}
For a generalized polylogarithm $\H{\ve m}x$ with $\ve m=(m_1,m_2,\ldots,m_l),$ $m_i~\in~\R \setminus (0,1]$ and $i\in \N$ the integral
$$
\int_0^1\frac{\H{0}x^i\H{\ve m}x}{1-x}dx
$$
exists and can be expressed in terms of generalized polylogarithms.
\end{lemma}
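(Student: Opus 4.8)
The plan is to prove convergence by an endpoint analysis and then to evaluate the integral by combining the shuffle product \eqref{SShpro} with a single integration by parts. Write $P(x):=\H{0}{x}^i\,\H{\ve m}{x}$. Since every $m_i\in\R\setminus(0,1]$, the number $q:=\min_{m_i>0}m_i$ satisfies $q>1$, so $\H{\ve m}{x}$ is analytic on $(0,q)\supset(0,1]$ and in particular bounded near $x=1$. There $\H{0}{x}^i=(\log x)^i=O\big((1-x)^i\big)$ cancels the pole of $1/(1-x)$ because $i\geq1$, so the integrand stays bounded; near $x=0$ the factor $1/(1-x)\to1$ while $P(x)=O\big((\log x)^{i+l}\big)$ is integrable. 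Hence the integral converges. First I would record that $\H{\ve m}{1}$ is finite (case $c<q$ of the finiteness criterion) and that $P(1)=(\log 1)^i\,\H{\ve m}{1}=0$.

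Next I would linearise $P$. Using $\H{0}{x}^i=i!\,\H{0,\dots,0}{x}$ ($i$ zeros) and \eqref{SShpro} gives $P(x)=i!\sum_{\ve r}\H{\ve r}{x}$, a finite $\Z$-combination in which each word $\ve r$ is a merge of $(0,\dots,0)$ and $\ve m$, so all of its indices again lie in $\R\setminus(0,1]$. Evaluating this identity at $x=1$ together with $P(1)=0$ yields the key cancellation $\sum_{\ve r}\H{\ve r}{1}=0$ (each $\H{\ve r}{1}$ being finite for the same reason as $\H{\ve m}{1}$). This lets me pass to the term-by-term regularised form
\[
\int_0^1\frac{P(x)}{1-x}\,dx=i!\sum_{\ve r}\int_0^1\frac{\H{\ve r}{x}-\H{\ve r}{1}}{1-x}\,dx,
\]
where now each summand converges individually since $\H{\ve r}{x}-\H{\ve r}{1}=O(1-x)$ as $x\to1$.

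Finally I would integrate each regularised summand by parts against $v(x)=-\log(1-x)=\H{1}{x}$. The boundary term vanishes at both ends, since $\H{\ve r}{x}-\H{\ve r}{1}=O(1-x)$ beats $\log(1-x)$ at $x=1$, while $\H{1}{x}=O(x)$ kills the possible logarithmic growth of $\H{\ve r}{x}$ at $x=0$. Writing $\ve r=(r_1,\overline{\ve r})$ and using $\tfrac{d}{dx}\H{\ve r}{x}=f_{r_1}(x)\H{\overline{\ve r}}{x}$,
\[
\int_0^1\frac{\H{\ve r}{x}-\H{\ve r}{1}}{1-x}\,dx=-\int_0^1 f_{r_1}(x)\,\H{1}{x}\,\H{\overline{\ve r}}{x}\,dx=-\sum_{\ve s=1\shuffle\overline{\ve r}}\H{r_1,\ve s}{1},
\]
the last step again using \eqref{SShpro} and the defining relation $\H{r_1,\ve s}{1}=\int_0^1 f_{r_1}(x)\H{\ve s}{x}\,dx$. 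Each word $(r_1,\ve s)$ contains exactly one entry equal to $1$ and is led by $r_1\neq1$, so $q=1$ and the finiteness criterion (case $c=q$, $m_1\neq q$) makes every $\H{r_1,\ve s}{1}$ finite. This exhibits the original integral as an explicit finite $\Z$-linear combination of generalized polylogarithms evaluated at $1$, proving both assertions at once (no further induction is needed, since the integration by parts terminates directly in polylogarithm values).

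The hard part will be the regularisation step rather than any of the calculus: the individual integrals $\int_0^1\H{\ve r}{x}/(1-x)\,dx$ diverge, and the whole argument hinges on the divergent pieces having coefficients summing to $\sum_{\ve r}\H{\ve r}{1}=0$. Making this cancellation rigorous — rather than merely formal — is exactly what the subtraction of $\H{\ve r}{1}$ together with the hypothesis $m_i\notin(0,1]$ (which guarantees $q>1$ and hence analyticity at $x=1$) accomplishes, so I would take care to justify convergence and the vanishing of the boundary terms before splitting the finite sum.
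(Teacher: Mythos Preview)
Your argument is correct: convergence, the regularisation via $\sum_{\ve r}\H{\ve r}{1}=0$, the vanishing of the boundary terms, and the final identification with values $\H{r_1,\ve s}{1}$ (finite since the leading index $r_1\in\{0\}\cup\{m_j\}$ is never $1$) all go through as you describe.

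The paper reaches the same conclusion by a shorter route. Instead of linearising first and then having to regularise the individually divergent pieces, it integrates by parts immediately with $u=\H{0}{x}^i$ and $dv=\tfrac{\H{\ve m}{x}}{1-x}\,dx$, so that $v=\H{1,\ve m}{x}$. The boundary term $\H{0}{x}^i\,\H{1,\ve m}{x}\big|_0^1$ vanishes because $(\log x)^i=O((1-x)^i)$ beats the $\log(1-x)$ growth of $\H{1,\ve m}{x}$ at $x=1$, leaving
\[
\int_0^1\frac{\H{0}{x}^i\,\H{\ve m}{x}}{1-x}\,dx=-i\int_0^1\frac{\H{0}{x}^{i-1}\H{1,\ve m}{x}}{x}\,dx.
\]
A single shuffle expansion of $\H{0}{x}^{i-1}\H{1,\ve m}{x}$ then expresses the right-hand side as a $\Z$-combination of values $\H{0,\ve w}{1}$, automatically finite because the leading index is $0$. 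So the paper uses one IBP then one shuffle, while you use one shuffle, a regularisation, one IBP, and a second shuffle. Your regularisation trick is a genuine insight---it explains \emph{why} the term-by-term divergences cancel---but for the bare existence/evaluation claim the paper's ordering avoids that detour entirely by letting the surviving $\H{0}{x}^i$ factor absorb the singularity at $x=1$ before any splitting occurs.
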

\begin{proof}
Using integration by parts we get
\begin{eqnarray*}
\int_0^1\frac{\H{0}x^i\H{\ve m}x}{1-x}dx&:=&\lim_{b \rightarrow 1^-}\lim_{a \rightarrow 0^+}{\int_a^{b}\frac{\H{0}x^i\H{\ve m}x}{1-x}dx}\\
	&=&\lim_{b \rightarrow 1^-}\lim_{a \rightarrow 0^+}\left(\left.\H{0}x^i\H{1,\ve m}x\right|_a^{b}-
		i{\int_a^{b}\frac{\H{0}x^{i-1}\H{1,\ve m}x}{x}dx}\right)\\
	&=&-i{\int_0^{1}\frac{\H{0}x^{i-1}\H{1,\ve m}x}{x}dx}.
\end{eqnarray*}
After expanding the product $\H{0}x^{i-1}\H{1,\ve m}x$ and applying the integral we end up in generalized polylogarithms at one with leading zero. Hence the integral is finite.
\end{proof}

\begin{lemma}
For $k\in(0,1)$ and $\ve a=(a_1,a_2,\ldots,a_l),$ $m_i \in\R \setminus (0,1]$ we have that
\begin{eqnarray*}
\int_0^1\frac{\H{0}{\frac{x}{k}}\H{\ve a}x}{k-x}dx
\end{eqnarray*}
is finite and can be expressed in terms of generalized polylogarithms.
\end{lemma}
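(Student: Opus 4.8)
The plan is to remove the interior singularity at $x=k$ by the substitution $x=k\,t$, to rewrite the integrand as generalized polylogarithms in $t$, and then to integrate by parts in a way that never breaks the analytically continuable block $\H{1,0,\dots,0}{\cdot}$ apart.

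First I would substitute $x=k\,t$. Since $\H{0}{x/k}=\H{0}{t}$, $k-x=k(1-t)$ and $dx=k\,dt$, this gives
\begin{equation*}
\int_0^1\frac{\H{0}{\tfrac{x}{k}}\H{\ve a}{x}}{k-x}\,dx=\int_0^{1/k}\frac{\H{0}{t}\,\H{\ve a}{k t}}{1-t}\,dt .
\end{equation*}
The only possibly singular interior point is now $t=1$, where the pole of $1/(1-t)$ is cancelled by $\H{0}{1}=0$, which already indicates finiteness. To produce generalized polylogarithms in $t$, I would extract the trailing zeroes of $\H{\ve a}{k t}$ as in Section~\ref{Sec:BasicProp}, writing it as a polynomial in $\H{0}{k t}=\H{0}{t}+\H{0}{k}$ with coefficients $\H{\ve a'}{k t}$ in which $\ve a'$ has no trailing zero, and then apply the scaling relation $\H{m_1,\dots,m_l}{k t}=\H{m_1/k,\dots,m_l/k}{t}$ to each coefficient. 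Because $k\in(0,1)$ and every nonzero $a_i$ lies in $(-\infty,0)\cup(1,\infty)$, the rescaled indices lie in $(-\infty,0]\cup(1/k,\infty)$; in particular every arising polylogarithm has smallest positive index $>1/k$ and hence is analytic on $(0,1/k]$. Collecting the powers of $\H{0}{t}$ (at least one always survives from the original factor), the task reduces to integrals $\int_0^{1/k}\H{0}{t}^{p}\H{\ve b}{t}/(1-t)\,dt$ with $p\ge1$ and $\ve b$ free of trailing zeroes.

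For these I would iterate integration by parts, as in the preceding lemma, but arranged so that the letter $1$ is never separated from its trailing zeroes. Using $\H{0}{t}^{p}/(1-t)=p!\,\tfrac{d}{dt}\H{1,0,\dots,0}{t}$ (with $p$ trailing zeroes) I would integrate that factor and differentiate $\H{\ve b}{t}$; at each later step I would integrate the current block against the weight $f_{b_i}$, which prepends $b_i$ to the block and lowers the depth of $\H{\ve b}{t}$. After $i$ steps the block is $\H{b_i,\dots,b_1,1,0,\dots,0}{t}$, a polylogarithm of the special shape $\H{m_1,\dots,m_k,1,0,\dots,0}{\cdot}$ whose leading indices all lie in $(-\infty,0]\cup(1/k,\infty)$; by the extended analyticity recorded in Section~\ref{Sec:BasicProp} it is analytic on $(0,q')$ with $q'>1/k$, hence finite and real at $t=1/k$ and smooth across $t=1$. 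Each step therefore yields only boundary terms — products of generalized polylogarithms evaluated at the constant $1/k$ — and a remaining integral of the same shape but smaller depth, terminating in the single polylogarithm $\H{b_l,\dots,b_1,1,0,\dots,0}{1/k}$.

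It remains to control the boundaries: at $t=1/k$ every factor is finite by case~3 of the finiteness criterion of Section~\ref{Sec:BasicProp} (since $1/k<q'$), while at $t\to0^{+}$ the block $\H{\dots,1,0,\dots,0}{t}$ vanishes faster than any power of $\H{0}{t}$, so the lower endpoint contributes $0$, exactly the estimate used in the preceding lemma. Summing the finitely many boundary terms expresses the original integral as a $\Q$-linear combination, with coefficients built from $\H{0}{k}$, of generalized polylogarithms evaluated at $1/k$; this gives both finiteness and the asserted representation. The main obstacle is precisely the interior point $t=1$: the naive antiderivative, or any shuffle expansion detaching the letter $1$ from its trailing zeroes, acquires a branch point there and leaves the domain of definition. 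Keeping the block $\H{\dots,1,0,\dots,0}{\cdot}$ intact throughout the integration by parts is what avoids this difficulty.
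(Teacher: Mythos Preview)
Your proof is correct and takes a genuinely different route from the paper's. The paper substitutes $x=kt$ and extracts trailing zeroes in the same way you do, reducing to integrals $\int_0^{1/k}\H{0}{t}^{i+1}\H{\ve b}{t}/(1-t)\,dt$, but then it \emph{splits} the interval at $t=1$: the piece over $[0,1]$ is handled by the preceding lemma, and the piece over $[1,1/k]$ is shifted to $[0,1/k-1]$ by $t\mapsto t+1$ and rewritten via Lemma~\ref{SSeinplustrafo} (the $x+b\to x$ transformation). The output is a combination of generalized polylogarithms at $1$ and at $1/k-1$.

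You instead do a single integration-by-parts cascade over the whole of $[0,1/k]$, crucially keeping the indivisible block $\H{\dots,1,0,\dots,0}{t}$ intact so as to stay within the extended analyticity region of Section~\ref{Sec:BasicProp}; your output lands directly at the single argument $1/k$. This is more economical---no interval splitting, no appeal to the $x+1\to x$ lemma---and it makes transparent \emph{why} the crossing of $t=1$ is harmless: the would-be branch point at $t=1$ never appears because the letter $1$ is never detached from its trailing zeroes. The paper's approach, by contrast, reuses more of the already-built transformation machinery and produces constants at the fixed argument $1$ (plus $1/k-1$), which fits its later normal form for infinite $S$-sums more directly. Both arguments establish finiteness and the polylogarithmic representation; yours is sharper about the mechanism, the paper's is more modular.
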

\begin{proof}
If $\H{\ve a}x$ has trailing zeroes, \ie $a_l=0,$ we first extract them and get a linear combination of expressions of the form $$\int_0^1\frac{\H{0}{\frac{x}{k}}\H{0}x^w\H{\ve m}x}{k-x}dx$$
with $w\in \N$. Now let us look at such an expression:
\begin{eqnarray*}
\int_0^1\frac{\H{0}{\frac{x}{k}}\H{0}x^w\H{\ve m}x}{k-x}dx&=&\int_0^{\frac{1}{k}}\frac{\H{0}{x}\H{0}{kx}^w\H{\ve m}{kx}}{k-kx}kdx\\
&=&\int_0^{\frac{1}{k}}\frac{\H{0}{x}(\H{0}{x}+\H{0}{k})^w\H{\frac{m_1}{k},\ldots,\frac{m_l}{k}}{x}}{1-x}dx\\
&=&\int_0^{\frac{1}{k}}\frac{\H{\frac{m_1}{k},\ldots,\frac{m_l}{k}}{x}\sum_{i=0}^w{\binom{w}{i}\H{0}{x}^{i+1}\H{0}{k}^{w-i}}} {1-x}dx\\
&=&\sum_{i=0}^w{\binom{w}{i}\H{0}{k}^{w-i}\underbrace{\int_0^{\frac{1}{k}}\frac{\H{\frac{m_1}{k},\ldots,\frac{m_l}{k}}{x}\H{0}{x}^{i+1}}{1-x}dx}_A}
\end{eqnarray*}
\begin{eqnarray*}
A&=&\int_0^{1}\frac{\H{\frac{m_1}{k},\ldots,\frac{m_l}{k}}{x}\H{0}{x}^{i+1}}{1-x}dx+\int_1^{\frac{1}{k}}\frac{\H{\frac{m_1}{k},\ldots,\frac{m_l}{k}}{x}\H{0}{x}^{i+1}}{1-x}dx\\
&=&\underbrace{\int_0^1\frac{\H{\frac{m_1}{k},\ldots,\frac{m_l}{k}}{x}\H{0}{x}^{i+1}}{x}dx}_B-\underbrace{\int_0^{\frac{1}{k}-1}\frac{\H{\frac{m_1}{k},\ldots,\frac{m_l}{k}}{x+1}\H{0}{x+1}^{i+1}}{x}dx}_C.
\end{eqnarray*}
The integral $B$ is finite and expressible in terms of generalized polylogarithms due to the previous lemma.
After expanding the product $\H{\frac{m_1}{k},\ldots,\frac{m_l}{k}}{x+1}\H{0}{x+1}^{i+1}$, applying Lemma \ref{SSeinplustrafo} and applying the integral we end up in generalized polylogarithms at one and at $\frac{1}k-1$ which are all finite. Hence C is finite.
Summarizing $$\int_0^1\frac{\H{0}{\frac{x}{k}}\H{\ve a}x}{k-x}dx$$ is finite, since we can write it as a sum of finite integrals.
\end{proof}

We demonstrate this strategy by differentiating $\S{2}{2;n}$. After calculating
\begin{equation*}
\S{2}{2;n}=\M{\frac{\H{0}x}{\frac{1}{2}-x}}n=\int_0^1{\frac{(2^n x^n-1)\H{0}x}{\frac{1}{2}-x}dx}
\end{equation*}
we differentiate with respect to $n$ and obtain
\begin{eqnarray*}
&&\frac{d}{d n}\int_0^1{\frac{(2^n x^n-1)\H{0}x}{\frac{1}{2}-x}dx}=\int_0^1\frac{2^nx^n\left(\H{0}x+\H{0}2\right)\H{0}x}{\frac{1}{2}-x}dx\\
&&\hspace{1cm}=2 \int_0^1\frac{\left(2^nx^n-1\right)\H{0,0}x}{\frac{1}{2}-x}dx+\H{0}2\int_0^1\frac{\left(2^nx^n-1\right)\H{0}x}{\frac{1}{2}-x}dx\\
&&\hspace{1.5cm}+\int_0^1\frac{\H{0}{2\,x}\H{0}x}{\frac{1}{2}-x}dx\\
&&\hspace{1cm}=2 \M{\frac{\H{0,0}x}{\frac{1}{2}-x}}n+\H{0}2\M{\frac{\H{0}x}{\frac{1}{2}-x}}n+\int_0^1\frac{\H{0}{2\, x}\H{0}x}{\frac{1}{2}-x}dx\\
&&\hspace{1cm}=-2\S{3}{2;n}+\H{0}2\S{2}{2;n}+\H{0,0,-1}1+2\H{0,0,1}1+\H{0,1,-1}1.
\end{eqnarray*}


\subsection{Differentiation based on the nested integral representation}

\normalsize
\noindent
Using the integral representation from Theorem~\ref{SSintrep} the differentiation can be extended from $\bar{S}$-sums to $S$-sums.
We demonstrate the underlying ideas by the differentiation of $\SS{1,2,1}{\tfrac{1}{2},2,\tfrac{1}{3}}n$.
We start with the iterated integral  representation
\small
\begin{align*}
\SS{1,2,1}{\tfrac{1}{2},2,\tfrac{1}{3}}n=&\int_0^{\tfrac{1}{3}}{\tfrac{1}{x-1}\int_{1}^{x}{\tfrac{1}{y}\int_0^{y}{\frac{1}{z-\frac{1}{2}}\int_{\tfrac{1}{2}}^{z}{\frac{w^{n}-1}{w-1}}dwdz}dy}dx}\\
=&\int_0^{\frac{1}{3}}{\frac{1}{x-1}\int^{1}_{x}{-\frac{1}{y}\int_0^{y}{\frac{1}{\frac{1}{2}-z}\int_{\tfrac{1}{2}}^{z}{\frac{w^{n}-1}{w-1}}dwdz}dy}dx}
\end{align*}
\normalsize
given by Theorem \ref{SSintrep}.
Differentiation with respect to $n$ yields
\small
\begin{eqnarray*}
\int_0^{\frac{1}{3}}{\frac{1}{x-1}\int^{1}_{x}{-\frac{1}{y}\int_0^{y}{\frac{1}{\frac{1}{2}-z}\int_{\tfrac{1}{2}}^{z}{\frac{w^{n}\H0w}{w-1}}dwdz}dy}dx}=:A.
\end{eqnarray*}
\normalsize
We want to rewrite $A$ in terms of $S$-sums at $n$ and finite $S$-sums at $\infty$ (or
finite generalized polylogarithms). Therefore we first rewrite $A$ in the following form:
\small
\begin{eqnarray*}
A&=&\underbrace{\int_0^{\frac{1}{3}}{\frac{1}{x-1}\int^{1}_{x}{-\frac{1}{y}\int_0^{y}{\frac{1}{\frac{1}{2}-z}\int_{\tfrac{1}{2}}^{z}{\frac{w^{n}-1}{w-1}\H0w}dwdz}dy}dx}}_{B:=}+\\
&&\underbrace{\int_0^{\frac{1}{3}}{\frac{1}{x-1}\int^{1}_{x}{-\frac{1}{y}\int_0^{y}{\frac{1}{\frac{1}{2}-z}\int_{\tfrac{1}{2}}^{z}{\frac{\H0w}{w-1}}dwdz}dy}dx}}_{C:=}.
\end{eqnarray*}
\normalsize
Let us first consider $B$: we split the integral at the zeroes of the denominators and get
\small
\begin{eqnarray*}
B&=&\int_0^{\frac{1}{3}}{\frac{1}{x-1}\int^{\frac{1}{2}}_{x}{-\frac{1}{y}\int_0^{y}{\frac{1}{z-\frac{1}{2}}\int^{\frac{1}{2}}_{z}{-\frac{w^{n}-1}{w-1}\H0w}dwdz}dy}dx}+\\
&&\int_0^{\frac{1}{3}}{\frac{1}{x-1}\int^{1}_{\tfrac{1}{2}}{-\frac{1}{y}\int_0^{\frac{1}{2}}{\frac{1}{z-\frac{1}{2}}\int^{\frac{1}{2}}_{z}{-\frac{w^{n}-1}{w-1}\H0w}dwdz}dy}dx}+\\
&&\int_0^{\frac{1}{3}}{\frac{1}{x-1}\int^{1}_{\tfrac{1}{2}}{-\frac{1}{y}\int_{\tfrac{1}{2}}^{y}{\frac{1}{z-\frac{1}{2}}\int_{\tfrac{1}{2}}^{z}{\frac{w^{n}-1}{w-1}\H0w}dwdz}dy}dx}\\
&=&B_1+B_2+B_3.
\end{eqnarray*}
\normalsize
Starting from the inner integral and integrating integral by integral leads to
\small
\begin{eqnarray*}
B_1&=&\int_0^{\frac{1}{3}}\frac{1}{x-1}\int^{\frac{1}{2}}_{x}-\frac{1}{y}\int_0^{y}\frac{1}{z-\frac{1}{2}}\biggl(\H{0}{\tfrac{1}{2}}\SS{1}{\tfrac{1}{2}}n-\H{0}z\SS{1}{z}n\biggr. \\
   & & \biggl.+\SS{2}{\tfrac{1}{2}}n-\SS{2}{z}n\biggr)dzdydx\\
   &=&\cdots =\\
   &=&-\textnormal{H}_3(1) \textnormal{H}_{0,1,0}(1) \textnormal{S}_1\left(\tfrac{1}{2};n\right)+\textnormal{H}_0(2) \textnormal{H}_{3,0,\tfrac{3}{2}}(1) \textnormal{S}_1\left(\tfrac{1}{2};n\right)-\textnormal{H}_0(3) \textnormal{H}_{3,0,\tfrac{3}{2}}(1)
   \textnormal{S}_1\left(\tfrac{1}{2};n\right)\\
    &&-\textnormal{H}_{0,3,0,\tfrac{3}{2}}(1) \textnormal{S}_1\left(\tfrac{1}{2};n\right)-2 \textnormal{H}_{3,0,0,\tfrac{3}{2}}(1)
   \textnormal{S}_1\left(\tfrac{1}{2};n\right)-\textnormal{H}_0(2) \textnormal{H}_3(1) \textnormal{S}_{1,2}\hspace{-0.2em}\left(\tfrac{1}{2},1;n\right)\\
    &&+\textnormal{H}_0(3) \textnormal{H}_3(1) \textnormal{S}_{1,2}\hspace{-0.2em}\left(\tfrac{1}{2},2;n\right)+\textnormal{H}_{0,3}(1)
   \textnormal{S}_{1,2}\hspace{-0.2em}\left(\tfrac{1}{2},2;n\right)-2 \textnormal{H}_3(1) \textnormal{S}_{1,3}\hspace{-0.2em}\left(\tfrac{1}{2},1;n\right)\\
    &&+2 \textnormal{H}_3(1) \textnormal{S}_{1,3}\hspace{-0.2em}\left(\tfrac{1}{2},2;n\right)-\textnormal{H}_3(1)
   \textnormal{S}_{2,2}\hspace{-0.2em}\left(\tfrac{1}{2},1;n\right)+\textnormal{H}_3(1) \textnormal{S}_{2,2}\hspace{-0.2em}\left(\tfrac{1}{2},2;n\right)\\
    &&-\textnormal{H}_0(3)
   \textnormal{S}_{1,2,1}\hspace{-0.2em}\left(\tfrac{1}{2},2,\tfrac{1}{3};n\right)-\textnormal{S}_{1,2,2}\hspace{-0.2em}\left(\tfrac{1}{2},2,\tfrac{1}{3};n\right)-2
   \textnormal{S}_{1,3,1}\hspace{-0.2em}\left(\tfrac{1}{2},2,\tfrac{1}{3};n\right)\\
    &&-\textnormal{S}_{2,2,1}\hspace{-0.2em}\left(\tfrac{1}{2},2,\tfrac{1}{3};n\right).
\end{eqnarray*}
\normalsize
Applying the same strategy to $B_2$ and $B_3$ leads to
\small
\begin{eqnarray*}
B&=&-\textnormal{H}_3(1) \textnormal{H}_{0,1,0}(2) \textnormal{S}_1\left(\tfrac{1}{2};n\right)+\textnormal{H}_0(2) \textnormal{H}_{3,0,\tfrac{3}{2}}(1) \textnormal{S}_1\left(\tfrac{1}{2};n\right)
  -\textnormal{H}_0(3) \textnormal{H}_{3,0,\tfrac{3}{2}}(1)\textnormal{S}_1\left(\tfrac{1}{2};n\right)\\
&&-\textnormal{H}_{0,3,0,\tfrac{3}{2}}(1) \textnormal{S}_1\left(\tfrac{1}{2};n\right)-2 \textnormal{H}_{3,0,0,\tfrac{3}{2}}(1)
   \textnormal{S}_1\left(\tfrac{1}{2};n\right)+\textnormal{H}_0(3) \textnormal{H}_3(1) \textnormal{S}_{1,2}\hspace{-0.2em}\left(\tfrac{1}{2},2;n\right)\\
&&+\textnormal{H}_{0,3}(1) \textnormal{S}_{1,2}\hspace{-0.2em}\left(\tfrac{1}{2},2;n\right)-\textnormal{H}_0(3)
   \textnormal{S}_{1,2,1}\hspace{-0.2em}\left(\tfrac{1}{2},2,\tfrac{1}{3};n\right)-\textnormal{S}_{1,2,2}\hspace{-0.2em}\left(\tfrac{1}{2},2,\tfrac{1}{3};n\right)\\
&&-2\textnormal{S}_{1,3,1}\hspace{-0.2em}\left(\tfrac{1}{2},2,\tfrac{1}{3};n\right)-\textnormal{S}_{2,2,1}\hspace{-0.2em}\left(\tfrac{1}{2},2,\tfrac{1}{3};n\right).
\end{eqnarray*}
\normalsize
Let us now consider $C$: again we split the integral at the zeroes of the denominators and get:
\small
\begin{eqnarray*}
C&=&\int_0^{\frac{1}{3}}{\frac{1}{x-1}\int^{\frac{1}{2}}_{x}{-\frac{1}{y}\int_0^{y}{\frac{1}{z-\frac{1}{2}}\int^{\frac{1}{2}}_{z}{-\frac{\H0w}{w-1}}dwdz}dy}dx}+\\
&&\int_0^{\frac{1}{3}}{\frac{1}{x-1}\int^{1}_{\tfrac{1}{2}}{-\frac{1}{y}\int_0^{\frac{1}{2}}{\frac{1}{z-\frac{1}{2}}\int^{\frac{1}{2}}_{z}{-\frac{\H0w}{w-1}}dwdz}dy}dx}+\\
&&\int_0^{\frac{1}{3}}{\frac{1}{x-1}\int^{1}_{\tfrac{1}{2}}{-\frac{1}{y}\int_{\tfrac{1}{2}}^{y}{\frac{1}{z-\frac{1}{2}}\int_{\tfrac{1}{2}}^{z}{\frac{\H0w}{w-1}}dwdz}dy}dx}\\
&=&C_1+C_2+C_3.
\end{eqnarray*}
\normalsize
Starting from the inner integral and integrating integral by integral leads to
\small
\begin{eqnarray*}
C_1&=&\int_0^{\frac{1}{3}}{\frac{1}{x-1}\int^{\frac{1}{2}}_{x}{-\frac{1}{y}\int_0^{y}{\frac{\H{1,0}{\frac{1}{2}}-\H{1,0}{z}}{z-\frac{1}{2}}dz}dy}dx}\\ &=&\int_0^{\frac{1}{3}}{\frac{1}{x-1}\int^{\frac{1}{2}}_{x}{-\frac{-\H{\frac{1}{2}}{y}\H{1,0}{\frac{1}{2}}+\H{\frac{1}{2},1,0}{y}} {y}}dy}dx\\ &=&\int_0^{\frac{1}{3}}{\frac{\H{1,0}{\tfrac{1}{2}}\left(\H{0,\tfrac{1}{2}}{\tfrac{1}{2}}-\H{0,\tfrac{1}{2}}{x}\right)
      -\H{0,\tfrac{1}{2},1,0}{\tfrac{1}{2}}+\H{0,\tfrac{1}{2},1,0}{x}}{x-1}dx}\\
   &=&\textnormal{H}_0(2) \textnormal{H}_2(1) \textnormal{H}_3(1) \textnormal{H}_{0,1}(1)+\textnormal{H}_3(1) \textnormal{H}_{0,2}(1) \textnormal{H}_{0,1}(1)-\textnormal{H}_0(2) \textnormal{H}_3(1) \textnormal{H}_{0,1,2}(1)\\
    &&-\textnormal{H}_0(2) \textnormal{H}_2(1) \textnormal{H}_{3,0,\tfrac{3}{2}}(1)-\textnormal{H}_{0,2}(1)
   \textnormal{H}_{3,0,\tfrac{3}{2}}(1)-2 \textnormal{H}_3(1) \textnormal{H}_{0,0,1,2}(1)-\textnormal{H}_3(1) \textnormal{H}_{0,1,0,2}(1)\\
&&+\textnormal{H}_0(3) \textnormal{H}_{3,0,\tfrac{3}{2},3}(1)+\textnormal{H}_{0,3,0,\tfrac{3}{2},3}(1)+2
\textnormal{H}_{3,0,0,\tfrac{3}{2},3}(1)+\textnormal{H}_{3,0,\tfrac{3}{2},0,3}(1).
\end{eqnarray*}
\normalsize
Applying the same strategy to $C_2$ and $C_3$ leads to
\small
\begin{eqnarray*}
C&=&\textnormal{H}_3(1) \textnormal{H}_0(2){}^2 \textnormal{H}_{0,-1}(1)+\textnormal{H}_2(1) \textnormal{H}_3(1) \textnormal{H}_0(2) \textnormal{H}_{0,1}(1)+\textnormal{H}_3(1) \textnormal{H}_0(2) \textnormal{H}_{-1,0,1}(1)\\
&&+\textnormal{H}_3(1) \textnormal{H}_0(2) \textnormal{H}_{0,-1,1}(1)-\textnormal{H}_3(1) \textnormal{H}_0(2)\textnormal{H}_{0,1,2}(1)-\textnormal{H}_2(1) \textnormal{H}_0(2) \textnormal{H}_{3,0,\tfrac{3}{2}}(1)\\
&&+\textnormal{H}_3(1) \textnormal{H}_{0,1}(1) \textnormal{H}_{0,2}(1)-\textnormal{H}_{0,2}(1) \textnormal{H}_{3,0,\tfrac{3}{2}}(1)-\textnormal{H}_3(1) \textnormal{H}_{-1,0,1,-1}(1)-2\textnormal{H}_3(1) \textnormal{H}_{0,0,1,2}(1)\\
&&-\textnormal{H}_3(1) \textnormal{H}_{0,1,0,2}(1)+\textnormal{H}_0(3) \textnormal{H}_{3,0,\tfrac{3}{2},3}(1)+\textnormal{H}_{0,3,0,\tfrac{3}{2},3}(1)+2\textnormal{H}_{3,0,0,\tfrac{3}{2},3}(1)+\textnormal{H}_{3,0,\tfrac{3}{2},0,3}(1).
\end{eqnarray*}
\normalsize
Adding $B$ and $C$ leads to the final result:
\small
\begin{equation}\label{Equ:DiffWithLargeConst}
\begin{split}
&\frac{\partial \SS{1,2,1}{\tfrac{1}{2},2,\tfrac{1}{3}}{n}} {\partial n}=\\
&\hspace{0.5cm}\textnormal{H}_0(2) \textnormal{H}_{3,0,\tfrac{3}{2}}\hspace{-0.2em}(1) \textnormal{S}_1\left(\tfrac{1}{2};n\right)-\textnormal{H}_3(1) \textnormal{H}_{0,1,0}\hspace{-0.2em}(2) \textnormal{S}_1\left(\tfrac{1}{2};n\right)-\textnormal{H}_0(3) \textnormal{H}_{3,0,\tfrac{3}{2}}\hspace{-0.2em}(1)
   \textnormal{S}_1\left(\tfrac{1}{2};n\right)\\
&\hspace{0.5cm}-\textnormal{H}_{0,3,0,\tfrac{3}{2}}\hspace{-0.2em}(1) \textnormal{S}_1\left(\tfrac{1}{2};n\right)-2 \textnormal{H}_{3,0,0,\tfrac{3}{2}}\hspace{-0.2em}(1)
   \textnormal{S}_1\left(\tfrac{1}{2};n\right)+\textnormal{H}_0(3) \textnormal{H}_3(1) \textnormal{S}_{1,2}\hspace{-0.2em}\left(\tfrac{1}{2},2;n\right)\\
&\hspace{0.5cm}+\textnormal{H}_{0,3}\hspace{-0.2em}(1) \textnormal{S}_{1,2}\hspace{-0.2em}\left(\tfrac{1}{2},2;n\right)-\textnormal{H}_0(3)   \textnormal{S}_{1,2,1}\hspace{-0.2em}\left(\tfrac{1}{2},2,\tfrac{1}{3};n\right)-\textnormal{S}_{1,2,2}\hspace{-0.2em}\left(\tfrac{1}{2},2,\tfrac{1}{3};n\right)\\
&\hspace{0.5cm}-2 \textnormal{S}_{1,3,1}\hspace{-0.2em}\left(\tfrac{1}{2},2,\tfrac{1}{3};n\right)-\textnormal{S}_{2,2,1}\hspace{-0.2em}\left(\tfrac{1}{2},2,\tfrac{1}{3};n\right)+\textnormal{H}_3(1) \textnormal{H}_0(2){}^2 \textnormal{H}_{0,-1}\hspace{-0.2em}(1)\\
&\hspace{0.5cm}+\textnormal{H}_2(1)\textnormal{H}_3(1) \textnormal{H}_0(2) \textnormal{H}_{0,1}\hspace{-0.2em}(1)+\textnormal{H}_3(1) \textnormal{H}_0(2) \textnormal{H}_{-1,0,1}\hspace{-0.2em}(1)+\textnormal{H}_3(1) \textnormal{H}_0(2) \textnormal{H}_{0,-1,1}\hspace{-0.2em}(1)\\
&\hspace{0.5cm}-\textnormal{H}_3(1) \textnormal{H}_0(2) \textnormal{H}_{0,1,2}\hspace{-0.2em}(1)-\textnormal{H}_2(1) \textnormal{H}_0(2)\textnormal{H}_{3,0,\tfrac{3}{2}}\hspace{-0.2em}(1)+\textnormal{H}_3(1) \textnormal{H}_{0,1}\hspace{-0.2em}(1) \textnormal{H}_{0,2}\hspace{-0.2em}(1)-\textnormal{H}_{0,2}\hspace{-0.2em}(1) \textnormal{H}_{3,0,\tfrac{3}{2}}\hspace{-0.2em}(1)\\
&\hspace{0.5cm}-\textnormal{H}_3(1) \textnormal{H}_{-1,0,1,-1}\hspace{-0.2em}(1)-2 \textnormal{H}_3(1) \textnormal{H}_{0,0,1,2}\hspace{-0.2em}(1)-\textnormal{H}_3(1)
   \textnormal{H}_{0,1,0,2}\hspace{-0.2em}(1)\\
&\hspace{0.5cm}+\textnormal{H}_0(3) \textnormal{H}_{3,0,\tfrac{3}{2},3}\hspace{-0.2em}(1)+\textnormal{H}_{0,3,0,\tfrac{3}{2},3}\hspace{-0.2em}(1)+2 \textnormal{H}_{3,0,0,\tfrac{3}{2},3}\hspace{-0.2em}(1)+\textnormal{H}_{3,0,\tfrac{3}{2},0,3}\hspace{-0.2em}(1).
\end{split}
\end{equation}
\normalsize

\noindent As worked out in~\cite{Ablinger:12} the proposed strategy works in general for $S$-sums.

\section{Relations between $S$-Sums}\label{Sec:Relation}

\vspace*{1mm}
\noindent
$S$-sums obey stuffle relations due to their quasi-shuffle algebra, differentiation
relations, and generalized argument relations.  Subsequently, the available ideas for
harmonic sums~\cite{Blumlein:2003gb,Blumlein:2009ta,Blumlein:2009fz} and cyclotomic
harmonic
sums~\cite{Ablinger:2011te} are extended and explored for $S$-sums.

\subsection{Quasi-Shuffle or Stuffle Relations}\label{SSalgrel}

\noindent
Based on~\cite{Hoffman} it has been worked out in \cite{Moch:2001zr} that $Z$-sums
\begin{equation}\label{Equ:ZSum}
\textnormal{Z}_{a_1,\ldots ,a_k}(x_1,\ldots ,x_k;n)= \sum_{n\geq i_1 > i_2 > \cdots > i_k \geq 1} \frac{x_1^{i_1}}{i_1^{a_1}}\cdots
	\frac{x_k^{i_k}}{i_k^{a_k}}
\end{equation}
with their quasi-shuffle product form a quasi-shuffle algebra. Since $S$-sums and $Z$-sums can be transformed into each other, a quasi-shuffle algebra can be carried over to $S$-sums.

We remark that the quasi-shuffle algebra can be derived directly (without using the
notion of $Z$-sums) as follows. Consider an alphabet $A$, where pairs $(m,x)$ with
$m\in\N$ and $x\in\R^*$ form the letters, \ie we identify a $S$-sum $$
\S{m_1,m_2,\ldots,m_k}{x_1,x_2,\ldots,x_k;n}
$$
with the word $(m_1,x_1)(m_2,x_2)\ldots(m_k,x_k)$.
We define the degree of a letter $(m,x) \in A$ as $\abs{(m,x)}=m$ and we order the letters for  $m_1,m_2\in\N$ and $x_1,x_2\in\R^*$ by
\begin{eqnarray*}
\begin{array}{llll}
	(m_1,x_1)	&\prec (m_2,x_2) 	&\textnormal{if }&m_1<m_2\\
	(m_1,x_1)	&\prec (m_1,x_2) 	&\textnormal{if }&\abs{x_1}<\abs{x_2}\\
	(m_1,-x_1)	&\prec (m_1,x_1) 	&\textnormal{if }&x_1>0.
\end{array}
\end{eqnarray*}
We extend this order lexicographically to words. Using this order, it can be shown
analogously as in \cite{Ablinger:2010kw} (compare \cite{Hoffman}) that the $S$-sums
form
a quasi shuffle algebra which is the free polynomial algebra on the
\textit{Lyndon} words with alphabet $A$, see also \cite{RADFORD}.

As a consequence the number of algebraic independent sums in this algebra can be counted by counting the number of \textit{Lyndon} words.
If we consider for example an alphabet with $n$ letters and we look for the number of basis sums with depth $d,$
we can use the first Witt formula~\cite{Witt1937,Witt1956,Reutenauer1993}:
The number of \textit{Lyndon} words of length $d$ over an alphabet of length $n$ is given by
$$N_A(d) = \frac{1}{d}\sum_{q|d}{\mu\left(\frac{d}{q}\right)n^q}$$
where \begin{equation}
	\mu(n)=\left\{
		  	\begin{array}{ll}
						1\  & \textnormal{if } n = 1  \\
						0\  & \textnormal{if } n \textnormal{ is divided by the square of a prime}  \\
						(-1)^s\  & \textnormal{if } n \textnormal{ is the product of } s \textnormal{ different primes}
					\end{array} \right. \nonumber\\
\label{abmue}
\end{equation}
is the M\"obius function \cite{MOEBIUS:1832,Hardy:1978}. In the following we call
the
relations from the quasi-shuffle
algebra also stuffle relations; cf.~\cite{Blumlein:2009cf}.

In the subsequent example we look at $S$-sums of depth 2 on alphabets of 4 letters.
Hence we obtain
$ \frac{1}{2}\sum_{q|2}{\mu\left(\frac{2}{q}\right)4^q}=6$
basis sums.
We can use an analogous method of the method presented in
\cite{Blumlein:2003gb,Ablinger:2010kw}
for harmonic sums to find the basis $S$-sums together with the relations for the
dependent $S$-sums.

For example, consider the letters $(1,\frac{1}{2}),(1,-\frac{1}{2}),(3,\frac{1}{2}),(3,-\frac{1}{2}).$ At depth $d=2$ we obtain $16$ sums with these letters.
Using the relations
\begin{eqnarray*}
\textnormal{S}_{1,3}\left(\tfrac{1}{2},\tfrac{1}{2};n\right)&=& -\textnormal{S}_{3,1}\left(\tfrac{1}{2},\tfrac{1}{2};n\right)+\textnormal{S}_1\left(\tfrac{1}{2};n\right)\textnormal{S}_3\left(\tfrac{1}{2};n\right)+\textnormal{S}_4\left(\tfrac{1}{4};n\right)\\
\textnormal{S}_{1,3}\left(-\tfrac{1}{2},-\tfrac{1}{2};n\right)&=&-\textnormal{S}_{3,1}\left(-\tfrac{1}{2},-\tfrac{1}{2};n\right)+\textnormal{S}_1\left(-\tfrac{1}{2};n\right)\textnormal{S}_3\left(-\tfrac{1}{2};n\right)+\textnormal{S}_4\left(\tfrac{1}{4};n\right)\\
\textnormal{S}_{1,3}\left(\tfrac{1}{2},-\tfrac{1}{2};n\right)&=&-\textnormal{S}_{3,1}\left(-\tfrac{1}{2},\tfrac{1}{2};n\right)+\textnormal{S}_1\left(\tfrac{1}{2};n\right)\textnormal{S}_3\left(-\tfrac{1}{2};n\right)+\textnormal{S}_4\left(-\tfrac{1}{4};n\right)\\
\textnormal{S}_{1,3}\left(-\tfrac{1}{2},\tfrac{1}{2};n\right)&=&-\textnormal{S}_{3,1}\left(\tfrac{1}{2},-\tfrac{1}{2};n\right)+\textnormal{S}_1\left(-\tfrac{1}{2};n\right)\textnormal{S}_3\left(\tfrac{1}{2};n\right)+\textnormal{S}_4\left(-\tfrac{1}{4};n\right)\\
\textnormal{S}_{1,1}\left(-\tfrac{1}{2},-\tfrac{1}{2};n\right)&=& \tfrac{1}{2} \textnormal{S}_1\left(-\tfrac{1}{2};n\right)^2+\tfrac{1}{2}\textnormal{S}_2\left(\tfrac{1}{4};n\right)\\
\textnormal{S}_{1,1}\left(\tfrac{1}{2},\tfrac{1}{2};n\right)&=& \tfrac{1}{2} \textnormal{S}_1\left(\tfrac{1}{2};n\right)^2+\tfrac{1}{2}\textnormal{S}_2\left(\tfrac{1}{4};n\right)\\
\textnormal{S}_{1,1}\left(\tfrac{1}{2},-\tfrac{1}{2};n\right)&=&-\textnormal{S}_{1,1}\left(-\tfrac{1}{2},\tfrac{1}{2};n\right)+\textnormal{S}_1\left(-\tfrac{1}{2};n\right)\textnormal{S}_1\left(\tfrac{1}{2};n\right)+\textnormal{S}_2\left(-\tfrac{1}{4};n\right)\\
\textnormal{S}_{3,3}\left(-\tfrac{1}{2},-\tfrac{1}{2};n\right)&=& \frac{1}{2}\textnormal{S}_3\left(-\tfrac{1}{2};n\right)^2+\tfrac{1}{2} \textnormal{S}_6\left(\tfrac{1}{4};n\right)\\
\textnormal{S}_{3,3}\left(\tfrac{1}{2},\tfrac{1}{2};n\right)&=& \tfrac{1}{2}\textnormal{S}_3\left(\tfrac{1}{2};n\right)^2+\tfrac{1}{2} \textnormal{S}_6\left(\tfrac{1}{4};n\right)\\
\textnormal{S}_{3,3}\left(\tfrac{1}{2},-\tfrac{1}{2};n\right)&=&-\textnormal{S}_{3,3}\left(-\tfrac{1}{2},\tfrac{1}{2};n\right)+\textnormal{S}_3\left(-\tfrac{1}{2};n\right)\textnormal{S}_3\left(\tfrac{1}{2};n\right)+\textnormal{S}_6\left(-\tfrac{1}{4};n\right)
\end{eqnarray*}
we find the $6$ basis sums
\begin{eqnarray*}
&&\textnormal{S}_{3,1}\left(\tfrac{1}{2},\tfrac{1}{2};n\right),\textnormal{S}_{3,1}\left(-\tfrac{1}{2},\tfrac{1}{2};n\right),\textnormal{S}_{3,1}\left(\tfrac{1}{2},-\tfrac{1}{2};n\right),\textnormal{S}_{3,1}\left(-\tfrac{1}{2},-\tfrac{1}{2};n\right),\\
&&\textnormal{S}_{1,1}\left(-\tfrac{1}{2},\tfrac{1}{2};n\right),\textnormal{S}_{3,3}\left(-\tfrac{1}{2},\tfrac{1}{2};n\right),
\end{eqnarray*}
in which all the other 10 sums of depth 2 can be expressed.

\subsection{Differential Relations}
\label{SSdiffrel}

\noindent
In Section \ref{SSdifferentiation} we described the differentiation of $S$-sums with
respect to the upper summation limit.  Similarly to the application of differentiation to
harmonic sums, cf.~\cite{Blumlein:2009ta,Blumlein:2009fz}, and cyclotomic sums,
cf.~\cite{Ablinger:2011te}, the differentiation leads to new relations in the class of
$S$-sums. For instance we find
$$
\frac{d}{d n}\S{2}{2;n}=-\S{3}{2;n}+\H{0}2\S{2}{2;n}+\H{0,0,-1}1+2\H{0,0,1}1+\H{0,1,-1}1.
$$
Subsequently, we collect the derivatives with respect to $n:$
\begin{eqnarray}
\S{a_1,\ldots,a_k}{b_1,\ldots,b_k;n}^{(D)}
= \left\{\frac{\partial^N}{\partial n^N}\S{a_1,\ldots,a_k}{b_1,\ldots,b_k;n};
N \in \N\right\}.
\end{eqnarray}
Continuing the example from above we get the additional relations:
\begin{align*}
\textnormal{S}_{3,1}\hspace{-0.2em}\left(\tfrac{1}{2},\tfrac{1}{2};n\right)=& \frac{1}{12} \frac{\partial}{\partial n}\textnormal{S}_3\hspace{-0.2em}\left(\tfrac{1}{4};n\right)-\frac{1}{2}
   \frac{\partial}{\partial n}\textnormal{S}_{2,1}\hspace{-0.2em}\left(\tfrac{1}{2},\tfrac{1}{2};n\right)-\frac{1}{2} \textnormal{H}_{1,0}\hspace{-0.2em}\left(\tfrac{1}{2}\right) \textnormal{S}_2\hspace{-0.2em}\left(\tfrac{1}{2};n\right)\\&+\textnormal{H}_0\hspace{-0.2em}\left(\tfrac{1}{2}\right) \textnormal{S}_{2,1}\hspace{-0.2em}\left(\tfrac{1}{2},\tfrac{1}{2};n\right)-\frac{1}{2}
   \textnormal{H}_{\tfrac{1}{2}}\hspace{-0.2em}\left(\tfrac{1}{4}\right) \textnormal{H}_{0,1,0}\hspace{-0.2em}\left(\tfrac{1}{2}\right)+\frac{1}{12} \textnormal{H}_{0,0,1,0}\hspace{-0.2em}\left(\tfrac{1}{4}\right)\\&+\frac{1}{2} \textnormal{H}_{\tfrac{1}{2},0,1,0}\hspace{-0.2em}\left(\tfrac{1}{4}\right)-\frac{1}{12} \textnormal{H}_0\hspace{-0.2em}\left(\tfrac{1}{4}\right) \textnormal{S}_3\hspace{-0.2em}\left(\tfrac{1}{4};n\right)-\frac{1}{4}
   \textnormal{S}_2\hspace{-0.2em}\left(\tfrac{1}{2};n\right)^2\\
\textnormal{S}_{3,1}\hspace{-0.2em}\left(-\tfrac{1}{2},-\tfrac{1}{2};n\right)=& \frac{1}{12}
   \frac{\partial}{\partial n}\textnormal{S}_3\hspace{-0.2em}\left(\tfrac{1}{4};n\right)-\frac{1}{2}
   \frac{\partial}{\partial n}\textnormal{S}_{2,1}\hspace{-0.2em}\left(-\tfrac{1}{2},-\tfrac{1}{2};n\right)+\frac{1}{2} \textnormal{H}_{-\tfrac{1}{2},0}\hspace{-0.2em}\left(\tfrac{1}{4}\right)
   \textnormal{S}_2\hspace{-0.2em}\left(-\tfrac{1}{2};n\right)\\&+\frac{1}{2} \textnormal{H}_0\hspace{-0.2em}\left(\tfrac{1}{4}\right) \textnormal{S}_{2,1}\hspace{-0.2em}\left(-\tfrac{1}{2},-\tfrac{1}{2};n\right)-\frac{1}{2}
   \textnormal{H}_{-\tfrac{1}{2}}\hspace{-0.2em}\left(\tfrac{1}{4}\right) \textnormal{H}_{0,-1,0}\hspace{-0.2em}\left(\tfrac{1}{2}\right)\\&-\frac{1}{2} \textnormal{H}_{-\tfrac{1}{2},0,1,0}\hspace{-0.2em}\left(\tfrac{1}{4}\right)+\frac{1}{12} \textnormal{H}_{0,0,1,0}\hspace{-0.2em}\left(\tfrac{1}{4}\right)-\frac{1}{2}
   \textnormal{H}_{-\tfrac{1}{2}}\hspace{-0.2em}\left(\tfrac{1}{4}\right) \textnormal{H}_0\hspace{-0.2em}\left(\tfrac{1}{2}\right) \textnormal{S}_2\hspace{-0.2em}\left(-\tfrac{1}{2};n\right)\\&-\frac{1}{12} \textnormal{H}_0\hspace{-0.2em}\left(\tfrac{1}{4}\right)
   \textnormal{S}_3\hspace{-0.2em}\left(\tfrac{1}{4};n\right)-\frac{1}{4} \textnormal{S}_2\hspace{-0.2em}\left(-\tfrac{1}{2};n\right)^2\\
\textnormal{S}_{3,1}\hspace{-0.2em}\left(-\tfrac{1}{2},\tfrac{1}{2};n\right)=& \frac{1}{6}
   \frac{\partial}{\partial n}\textnormal{S}_3\hspace{-0.2em}\left(-\tfrac{1}{4};n\right)-\frac{1}{2}
   \frac{\partial}{\partial n}\textnormal{S}_{2,1}\hspace{-0.2em}\left(-\tfrac{1}{2},\tfrac{1}{2};n\right)-\frac{1}{2}
   \frac{\partial}{\partial n}\textnormal{S}_{2,1}\hspace{-0.2em}\left(\tfrac{1}{2},-\tfrac{1}{2};n\right)\\&-\frac{1}{2} \textnormal{H}_{1,0}\hspace{-0.2em}\left(\tfrac{1}{2}\right)
   \textnormal{S}_2\hspace{-0.2em}\left(-\tfrac{1}{2};n\right)+\frac{1}{2} \textnormal{H}_{-\tfrac{1}{2},0}\hspace{-0.2em}\left(\tfrac{1}{4}\right) \textnormal{S}_2\hspace{-0.2em}\left(\tfrac{1}{2};n\right)\\&+\textnormal{H}_0\hspace{-0.2em}\left(\tfrac{1}{2}\right) \textnormal{S}_{2,1}\hspace{-0.2em}\left(-\tfrac{1}{2},\tfrac{1}{2};n\right)+\frac{1}{2}
   \textnormal{H}_0\hspace{-0.2em}\left(\tfrac{1}{4}\right) \textnormal{S}_{2,1}\hspace{-0.2em}\left(\tfrac{1}{2},-\tfrac{1}{2};n\right)\\&-\textnormal{S}_{3,1}\hspace{-0.2em}\left(\tfrac{1}{2},-\tfrac{1}{2};n\right)+\frac{1}{2}
   \textnormal{H}_{\tfrac{1}{2}}\hspace{-0.2em}\left(\tfrac{1}{4}\right) \textnormal{H}_{0,-1,0}\hspace{-0.2em}\left(\tfrac{1}{2}\right)+\frac{1}{2} \textnormal{H}_{\tfrac{3}{4}}\hspace{-0.2em}\left(\tfrac{1}{4}\right)
  \textnormal{H}_{0,1,0}\hspace{-0.2em}\left(\tfrac{1}{2}\right)\\&+\frac{1}{2} \textnormal{H}_{-\tfrac{1}{2},0,-1,0}\hspace{-0.2em}\left(\tfrac{1}{4}\right)-\frac{1}{6}
   \textnormal{H}_{0,0,-1,0}\hspace{-0.2em}\left(\tfrac{1}{4}\right)-\frac{1}{2} \textnormal{H}_0\hspace{-0.2em}\left(\tfrac{1}{2}\right)
   \textnormal{H}_{\tfrac{3}{4}}\hspace{-0.2em}\left(\tfrac{1}{4}\right) \textnormal{S}_2\hspace{-0.2em}\left(\tfrac{1}{2};n\right)\\&-\frac{1}{2} \textnormal{H}_{\tfrac{1}{2},0,-1,0}\hspace{-0.2em}\left(\tfrac{1}{4}\right)-\frac{1}{6} \textnormal{H}_0\hspace{-0.2em}\left(\tfrac{1}{4}\right)
   \textnormal{S}_3\hspace{-0.2em}\left(-\tfrac{1}{4};n\right)-\frac{1}{2} \textnormal{S}_2\hspace{-0.2em}\left(\tfrac{-1}{2};n\right)
   \textnormal{S}_2\hspace{-0.2em}\left(\tfrac{1}{2};n\right).
\end{align*}
Using all the relations we can reduce the number of basis sums at depth $d=2$ to $3$ by introducing differentiation. In our case the basis sums are:
\begin{eqnarray}\label{Equ:BasisSums}
\textnormal{S}_{3,1}\hspace{-0.2em}\left(\tfrac{1}{2},-\tfrac{1}{2};n\right),
\textnormal{S}_{1,1}\hspace{-0.2em}\left(-\tfrac{1}{2},\tfrac{1}{2};n\right),
\textnormal{S}_{3,3}\hspace{-0.2em}\left(-\tfrac{1}{2},\tfrac{1}{2};n\right).
\end{eqnarray}
Note that we introduced the letters $(2,\pm \frac{1}{2})$ and $(3,\pm \frac{1}{4}),$ however these letters appear just in sums of depth 2 with weight $3$ and are only used to express sums of weight $4.$

\subsection{Duplication Relations}
\label{SSduplrel}

\noindent
As for harmonic sums and cyclotomic sums,
cf.~\cite{Vermaseren:1998uu,Blumlein:2009ta,
Blumlein:2009fz,Blumlein:2009cf,Ablinger:2011te},
we have a duplication
relation:
\begin{thm}[Duplication Relation]
 For $a_i \in \N$, $b_i \in \R^+$ and $n\in \N$ we have
$$
\sum{\S{a_m,\ldots,a_1}{\pm b_m,\ldots,\pm b_1;2\;n}}=\frac{1}{2^{\sum_{i=1}^m{a_i}-m}}\S{a_m,\ldots,a_1}{b_m^2,\ldots,b_1^2;n}
$$
where we sum on the left hand side over the $2^m$ possible combinations concerning $\pm$.
\end{thm}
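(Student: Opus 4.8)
The plan is a direct evaluation of the nested sum, summing over all $2^m$ sign patterns simultaneously and invoking the elementary character sum on $\{+1,-1\}$. Since the left-hand side is symmetric in the $2^m$ sign choices, the reversal of the index order in the statement is immaterial, and I may work with a word of (weight, base) pairs $(c_1,y_1),\dots,(c_m,y_m)$ listed in nesting order (outermost first), where $(c_i,y_i)=(a_{m+1-i},b_{m+1-i})$; in particular $\sum_i c_i=\sum_i a_i$. Using the defining nested sum~\eqref{Equ:SSumDef}, the left-hand side becomes
\begin{equation*}
\sum_{\epsilon_1,\dots,\epsilon_m\in\{+1,-1\}}\ \sum_{2n\geq i_1\geq\cdots\geq i_m\geq1}\ \prod_{k=1}^m\frac{(\epsilon_k y_k)^{i_k}}{i_k^{c_k}}.
\end{equation*}

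Because the inner summation range over $(i_1,\dots,i_m)$ does not depend on the signs, I would interchange the two summations, factor out the sign-independent part $y_k^{i_k}/i_k^{c_k}$, and collect the sign dependence into $\prod_{k=1}^m\bigl(\sum_{\epsilon_k\in\{+1,-1\}}\epsilon_k^{i_k}\bigr)$. The crucial observation is the character sum $\sum_{\epsilon\in\{+1,-1\}}\epsilon^{\,i}=1+(-1)^i$, which equals $2$ for even $i$ and $0$ for odd $i$. Consequently every surviving term must have all indices $i_1,\dots,i_m$ even, and each such term is weighted by the factor $2^m$.

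Then I substitute $i_k=2\ell_k$. The chain $2n\geq i_1\geq\cdots\geq i_m\geq1$ transforms into $n\geq\ell_1\geq\cdots\geq\ell_m\geq1$: halving preserves the monotone inequalities, the top bound $i_1\leq 2n$ gives $\ell_1\leq n$, and the bottom bound $i_m\geq1$ together with $i_m$ even forces $i_m\geq2$, hence $\ell_m\geq1$. Since $y_k^{i_k}=(y_k^2)^{\ell_k}$ and $i_k^{c_k}=2^{c_k}\ell_k^{c_k}$, pulling the constant $2^{-\sum_k c_k}$ out of the sum and combining it with the factor $2^m$ yields
\begin{equation*}
\frac{2^m}{2^{\sum_{i=1}^m c_i}}\sum_{n\geq\ell_1\geq\cdots\geq\ell_m\geq1}\prod_{k=1}^m\frac{(y_k^2)^{\ell_k}}{\ell_k^{c_k}}
=\frac{1}{2^{\sum_{i=1}^m a_i-m}}\,\S{a_m,\ldots,a_1}{b_m^2,\ldots,b_1^2;n},
\end{equation*}
where the last equality uses $\sum_i c_i=\sum_i a_i$ and re-reads the remaining nested sum, via~\eqref{Equ:SSumDef}, as the $S$-sum on the right of the claim.

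This is a routine computation rather than a deep argument, so there is no genuine obstacle; the only points needing a little care are the (immediate) interchange of the finite sign sum with the nested $i$-sum and the re-indexing of the summation domain, in particular verifying that the even-ness constraint is compatible with the lower endpoint, giving $\ell_m\geq1$. I also note that positivity of the $b_i$ is never used: the identity holds verbatim for $b_i\in\R^*$ (indeed for complex $b_i$), since the only arithmetic input is the vanishing of $1+(-1)^i$ for odd $i$.
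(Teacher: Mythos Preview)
Your proof is correct, but the paper takes a different route: it proceeds by induction on the depth $m$. For $m=1$ the paper splits $\S{a}{-b;2n}$ into its even- and odd-index parts and combines with $\S{a}{b;2n}$ to obtain $2^{1-a}\S{a}{b^2;n}$; for the step $m\to m+1$ it peels off the outermost summation $\sum_{i=1}^{2n}(\pm b_{m+1})^i/i^{a_{m+1}}$, applies the induction hypothesis to the inner $S$-sum, and then collapses the remaining $\pm$ on the outer index exactly as in the base case.

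Your approach replaces this induction by a single global manipulation: interchange the finite sign sum with the nested index sum and use the character identity $\sum_{\epsilon\in\{\pm1\}}\epsilon^{\,i}=1+(-1)^i$ on each factor independently. This is more direct and makes the mechanism (only even multi-indices survive, each with weight $2^m$) transparent in one stroke; the paper's induction amounts to doing the same parity selection one layer at a time. Your remark that positivity of the $b_i$ is never used is also correct and is a small sharpening of the stated hypotheses.
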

\begin{proof}
We proceed by induction on $m.$ For $m=1$ we get
\begin{eqnarray*}
 \S{a}{-b,2n}	 &=&\sum_{i=1}^{2n}{\frac{(-b)^i}{i^a}}=\sum_{i=1}^{n}{\left(\frac{b^{2i}}{(2i)^a}-\frac{b^{2i-1}}{(2i-1)^a}\right)}
		=\frac{1}{2^a}\S{a}{b^2;n}-\sum_{i=1}^n{\frac{b^{2i-1}}{(2i-1)^a}}\\
&=&\frac{1}{2^a}\S{a}{b^2;n}-\sum_{i=1}^n{\left(\frac{b^{2i}}{(2i)^a}+\frac{b^{2i-1}}{(2i-1)^a}-\frac{b^{2i}}{(2i)^a}\right)}\\
	&=&\frac{1}{2^a}\S{a}{b^2;n}-\S{a}{b;2n}+\frac{1}{2^a}\S{a}{b^2;n}\\
	&=&\frac{1}{2^{a-1}}\S{a}{b^2,n}-\S{a}{b;2n}.
\end{eqnarray*}
Supposing that the theorem holds for $m\geq1$, we obtain
\begin{eqnarray*}
&&\sum{\S{a_{m+1}, \ldots,a_{1}}{\pm b_{m+1},\ldots, \pm b_1;2n}}=\\
	&&\hspace{2cm}= \sum_{i=1}^n\biggl(\frac{(\pm b_{m+1}^{2i})}{(2i)^{a_{m+1}}}\sum{\S{a_m,\ldots,a_1}{\pm b_m,\ldots,\pm b_1;2\;i}}\biggr.\\
		&&\hspace{2.5cm}+\biggl.\frac{\pm b_{m+1}^{2i-1}}{^(2i-1)^{a_{m+1}}}\sum{\S{a_m,\ldots,a_1}{\pm b_m,\ldots,\pm b_1;2\;i+1}}\biggr)\\
	 &&\hspace{2cm}=2\sum_{i=1}^n\frac{(b^2)^i}{(2i)^{a_{m+1}}}\sum{\S{a_m,\ldots,a_1}{\pm b_m,\ldots,\pm b_1;2\;i}}\\
	 &&\hspace{2cm}=2\sum_{i=1}^n\frac{(b^2)^i}{(2i)^{a_{m+1}}}\frac{1}{2^{\sum_{i=1}^m a_i-m}}{\S{a_m,\ldots,a_1}{b_m^2,\ldots,b_1^2;i}}\\
	&&\hspace{2cm}=\frac{1}{2^{\sum_{i=1}^{m+1} a_i-(m+1)}}\S{a_{m+1},a_m,\ldots,a_1}{\pm b_{m+1},\pm b_m,\ldots,\pm b_1;n}.
\end{eqnarray*}
\end{proof}

Considering again the example from above, the duplication relations are
\begin{align*}
\textnormal{S}_{3,3}\hspace{-0.2em}\left(\tfrac{1}{2},\tfrac{1}{2};2 n\right)=& \frac{1}{16}\textnormal{S}_{3,3}\hspace{-0.2em}\left(\tfrac{1}{4},\tfrac{1}{4};n\right)-\textnormal{S}_{3,3}\hspace{-0.2em}\left(\tfrac{-1}{2},\tfrac{-1}{2};2n\right)-\textnormal{S}_{3,3}\hspace{-0.2em}\left(\tfrac{-1}{2},\tfrac{1}{2};2 n\right)-\textnormal{S}_{3,3}\hspace{-0.2em}\left(\tfrac{1}{2},\tfrac{-1}{2};2 n\right)\\
\textnormal{S}_{1,1}\hspace{-0.2em}\left(\tfrac{1}{2},\tfrac{1}{2};2n\right)=& \textnormal{S}_{1,1}\left(\tfrac{1}{4},\tfrac{1}{4};n\right)-\textnormal{S}_{1,1}\hspace{-0.2em}\left(\tfrac{-1}{2},\tfrac{-1}{2};2 n\right)-\textnormal{S}_{1,1}\hspace{-0.2em}\left(\tfrac{-1}{2},\tfrac{1}{2};2n\right)-\textnormal{S}_{1,1}\left(\tfrac{1}{2},\tfrac{-1}{2};2n\right)\\
\textnormal{S}_{3,1}\hspace{-0.2em}\left(\tfrac{1}{2},\tfrac{1}{2};2 n\right)=& \tfrac{1}{4}\textnormal{S}_{3,1}\hspace{-0.2em}\left(\tfrac{1}{4},\tfrac{1}{4};n\right)-\textnormal{S}_{3,1}\hspace{-0.2em}\left(\tfrac{-1}{2},\tfrac{-1}{2};2n\right)-\textnormal{S}_{3,1}\hspace{-0.2em}\left(\tfrac{-1}{2},\tfrac{1}{2};2 n\right)-\textnormal{S}_{3,1}\hspace{-0.2em}\left(\tfrac{1}{2},\tfrac{-1}{2};2 n\right)\\
\textnormal{S}_{1,3}\hspace{-0.2em}\left(\tfrac{1}{2},\tfrac{1}{2};2n\right)=& \tfrac{1}{4}\textnormal{S}_{1,3}\hspace{-0.2em}\left(\tfrac{1}{4},\tfrac{1}{4};n\right)-\textnormal{S}_{1,3}\hspace{-0.2em}\left(\tfrac{-1}{2},\tfrac{-1}{2};2 n\right)-\textnormal{S}_{1,3}\hspace{-0.2em}\left(\tfrac{-1}{2},\tfrac{1}{2};2 n\right)-\textnormal{S}_{1,3}\hspace{-0.2em}\left(\tfrac{1}{2},\tfrac{-1}{2};2 n\right).
\end{align*}
Note that from duplication we do not get a further reduction in this particular example. We would have to introduce new sums of the same depth and weight to express the basis sums~\eqref{Equ:BasisSums}.

\subsection{Examples for Specific Index Sets}

In this Subsection we want to present the numbers of basis sums for specific index sets
at
special depths or weights. In the Tables \ref{SSdepth2table}, \ref{SSdepth3table} and
\ref{SSdepth4table} we summarize the number of algebraic basis sums at the possible index
sets at depths $2,3,4$ respectively. To illustrate how these tables are to be understood we
take a closer look at the depth $d=3$ example with the index sets $\{a_1,a_1,a_2\}$ and
$\{x_1,x_2,x_3)\}$; these sets stand for  multi-sets and each element is allowed to be
taken once to build all the possible $S$-sums. There we obtain the $18$ sums:

\small
\begin{eqnarray*}
&& \S{a_1, a_1, a_2}{x_1, x_2, x_3;n}, \S{a_1, a_1, a_2}{x_1, x_3, x_2;n}, \S{a_1, a_1, a_2}{x_2, x_1, x_3;n}, \\
&& \S{a_1, a_1, a_2}{x_2, x_3, x_1;n}, \S{a_1, a_1, a_2}{x_3, x_1, x_2;n}, \S{a_1, a_1, a_2}{x_3, x_2, x_1;n}, \\
&& \S{a_1, a_2, a_1}{x_1, x_2, x_3;n}, \S{a_1, a_2, a_1}{x_1, x_3, x_2;n}, \S{a_1, a_2, a_1}{x_2, x_1, x_3;n}, \\
&& \S{a_1, a_2, a_1}{x_2, x_3, x_1;n}, \S{a_1, a_2, a_1}{x_3, x_1, x_2;n}, \S{a_1, a_2, a_1}{x_3, x_2, x_1;n}, \\
&& \S{a_2, a_1, a_1}{x_1, x_2, x_3;n}, \S{a_2, a_1, a_1}{x_1, x_3, x_2;n}, \S{a_2, a_1, a_1}{x_2, x_1, x_3;n}, \\
&& \S{a_2, a_1, a_1}{x_2, x_3, x_1;n}, \S{a_2, a_1, a_1}{x_3, x_1, x_2;n}, \S{a_2, a_1, a_1}{x_3, x_2, x_1;n}.
\end{eqnarray*}

\normalsize
\noindent
Using all applicable algebraic relations (provided by the quasi-shuffle algebra) like, e.g.,
\small
\begin{eqnarray*}
&&\S{a_1, a_1, a_2}{x_1, x_2, x_3;n} = \S{a_2}{x_3;n} \S{a_1, a_1}{x_1, x_2;n} + \S{a_1, a_1 + a_2}{x_1, x_2 x_3;n}\\
				&&\hspace{1cm} - \S{a_1}{x_1;n} \S{a_2, a_1}{x_3, x_2;n} - \S{a_2, 2 a_1}{x_3, x_1 x_2;n} + \S{a_2, a_1, a_1}{x_3, x_2, x_1;n}
\end{eqnarray*}
\normalsize
we can express all these $18$ sums using the following $6$ basis sums (with the price
of introducing additional $S$-sums of lower depth subject to the relations given by the
quasi shuffle algebra):
\small
\begin{eqnarray*}
&& \S{a_2, a_1, a_1}{x_1, x_2, x_3;n}, \S{a_2, a_1, a_1}{x_1, x_3, x_2;n}, \S{a_2, a_1, a_1}{x_2, x_1, x_3;n},\\
&& \S{a_2, a_1, a_1}{x_2, x_3, x_1;n}, \S{a_2, a_1, a_1}{x_3, x_1, x_2;n}, \S{a_2, a_1, a_1}{x_3, x_2, x_1;n}.
\end{eqnarray*}
\normalsize

\begin{table}\centering
\begin{tabular}{|c | r | r | r|}
\hline	
Index set& sums &basis sums & dependent sums\\
\hline	
  $\{a_1,a_1\},\{x_1,x_1\}$ &    1 &   0 &   1 \\
  $\{a_1,a_1\},\{x_1,x_2\}$ &    2 &   1 &   1 \\
\hline
  $\{a_1,a_2\},\{x_1,x_1\}$ &    2 &   1 &   1 \\
  $\{a_1,a_2\},\{x_1,x_2\}$ &    4 &   2 &   2 \\
\hline
\end{tabular}
\caption{\label{SSdepth2table}Number of basis sums for different index sets at depth 2.}
\end{table}

\begin{table}\centering
\begin{tabular}{|c | r | r | r|}
\hline	
Index set& sums &basis sums & dependent sums\\
\hline	
  $\{a_1,a_1,a_1\},\{x_1,x_1,x_1\}$ &    1 &   0 &   1  \\
  $\{a_1,a_1,a_1\},\{x_1,x_1,x_2\}$ &    3 &   1 &   2  \\
  $\{a_1,a_1,a_1\},\{x_1,x_2,x_3\}$ &    6 &   2 &   4  \\
\hline	
  $\{a_1,a_1,a_2\},\{x_1,x_1,x_1\}$ &    3 &   1 &   2  \\
  $\{a_1,a_1,a_2\},\{x_1,x_1,x_2\}$ &    9 &   3 &   6  \\
  $\{a_1,a_1,a_2\},\{x_1,x_2,x_3\}$ &   18 &   6 &  12  \\
\hline	
  $\{a_1,a_2,a_3\},\{x_1,x_1,x_1\}$ &    6 &   2 &   4  \\
  $\{a_1,a_2,a_3\},\{x_1,x_1,x_2\}$ &   18 &   6 &  12  \\
  $\{a_1,a_2,a_3\},\{x_1,x_2,x_3\}$ &   36 &  12 &  24  \\
\hline
\end{tabular}
\caption{\label{SSdepth3table}Number of basis sums for different index sets at depth 3.}
\end{table}

\begin{table}[t] \centering
\begin{tabular}{|c | r | r | r|}
\hline	
Index set& sums &basis sums & dependent sums\\
\hline	
  $\{a_1,a_1,a_1,a_1\},\{x_1,x_1,x_1,x_1\}$ &    1 &   0 &   1  \\
  $\{a_1,a_1,a_1,a_1\},\{x_1,x_1,x_1,x_2\}$ &    4 &   1 &   3  \\
  $\{a_1,a_1,a_1,a_1\},\{x_1,x_1,x_2,x_2\}$ &    6 &   1 &   5  \\
  $\{a_1,a_1,a_1,a_1\},\{x_1,x_1,x_2,x_3\}$ &   12 &   3 &   9  \\
  $\{a_1,a_1,a_1,a_1\},\{x_1,x_2,x_3,x_4\}$ &   24 &   6 &  18  \\
\hline
  $\{a_1,a_1,a_1,a_2\},\{x_1,x_1,x_1,x_1\}$ &    4 &   1 &   3  \\
  $\{a_1,a_1,a_1,a_2\},\{x_1,x_1,x_1,x_2\}$ &   16 &   4 &  12  \\
  $\{a_1,a_1,a_1,a_2\},\{x_1,x_1,x_2,x_2\}$ &   24 &   6 &  18  \\
  $\{a_1,a_1,a_1,a_2\},\{x_1,x_1,x_2,x_3\}$ &   48 &  12 &  36  \\
  $\{a_1,a_1,a_1,a_2\},\{x_1,x_2,x_3,x_4\}$ &   96 &  24 &  72  \\
\hline	
  $\{a_1,a_1,a_2,a_2\},\{x_1,x_1,x_1,x_1\}$ &    6 &   1 &   5  \\
  $\{a_1,a_1,a_2,a_2\},\{x_1,x_1,x_1,x_2\}$ &   24 &   6 &  18  \\
  $\{a_1,a_1,a_2,a_2\},\{x_1,x_1,x_2,x_2\}$ &   36 &   8 &  28  \\
  $\{a_1,a_1,a_2,a_2\},\{x_1,x_1,x_2,x_3\}$ &   72 &  18 &  54  \\
  $\{a_1,a_1,a_2,a_2\},\{x_1,x_2,x_3,x_4\}$ &  144 &  36 & 108  \\
\hline	
  $\{a_1,a_1,a_2,a_3\},\{x_1,x_1,x_1,x_1\}$ &   12 &   3 &   9  \\
  $\{a_1,a_1,a_2,a_3\},\{x_1,x_1,x_1,x_2\}$ &   48 &  12 &  36  \\
  $\{a_1,a_1,a_2,a_3\},\{x_1,x_1,x_2,x_2\}$ &   72 &  18 &  54  \\
  $\{a_1,a_1,a_2,a_3\},\{x_1,x_1,x_2,x_3\}$ &  144 &  36 & 108  \\
  $\{a_1,a_1,a_2,a_3\},\{x_1,x_2,x_3,x_4\}$ &  288 &  72 & 216  \\
\hline	
  $\{a_1,a_2,a_3,a_4\},\{x_1,x_1,x_1,x_1\}$ &   24 &   6 &  18  \\
  $\{a_1,a_2,a_3,a_4\},\{x_1,x_1,x_1,x_2\}$ &   96 &  24 &  72  \\
  $\{a_1,a_2,a_3,a_4\},\{x_1,x_1,x_2,x_2\}$ &  144 &  36 & 108  \\
  $\{a_1,a_2,a_3,a_4\},\{x_1,x_1,x_2,x_3\}$ &  288 &  72 & 216  \\
  $\{a_1,a_2,a_3,a_4\},\{x_1,x_2,x_3,x_4\}$ &  576 & 144 & 432  \\
\hline	
\end{tabular}
\caption{\label{SSdepth4table}Number of basis sums for different index sets at depth 4.}
\end{table}

In Table \ref{SSweighttable1} we summarize the number of algebraic basis sums at specified weights for arbitrary indices in the $x_i$, while in Table \ref{SSweighttable2} we summarize
the number of algebraic and differential bases sums for $x_i\in \{1,-1,1/2,-1/2,2,-2\}$ and where each of the indices $\{1/2,-1/2,2,-2\}$ is allowed to appear just once in each sum.
To illustrate how these tables are to be understood, we take a closer look at two examples.
At weight $w=3$ with index set $\{x_1,x_2,x_3\},$ we consider the $19$ sums:
\small
\begin{eqnarray*}
&& \S{3}{x_1 x_2 x_3;n}, \S{1, 2}{x_1, x_2 x_3;n}, \S{1, 2}{x_2, x_1 x_3;n}, \S{1, 2}{x_1 x_2, x_3;n}, \S{1, 2}{x_3, x_1 x_2;n},\\
&& \S{1, 2}{x_1 x_3, x_2;n}, \S{1, 2}{x_2 x_3, x_1;n}, \S{2, 1}{x_1, x_2 x_3;n}, \S{2, 1}{x_2, x_1 x_3;n}, \S{2, 1}{x_1 x_2, x_3;n},\\
&& \S{2, 1}{x_3, x_1 x_2;n}, \S{2, 1}{x_1 x_3, x_2;n}, \S{2, 1}{x_2 x_3, x_1;n}, \S{1, 1, 1}{x_1, x_2, x_3;n},  \S{1, 1, 1}{x_1, x_3, x_2;n},\\
&& \S{1, 1, 1}{x_2, x_1, x_3;n}, \S{1, 1, 1}{x_2, x_3, x_1;n}, \S{1, 1, 1}{x_3, x_1, x_2;n},  \S{1, 1, 1}{x_3, x_2, x_1;n}.
\end{eqnarray*}
\normalsize
Using all applicable algebraic relations (provided by the quasi-shuffle algebra) like, e.g.,
\small
\begin{eqnarray*}
\S{1, 2}{x_2 x_3, x_1;n} &=& \S{1}{x_2 x_3;n} \S{2}{x_1;n} + \S{3}{x_1 x_2 x_3;n} - \S{2, 1}{x_1, x_2 x_3;n}, \\
\S{1, 1, 1}{x_1, x_2, x_3;n} &=& -\S{1}{x_3;n} \S{1, 1}{x_2, x_1;n} + \S{1}{x_1;n} \S{1, 1}{x_2, x_3;n}\\
					&& + \S{2, 1}{x_1 x_2, x_3;n}- \S{2, 1}{x_2 x_3, x_1;n} + \S{1, 1, 1}{x_3, x_2, x_1;n}
\end{eqnarray*}
\normalsize
we can express all these $19$ sums using the following $9$ basis sums:
\small
\begin{eqnarray*}
&&\S{3}{x_1 x_2 x_3;n}, \S{2, 1}{x_1, x_2 x_3;n}, \S{2, 1}{x_2, x_1 x_3;n}, \S{2, 1}{x_1 x_2, x_3;n}, \S{2, 1}{x_3, x_1 x_2;n},\\
&&\S{2, 1}{x_1 x_3, x_2;n}, \S{2, 1}{x_2 x_3, x_1;n}, \S{1, 1, 1}{x_3, x_1, x_2;n}, \S{1, 1, 1}{x_3, x_2, x_1;n}.
\end{eqnarray*}
\normalsize

\begin{table}[t]\centering
\begin{tabular}{|c | c | r | r | r|}
\hline	
weight&index set& sums &basis sums & dependent sums\\
\hline	
 2& $\{x_1,x_1\}$ &    2 &   1 &   1 \\
  & $\{x_1,x_2\}$ &    3 &   3 &   1 \\
\hline
 3& $\{x_1,x_1,x_1\}$ &    6 &   3 &   3 \\
  & $\{x_1,x_1,x_2\}$ &   12 &   6 &   6 \\
  & $\{x_1,x_2,x_3\}$ &   19 &   9 &  10 \\
\hline
 4& $\{x_1,x_1,x_1,x_1\}$ &   20 &   8 &  12 \\
  & $\{x_1,x_1,x_1,x_2\}$ &   50 &  20 &  30 \\
  & $\{x_1,x_1,x_2,x_2\}$ &   64 &  24 &  40 \\
  & $\{x_1,x_1,x_2,x_3\}$ &  106 &  40 &  66 \\
  & $\{x_1,x_2,x_3,x_4\}$ &  175 &  64 & 111 \\
\hline
 5& $\{x_1,x_1,x_1,x_1,x_1\}$ &   70 &  25 &   45 \\
  & $\{x_1,x_1,x_1,x_1,x_2\}$ &  210 &  70 &  140 \\
  & $\{x_1,x_1,x_1,x_2,x_2\}$ &  325 & 105 &  220 \\
  & $\{x_1,x_1,x_1,x_2,x_3\}$ &  555 & 175 &  380 \\
  & $\{x_1,x_1,x_2,x_2,x_3\}$ &  725 & 225 &  500 \\
  & $\{x_1,x_1,x_2,x_3,x_4\}$ & 1235 & 375 &  860 \\
  & $\{x_1,x_2,x_3,x_4,x_5\}$ & 2101 & 625 & 1476 \\
\hline
 6& $\{x_1,x_1,x_1,x_1,x_1,x_1\}$ &   252 &   75 &   177\\
  & $\{x_1,x_1,x_1,x_1,x_1,x_2\}$ &   882 &  252 &   630\\
  & $\{x_1,x_1,x_1,x_1,x_2,x_2\}$ &  1596 &  438 &  1158\\
  & $\{x_1,x_1,x_1,x_2,x_2,x_2\}$ &  1911 &  522 &  1389\\
  & $\{x_1,x_1,x_1,x_1,x_2,x_3\}$ &  2786 &  756 &  2030\\
  & $\{x_1,x_1,x_1,x_2,x_2,x_3\}$ &  4431 & 1176 &  3255\\
  & $\{x_1,x_1,x_2,x_2,x_3,x_3\}$ &  5886 & 1539 &  4347\\
  & $\{x_1,x_1,x_1,x_2,x_3,x_4\}$ &  7721 & 2016 &  5705\\
  & $\{x_1,x_1,x_2,x_2,x_3,x_4\}$ & 10251 & 2646 &  7605\\
  & $\{x_1,x_1,x_2,x_3,x_4,x_5\}$ & 17841 & 4536 & 13305 \\
  & $\{x_1,x_2,x_3,x_4,x_5,x_6\}$ & 31031 & 7776 & 23255\\
\hline
\end{tabular}
\caption{\label{SSweighttable1}Number of basis sums for different index sets up to weight 6.}
\end{table}

\noindent At weight $w=2$ where each of the indices from the index stet $\{1/2,-1/2,2,-2\}$ is allowed to appear just once, we consider the $38$ sums:
\small
\begin{eqnarray*}
&&\S{-2}{n},\S{2}{n},\S{-1,-1}{n},\S{-1,1}{n},\S{1,-1}{n},\S{1,1}{n},\S{2}{-2;n},\S{2}{-\tfrac{1}{2};n},\\
&&\S{2}{\tfrac{1}{2};n},\S{2}{2;n},\S{1,1}{-2,-1;n},\S{1,1}{-2,-\tfrac{1}{2};n},\S{1,1}{-2,\tfrac{1}{2};n},\S{1,1}{-2,1;n},\\
&&\S{1,1}{-2,2;n},\S{1,1}{-1,-2;n},\S{1,1}{-1,-\tfrac{1}{2};n},\S{1,1}{-1,\tfrac{1}{2};n},\S{1,1}{-1,2;n},\\
&&\S{1,1}{-\tfrac{1}{2},-2;n},\S{1,1}{-\tfrac{1}{2},-1;n},\S{1,1}{-\tfrac{1}{2},\tfrac{1}{2};n},\S{1,1}{-\tfrac{1}{2},1;n},\S{1,1}{-\tfrac{1}{2},2;n},\\
&&\S{1,1}{\tfrac{1}{2},-2;n},\S{1,1}{\tfrac{1}{2},-1;n},\S{1,1}{\tfrac{1}{2},-\tfrac{1}{2};n},\S{1,1}{\tfrac{1}{2},1;n},\S{1,1}{\tfrac{1}{2},2;n},\\
&&\S{1,1}{1,-2;n},\S{1,1}{1,-\tfrac{1}{2};n},\S{1,1}{1,\tfrac{1}{2};n},\S{1,1}{1,2;n},\S{1,1}{2,-2;n},\S{1,1}{2,-1;n},\\
&&\S{1,1}{2,-\tfrac{1}{2};n},\S{1,1}{2,\tfrac{1}{2};n},\S{1,1}{2,1;n}.
\end{eqnarray*}
\normalsize
Using all algebraic and differential relations like, e.g.,
\small
\begin{align*}
\S{1,1}{-\tfrac{1}{2},2;n}=&-\frac{\partial}{\partial n}\S{-1}{n}+\H{-1,0}{1}+\S{1}{-\tfrac{1}{2};n}\;\S{1}{2;n}-\S{1,1}{2,-\tfrac{1}{2};n}\\
\S{1,1}{-\tfrac{1}{2},-2;n}=&-\frac{\partial}{\partial n}\S{1}{n}-\H{1,0}{1}+\S{1}{-2;n}\;\S{1}{-\tfrac{1}{2};n}-\S{1,1}{-2,-\tfrac{1}{2};n}
\end{align*}
\normalsize
we can express all these $38$ sums using the following $17$ basis sums:
\small
\begin{eqnarray*}
&&\S{1,-1}{n},\S{1,1}{-2,-\tfrac{1}{2};n},\S{1,1}{-2,\tfrac{1}{2};n},\S{1,1}{-2,2;n},\S{1,1}{-1,-2;n},\\
&&\S{1,1}{-1,-\tfrac{1}{2};n},\S{1,1}{-1,\tfrac{1}{2};n},\S{1,1}{-1,2;n},\S{1,1}{-\tfrac{1}{2},\tfrac{1}{2};n},\S{1,1}{\tfrac{1}{2},-\tfrac{1}{2};n},\\
&&\S{1,1}{1,-2;n},\S{1,1}{1,-\tfrac{1}{2};n},\S{1,1}{1,\tfrac{1}{2};n},\S{1,1}{1,2;n},\S{1,1}{2,-2;n},\\
&&\S{1,1}{2,-\tfrac{1}{2};n},\S{1,1}{2,\tfrac{1}{2};n}.
\end{eqnarray*}
\normalsize

We remark that the found relations are induced by the quasi-shuffle algebra. Using the
summation package {\tt Sigma}~\cite{Schneider:2007} and the underlying difference field
theory~\cite{Karr1981,Schneider:2010,Schneider:2010b,Schneider:2010c} we could verify
that the found basis sums are also algebraic independent in the ring as sequences; for related work see also~\cite{Singer:08}.
To be more precise, the given $S$-sums evaluated as sequences form a polynomial ring which is a subring of the ring of sequences.

In general, we obtain the following relations
\begin{eqnarray*}
N_D(w)&=&N_S(w)-N_S(w-1)\\
N_{AD}(w)&=&N_A(w)-N_A(w-1)
\end{eqnarray*}
where $N_S(w),N_D(w)$ and $N_{AD}(w)$ are the number of sums, the number algebraic basis sums and the number of basis sums using algebraic and differential relations at weight $w$ respectively.

\begin{table}[t]\centering
\begin{tabular}{|r | r | r | r | r|}
\hline	
weight& sums &a-basis sums & d-basis sums & ad-basis sums\\
\hline
 1&     6 &      6 &     6 &     6\\	
 2&    38 &     23 &    32 &    17\\
 3&   222 &    120 &   184 &    97\\
 4&  1206 &    654 &   984 &   543\\
 5&  6150 &   3536 &  4944 &  2882\\
 6& 29718 &  18280 & 23568 & 14744\\
\hline
\end{tabular}
\caption{\label{SSweighttable2}Number of basis sums up to weight 6 with index set $\{1,-1,1/2,-1/2,2,-2\}.$ Each of the indices $\{1/2,-1/2,2,-2\}$ is allowed to appear just once in each sum.}
\end{table}

\section{Relations between $S$-Sums at Infinity}\label{Sec:InfiniteSSums}

\vspace*{1mm}
\noindent
In this Section we will state several types of relations between the values of $S$-sums
at infinity
\begin{equation}\label{Equ:InfiniteSSums}
\S{m_1,\ldots,m_p}{x_1,\ldots,x_p;\infty}
\end{equation}
(or equivalently between the values of generalized polylogarithms) that extend various
relations for infinite harmonic sums given in~\cite{Blumlein:2009cf}. This enables one
to reduce the expressions that arise during the calculations of, e.g., the (inverse) Mellin
transform, of the differentiation of $S$-sums, or of asymptotic expansions given below. We remark
that the relations of $S$-sums  with $x_i$ being roots of unity have been considered
in~\cite{Ablinger:2011te} already.

\begin{description}
\item[Stuffle relations:]
The first type of relations originates from the algebraic
relations of $S$-sums, see Section \ref{SSalgrel}.
Those relations which contain only convergent $S$-sums remain valid when we
consider them at infinity\footnote{In general, relations containing sums of
logarithmic growth such as $S_1(n)$ can be also formally utilized in asymptotic
expansions. However, in the following only convergent sums are considered.}.
We will refer to these relations as the quasi shuffle or stuffle relations.

\item[Duplication relations:]
The duplication relations from Section \ref{SSduplrel} remain valid if we consider sums
which are finite at infinity, since the limits  $n$ resp. $2n \rightarrow \infty$
agree.

\item[Shuffle relations:]
We can generalize the shuffle relations form
\cite{Vermaseren:1998uu} for harmonic sums to $S$-sums. For convergent sums we have:
\begin{eqnarray*}
&&\S{m_1,\ldots,m_p}{x_1,\ldots,x_p;\infty}\S{k_1,\ldots,k_q}{y_1,\ldots,y_p;\infty}=\\
	&&\hspace{2cm}\lim_{n \rightarrow \infty}\sum_{i=1}^n\frac{y_1^i \S{m_1,\ldots,m_p}{x_1,\ldots,x_p;n-i}\S{k_2,\ldots,k_q}{y_2,\ldots,y_p;i}} {i^{k_1}}.
\end{eqnarray*}
Using
\begin{eqnarray*}
&&\lim_{n \rightarrow \infty}\sum_{i=1}^n\frac{y_1^i \S{m_1,\ldots,m_p}{x_1,\ldots,x_p;n-i}\S{k_2,\ldots,k_q}{y_2,\ldots,y_p;i}} {i^{k_1}}=\\
&&\hspace{1cm}\sum_{a=1}^{k_1} \binom{k_1+m_1-1-a}{m_1-1}\sum_{i=1}^n\frac{x_1^i}{i^{m_1+k_1-a}}\\
&&\hspace{2cm}\sum_{j=1}^i\frac{\left(\frac{y_1}{x_1}\right)^j\S{m_2,\ldots,m_p}{x_2,\ldots,x_p;i-j}\S{k_2,\ldots,k_q}{y_2,\ldots,y_p;j}} {j^a}\\
&&\hspace{1cm}+\sum_{a=1}^{m_1} \binom{k_1+m_1-1-a}{m_1-1}\sum_{i=1}^n\frac{y_1^i}{i^{m_1+k_1-a}}\\
&&\hspace{2cm}\sum_{j=1}^i\frac{\left(\frac{x_1}{y_1}\right)^j\S{k_2,\ldots,k_q}{y_2,\ldots,y_p;i-j}\S{m_2,\ldots,m_p}{x_2,\ldots,x_p;j}}{j^a}
\end{eqnarray*}
we can rewrite the right hand side in terms of $S$-sums. We will refer to these
relations as the shuffle relations since one could also obtain them from the shuffle algebra
of generalized polylogarithms.

\item[Duality relations:]
We can use duality relations of generalized polylogarithms;
compare \cite{Blumlein:2009cf} where these relations are considered for harmonic sums.

\begin{description}
\item[$1-x\rightarrow x$:]
For a generalized polylogarithm $\H{a_1,a_2,\ldots,a_{k-1},a_k}1$ with $a_k\neq 0$ this leads to the relation:
\begin{eqnarray*}
\H{a_1,a_2,\ldots,a_{k-1},a_k}1=\H{1-a_k,1-a_{k-1},\ldots,1-a_2,1-a_1}1.
\end{eqnarray*}
E.g., we get $\H{2}1=\H{-1}1$
which gives in $S$-sum notation the new relation
$$\S{1}{\tfrac{1}{2};\infty}=-\S{-1}\infty.$$
\item[$\frac{1-x}{1+x}\rightarrow x$:] This transform leads, e.g., to the relation
$\H{2}1=-\H{-1}1+\H{-\frac{1}{3}}1$
which gives in $S$-sum notation again the relation
$$\S{1}{\tfrac{1}{2};\infty}=-\S{-1}\infty.$$
\item[$\frac{c-x}{d+x}\rightarrow x$:]For $c,d\in \R,\ d\neq -1$ this transform can be calculated similarly as described in Section~\ref{SS1x1x}.
The transform $\frac{2-x}{1+x}\rightarrow x$ produces, e.g., the relation
$\H{-2}1=\frac{1}{2}(\H{-8}1+\H{-\tfrac{1}{2}}1)$
which gives in $S$-sum notation the new relation
$$
\S{1}{-\tfrac{1}{8};\infty}=-\S{-1}\infty+\S{1}{-\tfrac{1}{2};\infty}.
$$
\end{description}
\end{description}
If one considers a certain class of infinite $S$-sums~\eqref{Equ:InfiniteSSums} (with finite values) with given weight and a finite index set from which the $x_i$ can be chosen, one can try to determine all relations from above that stay within the specified class.  In general, we first have to
extend the alphabet generated by the relations used. Then one
exploits all relevant relations that lead back (after several applications) to the original alphabet.

In the following Subsections we apply this tactic to a series of alphabets. We summarize in Table~\ref{SSdepth4table} how all the so far calculated relations for the particular alphabets contribute in reducing the number of infinite $S$-sums to a smaller set of sums.
For the found constants we searched for new relations using {\tt PSLQ}
\cite{PSLQ}
using representations of up to 2000 digits. No further relations were found.
These constants, which we also call \textit{basis constants}, enlarge the number of constants generated by the
multiple zeta values \cite{Blumlein:2009cf}. They are further extended by other
special numbers related to cyclotomic harmonic sums and their generalization
at infinity, cf.~e.g.~\cite{Broadhurst:1998rz,Ablinger:2011te}.

Subsequently, the found basis constants for the concrete alphabets are given.
In the simpler cases also the relations between the
dependent sums are printed explicitly~\footnote{In \texttt{http://www.risc.jku.at/research/combinat/software/HarmonicSums/} all relations are available up to weight four. The most involved calculation is the alphabet $x_i\in\{1, -1, 1/2, -1/2, 2, -2\}$; the relations are obtained executing
$\texttt{ComputeSSumInfBasis}\left[4, \{1, -1, 1/2, -1/2, 2, -2\}, \texttt{ExtendAlphabet}\to\texttt{True}\right]$  in about 36 hours.}. Given an expression with the corresponding weight and alphabet and given these relations, one can bring the expression to its reduced form, i.e., in terms of the basis constants. Here one needs to replace the $S$-sum on the left hand side of a given relation by its right hand side.

\begin{table}\centering
\begin{tabular}{|c | r | r | r| r|}
\hline	
alphabet& weight & finite sums& dependent sums& basis sums\\
\hline	
									& 1&    1 &   0 &   1  \\
 $\{1,-1\}$								& 2&    4 &   3 &   1  \\
									& 3&   12 &  11 &   1  \\
									& 4&   36 &  35 &   1  \\
\hline	
									& 1&    1 &   0 &   1  \\
 $\{1,\frac{1}{2}\}$							& 2&    4 &   2 &   2  \\
									& 3&   12 &   8 &   4  \\
									& 4&   36 &  24 &  12  \\
\hline	
									& 1&    1 &   0 &   1  \\
 $\{1,\frac{1}{2},2\}$							& 2&    5 &   3 &   2  \\
									& 3&   18 &  14 &   4  \\
									& 4&   66 &  53 &  13  \\
\hline
									& 1&    2 &   1 &   1  \\
  $\{1,-1,\frac{1}{2}\}$						& 2&    9 &   7 &   2  \\
									& 3&   36 &  29 &   7  \\
									& 4&  144 & 113 &  31  \\
\hline	
									& 1&    3 &   1 &   2  \\
  $\{1,-1,\frac{1}{2},-\frac{1}{2}\}$					& 2&   16 &  12 &   4  \\
									& 3&   80 &  63 &  17  \\
									& 4&  400 & 301 &  99  \\
\hline	
									& 1&    3 &   1 &   2  \\
  $\{1,-1,\frac{1}{2},-\frac{1}{2},2,-2\}$				& 2&   20 &  16 &   4  \\
									& 3&  124 & 107 &  17  \\
									& 4&  788 & 678 & 110  \\

\hline	
\end{tabular}
\caption{\label{SSInfinitydepth4table}Number of basis constants  (convergent infinite $S$-sums) for different
alphabets up to weight 4.}
\end{table}

\subsection{Alphabet $\{1,\frac{1}{2}\}:$}

\noindent
\begin{description}
 \item[w=1:]\  \\
	      Basis constant: $S_1\left(\frac{1}{2};\infty \right)=\log(2)$\\
	      Relation: ---

 \item[w=2:]\  \\
	      Basis constants: $S_2(\infty
)=\zeta_2,S_{1,1}\left(\frac{1}{2},\frac{1}{2};\infty \right)$\\
	      Relations:
	      \begin{eqnarray*}
	      S_{1,1}\left(\frac{1}{2},1;\infty \right)&=&\frac{\zeta_2}{2}\\
	      S_2\left(\frac{1}{2};\infty \right)&=&-\frac{1}{2} \log ^2(2)+\frac{\zeta_2}{2}
	      \end{eqnarray*}
 \item[w=3:]\  \\
	      Basis constants: $S_3(\infty
)=\zeta_3,S_{1,2}\left(\frac{1}{2},\frac{1}{2};\infty \right),S_{1,1,1}\left(\frac{1}{2},\frac{1}{2},\frac{1}{2};\infty \right),S_{1,1,1}\left(\frac{1}{2},\frac{1}{2},1;\infty \right)$\\
	      Relations:
	      \begin{eqnarray*}
		S_3\left(\frac{1}{2};\infty \right)&=&\frac{\log ^3(2)}{6}-\frac{1}{2} \log (2) \zeta_2+\frac{7 \zeta_3}{8}\\
		S_{2,1}\left(\frac{1}{2},\frac{1}{2};\infty \right)&=&\frac{\log ^3(2)}{6}+\frac{1}{2} \log (2) \zeta_2-\frac{7 \zeta_3}{6}+3  S_{1,2}\left(\frac{1}{2},\frac{1}{2};\infty \right)\\
		S_{2,1}\left(\frac{1}{2},1;\infty \right)&=&-\frac{1}{2} (\log (2) \zeta_2)+\zeta_3\\
		S_{2,1}\left(1,\frac{1}{2};\infty \right)&=&\frac{\log ^3(2)}{6}+\frac{1}{2} \log (2) \zeta_2+\frac{\zeta_3}{4}\\
		S_{2,1}(\infty )&=&2 \zeta_3\\
		S_{1,2}\left(\frac{1}{2},1;\infty \right)&=&\frac{5 \zeta_3}{8}\\
		S_{1,1,1}\left(\frac{1}{2},1,\frac{1}{2};\infty \right)&=&\frac{1}{2} \log (2) \zeta_2-\frac{5 \zeta(3)}{12}+S_{1,2}\left(\frac{1}{2},\frac{1}{2};\infty \right)\\
		S_{1,1,1}\left(\frac{1}{2},1,1;\infty \right)&=&\frac{3 \zeta_3}{4}
	      \end{eqnarray*}
\end{description}

\subsection{Alphabet $\{1,\frac{1}{2},2\}:$}

\noindent
\begin{description}
 \item[w=1:]\  \\
	      Basis constant: $S_1\left(\frac{1}{2};\infty \right)=\log (2)$\\
	      Relation: ---
	      \begin{eqnarray*}
	      \end{eqnarray*}

 \item[w=2:]\  \\
	      Basis constants: $S_2(\infty
)=\zeta_2,S_{1,1}\left(\frac{1}{2},\frac{1}{2};\infty \right)$\\
	      Relations:
	      \begin{eqnarray*}
	      S_2\left(\frac{1}{2};\infty \right)&=&-\frac{1}{2} \log ^2(2)+\frac{\zeta_2}{2}\\
	      S_{1,1}\left(\frac{1}{2},1;\infty \right)&=&\frac{\zeta_2}{2}\\
	      S_{1,1}\left(\frac{1}{2},2;\infty \right)&=&\frac{3 \zeta_2}{2}
	      \end{eqnarray*}
 \item[w=3:]\  \\
	      Basis constants: $S_3(\infty
)=\zeta_3,S_{1,2}\left(\frac{1}{2},\frac{1}{2};\infty \right),S_{1,1,1}\left(\frac{1}{2},\frac{1}{2},\frac{1}{2};\infty \right),S_{1,1,1}\left(\frac{1}{2},\frac{1}{2},1;\infty \right)$\\
	      Relations:
	      \begin{eqnarray*}
S_3\left(\frac{1}{2};\infty \right)&=&-\frac{1}{2} \zeta_2 \log (2)+\frac{\log ^3(2)}{6}+\frac{7 \zeta_3}{8}\\S_{2,1}\left(\frac{1}{2},\frac{1}{2};\infty \right)&=&\frac{1}{2} \zeta_2 \log (2)+\frac{\log ^3(2)}{6}+3
   S_{1,2}\left(\frac{1}{2},\frac{1}{2};\infty \right)-\frac{7 \zeta_3}{6}\\S_{2,1}\left(\frac{1}{2},1;\infty \right)&=&-\frac{1}{2} \zeta_2 \log (2)+\zeta_3\\S_{2,1}\left(\frac{1}{2},2;\infty \right)&=&\frac{1}{2} (-3) \zeta_2 \log (2)+\frac{21 \zeta
   (3)}{8}\\S_{2,1}\left(1,\frac{1}{2};\infty \right)&=&\frac{1}{2} \zeta_2 \log (2)+\frac{\log ^3(2)}{6}+\frac{\zeta_3}{4}\\S_{2,1}(\infty )&=&2 \zeta_3\\S_{1,2}\left(\frac{1}{2},1;\infty \right)&=&\frac{5 \zeta
   (3)}{8}\\S_{1,2}\left(\frac{1}{2},2;\infty \right)&=&\frac{3}{2} \zeta_2 \log (2)\\S_{1,1,1}\left(\frac{1}{2},\frac{1}{2},2;\infty \right)&=&-\frac{1}{2} \zeta_2 \log (2)+\frac{\log ^3(2)}{6}+\log (2)
   S_{1,1}\left(\frac{1}{2},\frac{1}{2};\infty \right)+2 S_{1,2}\left(\frac{1}{2},\frac{1}{2};\infty \right)+\frac{7 \zeta_3}{24}\\S_{1,1,1}\left(\frac{1}{2},1,\frac{1}{2};\infty \right)&=&\frac{1}{2} \zeta_2 \log
   (2)+S_{1,2}\left(\frac{1}{2},\frac{1}{2};\infty \right)-\frac{5 \zeta_3}{12}\\S_{1,1,1}\left(\frac{1}{2},1,1;\infty \right)&=&\frac{3 \zeta_3}{4}\\S_{1,1,1}\left(\frac{1}{2},1,2;\infty \right)&=&\frac{7 \zeta
   (3)}{4}\\S_{1,1,1}\left(\frac{1}{2},2,\frac{1}{2};\infty \right)&=&\frac{3}{2} \zeta_2 \log (2)-\frac{\zeta_3}{4}\\S_{1,1,1}\left(\frac{1}{2},2,1;\infty \right)&=&\frac{3}{2} \zeta_2 \log (2)+\frac{7 \zeta_3}{4}
	      \end{eqnarray*}
\end{description}

\subsection{Alphabet $\{1,-1,\frac{1}{2}\}:$}

\noindent
\begin{description}
 \item[w=1:]\  \\
	      Basis constant: $S_{-1}(\infty )=-\log (2)$\\
	      Relation:
	      \begin{eqnarray*}
	       S_1\left(\frac{1}{2};\infty \right)= -S_{-1}(\infty )
	      \end{eqnarray*}

 \item[w=2:]\  \\
	      Basis constants: $S_{2}(\infty )=\zeta_2,S_{1,1}\left(\frac{1}{2},\frac{1}{2};\infty \right)$\\
	      Relations:
	      \begin{eqnarray*}
	      S_{-2}\left(\infty \right)&=&-\frac{\zeta_2}{2}\\
	      S_2\left(\frac{1}{2};\infty \right)&=&\frac{\zeta_2}{2}-\frac{\log ^2(2)}{2}\\
	      S_{-1,-1}(\infty )&=&\frac{\zeta_2}{2}+\frac{\log ^2(2)}{2}\\
	      S_{1,1}\left(-1,\frac{1}{2};\infty \right)&=&-\frac{\zeta_2}{2}-\frac{\log ^2(2)}{2}+2 S_{1,1}\left(\frac{1}{2},\frac{1}{2};\infty \right)\\
	      S_{-1,1}(\infty )&=&-\frac{\zeta_2}{2}+\frac{\log ^2(2)}{2}\\
	      S_{1,1}\left(\frac{1}{2},-1;\infty \right)&=&-\frac{1}{2} \log ^2(2)-S_{1,1}\left(\frac{1}{2},\frac{1}{2};\infty\right)\\
	      S_{1,1}\left(\frac{1}{2},1;\infty \right)&=&\frac{\zeta_2}{2}\\
	      S_{1,1}\left(\frac{1}{2},2;\infty \right)&=&\frac{3 \zeta_2}{2}
	      \end{eqnarray*}
 \item[w=3:]\  \\
	      Basis constants: $S_3(\infty )=\zeta_3,S_{1,2}\left(-1,\frac{1}{2};\infty \right),S_{1,1,1}\left(-1,\frac{1}{2},\frac{1}{2};\infty \right), S_{1,1,1}\left(-1,\frac{1}{2},1;\infty\right),\\
	      S_{1,1,1}\left(\frac{1}{2},-1,\frac{1}{2};\infty \right),S_{1,1,1}\left(\frac{1}{2},\frac{1}{2},-1;\infty \right),S_{1,1,1}\left(\frac{1}{2},\frac{1}{2},\frac{1}{2};\infty \right)$\\
	      Relations:
	      \begin{eqnarray*}
S_{-3}(\infty )&=&-\frac{1}{4} (3 \zeta_3)
\\
S_3\left(\frac{1}{2};\infty \right)&=&\frac{\log (2)^3}{6}-\frac{\log (2) \zeta_2}{2}+\frac{7 \zeta_3}{8}
\\
S_{-2,-1}(\infty )&=&\frac{3 \log (2) \zeta_2}{2}-\frac{5
   \zeta_3}{8}
\\
S_{2,1}\left(-1,\frac{1}{2};\infty \right)&=&-\frac{\log (2)^3}{18}-\frac{\log (2) \zeta_2}{3}+\frac{7 \zeta_3}{72}+\frac{1}{3} S_{1,2}\left(-1,\frac{1}{2};\infty \right)
\\
S_{-2,1}(\infty )&=&-\frac{1}{8} (5
   \zeta_3)
\\
S_{2,1}\left(\frac{1}{2},-1;\infty \right)&=&\frac{4 \log (2)^3}{9}-\frac{\log (2) \zeta_2}{3}-\frac{4 \zeta_3}{9}-\frac{2}{3} S_{1,2}\left(-1,\frac{1}{2};\infty \right)
\\
S_{2,1}\left(\frac{1}{2},\frac{1}{2};\infty \right)&=&-\frac{1}{2}
   (\log (2) \zeta_2)+\zeta_3+S_{1,2}\left(-1,\frac{1}{2};\infty \right)
\\
S_{2,1}\left(\frac{1}{2},1;\infty \right)&=&-\frac{1}{2} (\log (2) \zeta_2)+\zeta_3
\\
S_{2,-1}(\infty )&=&-\frac{1}{2} (3 \log (2)
   \zeta_2)+\frac{\zeta_3}{4}
\\
S_{2,1}\left(1,\frac{1}{2};\infty \right)&=&\frac{\log (2)^3}{6}+\frac{\log (2) \zeta_2}{2}+\frac{\zeta_3}{4}
\\
S_{2,1}(\infty )&=&2 \zeta_3
\\
S_{-1,-2}(\infty )&=&-\log (2) \zeta_2+\frac{13
   \zeta_3}{8}
\\
S_{-1,2}(\infty )&=&\frac{\log (2) \zeta_2}{2}-\zeta_3
\\
S_{1,2}\left(\frac{1}{2},-1;\infty \right)&=&-\frac{1}{24} (13 \zeta_3)
\\
S_{1,2}\left(\frac{1}{2},\frac{1}{2};\infty \right)&=&-\frac{\log (2)^3}{18}-\frac{\log (2)
   \zeta_2}{3}+\frac{13 \zeta_3}{18}+\frac{1}{3} S_{1,2}\left(-1,\frac{1}{2};\infty \right)
\\
S_{1,2}\left(\frac{1}{2},1;\infty \right)&=&\frac{5 \zeta_3}{8}
\\
S_{-1,-1,-1}(\infty )&=&-\frac{\log (2)^3}{6}-\frac{\log (2)
   \zeta_2}{2}-\frac{\zeta_3}{4}
\\
S_{1,1,1}\left(-1,-1,\frac{1}{2};\infty \right)&=&\frac{17 \log (2)^3}{18}
+\frac{13 \log (2) \zeta_2}{6}-\frac{61 \zeta_3}{36}
-4 \log (2) S_{1,1}\left(\frac{1}{2},\frac{1}{2};\infty \right)
\NN\\ &&
-\frac{8}{3}
   S_{1,2}\left(-1,\frac{1}{2};\infty \right)
\\
S_{-1,-1,1}(\infty )&=&-\frac{\log (2)^3}{6}-\frac{\log (2) \zeta_2}{2}
+\frac{7 \zeta_3}{4}
\\
S_{1,1,1}\left(-1,\frac{1}{2},-1;\infty \right)&=&-\frac{1}{3} \left(2 \log (2)^3\right)-3 \log (2)
   \zeta_2+\frac{85 \zeta_3}{24}
+4 \log (2) S_{1,1}\left(\frac{1}{2},\frac{1}{2};\infty \right)
\NN\\ &&
+4 S_{1,2}\left(-1,\frac{1}{2};\infty \right)
\\
S_{-1,1,-1}(\infty )&=&-\frac{\log (2)^3}{6}+\frac{\log (2)
   \zeta_2}{2}+\frac{\zeta_3}{8}
\\
S_{1,1,1}\left(-1,1,\frac{1}{2};\infty \right)&=&\frac{5 \log (2)^3}{18}-\frac{5 \log (2) \zeta_2}{6}+\frac{61 \zeta_3}{72}+\frac{4}{3} S_{1,2}\left(-1,\frac{1}{2};\infty \right)
\\
S_{-1,1,1}(\infty
   )&=&-\frac{\log (2)^3}{6}+\frac{\log (2) \zeta_2}{2}-\frac{7 \zeta_3}{8}
\\
S_{1,1,1}\left(\frac{1}{2},-1,-1;\infty \right)&=&\frac{7 \log (2)^3}{18}+\frac{4 \log (2) \zeta_2}{3}-\frac{35 \zeta_3}{36}-\log (2)
   S_{1,1}\left(\frac{1}{2},\frac{1}{2};\infty \right)
\NN\\ &&
-\frac{4}{3} S_{1,2}\left(-1,\frac{1}{2};\infty \right)
\\
S_{1,1,1}\left(\frac{1}{2},-1,1;\infty \right)&=&\frac{2 \log (2)^3}{9}
+\frac{\log (2) \zeta_2}{3}-\frac{155 \zeta_3}{144}-\frac{1}{3}
   S_{1,2}\left(-1,\frac{1}{2};\infty \right)
\NN\\ &&
-\frac{1}{2} S_{1,1,1}\left(-1,\frac{1}{2},1;\infty \right)
\\
S_{1,1,1}\left(\frac{1}{2},\frac{1}{2},1;\infty \right)&=&\frac{\log (2)^3}{18}-\frac{7 \log (2) \zeta_2}{6}+\frac{217 \zeta_3}{144}
+\log (2)
   S_{1,1}\left(\frac{1}{2},\frac{1}{2};\infty \right)
\NN\\ &&
+\frac{2}{3} S_{1,2}\left(-1,\frac{1}{2};\infty \right)
+\frac{1}{2} S_{1,1,1}\left(-1,\frac{1}{2},1;\infty \right)
\\
S_{1,1,1}\left(\frac{1}{2},1,-1;\infty \right)&=&-\frac{1}{9} \left(2
   \log (2)^3\right)-\frac{5 \log (2) \zeta_2}{6}+\frac{11 \zeta_3}{36}+\frac{1}{3} S_{1,2}\left(-1,\frac{1}{2};\infty \right)
\\
S_{1,1,1}\left(\frac{1}{2},1,\frac{1}{2};\infty \right)&=&-\frac{\log (2)^3}{18}+\frac{\log (2) \zeta_2}{6}+\frac{11
   \zeta_3}{36}+\frac{1}{3} S_{1,2}\left(-1,\frac{1}{2};\infty \right)
\\S_{1,1,1}\left(\frac{1}{2},1,1;\infty \right)&=&\frac{3 \zeta_3}{4}
	      \end{eqnarray*}
\end{description}

\subsection{Alphabet $\{1,-1,\frac{1}{2},-\frac{1}{2}\}:$}

\noindent
\begin{description}
 \item[w=1:]\  \\
	      Basis constant: $S_{-1}(\infty )=-\log (2),S_1\left(-\frac{1}{2};\infty \right)$\\

 \item[w=2:]\  \\
	      Basis constants: $S_{2}(\infty )=\zeta_2,S_{1,1}\left(-\frac{1}{2},\frac{1}{2};\infty \right),S_{1,1}\left(\frac{1}{2},-\frac{1}{2};\infty \right),S_{1,1}\left(\frac{1}{2},\frac{1}{2};\infty \right)$\\
 \item[w=3:]\  \\
	      Basis constants: $ \displaystyle S_{3}(\infty )=\zeta_3,
S_3\left(-\frac{1}{2};\infty \right),
S_{1,2}\left(-\frac{1}{2},-\frac{1}{2};\infty \right),
S_{1,2}\left(-\frac{1}{2},\frac{1}{2};\infty \right),
\\
S_{1,2}\left(\frac{1}{2},-\frac{1}{2};\infty \right),
S_{2,1}\left(-\frac{1}{2},\frac{1}{2};\infty \right),
S_{1,1,1}\left(-1,-\frac{1}{2},-\frac{1}{2};\infty \right),
S_{1,1,1}\left(-1,-\frac{1}{2},\frac{1}{2};\infty \right),
\\
S_{1,1,1}\left(-1,\frac{1}{2},-\frac{1}{2};\infty \right),
S_{1,1,1}\left(-\frac{1}{2},-1,-\frac{1}{2};\infty \right),
S_{1,1,1}\left(-\frac{1}{2},-\frac{1}{2},\frac{1}{2};\infty \right),
\\
S_{1,1,1}\left(-\frac{1}{2},\frac{1}{2},-\frac{1}{2};\infty \right),
S_{1,1,1}\left(-\frac{1}{2},\frac{1}{2},\frac{1}{2};\infty \right),
S_{1,1,1}\left(-\frac{1}{2},\frac{1}{2},1;\infty \right),
\\
S_{1,1,1}\left(\frac{1}{2},-\frac{1}{2},-\frac{1}{2};\infty \right),
S_{1,1,1}\left(\frac{1}{2},-\frac{1}{2},\frac{1}{2};\infty \right),
S_{1,1,1}\left(\frac{1}{2},-\frac{1}{2},1;\infty \right)$\\
\end{description}

\subsection{Alphabet $\{1,-1,\frac{1}{2},-\frac{1}{2},2,-2\}:$}

\noindent
\begin{description}
 \item[w=1:]\  \\
	      Basis constant: $S_{-1}(\infty )=-\log(2),S_1\left(-\frac{1}{2};\infty \right)$\\

 \item[w=2:]\  \\
	      Basis constants: $ \displaystyle S_{2}(\infty
)=\zeta_2,S_{1,1}\left(\frac{1}{2},\frac{1}{2};\infty \right),S_{1,1}\left(-\frac{1}{2},\frac{1}{2};\infty \right),S_{1,1}\left(\frac{1}{2},-\frac{1}{2};\infty \right)$\\
 \item[w=3:]\  \\
	      Basis constants: $\displaystyle S_3(\infty
)=\zeta_3,S_3\left(-\frac{1}{2};\infty
\right),S_{1,2}\left(-\frac{1}{2},-2;\infty \right),S_{1,2}\left(-\frac{1}{2},\frac{1}{2};\infty \right),
\\
S_{1,2}\left(\frac{1}{2},-\frac{1}{2};\infty
   \right),S_{2,1}\left(-\frac{1}{2},\frac{1}{2};\infty \right),S_{1,1,1}\left(-1,-\frac{1}{2},-\frac{1}{2};\infty \right),S_{1,1,1}\left(-1,-\frac{1}{2},\frac{1}{2};\infty \right),
\\
S_{1,1,1}\left(-1,\frac{1}{2},-\frac{1}{2};\infty
   \right),S_{1,1,1}\left(-\frac{1}{2},-1,-\frac{1}{2};\infty \right),S_{1,1,1}\left(-\frac{1}{2},-\frac{1}{2},\frac{1}{2};\infty \right),
\\
S_{1,1,1}\left(-\frac{1}{2},\frac{1}{2},-\frac{1}{2};\infty \right),
S_{1,1,1}\left(-\frac{1}{2},\frac{1}{2},\frac{1}{2};\infty \right),
S_{1,1,1}\left(-\frac{1}{2},\frac{1}{2},1;\infty \right),
\\
S_{1,1,1}\left(\frac{1}{2},-\frac{1}{2},-\frac{1}{2};\infty \right),
S_{1,1,1}\left(\frac{1}{2},-\frac{1}{2},\frac{1}{2};\infty \right),
S_{1,1,1}\left(\frac{1}{2},-\frac{1}{2},1;\infty \right)$.\\
\end{description}

\section{Asymptotic Expansion of $S$-Sums}\label{Sec:AsymptoticExp}

\vspace*{1mm}
\noindent
In this Section we seek expansions\footnote{For further computer algebra aspects concerning the calculation of expansions we refer, e.g., to~\cite{WZ:85,SS:99,H:06,FS:09} and references therein.
} of $S$-sums generalizing ideas for harmonic sums given in~\cite{Blumlein:2009fz}. We say that a function $f:\R \rightarrow \R$ is expanded in an asymptotic series
\cite{Wittaker}
$$
f(x) \sim \sum_{n=1}^{\infty}{\frac{a_n}{x^n}}, \ x \rightarrow \infty,
$$
where $a_n$ are constants from $\R$, if for all $N\geq 0$
$$
R_N(x)=f(x)-\sum_{n=0}^N{\frac{a_n}{x^n}}\in o\left(\frac{1}{x^N}\right), \ x \rightarrow \infty;
$$
note that for a function $g(x)$ we use the notion $g(x)\in o(G(x))\Leftrightarrow \lim\limits_{x\to \infty} \left |\frac{g(x)}{G(x)}\right| =0$. When only the first coefficients of such an expansion are given (calculated), we often write $f(x)\sim a_0+a_1x+a_2x^2+\dots+a_n x^n$ and cut off the remainder term.

We remark that the subsequent Sections provide algorithms to expand $S$-sums which come
close to the class of $\bar{S}$-sums (for a definition see~\eqref{Equ:SBar}). For the
general class of $S$-sums the expansion problem is not handled completely. However, for
current calculations in QCD (see, e.g., Section~\ref{Sec:QCDExample}) the presented
methods are completely sufficient.

\subsection{Asymptotic Expansions of Harmonic Sums $\S{1}{c;n}$ with $c\geq 1$}
\label{SSexpandS1b}

\noindent
\textbf{Version 1:} The asymptotic representation of the harmonic numbers $\S{1}{n}$
is given by \cite{Nielsen1906b}
\begin{equation}\label{Equ:HarmonicNumberExp}
S_1(n)\sim
\gamma+\log(n)+\frac{1}{2\,n}-\sum_{k=1}^{\infty}{\frac{B_{2k}}{2\,k\,n^{2k}}}~,
\end{equation}
where $\gamma$ is the Euler-Mascheroni constant and
$B_n$ are the Bernoulli numbers \cite{Bernoulli1713,Saalschuetz1893,Nielsen1923},
$$
B_n=\sum_{k=0}^n{\binom{n}{k}} B_k, \ \ \ B_0=1.
$$
To compute the asymptotic expansion of $\S{1}{c;n}$ with $c>1$ one may represent it using the Euler-Maclaurin formula \cite{Euler1738,MacLaurin1742}:
\begin{eqnarray*}
\SS1cn&=& \bar{S}_{1}(c;n)+\S1n
\end{eqnarray*}
where
\begin{eqnarray*}
\bar{S}_{1}(c;n)&:=&\sum_{i=1}^n\frac{c^i-1}{i}\\ &=&-\tfrac{1}{2}\left[\log(c)+\frac{c^{n+1}-1}{n+1}\right]-\gamma-\log(\log(c))-\log(n+1)\\
		    & &+\textnormal{Ei}(\log(c)(n+1))+\sum_{j=1}^m{\frac{B_{2j}}{(2j)!}\left.\left[\frac{d^{2j-1}}{di^{2j-1}}\left(\frac{c^i-1}{i}\right)\right]\right|_0^{n+1}}
+R_k(c,n)~.
\end{eqnarray*}
Here $\textnormal{Ei}(z)$ is the exponential integral \cite{Euler1768,Nielsen1906a},
$$
\int_0^x\frac{c^t-1}{t}dt=\textnormal{Ei}(x\log(c))-\gamma-\log(x)-\log(\log(c)),
$$
and $R_k(c,n)$ stand for the rest term. The derivatives are given in closed form by
\begin{eqnarray*}
\frac{d^{j}} {di^{j}}\left(\frac{c^i-1}{i}\right)&=&\frac{c^i}{i}\sum_{k=0}^j\frac{\log^{j-k}(c)}{i^k}{(-1)}^k\frac{\Gamma(j+1)}{\Gamma(j+1-k)}-{(-1)}^j\frac{\Gamma(j+1)}{i^{j+1}}\\
\lim_{i\rightarrow0}\frac{d^{j}}{di^{j}}\left(\frac{c^i-1}{i}\right)&=&\frac{\log^{j+1}(c)}{j+1}.
\end{eqnarray*}
The asymptotic representation for $\S{1}{c;n}$ with $c>1$ can now be derived using
$$
\textnormal{Ei}(t) \sim \exp(t) \sum_{k=1}^{\infty}\frac{\Gamma(k)}{t^k}.
$$

\noindent\textbf{Version 2:} A more suitable approach to compute the expansion of $\S{1}{c;n}$ with $c>1$, and which is implemented in the \texttt{HarmonicSums} package, is as follows. Differentiating the integral representation
$$
\frac{\S{1}{c;n}}{c^n}=\sum_{i=1}^n\frac{\frac{1}{c^{n-i}}}{i}=\int_0^1{\frac{x^n-\frac{1}{c^n}}{x-\frac{1}{c}}}dx
$$
with respect to $n$ we get
\begin{eqnarray*}
 \frac{\partial}{\partial n}\frac{\S{1}{c;n}}{c^n}&=&\int_0^1\frac{x^n\H0x+\frac{1}{c^n}\H0c}{x-\frac{1}{c}}dx\\
    &=&\int_0^1\frac{x^n(\H0x+\H0c)}{x-\frac{1}{c}}dx-\H0c \int_0^1\frac{x^n-\frac{1}{c^n}}{x-\frac{1}{c}}dx.
\end{eqnarray*}
Hence
\begin{eqnarray}\label{expdiffrel}
\int_0^1\frac{x^n(\H0x+\H0c)}{x-\frac{1}{c}}dx&=&\frac{\partial}{\partial n}\frac{\S{1}{c;n}}{c^n}+\H0c\frac{\S{1}{c;n}}{c^n}.
\end{eqnarray}
We can now compute the asymptotic expansion of the integral using integration by parts. Let
\begin{eqnarray*}
\int_0^1\frac{x^n(\H0x+\H0c)}{x-\frac{1}{c}}dx&\sim&\sum_{i=0}^\infty\frac{a_i}{n^i}
\end{eqnarray*}
with $a_i\in \R$ be this expansion and let $\sum_{i=0}^\infty\frac{b_i}{n^i}$
with $b_i\in \R$ be the asymptotic expansion of $\frac{\S{1}{c;n}}{c^n}.$ We note that the differentiation of this expansion yields $\sum_{i=2}^\infty\frac{-(i-1)b_{i-1}}{n^i}$ \ie
\begin{eqnarray*}
\frac{\partial}{\partial n}\frac{\S{1}{c;n}}{c^n}&\sim&\sum_{i=2}^\infty\frac{-(i-1)b_{i-1}}{n^i}.
\end{eqnarray*}
Plugging these expansions into equation (\ref{expdiffrel}) and comparing coefficients yields the recurrence relation
\begin{eqnarray*}
 b_n&=&\frac{a_n}{\H0c}+\frac{(n-1)}{\H0c}b_{n-1} \textnormal{ for } n\geq2
\end{eqnarray*}
with the initial values  $b_0=\frac{a_0}{\H0c}$ and $b_1=\frac{a_1}{\H0c}$.
Hence we get for example
\begin{eqnarray*}
\S{1}{2;n}&\sim&2^n\left(\frac{14174522}{n^{10}}+\frac{1091670}{n^9}+\frac{94586}{n^8}+\frac{9366}{n^7}+\frac{1082}{n^6}+\frac{150}{n^5}+\frac{26}{n^4}+\frac{6}{n^3}+\frac{2}{n^2}+\frac{2}{n}\right)\\
\S{1}{3;n}&\sim&3^n \left(\frac{566733}{4 n^{10}}+\frac{17295}{n^9}+\frac{38001}{16 n^8}+\frac{1491}{4 n^7}+\frac{273}{4 n^6}+\frac{15}{n^5}+\frac{33}{8 n^4}+\frac{3}{2 n^3}+\frac{3}{4 n^2}+\frac{3}{2 n}\right).
\end{eqnarray*}

%

\subsection{Computation of Asymptotic Expansions by repeated partial Integration}
\label{Sec:PartialInt}

\noindent
Next, we deal with generalized polylogarithms of the form
\begin{equation}\label{Equ:MellinSpecialCase}
\M{\frac{\H{m_1,m_2,\ldots,m_k}x}{c \pm x}}{n} \ \textnormal{ and } \M{\frac{\H{b_1,b_2,\ldots,b_k}{1-x}}{c \pm x}}{n}
\end{equation}
with $\abs{c}\geq1$ where $m_i~\in~\R~\setminus(0,1]$ or $b_i~\in~\R~\setminus~(-1,0)~\cup~(0,1))$ with $b_l\neq 0$ using repeated integration by parts.
The first two lemmas show that the arguments of the Mellin transform under consideration are analytic. Thus iterated integration by parts is possible for $c>1$ using the already known integral representation. The third lemma gives an integral representation for the special case $c=1$ such that the method of repeated partial integration is also possible. The proofs are omitted here.

\begin{lemma}
Let $\H{m_1,m_2,\ldots,m_k}x$ be a generalized polylogarithm with $m_i~\in~\R~\setminus(0,1]$ for $1~\leq~i~\leq k.$ Then
$$\H{m_1,m_2,\ldots,m_k}x,\quad \frac{\H{m_1,m_2,\ldots,m_k}x}{c+x}\quad\text{ and }\quad
\frac{\H{m_1,m_2,\ldots,m_k}x-\H{m_1,m_2,\ldots,m_k}1}{c-x}$$
are analytic for $\abs{c}\geq1$ and $x \in (0,1].$
\label{SSanalytic1}
\end{lemma}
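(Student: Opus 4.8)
\emph{Proof proposal.} The plan is to establish the three claims in the order listed, since each relies on its predecessor. Everything hinges on one observation: because every index satisfies $m_i\in\R\setminus(0,1]$, the number $q:=\min_{m_i>0}m_i$ (with $q=\infty$ if no index is positive) is \emph{strictly larger than $1$}. Hence $(0,1]\subseteq(0,q)$, and---crucially---the endpoint $x=1$ lies in the \emph{interior} of the interval of analyticity, not on its boundary.

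First I would treat $\H{m_1,\ldots,m_k}{x}$. One may simply invoke the fact from Section~\ref{Sec:BasicProp} that a generalized polylogarithm is analytic on $(0,q)$, whence $q>1$ yields analyticity on $(0,1]$. For a self-contained argument I would instead induct on the weight $k$: the base cases $\H{}{x}=1$ and $\H{0,\ldots,0}{x}=\tfrac{1}{k!}(\log x)^k$ are analytic on $(0,\infty)$, and in the inductive step $\tfrac{d}{dx}\H{m_1,\ldots,m_k}{x}=f_{m_1}(x)\,\H{m_2,\ldots,m_k}{x}$, where $f_{m_1}$ has its sole pole at $x=0$, at $x=-\abs{m_1}\le0$, or at $x=m_1>1$ according to the sign of $m_1$---never inside $(0,1]$. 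Thus the derivative is a product of functions analytic on $(0,1]$, and an antiderivative of a real-analytic function is real-analytic.

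For $\dfrac{\H{m_1,\ldots,m_k}{x}}{c+x}$ the numerator is analytic on $(0,1]$ by the first claim, while for the relevant sign $c\ge1$ the denominator obeys $c+x\ge1>0$ there; a quotient of analytic functions with non-vanishing denominator is analytic, which is the second claim.

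The only real work is the third function $\dfrac{\H{m_1,\ldots,m_k}{x}-\H{m_1,\ldots,m_k}{1}}{c-x}$, and the crux is the boundary value $c=1$. For $c>1$ one has $c-x\ge c-1>0$ on $(0,1]$; the subtracted constant $\H{m_1,\ldots,m_k}{1}$ is finite because $1<q$ places us in the convergent case of Section~\ref{Sec:BasicProp}, and analyticity follows as before. When $c=1$ the denominator $1-x$ vanishes at $x=1$, so I would argue that the singularity is removable: since $q>1$, the function $\H{m_1,\ldots,m_k}{x}$ is analytic on a full neighborhood of $x=1$, so $\H{m_1,\ldots,m_k}{x}-\H{m_1,\ldots,m_k}{1}$ is analytic there with a zero at $x=1$ and is therefore divisible by $(x-1)$ among germs of analytic functions; hence $\dfrac{\H{m_1,\ldots,m_k}{x}-\H{m_1,\ldots,m_k}{1}}{1-x}$ extends analytically across $x=1$, giving analyticity on all of $(0,1]$. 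The open left endpoint is used only to exclude $x=0$, where the $1/x$ and $\log x$ contributions would otherwise spoil analyticity; on $(0,1]$ they are harmless. This cancellation---the subtraction in the numerator exactly killing the zero of the denominator---is the heart of the matter, and it is precisely the mechanism that makes the repeated integration by parts legitimate in the asymptotic-expansion algorithm that follows.
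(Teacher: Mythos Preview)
The paper explicitly omits the proof of this lemma (``The proofs are omitted here''), so there is no argument in the text to compare against. Your proof is correct and is exactly the natural one: the hypothesis $m_i\in\R\setminus(0,1]$ forces $q=\min_{m_i>0}m_i>1$, so $x=1$ lies in the \emph{interior} of the domain of analyticity $(0,q)$, and the removable-singularity argument at $c=1$ for the third function is precisely the right mechanism.

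Two small remarks. First, the lemma is stated for $\abs{c}\ge1$, whereas you argue only the case $c\ge1$. This is harmless: in the paper's Mellin-transform setting the parameter in $1/(c+x)$ and $1/(c-x)$ is taken positive, so your reading ``for the relevant sign $c\ge1$'' matches the intended use; for completeness, when $c\le-1$ the denominator $c-x$ never vanishes on $(0,1]$ either, while $c+x$ with $c=-1$ would vanish at $x=1$, confirming that only the positive branch is meant for that quotient. Second, in your inductive step you should perhaps say that the integrand $f_{m_1}(y)\,\H{m_2,\ldots,m_k}{y}$ is analytic on all of $(0,q)$ and integrable at~$0$, so $\int_0^x(\,\cdot\,)\,dy$ is analytic on $(0,q)$; your phrase ``an antiderivative of a real-analytic function is real-analytic'' is correct but slightly elides the lower limit.
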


\begin{lemma}
Let $\H{m_1,m_2,\ldots,m_k}x$ be a generalized polylogarithm with $m_i~\in~\R~\setminus~((-1,0)~\cup~(0,1))$ for $1~\leq~i~\leq~k,$ and $m_k~\neq~0.$ Then
$$\frac{\H{m_1,m_2,\ldots,m_k}{1-x}}{c+x}\quad\text{ and }
\quad\frac{\H{m_1,m_2,\ldots,m_k}{1-x}}{c-x}$$
are analytic for $\abs{c}\geq1$ and $x \in (0,1].$
\label{SSanalytic2}
\end{lemma}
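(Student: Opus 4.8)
The plan is to separate the two potential sources of non-analyticity on $(0,1]$: the numerator $\H{m_1,\ldots,m_k}{1-x}$ and the zeros of the denominators $c\pm x$, and to show that the only place where a denominator can vanish inside $(0,1]$ is the point $x=1$, where the numerator already vanishes.

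First I would establish analyticity of the numerator. Setting $y=1-x$, the hypothesis $m_i\in\R\setminus((-1,0)\cup(0,1))$ forces every nonzero index to satisfy $\abs{m_i}\geq1$, so all branch points of $\H{m_1,\ldots,m_k}y$ — located at $y=m_i$ for the positive indices and at $y=-\abs{m_i}$ for the negative ones — lie at distance at least $q:=\min_{m_i\neq0}\abs{m_i}\geq1$ from the origin. Since $m_k\neq0$ there are no trailing zeroes, so by Section~\ref{Sec:PowerSeriesMPL} the function $\H{m_1,\ldots,m_k}y$ is given near $0$ by an honest Taylor series $\sum_{i\geq1}a_iy^i$ (no $\log$ part) that is analytic on the disc $\abs{y}<q$; in particular $\H{m_1,\ldots,m_k}0=0$, a zero of order at least $1$. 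Composing with the affine substitution $y=1-x$ shows that $\H{m_1,\ldots,m_k}{1-x}$ is analytic on $\abs{1-x}<q$, which on the real axis contains $(0,2)\supseteq(0,1]$.

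Next I would dispose of the denominators. The factors $c+x$ and $c-x$ vanish only at $x=-c$ and $x=c$. For $x\in(0,1)$ the estimate $\abs{c}\geq1>\abs{x}$ rules out both zeros, so on this open interval each quotient is analytic as a ratio of analytic functions with nonvanishing denominator. The sole boundary case is $x=1$: the factor $c-x$ vanishes there precisely when $c=1$, and $c+x$ precisely when $c=-1$, and in either case the offending factor equals $\pm(1-x)$, a simple zero. But at $x=1$ the numerator satisfies $\H{m_1,\ldots,m_k}0=0$ with order of vanishing at least $1$, so dividing the convergent series $\sum_{i\geq1}a_i(1-x)^i$ by $(1-x)$ yields $\sum_{j\geq0}a_{j+1}(1-x)^j$, a convergent power series in $(1-x)$. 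Thus the apparent pole at $x=1$ is removable and the quotient extends analytically across $x=1$.

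Combining the two parts gives analyticity of $\frac{\H{m_1,\ldots,m_k}{1-x}}{c+x}$ and $\frac{\H{m_1,\ldots,m_k}{1-x}}{c-x}$ throughout $(0,1]$ for $\abs{c}\geq1$. The only subtle step — the one I would write out most carefully — is matching the order of vanishing of the numerator at $x=1$ against the simple pole of the denominator; here the hypothesis $m_k\neq0$ is exactly what guarantees $\H{m_1,\ldots,m_k}0=0$, playing the same role that the subtraction $\H{\ve m}x-\H{\ve m}1$ played in Lemma~\ref{SSanalytic1}.
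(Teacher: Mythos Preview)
Your argument is correct. The paper itself omits the proof of this lemma (stating just before Lemma~\ref{SSanalytic1} that ``The proofs are omitted here''), so there is no detailed argument to compare against; but your approach --- using the Taylor expansion of Section~\ref{Sec:PowerSeriesMPL}, available precisely because $m_k\neq0$ excludes trailing zeroes, to get analyticity and a zero of order $\geq1$ of the numerator at $x=1$, and then observing that the only possible denominator zero on $(0,1]$ is a simple one at $x=1$ --- is the natural one and mirrors the role the subtraction $\H{\ve m}x-\H{\ve m}1$ plays in Lemma~\ref{SSanalytic1}.
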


\noindent For $c>1$, we obtain a suitable integral representation for partial integration. E.g., we get

\small
\begin{align*} \M{\frac{\H{-2}x}{x+2}}n=&\int_0^1x^n\frac{\H{-2}x}{x+2}dx=\int_0^1x^{n-1}\frac{x\cdot\H{-2}x}{x+2}dx\\
=&\frac{x^n}{n}\frac{x\cdot\H{-2}x}{x+2}\Biggr|_0^1-\int_0^1\frac{x^n}{n}\frac{2 \H{-2}x+x}{(x+2)^2} dx\\
=&\H{-2}1\frac{3}{n}-\frac{1}{n}\int_0^1x^{n-1}\frac{x(2 \H{-2}x+x)}{(x+2)^2} dx\\
=&\H{-2}1\frac{3}{n}-\frac{1}{n}\left(\frac{x^n}{n}\frac{x(2 \H{-2}x+x)}{(x+2)^2}\Biggr|_0^1-\int_0^1\frac{x^n}{n}\frac{6 x-2 (x-2) \H{-2}x}{(x+2)^3} dx\right)\\  =&\H{-2}1\left(\frac{3}{n}-\frac{2}{9n^2}\right)-\frac{1}{9n^2}+\frac{1}{n^2}\int_0^1x^{n-1}\frac{x(6 x-2 (x-2) \H{-2}x)}{(x+2)^3} dx\\
=&\cdots=\H{-2}1\left(\frac{3}{n}-\frac{2}{9n^2}+\frac{2}{27 n^3}+\frac{2}{27 n^4}\right)-\frac{1}{9 n^2}+\frac{2}{9 n^3}-\frac{20}{81 n^4}+O\left(\frac{1}{n^5}\right).
\end{align*}
\normalsize

\noindent For the special case $c=1$ one can use the integral representation of the following lemma to calculate the expansion with partial integration as demonstrated above.

\begin{lemma}
Let $\H{\ve m}x=\H{m_1,m_2,\ldots,m_k}x$ and $\H{\ve b}x=\H{b_1,b_2,\ldots,b_l}x$ be a generalized polylogarithm with
$m_i~\in~\R~\setminus(0,1]$ for $1~\leq~i~\leq~k$ and $b_i~\in~\R~\setminus~((-1,0)~\cup~(0,1))$ for $1~\leq~i~\leq~l$ where $b_l\neq 0.$ Then we have
$$
\M{\frac{\H{\ve m}x}{1-x}}{n} =\int_0^1{\frac{x^n(\H{\ve m}x-\H{\ve m}1)}{1-x}dx}-\int_0^1{\frac{\H{\ve m}x-\H{\ve m}1}{1-x}}dx-\S{1}n\H{\ve m}{1},
$$
and
$$
\M{\frac{\H{\ve b}{1-x}}{1-x}}{n}=\int_0^1{\frac{x^n\H{\ve b}{1-x}}{1-x}dx}-\H{0,\ve b}1
$$
where
$$\int_0^1{\frac{\H{\ve m}x-\H{\ve m}1}{1-x}}dx, \ \H{\ve m}{1} \textnormal{ and } \H{0,\ve b}1 $$
are finite constants.
\label{SSexpansioncorrection1}
\end{lemma}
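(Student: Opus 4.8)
The plan is to reduce both claimed identities to elementary rearrangements of the defining integral of the modified Mellin transform with weight $1/(1-x)$, namely $\M{\frac{h(x)}{1-x}}{n}=\int_0^1\frac{(x^n-1)h(x)}{1-x}dx$, after first securing the finiteness of the separate integrals that appear on the right-hand sides. The point of the lemma is precisely that the single convergent Mellin integral can be split into pieces that are individually well defined, so the real work lies in the convergence bookkeeping rather than in the algebra.

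For the first identity I would first note that the hypothesis $m_i\in\R\setminus(0,1]$ forces the minimal positive index $\min_{m_i>0}m_i$ to be strictly larger than $1$ (or to be $+\infty$). Hence $\H{\ve m}x$ is analytic at $x=1$, so that $\H{\ve m}1$ is a finite constant and $\H{\ve m}x-\H{\ve m}1=O(1-x)$ near $x=1$; this is exactly the content of Lemma~\ref{SSanalytic1}, which I would cite to deduce that $\frac{\H{\ve m}x-\H{\ve m}1}{1-x}$ is analytic on $(0,1]$. Combined with the fact that near $x=0$ the polylogarithm grows at most like a power of $\log x$ (integrable against the bounded factor $1/(1-x)$, and, for the weighted version, further suppressed by $x^n$), this shows that $\int_0^1\frac{\H{\ve m}x-\H{\ve m}1}{1-x}dx$ and $\int_0^1\frac{x^n(\H{\ve m}x-\H{\ve m}1)}{1-x}dx$ both converge. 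The identity is then pure algebra: writing $H_1:=\H{\ve m}1$ and using $\S{1}n=\int_0^1\frac{1-x^n}{1-x}dx$, I would combine the first two terms on the right over the common denominator $1-x$ to get $\int_0^1\frac{(x^n-1)(\H{\ve m}x-H_1)}{1-x}dx$ and rewrite $-\S{1}n\,H_1=\int_0^1\frac{(x^n-1)H_1}{1-x}dx$; adding these restores the factor $\H{\ve m}x$ and yields $\int_0^1\frac{(x^n-1)\H{\ve m}x}{1-x}dx=\M{\frac{\H{\ve m}x}{1-x}}{n}$.

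For the second identity the decisive move is the substitution $y=1-x$ in the unweighted remainder. Using the defining integral I would write $\M{\frac{\H{\ve b}{1-x}}{1-x}}{n}=\int_0^1\frac{x^n\H{\ve b}{1-x}}{1-x}dx-\int_0^1\frac{\H{\ve b}{1-x}}{1-x}dx$, and then substitute $y=1-x$ in the last integral to obtain $\int_0^1\frac{\H{\ve b}y}{y}dy$. Since $b_l\neq0$, the word $(0,\ve b)$ is not the zero word, so by the defining recursion for generalized polylogarithms this integral is exactly $\H{0,\ve b}1$. Its finiteness I would read off from the finiteness criterion of Section~\ref{Sec:BasicProp}: the word $(0,b_1,\dots,b_l)$ has first entry $0$ and minimal positive index $q':=\min_{b_i>0}b_i\geq1$ (because positive $b_i\notin(0,1)$), so at the evaluation point $c=1\leq q'$ one has either $c<q'$ or $c=q'$ with first entry $0\neq q'$, and both cases are finite. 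Lemma~\ref{SSanalytic2} supplies analyticity of $\frac{\H{\ve b}{1-x}}{1-x}$ at $x=1$, which legitimizes splitting the convergent Mellin integral into the two convergent pieces above.

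I expect the only genuinely delicate point to be this convergence bookkeeping at the two endpoints, i.e.\ justifying that the individual integrals on the right are finite so that the split of the manifestly convergent Mellin integral is valid. The endpoint $x=1$ is handled uniformly by Lemmas~\ref{SSanalytic1} and~\ref{SSanalytic2}, which convert the apparent $1/(1-x)$ pole into an analytic integrand through the vanishing of $\H{\ve m}x-\H{\ve m}1$, respectively of $\H{\ve b}{1-x}$, at $x=1$; the endpoint $x=0$ is handled by the $x^n$ factor (for the weighted pieces) and by the polylogarithm finiteness criterion (for the constant pieces). Once these finiteness facts are recorded, both displayed identities follow from the substitution $y=1-x$ and the algebraic recombination described above.
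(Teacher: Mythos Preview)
Your argument is correct. The paper explicitly omits the proof of this lemma (``The proofs are omitted here''), so there is nothing to compare against; your derivation---splitting the defining integral $\int_0^1\frac{(x^n-1)h(x)}{1-x}\,dx$, invoking Lemmas~\ref{SSanalytic1} and~\ref{SSanalytic2} to handle the endpoint $x=1$, and recognising the subtracted constant in the second identity as $\H{0,\ve b}1$ via the substitution $y=1-x$---is exactly the straightforward verification the authors presumably had in mind.
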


\noindent Summarizing, we are able to calculate the asymptotic expansion of~\eqref{Equ:MellinSpecialCase}
using the method of repeated integration by parts.

\subsection{Computation of Asymptotic Expansions of $S$-Sums}
\label{SSExpansion}

\noindent
Subsequently, we derive an algorithm that calculates the asymptotic expansion of
$S$-sums $\S{a_1,\ldots,a_k}{b_1,\ldots,b_k;n}$ with $b_i\in[-1,1]$ and
$b_i\neq0$ using the technologies of the previous Subsections.

Since these sums fall into the class of $\bar{S}$-sums, it follows by Section~\ref{SSInvMellin} that they can be represented using Mellin transforms of the form
$\M{\frac{\H{m_1,m_2\ldots,m_k}x}{c\pm x}}{n}$
with $m_i\in \R \setminus ((-1,0)\cup(0,1))$ and $\abs{c}\geq1$. Conversely,
given such a Mellin transform, it can be expressed in terms of $S$-sums of the above type with $b_i\in[-1,1]$ and $b_i\neq0$.

In addition, it is straightforward to see that this subclass is closed under the quasi
shuffle product~\eqref{SSsumproduct}, i.e., the product can be expressed again as a
linear combination of $S$-sums of this subclass over the rational numbers.

We are now ready to present an algorithm which
computes asymptotic expansions of $S$-sums
$\S{a_1,a_2,\ldots,a_k}{b_1,b_2,\ldots,b_k;n}$ with $b_i\in [-1,1]$ and $b_i\neq 0$.
The algorithm tries to attack the sums directly using the previous Sections. If it fails one uses the transformation $x\to 1-x$ from Section~\ref{SSbxx} twice and reduces the problems to $S$-sums that are less complicated than the input sum. Recursive application will finally produce the expansion of the original $S$-sum. To be more precise, the algorithm can be described as follows.

\begin{itemize}
	\item If  $\S{a_1,a_2,\ldots,a_k}{b_1,b_2,\ldots,b_k;n}$ has trailing ones, \ie $a_k=b_k=1$, we first extract them such that we end up in a univariate
	polynomial in $\S1n$ with coefficients in the $S$-sums without trailing ones.
   Expand the powers of $\S{1}n$ using~\eqref{Equ:HarmonicNumberExp}.
  Now apply the
	following items to each of the $S$-sums without trailing ones:
	\item Let $\S{a_1,a_2,\ldots,a_k}{b_1,b_2,\ldots,b_k;n}$ with $a_k\neq 1\neq b_k$, and compute $\frac{\H{m_1,m_2,\ldots,m_l}x}{c+sx}$ such that
	in the inverse Mellin transform the most complicated $S$-sum is $\S{a_1,a_2,\ldots,a_k}{b_1,b_2,\ldots,b_k;n}$, i.e., express
	$\S{a_1,a_2,\ldots,a_k}{b_1,b_2,\ldots,b_k;n}$ as
	\begin{equation}\label{SSasyalg1}		 \S{a_1,a_2,\ldots,a_k}{b_1,b_2,\ldots,b_k;n}=\M{\frac{\H{m_1,m_2,\ldots,m_l}x}{c+sx}}{n}+T
	\end{equation}
where $T$ is an expression in $S$-sums
$\S{a'_1,a'_2,\ldots,a'_{k'}}{b'_1,b'_2,\ldots,b'_{k'};n}$ with $b'_i\in [-1,1]$ and
$b'_i\neq 0$ (which are less complicated than $\S{a_1,a_2,\ldots,a_k}{b_1,b_2,\ldots,b_k;n}$) and constants. Note that $s=\pm1$, $c\geq1$ and $m_i\in \R \setminus ((-1,0)\cup(0,1))$ and that $m_l\neq1$ (see the beginning of this Subsection and Lemma~\ref{Lemma:MostComplicated}).
\item We proceed by expanding $\M{\frac{\H{m_1,m_2,\ldots,m_l}x}{c+sx}}{n}$ in (\ref{SSasyalg1}):
	\begin{description}
		\item[all $m_i\neq 1$:] Expand $\M{\frac{\H{m_1,m_2,\ldots,m_l}x}{c+sx}}{n}$ directly; see Section~\ref{Sec:PartialInt}.
		\item[not all $m_i\neq 1$:]
		\begin{itemize}\item[]		

			\item Transform $x\rightarrow 1-x$ in $\H{\ve m}x$ as described in
Section~\ref{SSbxx}
								 and expand all products, i.e., we obtain
				\begin{equation}\label{SSasyalg2}
					\M{\frac{\H{\ve m}x}{c+sx}}{n}=\sum_{i=1}^pd_i\M{\frac{\H{\ve b_i}{1-x}}{c+sx}}{n}+d \ \textnormal{  with } d,d_i\in\R.
				\end{equation}
				Note that due to Remark~\ref{SSbxxRemark} the components of the $\ve b_i$ are in $\set R\setminus (0,1).$
			 \item For each Mellin transform $\M{\frac{\H{b_1,\ldots,b_j}{1-x}}{c+sx}}{n}$ do
				\begin{description}\item[]
					\item[$b_j\neq 0:$] Expand $\int_0^1{\frac{x^n\Hma{\ve b}{1-x}}{c+sx}}$; see Section~\ref{Sec:PartialInt}.

					\item[$b_j=0:$] As described in Section~\ref{SSbxx}					
					transform back $1-x\rightarrow x$ in $\H{\ve b}{1-x}$ and expand all products, i.e., we obtain
						\begin{equation}\label{SSasyalg3}
							\M{\frac{\H{\ve b}{1-x}}{c+sx}}{n}=\sum_{i=1}^pe_i\M{\frac{\H{\ve g_i}{x}}{c+sx}}{n}+e
						\end{equation}
						with $e,e_i\in\R$. Finally, perform the Mellin transforms $\M{\frac{\Hma{\ve g_i}{x}}{c+sx}}{n}$, i.e., express it in terms of $S$-sums and constants in terms of infinite $S$-sums. Note that these harmonic sums are less complicated (see below).
				\end{description}
		\end{itemize}
	\end{description}
	\item Replace $\M{\frac{\Hma{m_1,m_2,\ldots,m_l}x}{c+sx}}{n}$ in (\ref{SSasyalg1}) by the result of this process.
	\item For all harmonic sums that remain in (\ref{SSasyalg1}) apply the above points. Since these harmonic sums are less complicated (see below) this process will terminate.
\end{itemize}

Concerning termination the following remarks are in place. Since $a_k\neq 1$ in  (\ref{SSasyalg1}), we know due to Lemma~\ref{Lemma:MostComplicated} that $m_l\neq 1$ in  (\ref{SSasyalg1}).
If not all $m_i\neq 1$ in  (\ref{SSasyalg1}), we have to transform $x\rightarrow 1-x$ in $\H{m_1,m_2,\ldots,m_l}x$ as described in Section~\ref{SSbxx}.
Due to Remark~\ref{SSbxxRemark} we see that the single
generalized polylogarithm at argument $x$ with weight $l,$ which will emerge, will not
have
trailing zeroes since $m_l\neq 1$. Hence we can compute the asymptotic expansion of the Mellin transform containing this generalized polylogarithm using repeated integration by parts. We might be able to compute the asymptotic
expansion of some (or all) of the other Mellin transforms by repeated integration by
parts as well, but if we fail we know at least that due to
Remark~\ref{Lemma:MostComplicated} the $S$-sums which will appear
in~\eqref{SSasyalg3} are less complicated (they are even of lower weight) than
$\S{a_1,a_2,\ldots,a_k}{b_1,b_2,\ldots,b_k;n}$ of (\ref{SSasyalg1}) and
hence this algorithm will eventually terminate.

Applying the described algorithm to $\textnormal{S}_{2,1}\hspace{-0.2em}\left(\tfrac{1}{3},\tfrac{1}{2};n\right)$ produces the following asymptotic expansion:
\small
\begin{align*}
\textnormal{S}_{2,1}\hspace{-0.2em}\left(\tfrac{1}{3},\tfrac{1}{2};n\right)\sim&
-\textnormal{S}_{1,2}\hspace{-0.2em}\left(\tfrac{1}{6},3;\infty \right)
  +\textnormal{S}_1\hspace{-0.2em}
\left(\tfrac{1}{2};\infty \right)
\left[-\textnormal{S}_2\hspace{-0.2em}\left(\tfrac{1}{6};\infty \right)
+\textnormal{S}_2\hspace{-0.2em}\left(\tfrac{1}{3};\infty \right)+\tfrac{1}{6^n}
\left(-\frac{1464}{625n^5}\right.\right.\\
&\left.\left.+\frac{126}{125 n^4}-\frac{12}{25 n^3}+\frac{1}{5n^2}
+33\frac{ 2^{n-1}}{n^5}-9\frac{2^{n-1}}{n^4}+3\frac{2^{n-1}}{n^3}
-\frac{2^{n-1}}{n^2}\right)\right]\\
&+\frac{1}{6^n} \left(-\frac{4074}{3125 n^5}+\frac{366}{625 n^4}-\frac{42}{125 n^3}+\frac{6}{25 n^2}-\frac{1}{5 n}\right)
   \textnormal{S}_2\hspace{-0.2em}\left(\tfrac{1}{2};\infty \right)\\
&+\textnormal{S}_1\hspace{-0.2em}\left(\tfrac{1}{6};\infty \right) \textnormal{S}_2\hspace{-0.2em}\left(\tfrac{1}{2};\infty \right)+\textnormal{S}_3\hspace{-0.2em}\left(\tfrac{1}{2};\infty
   \right)+\frac{1}{6^n} \left(\frac{642}{125 n^5}-\frac{28}{25 n^4}+\frac{1}{5 n^3}\right)\\
&+\frac{1}{6^n} \left(\frac{2037}{3125 n^5}-\frac{183}{625
   n^4}+\frac{21}{125 n^3}-\frac{3}{25 n^2}+\frac{1}{10 n}\right) \zeta_2+\frac{1}{6^n} \left(-\frac{2037}{3125 n^5}\right.\\
&\left.+\frac{183}{625
   n^4}-\frac{21}{125 n^3}+\frac{3}{25 n^2}-\frac{1}{10 n}\right) \log ^2(2)+\frac{1}{6^n} \left(\frac{1464}{625 n^5}-\frac{126}{125 n^4}+\frac{12}{25
   n^3}\right.\\
&\left.-\frac{1}{5 n^2}\right) \log (2),
\end{align*}
\normalsize

with $\text{S}_m(\tfrac{1}{k},\infty) = \text{Li}_m\left(\tfrac{1}{k}\right),~~m,k
\in \mathbb{N}_+, m \geq 2$.

\subsection{Further extensions}

\noindent
The following lemma  extends the presented algorithm to $\bar{S}$-sums $\S{a_1,\ldots,a_k}{b_1,\ldots,b_k;n}$ provided that if $\abs{b_1}>1,$ we have that $\abs{b_i}\leq 1$  and if $b_i=1$ then $a_i>1$ for $2\leq i \leq k$. In this case, we obtain the required form given in~\eqref{SSasyalg1} except that $0<c<1$ and that the occurring $\bar{S}$-sums in $T$ satisfy again the constrains from above. Then the following lemma yields a suitable integral representation of $\M{\frac{\H{m_1,m_2,\ldots,m_l}x}{c+sx}}{n}$ and repeated integration by parts produces its asymptotic expansion. In addition the occurring $\bar{S}$-sums in $T$ can be handled by recursion of the extended method.

\begin{lemma}\label{Lemma:ExtExpansion}
Let $\H{m_1,m_2,\ldots,m_k}x$ be a generalized polylogarithm with $m_i~\in~\R~\setminus~(-1, 1]~\cup~\{0\}$ for $1\leq i\leq k$ and $0<c<1.$ Then
\begin{eqnarray*}
&&\hspace{-1.5cm}\M{\frac{\H{m_1,m_2,\ldots,m_k}x}{c-x}}{n}=\\		 &&\frac{1}{c^n}\int_0^1{\frac{x^n\left(\H{m_1,m_2,\ldots,m_k}x-\H{m_1,m_2,\ldots,m_k}c\right)}{c-x}dx}\\
					&& + \H{m_1,c,m_2,\ldots,m_k}c+\H{m_1,m_2,c,\ldots,m_k}c+\cdots+\H{m_1,m_2,\ldots,m_k,c}c\\
					&& + \H{m_2,\ldots,m_k}c\H{0,m_1-c}{1-c} + \H{m_3,\ldots,m_k}c\H{0,m_1-c,m_2-c}{1-c}\\
					&& +\cdots+ \H{m_k}c\H{0,m_1-c,\ldots,m_{k-1}-c}{1-c} + \H{0,m_1-c,\ldots,m_k-c}{1-c} \\
					&&- S_{1,\{\frac{1}{c}\}}(n)\H{m_1,m_2,\ldots,m_k}c.
\end{eqnarray*}
\label{SSexpansioncorrection2}
\end{lemma}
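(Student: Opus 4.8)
The plan is to start from the definition of the modified Mellin transform in the case $a_2=c\in(0,1)$, namely \eqref{SSmeldef4}, which gives
\[
\M{\frac{\H{\ve m}{x}}{c-x}}{n}=\frac{1}{c^n}\int_0^1\frac{(x^n-c^n)\,\H{\ve m}{x}}{c-x}\,dx ,
\]
and to split off the constant part of $\H{\ve m}{x}$ by writing $\H{\ve m}{x}=\bigl(\H{\ve m}{x}-\H{\ve m}{c}\bigr)+\H{\ve m}{c}$. The pure $\H{\ve m}{c}$ contribution is elementary: expanding $\frac{x^n-c^n}{c-x}$ as a geometric sum yields $\frac{1}{c^n}\int_0^1\frac{x^n-c^n}{c-x}dx=-\sum_{j=1}^n\frac{(1/c)^j}{j}=-\S{1}{\frac1c;n}$, so this contribution is exactly $-\S{1}{\frac1c;n}\,\H{\ve m}{c}$ (the object written $S_{1,\{1/c\}}(n)\,\H{\ve m}{c}$ in the statement). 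For the remaining integral I would separate the $x^n$ and $-c^n$ pieces: the $x^n$ piece is precisely the leading $n$-dependent integral $\frac{1}{c^n}\int_0^1\frac{x^n(\H{\ve m}{x}-\H{\ve m}{c})}{c-x}dx$ asserted in the lemma, while the $-c^n$ piece leaves the $n$-independent constant $-C$ with $C:=\int_0^1\frac{\H{\ve m}{x}-\H{\ve m}{c}}{c-x}dx$. All three integrals are well defined because the numerator vanishes to first order at $x=c$ and the positive indices exceed $1$, so by the finiteness criteria of Section~\ref{Sec:BasicProp} there is no singularity on $[0,1]$.

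The heart of the proof is evaluating $C$, and here I would split the integral at the pole, $C=\int_0^c+\int_c^1=:I_<+I_>$ (each convergent, since the integrand is continuous at $x=c$). On $[0,c]$ I integrate by parts with $u=\H{\ve m}{x}-\H{\ve m}{c}$ and $v=\H{c}{x}$; the boundary terms vanish ($v(0)=0$, and at $x=c$ one has $u\sim u'(c)(x-c)$ against the logarithmic $v$, so $u\,v\to0$), leaving $I_<=-\int_0^c f_{m_1}(x)\,\H{m_2,\dots,m_k}{x}\,\H{c}{x}\,dx$. Linearising the product $\H{m_2,\dots,m_k}{x}\,\H{c}{x}$ by the shuffle relation \eqref{SShpro} and integrating each resulting word against $f_{m_1}$ turns this into $-\sum_{j=1}^k\H{m_1,\dots,m_j,c,m_{j+1},\dots,m_k}{c}$, i.e.\ minus the first family of constants (all shuffles of $c$ into $\ve m$ except the leading insertion, each finite since its first index $m_1\neq c$). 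On $[c,1]$ I substitute $x=c+t$, so that $c-x=-t$ and $t$ ranges over $(0,1-c)$, and apply the shift identity of Lemma~\ref{SStransformplusa1} with $b=c$ to expand $\H{\ve m}{c+t}-\H{\ve m}{c}=\sum_{j=1}^k\H{m_1-c,\dots,m_j-c}{t}\,\H{m_{j+1},\dots,m_k}{c}$; its hypotheses hold because every $m_i\in\R\setminus(-1,1]$. Integrating $\tfrac1t\,\H{m_1-c,\dots,m_j-c}{t}$ over $(0,1-c)$ then produces $\H{0,m_1-c,\dots,m_j-c}{1-c}$, whence $I_>=-\sum_{j=1}^k\H{m_{j+1},\dots,m_k}{c}\,\H{0,m_1-c,\dots,m_j-c}{1-c}$, which is minus the second family. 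Adding, $-C$ equals the full constant part of the asserted formula.

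The main obstacle I anticipate is not the algebra but the analytic bookkeeping at the interior pole $x=c\in(0,1)$: one must justify that $I_<$ and $I_>$ converge separately, that every intermediate integration-by-parts boundary term at $x=c$ genuinely vanishes rather than contributing a divergence, and that each generalized polylogarithm produced (both the $\H{\cdots}{c}$ with an inserted $c$ and the $\H{0,\dots}{1-c}$ with shifted indices) is finite. This is exactly where the restriction $m_i\in\R\setminus(-1,1]$ is used: it guarantees $c<\min_{m_i>0}m_i$ and $m_1\neq c$, so that the relevant cases of the finiteness criteria of Section~\ref{Sec:BasicProp} apply, and it also ensures the side conditions of Lemma~\ref{SStransformplusa1} are met. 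Once these points are secured, the two families fall out mechanically from the shuffle relation \eqref{SShpro} and the shift lemma, and a routine induction on $k$ (or, equivalently, the direct peeling carried out above) completes the identification with the right-hand side.
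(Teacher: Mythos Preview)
Your proposal is correct and follows essentially the same route as the paper: the same add-and-subtract of $\H{\ve m}{c}$, the same identification of the $\S{1}{1/c;n}$ piece, the same split of the constant $C=B$ at $x=c$, and the same use of Lemma~\ref{SStransformplusa1} on $[c,1]$ after the shift $x\mapsto t+c$. The only cosmetic difference is that for the piece on $[0,c]$ you spell out the integration by parts and then linearize $\H{m_2,\dots,m_k}{x}\,\H{c}{x}$ by the shuffle relation, whereas the paper states the result $B_1=-\sum_{j=1}^k\H{m_1,\dots,m_j,c,m_{j+1},\dots,m_k}{c}$ directly (it follows equivalently from removing the leading $c$ in $\H{c,\ve m}{\epsilon}$ via the shuffle and passing to the limit $\epsilon\to c^-$).
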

\begin{proof}We have
\begin{eqnarray*}
\M{\frac{\H{m_1,m_2,\ldots,m_k}x}{c-x}}{n}&=&\int_0^1{\frac{((\frac{x}{c})^n-1)\H{m_1,m_2,\ldots,m_k}x}{c-x}dx}=\\
			 &=&\int_0^1{\frac{(\frac{x}{c})^n(\H{m_1,m_2,\ldots,m_k}x-\H{m_1,m_2,\ldots,m_k}c)}{c-x}dx}\\
			 &&+\int_0^1{\frac{(\frac{x}{c})^n\H{m_1,m_2,\ldots,m_k}c-\H{m_1,m_2,\ldots,m_k}x}{c-x}dx}\\
			 &=&\int_0^1{\frac{(\frac{x}{c})^n(\H{m_1,m_2,\ldots,m_k}x-\H{m_1,m_2,\ldots,m_k}c)}{c-x}dx}\\	 &&+\underbrace{\H{m_1,m_2,\ldots,m_k}c\int_0^1{\frac{(\frac{x}{c})^n-1}{c-x}dx}}_{A:=}\\	 &&-\underbrace{\int_0^1{\frac{\H{m_1,m_2,\ldots,m_k}x-\H{m_1,m_2,\ldots,m_k}c}{c-x}dx}}_{B:=}.
\end{eqnarray*}
with
\begin{eqnarray*}
A&=&- \S{1}{\frac{1}{c};n}\H{m_1,m_2,\ldots,m_k}c\\
B&=&\underbrace{\int_0^c{\frac{\H{m_1,m_2,\ldots,m_k}x-\H{m_1,m_2,\ldots,m_k}c}{c-x}dx}}_{B_1:=}\\	 &&\hspace{2.5cm}+\underbrace{\int_c^1{\frac{\H{m_1,m_2,\ldots,m_k}x-\H{m_1,m_2,\ldots,m_k}c}{c-x}dx}}_{B_2:=}\\
\end{eqnarray*}
where
\begin{eqnarray*}
B_1&=&-\H{m_1,c,m_2,\ldots,m_k}c-\cdots-\H{m_1,m_2,\ldots,m_k,c}c\\
B_2&=&\int_0^{1-c}{\frac{\H{m_1,m_2,\ldots,m_k}{x+c}-\H{m_1,m_2,\ldots,m_k}c}{-x}dx}\\
			&=&-\int_0^{1-c}{\frac{\H{m_2,\ldots,m_k}c\H{m_1-c}{x} +\cdots + \H{m_1-c,\ldots,m_k-c}{x}}{x}dx}\\
			&=&-\H{m_2,\ldots,m_k}c\H{0,m_1-c}{1-c} - \H{m_3,\ldots,m_k}c\H{0,m_1-c,m_2-c}{1-c}\\
			&&-\cdots- \H{m_k}c\H{0,m_1-c,\ldots,m_{k-1}-c}{1-c} - \H{0,m_1-c,\ldots,m_k-c}{1-c};
\end{eqnarray*}
thus the lemma follows.
\end{proof}

\noindent A typical example for this extension is, e.g.,
\begin{align*}
\S{1,1}{3,\tfrac{1}{3};n}\sim&\frac{1331}{240 n^5}-\frac{63}{32 n^4}+\frac{23}{24 n^3}-\frac{5}{8 n^2}+\frac{1}{2 n}+\textnormal{S}_1\left(\tfrac{1}{3};\infty \right) \textnormal{S}_1(3;n)\\
&-\frac{1}{2} \textnormal{S}_1\left(\tfrac{1}{3};\infty \right){}^2-
\textnormal{S}_2\left(\tfrac{1}{3};\infty \right),
\end{align*}
with $\text{S}_1(\tfrac{1}{m},\infty) = \ln(m) - \ln(m-1), m \in \mathbb{N}\backslash
\{1\}$.

More generally, this lemma enables one to deal also with $S$-sums which are outside of the $\bar{S}$-sum case. To be more precise, our strategy works if the given $S$-sum $\S{a_1,\ldots,a_k}{b_1,\ldots,b_k;n}$ can be written in the form~\eqref{SSasyalg1} where $\M{\frac{\H{m_1,m_2,\ldots,m_l}x}{c+sx}}{n}$ can be expanded directly by Lemma~\ref{Lemma:ExtExpansion}. In addition we need that $T$ consists of $S$-sums which are less complicated than $\S{a_1,\ldots,a_k}{b_1,\ldots,b_k;n}$ and which can be handled by iterative application of our expansion strategy (and using Lemma~\ref{Lemma:ExtExpansion} if needed).

\noindent
E.g., our expansion technique produces
\begin{align*}
\textnormal{S}&_{1,1,1}\left(-\tfrac{1}{2},2,\tfrac{1}{4};n \right)
\sim
\textnormal{S}_{1,1,1}\left(-\tfrac{1}{2},2,\tfrac{1}{4};\infty \right)
+\left(-\tfrac{1}{27 n^3}+\tfrac{1}{9 n^2}-\tfrac{1}{6 n}\right)
\left(-\tfrac{1}{2}\right)^n \textnormal{S}_2\left(\tfrac{1}{4};\infty \right)\\
   &+\left(-\frac{1}{2}\right)^n
\left[\left(\frac{2}{27 n^3}-\frac{2}{9
n^2}+\frac{1}{3 n}\right) \log ^2(2)+\left(-\frac{2}{27 n^3}+\frac{2}{9 n^2}
-\frac{1}{3 n}\right) \log (3)\right] \textnormal{S}_1(2;n)\\
   &+\frac{\left(-\frac{1}{4}\right)^n}{15 n^3}+\left(-\frac{2}{27 n^3}
+\frac{2}{9 n^2}-\frac{1}{3 n}\right) \left(-\frac{1}{2}\right)^n \log ^3(2)\\
   &+\left[(-1)^n \left(\frac{1}{3 n^2}-\frac{2}{9
n^3}\right)+\left(-\frac{1}{2}\right)^n \left(\frac{1}{27 n^3}-\frac{1}{9 n^2}\right)
+\frac{(-1)^n \left(\frac{1}{2}\right)^n}{6 n}\right] \log ^2(2)\\
   &+\left(\frac{2}{27 n^3}-\frac{2}{9 n^2}+\frac{1}{3 n}\right) \left(-\frac{1}{2}\right)^n \log (2) \log (3)+(-1)^n \left(\frac{2}{9 n^3}-\frac{1}{3 n^2}\right) \log (3).
\end{align*}

\section{Application from Quantum Chromodynamics}\label{Sec:QCDExample}

\vspace*{1mm}
\noindent
In the following we concentrate on the discussion of the emergence of generalized
harmonic sums in calculations of massless and massive Wilson coefficients in
deep-inelastic scattering in QCD.
Up to the 2--loop level neither for the anomalous
dimensions nor for the Wilson coefficients generalized harmonic sums occur,
cf.~\cite{Moch:1999eb,Blumlein:2005im,Blumlein:2006rr,Blumlein:2007dj}.
They are also absent in case of the 3--loop anomalous dimensions
\cite{Moch:2004pa,Vogt:2004mw}. In intermediate results of the calculation of the
massless 3--loop Wilson coefficients they emerge, e.g. for a graph called
${\rm LA}_{27}\text{box}(N)$ in  Ref.~\cite{Vermaseren:2005qc}. The generalized
sums~\eqref{Equ:SSumDef} have numerator--weights over the alphabet
$x_i\in\{1,-1,2,-2\}$ up to weight $w = 6$. Examples are:
\begin{description}
\item[]
$
{\displaystyle
\text{S}_1(2;n),
\text{S}_{1,1,1,1,2}(2,1,1,1,1;n),
\text{S}_{2,1,3}(2,1,-1;n),
\text{S}_{3,2,1}(-2,-1,1;n),~\text{etc.}
}
$
\end{description}
All of these sums contain besides the numerator weights $i_k \in \{-1,1\}$ only
either
the weight $2$ or $-2$ once. For these cases one always may represent the
corresponding sums such that
\begin{eqnarray}
\text{S}_{n_1,...n_m}(\pm 2,i_1,...,i_{m-1};n) = \int_0^1 dx \frac{(\pm 2x)^n
-1}{\pm 2x -1} \sum_j c_j H_{\vec{a_j}}(x),
\noindent
\end{eqnarray}
where $H_{\vec{a_j}}(x)$ are usual harmonic polylogarithms.
Since the corresponding expressions are weighted by a factor $2^{-n}$,
the individual expressions are transforms by the non-singular kernels $((\pm 1)^n
x^n-1/2^n)/(\pm 2x - 1)$. It even turns out that all generalized sums cancel in
the final result, \cite{Vermaseren:2005qc}. In this
way, the massless Wilson coefficients up to 3--loop order do not depend on
generalized harmonic sums.

In Ref.~\cite{Ablinger:2010ty} the $O(n_f T_F^2 C_{A,F})$ contributions
to the 3-loop massive Wilson coefficient contributing to the structure function
$F_2(x,Q^2)$ at high virtualities $Q^2 \gg m^2$ were calculated. Here $m$ denotes
the heavy quark mass. In this calculation most of the generated nested sums
were calculated directly using the package {\tt Sigma}~
\cite{Schneider:2005,Schneider:2005b,Schneider:2005c,Schneider:2007,Schneider:2008}.
In this way individual Feynman diagrams were split into many terms being summed
individually. Here the contributing alphabet of numerator weights was
$x_i\in\{1,-1,\tfrac{1}{2}, -\tfrac{1}{2}, 2, -2\}$. Examples of generalized sums are~:
\begin{description}
\item[]
$
{\displaystyle
\text{S}_1\left(\tfrac{1}{2};n\right),
\text{S}_2(-2;n),
\text{S}_{2,1}(-1, 2;n),
\text{S}_{3,1}\left(-2,-\tfrac{1}{2};n\right),
\text{S}_{1,1,1,2}\left(-1, \tfrac{1}{2}, 2,-1;n\right),
\text{S}_{2,3}\left(-2,-\tfrac{1}{2};n\right),}\\
{\displaystyle
\hspace*{-8mm}
\text{S}_{2,2,1}\left(-1,-\tfrac{1}{2}, 2;n\right),
\text{S}_{1,1,1,1,1}\left(2,\tfrac{1}{2},1,1,1;n \right), \text{etc.}
}
$
\end{description}
The highest contributing weight was {\it w = 5}. In the final result all generalized
harmonic sums canceled for the $O(n_f T_F^2 C_{A,F})$--terms, even on a
diagram-by-diagram basis.

This brought up the idea to combine all sums contributing to each diagram
leading to C.~Schneider's package {\tt SumProduction} \cite{Schneider:2013}. Through
this a much smaller amount of nested sums, however, with  very large  summands, have
to be calculated, \cite{Blumlein:2012vq,Blumlein:2012hg}. We have
applied this method
for the gluonic $O(n_f T_F^2 C_{A,F})$ contributions in \cite{Ablinger:2012ph}.

In case of the massive 3--loop operator matrix elements generalized harmonic
sums do not vanish always. For
a series of ladder diagrams with six
massive
propagators they contribute for scalar integrals up to weight {\it w = 4},
\cite{Ablinger:2012qm}. This is likewise also the case for a series of Benz-type
diagrams, \cite{Ablinger:2012sm}. Examples are:
\begin{description}
\item[]
\small
$
{\displaystyle
\text{S}_1\left(2;n\right),
\text{S}_{1,2}\left(\tfrac{1}{2},1;n\right),
\text{S}_{1,1,1}\left(\tfrac{1}{2},1,1;n\right),
\text{S}_{1,1,2}\left(\tfrac{1}{2},1,1;n\right),
\text{S}_{1,1,2}\left(2,\tfrac{1}{2},1;n\right)},
\text{S}_{1,1,1,1}\left(2,\tfrac{1}{2},1,1;n\right), \text{etc.}
$
\normalsize
\end{description}
Representing results for individual Feynman diagrams containing also $S$--sums
sometimes involves divergent  $S$--sum expressions in individual terms as $n
\rightarrow \infty$. On the other hand, combinations of different terms
contributing
to the corresponding diagram are always convergent, as also the whole diagram,
see
\cite{Ablinger:2012qm,Ablinger:2012ej,ABSW13}. We therefore limit the
consideration to
convergent combinations of terms containing $S$--sums here. Let us consider an
example
of Ref.~\cite{Ablinger:2012qm}. Integral $\hat{I}_4(n)$, Eq.~(3.18), contains the
following terms growing like  $2^n$ for $n \rightarrow \infty$ ~:
\begin{align*}
T_1(n) &= 2^n \frac{n^3 - 4 n^2 -25 n -28)}{(n+1)^2(n+2)(n+3)^2} \left[
2 S_{1,2}\left(\tfrac{1}{2},1,n\right)
+ S_{1,2}\left(\tfrac{1}{2},1,1;n\right) - 2 \zeta_3 \right]\\
T_2(n) &= -\frac{1}{2} \frac{n+5}{(n+1)(n+3)} \left[
S_{1,1,1,1}\left(2,\tfrac{1}{2},1,1;n\right) + 2
S_{1,1,2}\left(2,\tfrac{1}{2},1;n\right)
- 2 \zeta_3 S_1(2;n)\right].
\end{align*}
One may represent these sums in terms of Mellin transforms,
cf.~Eq.~(3.36)--(3.40),
\cite{Ablinger:2012qm}. There it turns out that each of the $S$--sums contributing to $T_1(n)$ can be expressed in the form
\begin{equation*}
r_{1,i} \zeta_3 + \frac{1}{2^n} F_{1,i}(n),
\end{equation*}
where $r_{1,i}\in\set Q$ and $F_{1,i}(n)$ are convergent integrals for $n \rightarrow \infty$.
Replacing the $S$-sums with these alternative representations in $T_1(n)$ shows that the $\zeta_3$ part vanishes. In addition, rewriting the introduced integral representations back to $S$-sums produces finite $S$-sums. Similarly, each of the $S$-sums in $T_2(n)$ may be expressed in the form
\begin{equation*}
r_{2,i} \zeta_3 S_1(2;n) + F_{2,i}(n),
\end{equation*}
where $r_{2,i}\in\set Q$ and $F_{2,i}(n)$ are convergent integrals. Again, replacing the $S$-sums with these representations in $T_2$ cancels the $S_1(2;n)\zeta_3$ part; in addition, the introduced integrals can be transformed to an expression in terms of absolutely convergent $S$-sums. The remainder
terms of $\hat{I}_4(n)$ are even nested harmonic sums. Due to this only absolutely convergent
$S$-sum contribute to
$\hat{I}_4(n)$ via $T_1(n)$ and $T_1(n)$. A decomposition of this or a similar kind is
expected
in case of physical expressions in general. As the above example shows a
suitable
decomposition of the contributing $S$-sums may be necessary.

In Refs.~\cite{Ablinger:2012ej,ABSW13} we also considered graphs with an internal
4-leg local operator insertion and five massive lines. Here generalized harmonic sums
of up to weight {\it w = 5} occur. Examples are~:
\begin{description}
\item[]
$
{\displaystyle
\text{S}_{1,4}\left(\tfrac{1}{2},-2;n\right),
\text{S}_{1,2,2}\left(\tfrac{1}{2},-2,-1;n\right),
\text{S}_{1,2,2}\left(\tfrac{1}{2},-2,1;n\right),
\text{S}_{1,2,2}\left(\tfrac{1}{2},-2,-1;n\right), \text{etc.}
}
$
\end{description}
For this diagram also generalized harmonic sums nested with binomial-- and
inverse--binomial expressions contribute.

Generalized harmonic sums are a necessary asset both in higher loop and multi-leg
Feynman diagram calculations. They are connected to the generalized harmonic
polylogarithms via a Mellin transform. We have worked out their various relations and
considered special new numbers associated to these quantities in the limit $n
\rightarrow \infty$ of the generalized harmonic sums and special arguments of the
generalized harmonic polylogarithms, generalizing the multiple zeta values.
The relations discussed generate basis representations in the respective algebras.
Note also that special classes of higher transcendental functions are contained in the
generalized harmonic sums for $n \rightarrow \infty$.

In cases that generalized harmonic sums occur in physical results an efficient way
of implementation consists in working in Mellin space, as being the case for the
nested harmonic sums (see also the comments on page~\pageref{RemarksOnQCDAndXSpace}). Having determined the singularity structure of the problem,
and knowing the shift relations $n \rightarrow n+1$ of the corresponding quantities
in analytic form, the knowledge of the asymptotic expansion at $|n| \gg 1,~~n
\in \mathbb{C}$ is sufficient for the analytic continuation of anomalous dimensions
and Wilson coefficients, similar to the case of the nested harmonic sums.
The asymptotic representation is requested to be free of exponential divergences.
Logarithmic terms $\sim \ln^k(n)$ with $k$ related to the number of loops may
contribute, however.
The structure
of the singularities may become more general, however, and has to be determined for
the respective problem under investigation. A single numerical contour integral around
the singularities yields the inverse Mellin-transform from $n$-- to $x$--space.
cf.~\cite{Blumlein:2012bf}.
The techniques outlined above also allow to perform the inverse Mellin-transform
analytically for applications in $x$-space directly. For this purpose the corresponding
basis-functions still need to be represented numerically at sufficient precision,
unlike the case in $n$--space.

\section{Appendix: Available commands of the package {\tt HarmonicSums.m}}\label{App:HarmonicSums}

\vspace*{1mm}
\noindent
In the following we list some of the main commands of the code {\tt HarmonicSums}, a package implemented in Mathematica, and illustrate
their use by examples, with emphasis on the case of $S$--sums.
The {\tt HarmonicSums.m} package and the conditions of
use are found under {\tt  www.risc.jku.at/research/combinat/software/HarmonicSums/}.

\medskip

\noindent
{\tt S[$a_1$,...,$a_m$,$n$]:} defines the harmonic sum $\S{a_1,\ldots,a_m}{n} $ of depth $m$ with upper summation limit $n$ and $a_i\in \N$, while {\tt CS[$\{\{a_1,b_1,c_1\}$,...,$\{a_m,b_m,c_m\}\}$,$\{s_1$,...,$s_m\}$,$n$]} defines cyclotomic harmonic sums $S_{(a_1,b_1,c_1),\dots(a_m,b_m,c_m)}(s_1,\dots,s_m;n)$ with $a_i,c_i\in\set N$, $b_i\in\set N\cup\{0\}$ and $s_i\in\{-1,1\}$; see~\cite[(2.1)]{Ablinger:2011te}.

\medskip

\noindent
{\tt S[$a_1$,...,$a_m$,\{$x_1$,...,$x_m$\},$n$]:} defines the $S$-sum
$\S{a_1,\ldots,a_m}{x_1,\ldots,x_m;n}$
of depth $m$ with upper summation limit $n$ and\footnote{Obviously, the $x_i$ must be chosen from a subfield of $\set R$ which is computable, in particular which can be treated within the computer algebra system Mathematica.} $a_i\in \N,$ $x_i\in \R^*$.
\medskip

\noindent
{\tt Z[$a_1$,...,$a_m$,\{$x_1$,...,$x_m$\},$n$]:} defines the Z-sum
$\textnormal{Z}_{a_1,\ldots,a_m}\left(x_1,\ldots,x_m;n\right) $ of depth $m$
with upper summation
limit $n$ and $a_i\in \N,$  $x_i\in \R^*$; see~\eqref{Equ:ZSum}.

\medskip

\noindent
{\tt H[$m_1$,...,$m_n$,$x$]:}
defines the harmonic polylogarithm $\H{m_1,\ldots,m_n}x$ ($m_i\in\{-1,0,1\}$)
or a generalized polylogarithm $\H{m_1,\ldots,m_n}x$ ($m_i\in\R$), while {\tt H[$\{a_1,b_1\},...,\{a_n,b_n\},x$]}
defines a cyclotomic harmonic polylogarithm~\cite{Ablinger:2011te} with $a_i,b_i\in \N$.

\medskip

\noindent
{\tt SToZ[S[$a_1$,...,$a_m$,\{$x_1$,...,$x_m$\},$n$]]:} converts the $S$-sum
$\S{a_1,\ldots,a_m}{x_1,\ldots,x_m;n}$ to Z-sums.

\smallskip

\noindent
{Example:}
\begin{eqnarray}
&& {\tt SToZ[S[1, 1, \{1/2, -1/4\}, n]} =
Z_{2}\left(-\frac{1}{8},n\right)+Z_{1,1}\left(\frac{1}{2},-\frac{1}{4},n\right)
\nonumber
\end{eqnarray}

\medskip

\noindent
{\tt ZToS[Z[$a_1$,...,$a_m$,\{$x_1$,...,$x_m$\},$n$]]:} converts the Z-sum
$\textnormal{Z}_{a_1,\ldots,a_m}\left(x_1,\ldots,x_m;n\right)$ to $S$-sums.

\smallskip

\noindent
{Example:}
\begin{eqnarray}
&& {\tt ZToS[Z[1,1, \{1/2, -1/4\}, n]} =
-S_{2}\left(-\frac{1}{8},n\right)+S_{1,1}\left(\frac{1}{2},-\frac{1}{4},n\right)
\nonumber
\end{eqnarray}

\medskip

\noindent
{\tt ReduceToBasis[$expr$]:} reduces occurring harmonic sums, $S$-sums or cyclotomic
sums to basis sums
which are (in the setting of their quasi-shuffle algebra) algebraic independent over each other. In general, tables (available at the \texttt{HarmonicSums} homepage)  are exploited for this reduction. If sums are not covered by the given tables, one can set the option
{\tt Dynamic $\rightarrow$ Automatic}: then for such sums (outside of the tables) the reduction is calculated online. Using the setting {\tt Dynamic $\rightarrow$ True} the reduction for all sums is calculated online.

\smallskip

\noindent
{Example:}
\begin{eqnarray}
&& {\tt ReduceToBasis[S[1, 2, 2, n],n,Dynamic \rightarrow True]} = \nonumber\\
&& S_2(n)
   \left(S_3(n)-S_{2,1}(n)\right)-S_{2,3}(n)-S_{4,1}(n)+S_{2,2,1}(n)+\frac{1}
   {2} S_1(n) \left(S_2(n){}^2+S_4(n)\right)+S_5(n)
\nonumber
\end{eqnarray}

\medskip

\noindent
{\tt ReduceToHBasis[$expr$]:}
reduces occurring harmonic polylogarithms, generalized polylogarithms or cyclotomic polylogarithms to basis
polylogarithms which are algebraic independent over each other by using given tables. If the option
{\tt Dynamic $\rightarrow$ True} is set, the reduction is calculated online; this will
help if certain polylogarithms
are not stored in the table.

\smallskip

\noindent
{Example:}
\begin{eqnarray}
&& {\tt ReduceToHBasis[H[1, 0, -1, 1, x]]} =
\nonumber\\ &&
H_{-1,1}(x) \left(H_0(x) H_1(x)-H_{0,1}(x)\right)+H_{-1,1}(x)
   H_{0,1}(x)-H_1(x) \left(H_{-1}(x)
   H_{0,1}(x)-H_{0,-1,1}(x) \right.
\nonumber\\ &&
\left.-H_{0,1,-1}(x)\right)-H_1(x) \left(H_0(x)
   H_{-1,1}(x)-H_{-1}(x) H_{0,1}(x)+H_{0,1,-1}(x)\right)-2
   H_{0,-1,1,1}(x)
\nonumber\\ &&
-H_{0,1,-1,1}(x)
\nonumber
\end{eqnarray}

\medskip

\noindent
{\tt TransformToSSums[$expr$]:}
returns an expression which transforms indefinite nested sums to harmonic sums,
$S$-sums or cyclotomic sums
whenever possible.

\smallskip

\noindent
Example:
\begin{eqnarray}
&&{\tt expr} =
(n-1)^2 \sum_{i=4}^n \frac{(-1/3)^i}{(i-3)^2}
   + 4 \zeta_3 (n-2)^2 (n-1)^2 \sum_{i=1}^n \frac{(-1)^i}{i-2}
   + n^2(n+1)^2 \sum_{i=2}^n \frac{(-1)^i}{i-1}
\nonumber\\ &&
   + \sum_{i=1}^n \frac{1}{2+i} \sum_{j=1}^i \frac{(-3)^j}{(1+3j)^2}
   -\frac{3}{2} (n-2)^2 (1+n)^3 \zeta_3 \sum_{i=4}^n \frac{1}{i-1} \sum_{j=4}^i
   \frac{(-1)^j}{(j-3)^2} + \sum_{i=1}^n \frac{(-1)^i}{i^2} S_1(i)
\nonumber\\ &&
   + \sum_{i=1}^n \frac{(-3)^i}{(3i+1)^2} S_1(i)
\nonumber\\ &&
\nonumber\\ &&
{\tt TransformToSSums[expr]}= \nonumber\\
&&
\frac{1}{2} S_{(1,0,1),(3,1,2)}(1,-3;n)
+S_{(3,1,2),(1,0,1)}(-3,1;n)+2 (n-2)^2 (n+1)^3 S_{-3}(\infty ) \times
\nonumber\\ &&
\Biggl[-S_{1,-2}(n)-\frac{(-1)^n}{n^3}-\frac{(-1)^n}{n^2}+S_{-3}(n)
+\frac{S_{-2}(n)}{n}+S_{-2}(n)-2 S_{-1}(n)+\frac{(-1)^{n-1}}{n-1}
+\frac{2 (-1)^n}{n}
\nonumber\\ &&
-\frac{(-1)^{n-1}}{(n-1)^2}-1\Biggr]
+S_{-2,1}(n)+(n-2)^2
\Biggl[-\frac{1}{27} S_2\left(-\frac{1}{3};n\right)+\frac{(-1)^n
3^{-n-3}}{n^2}+\frac{(-1)^{n-2} 3^{-n-1}}{(n-2)^2}
\nonumber\\ &&
+\frac{(-1)^{n-1}
3^{-n-2}}{(n-1)^2}\Biggr]+n^2 (n+1)^3 \left(\frac{(-1)^n}{n^3}-S_{-3}(n)\right)+4
(n-2)^2 (n-1)^2
   \Biggl[S_{-1}(n)-\frac{(-1)^{n-1}}{n-1}
\nonumber\\ &&
-\frac{(-1)^n}{n}\Biggr] S_3(\infty )
\nonumber
\end{eqnarray}
Here $S_{(a_1,b_1,c_1),...,(a_m,b_m,c_m)}(x_1,...,x_m;n)$ are generalized
cyclotomic sums, cf.~Ref.~\cite{Ablinger:2011te}.

\medskip

\noindent
{\tt ReduceSums[$expr$]:}
returns an expression which transforms indefinite nested sums to harmonic sums,
$S$-sums
or cyclotomic sums whenever possible and reduces occurring harmonic sums, $S$-sum or
cyclotomic
sums to basis sums which are algebraic independent over each other by using given tables. If the
option {\tt Dynamic $\rightarrow$ True} is set, the reduction is calculated online;
this will
help if certain sums are not stored in the table.

\smallskip

\noindent
Example:
\begin{eqnarray}
&& {\tt ReduceSums[expr]}=
\nonumber\\ &&
\frac{1}{2} \Biggl\{
S_{(1,0,1),(3,1,2)}(1,-3;n)+2 S_{(3,1,2),(1,0,1)}(-3,1;n) -3 (n-2)^2 (n+1)^3
S_3(\infty) \Biggl[S_{-2,1}(n)
\nonumber\\ &&
-\frac{(-1)^n}{n^3}-\frac{(-1)^n}{n^2}-2 S_{-1}(n)+S_{-2}(n)
   \left(-S_1(n)+\frac{1}{n}+1\right)-\frac{(-1)^n}{n-1}
+\frac{2 (-1)^n}{n}
+\frac{(-1)^n}{(n-1)^2}
\nonumber\\ &&
-1\Biggr]
+\frac{3^{-n-3} \left(2
(-1)^n \left(7 n^4-12
   n^3+10 n^2-12 n+4\right)-2\ 3^n n^2 \left(n^2-3 n+2\right)^2
   S_2\left(-\frac{1}{3};n\right)\right)}{(n-1)^2 n^2}
\nonumber\\ &&
+2 n^2 (n+1)^3
   \left(\frac{(-1)^n}{n^3}-S_{-3}(n)\right)+\frac{8 (n-2)^2 (n-1) \left((n-1) n
S_{-1}(n)+(-1)^n\right)
   S_3(\infty )}{n}
\nonumber\\ &&
+2 S_{-2,1}(n) \Biggr\}
\nonumber
\end{eqnarray}

\medskip

\noindent
{\tt LinearHExpand[$expr$]:}
expands all the products of harmonic polylogarithms, generalized polylogarithms and cyclotomic
polylogarithms at same arguments.

\smallskip

\noindent
{Example:}
\begin{eqnarray}
&& {\tt LinearHExpand[H[0, 1, x]*H[-1, 0, -1/2, x]]} = \nonumber\\
&& H_{-1,0,-\frac{1}{2},0,1}(x)
+2 H_{-1,0,0,-\frac{1}{2},1}(x)
+2 H_{-1,0,0,1,-\frac{1}{2}}(x)
  +H_{-1,0,1,0,-\frac{1}{2}}(x)
  +H_{0,-1,0,-\frac{1}{2},1}(x)
\nonumber\\ &&
   +H_{0,-1,0,1,-\frac{1}{2}}(x)+H_{0,-1,1,0,-\frac{1}{2}}(x)
   +H_{0,1,-1,0,-\frac{1}{2}}(x)
\nonumber
\end{eqnarray}

\medskip

\noindent
{\tt LinearExpand[$expr$]:}
expands all the products of harmonic sums, $S$-sums and cyclotomic sums at same arguments.

\smallskip

\noindent
{Example:}
\begin{eqnarray}
&& {\tt LinearExpand[S[1, 2, n]*S[-1, 2, {1/2, 1/3}, n]]} = \nonumber\\
&& S_{0,4}\left(\frac{1}{2},\frac{1}{3};n\right)
-S_{-1,1,4}\left(\frac{1}{2},1,\frac{1}{3};n\right)
-S_{-1,3,2}\left(\frac{1}{2},\frac{1}{3},1;n\right)
-S_{0,2,2}\left(\frac{1}{2},\frac{1}{3},1;n\right)
\nonumber\\ &&
-S_{0,2,2}\left(\frac{1}{2},1,\frac{1}{3};n\right)
-S_{1,-1,4}\left(1,\frac{1}{2},\frac{1}{3};n\right)
-S_{1,1,2}\left(1,\frac{1}{2},\frac{1}{3};n\right)
+S_{-1,1,2,2}\left(\frac{1}{2},1,\frac{1}{3},1;n\right)
\nonumber\\ &&
+S_{-1,1,2,2}\left(\frac{1}{2},1,1,\frac{1}{3};n\right)
+S_{-1,2,1,2}\left(\frac{1}{2},\frac{1}{3},1,1;n\right)
+S_{1,-1,2,2}\left(1,\frac{1}{2},\frac{1}{3},1;n\right)
\nonumber\\ &&
+S_{1,-1,2,2}\left(1,\frac{1}{2},1,\frac{1}{3};n\right)
+S_{1,2,-1,2}\left(1,1,\frac{1}{2},\frac{1}{3};n\right)
\nonumber
\end{eqnarray}

\medskip

\noindent
{\tt InvMellin[$expr,n,x$]:}
calculates the inverse Mellin-Transform of {\tt expr.}

\smallskip

\noindent
{Example:}
\begin{eqnarray}
&&{\tt ReduceConstants[InvMellin[S[2, 2, 1, \{1, 1, 1/2\}, n], n, x],
 ToKnownConstants \rightarrow True]} = \nonumber\\
&&
-\frac{2^{-n}
   S_{2,2}\left(\frac{1}{2},1;\infty \right)}{x-2}-\frac{2^{-n-1}
   \left(\zeta_2-\ln(2)^2\right) H_{0,2}(x)}{x-2}-\frac{\ln(2)
   2^{-n} H_{0,2,0}(x)}{x-2}
+\frac{\ln(2) H_{0,1,0}(x)}{x-1}
\nonumber\\ &&
+\frac{2^{-n} H_{0,2,0,2}(x)}{x-2}
+\frac{2^{-n} S_4\left(\frac{1}{2};\infty\right)}{x-2}
  +\frac{2^{-n} H_0(x)\left(\frac{\ln(2)^3}{6}
 -\frac{\ln(2) \zeta_2}{2}
  +\frac{7\zeta_3}{8}\right)}{x-2}
  +\frac{5 \zeta_3 H_0(x)}{8 (x-1)}
\nonumber\\ &&
  -\frac{H_0(x)\left(\frac{\ln(2)^3}{6}
  -\frac{\ln(2) \zeta_2}{2}
  +\frac{7 \zeta_3}{8}\right)}{x-1}
  -\frac{5\ 2^{-n-3} \zeta_3 H_0(x)}{x-2}
  +\frac{\ln(2) 2^{1-n}\left(\frac{\ln(2)^3}{6}
  -\frac{\ln(2) \zeta_2}{2}
  +\frac{7 \zeta_3}{8}\right)}{2-x}
\nonumber\\ &&
  +\frac{2^{-n-2}\left(\zeta_2 -\ln(2)^2\right)^2}{x-2}
  +\frac{2 \ln(2) \zeta_3}{x-1}
+ \delta(1-x) \Biggl[
   \ln(2) \left(\frac{1}{4}
   \left(\zeta_2-\ln(2)^2\right)^2
   -2 \left(
   -S_{1,3}\left(\frac{1}{2},1;\infty\right)
\right. \right. \nonumber\\ && \left. \left.
   +S_4\left(\frac{1}{2};\infty \right)
+\ln(2) \left(
   \frac{\ln(2)^3}{6}
   -\frac{\ln(2) \zeta_2}{2}
   +\frac{7\zeta_3}{8}\right)\right)\right)
   -S_{1,4}\left(\frac{1}{2},1;\infty\right)
   -S_{3,2}\left(\frac{1}{2},1;\infty\right)
\nonumber\\ &&
   +S_{1,2,2}\left(\frac{1}{2},1,1;\infty\right)
   +S_5\left(\frac{1}{2};\infty \right)\Biggr]
\nonumber
\end{eqnarray}
For the command \texttt{ReduceConstants} we refer to page~\pageref{Equ:ReduceConstants}. Note that some of the constants may be further identified by
\begin{eqnarray}
S_k\left(z, \infty\right) = \text{Li}_k\left(z\right), |z| \leq 1~.
\nonumber
\end{eqnarray}

\medskip

\noindent
{\tt Mellin[$expr,x,n$]:}
calculates the Mellin-Transform of {\tt expr.}

\smallskip

\noindent
{Example:}
\begin{eqnarray}
&& {\tt Mellin[H[2, 1, -3, x], x, n]~~/.~~S[1, \{-1/3\}, Infinity] \rightarrow -Log[4/3]} =
\nonumber\\
&&
-\frac{2^{n+1} S_{1,2}\left(\frac{1}{2},-\frac{2}{3};\infty\right)}{n+1}
+\frac{S_{1,2}\left(\frac{1}{2},-\frac{2}{3};\infty\right)}{n+1}
-\frac{2^{n+1} S_{2,1}\left(1,-\frac{1}{3};\infty\right)}{n+1}
+\frac{S_{2,1}\left(1,-\frac{1}{3};\infty\right)}{n+1}
\nonumber\\ &&
+\frac{2^{n+1}S_{1,1,1}\left(\frac{1}{2},2,-\frac{1}{3};\infty\right)}{n+1}
-\frac{S_{1,1,1}\left(\frac{1}{2},2,-\frac{1}{3};\infty\right)}{n+1}
-\frac{S_{1,1}\left(-3,-\frac{1}{3};n\right)}{(n+1)^2}
-\frac{S_3\left(-\frac{1}{3};\infty\right)}{n+1}
\nonumber\\ &&
-\frac{2^{n+1}S_{1,1,1}\left(\frac{1}{2},-3,-\frac{1}{3};n\right)}{n+1}
-\frac{2^{n+1}\log \left(\frac{4}{3}\right)S_{1,1}\left(\frac{1}{2},-3;n\right)}{n+1}
+\frac{2^{n+1} \log\left(\frac{4}{3}\right)S_{1,1}\left(\frac{1}{2},1;n\right)}{n+1}
\nonumber\\ &&
+\frac{2^{n+1}S_3\left(-\frac{1}{3};\infty\right)}{n+1}
+\frac{(-1)^n 3^{n+1}S_1\left(-\frac{1}{3};n\right)}{(n+1)^3}
-\frac{\log\left(\frac{4}{3}\right) S_1(-3;n)}{(n+1)^2}
+\frac{\log\left(\frac{4}{3}\right) S_1(n)}{(n+1)^2}
\nonumber\\ &&
-\frac{1}{(n+1)^4}+\frac{(-1)^n 3^{n+1} \log \left(\frac{4}{3}\right)}{(n+1)^3}
+\frac{\log\left(\frac{4}{3}\right)}{(n+1)^3}
\nonumber
\end{eqnarray}

\medskip

\noindent
{\tt DifferentiateSSum[$expr,n$]:}
differentiates {\tt expr} w.r.t.\ $n$.

\noindent
{Example:}
\begin{eqnarray}
&& {\tt  DifferentiateSSum[S[4, 1, \{1, -1/2\}, n], n]]}\nonumber\\ %
&& =
H_0\left(\frac{1}{2}\right)
   S_{4,1}\left(1,-\frac{1}{2};n\right)-S_{4,2}\left(1,-\frac{1}{2};n\right)-
   4 S_{5,1}\left(1,-\frac{1}{2};n\right)+H_{-1,0}\left(\frac{1}{2}\right)
   S_4(n) \nonumber\\
&&
   +H_{\frac{3}{2}}\left(\frac{1}{2}\right)%
   H_{0,0,0,1,0}(1)+H_{-1,0,0,0,-1,0}\left(\frac{1}{2}\right)%
\nonumber %
\end{eqnarray}
The constants may be further reduced to known multiple zeta values and/or cyclotomic
numbers applying the command  {\tt ReduceConstants[$expr$]} (see page~\pageref{Equ:ReduceConstants}).

\medskip

\noindent
{\tt RemoveLeading1[H[$m_1$,...,$m_n$,$x$]]:}
extracts leading ones of $\H{m_1,\ldots,m_n}x$.

\smallskip

\noindent
{Example:}
\begin{eqnarray}
&& {\tt RemoveLeading1[H[1, 1, -1, 0, 1/2, x]]} \nonumber\\
&& =
\frac{1}{2} H_1(x){}^2 H_{-1,0,\frac{1}{2}}(x)
-H_1(x) H_{-1,0,\frac{1}{2},1}(x)-H_1(x) H_{-1,0,1,\frac{1}{2}}(x)
-H_1(x) H_{-1,1,0,\frac{1}{2}}(x) \nonumber\\ &&
+H_{-1,0,\frac{1}{2},1,1}(x)+H_{-1,0,1,\frac{1}{2
   },1}(x)+H_{-1,0,1,1,\frac{1}{2}}(x)+H_{-1,1,0,\frac{1}{2},1}(x)+H_{-1,1,0,
   1,\frac{1}{2}}(x)
\nonumber\\ &&
+H_{-1,1,1,0,\frac{1}{2}}(x) \nonumber
\end{eqnarray}

\medskip

\noindent
{\tt RemoveTrailing0[H[$m_1$,...,$m_n$,$x$]]:}
extracts trailing zeroes of $\H{m_1,\ldots,m_n}x$.

\smallskip

\noindent
{Example:}
\begin{eqnarray}
&& {\tt RemoveTrailing0[H[1/2, -1/3, -1, 0, 0, x]]} \nonumber\\
&& =
\frac{1}{2} H_0(x){}^2 H_{\frac{1}{2},-\frac{1}{3},-1}(x)
-H_0(x) H_{0,\frac{1}{2},-\frac{1}{3},-1}(x)
-H_0(x) H_{\frac{1}{2},-\frac{1}{3},0,-1}(x)
-H_0(x) H_{\frac{1}{2},0,-\frac{1}{3},-1}(x)
\nonumber\\ &&
+H_{0,0,\frac{1}{2},-\frac{1}{3},-1}(x
   )+H_{0,\frac{1}{2},-\frac{1}{3},0,-1}(x)+H_{0,\frac{1}{2},0,-\frac{1}{3},-
   1}(x)
+H_{\frac{1}{2},-\frac{1}{3},0,0,-1}(x)+H_{\frac{1}{2},0,-\frac{1}{3}
   ,0,-1}(x)
   \nonumber\\ &&
+H_{\frac{1}{2},0,0,-\frac{1}{3},-1}(x)
\nonumber
\end{eqnarray}

\medskip

\noindent
{\tt RemoveLeadingIndex[H[$m_1$,...,$m_n$,$x$]]:} extracts the leading index
$m_1$ of $\H{m_1,\ldots,m_n}x$. \\
\noindent{\tt RemoveLeadingIndex[$expr,b$]} removes the
leading index $b$ of all polylogarithms in $expr$.

\smallskip

\noindent
{Examples~:}
\begin{eqnarray}
&& {\tt RemoveLeadingIndex[H[1/2, 1/2, -1, 1, 1/3, x]]} = \nonumber\\
&&
\frac{1}{2} H_{\frac{1}{2}}(x){}^2 H_{-1,1,\frac{1}{3}}(x)
-H_{\frac{1}{2}}(x)
   H_{-1,\frac{1}{2},1,\frac{1}{3}}(x)
-H_{\frac{1}{2}}(x)
   H_{-1,1,\frac{1}{3},\frac{1}{2}}(x)
-H_{\frac{1}{2}}(x)
   H_{-1,1,\frac{1}{2},\frac{1}{3}}(x)
\nonumber\\ &&
+H_{-1,\frac{1}{2},\frac{1}{2},1,\frac{
   1}{3}}(x)+H_{-1,\frac{1}{2},1,\frac{1}{3},\frac{1}{2}}(x)+H_{-1,\frac{1}{2
   },1,\frac{1}{2},\frac{1}{3}}(x)+H_{-1,1,\frac{1}{3},\frac{1}{2},\frac{1}{2
   }}(x)
+H_{-1,1,\frac{1}{2},\frac{1}{3},\frac{1}{2}}(x)
\nonumber\\ &&
+H_{-1,1,\frac{1}{2},
   \frac{1}{2},\frac{1}{3}}(x)
\nonumber\\ &&
\nonumber\\ &&
{\tt RemoveLeadingIndex[H[b, 1/2, -1, 1, -1/3, x], b]} = \nonumber\\ &&
H_{\frac{1}{2},-1,1,-\frac{1}{3}}(x)
   H_b(x)-H_{\frac{1}{2},-1,1,-\frac{1}{3},b}(x)-H_{\frac{1}{2},-1,1,b,
   -\frac{1}{3}}(x)-H_{\frac{1}{2},-1,b,1,-\frac{1}{3}}(x)-H_{\frac{1}{2},b,-1,1,
   -\frac{1}{3}}(x)
\nonumber
\end{eqnarray}

\medskip

\noindent
{\tt RemoveTrailingIndex[H[$m_1$,...,$m_n$,$x$]]:} extracts the trailing index $m_n$ of
$\H{m_1,\ldots,m_n}x$.\\
{\tt RemoveTrailingIndex[$expr,b$]} removes the trailing index
$b$ of all polylogarithms in $expr$.

\smallskip

\noindent
{Examples~:}
\begin{eqnarray}
&& {\tt RemoveTrailingIndex[H[-1, 1, 1/3, 1/2, 1/2, x]]} = \nonumber\\
&& \frac{1}{2} H_{\frac{1}{2}}(x){}^2 H_{-1,1,\frac{1}{3}}(x)
   -H_{\frac{1}{2}}(x)H_{-1,\frac{1}{2},1,\frac{1}{3}}(x)
   -H_{\frac{1}{2}}(x)H_{-1,1,\frac{1}{2},\frac{1}{3}}(x)
   -H_{\frac{1}{2}}(x)H_{\frac{1}{2},-1,1,\frac{1}{3}}(x)
\nonumber\\ &&
   +H_{-1,\frac{1}{2},\frac{1}{2},1,\frac{
   1}{3}}(x)+H_{-1,\frac{1}{2},1,\frac{1}{2},\frac{1}{3}}(x)+H_{-1,1,\frac{1}
   {2},\frac{1}{2},\frac{1}{3}}(x)+H_{\frac{1}{2},-1,\frac{1}{2},1,\frac{1}{3
   }}(x)+H_{\frac{1}{2},-1,1,\frac{1}{2},\frac{1}{3}}(x)
\nonumber\\ &&
   +H_{\frac{1}{2},\frac{1}{2},-1,1,\frac{1}{3}}(x)
\nonumber\\
&& {\tt RemoveTrailingIndex[H[-1, 1, 1/3, 1/2, b, x],b]} = \nonumber\\
&& H_{-1,1,\frac{1}{3},\frac{1}{2}}(x)
   H_b(x)-H_{-1,1,\frac{1}{3},b,\frac{1}{2}}(x)-H_{-1,1,b,\frac{1}{3},\frac{
   1}{2}}(x)-H_{-1,b,1,\frac{1}{3},\frac{1}{2}}(x)-H_{b,-1,1,\frac{1}{3},
   \frac{1}{2}}(x)
\nonumber
\end{eqnarray}

\medskip

\noindent
{\tt SRemoveLeading1[S[$a_1$,...,$a_m$,\{$x_1$,...,$x_m$\},$n$]]:} extracts leading ones of
$\S{a_1,\ldots,a_m}{x_1,\ldots,x_m;n}$.

\smallskip

\noindent
{Example:}
\begin{eqnarray}
&& {\tt SRemoveLeading1[S[1, 1, 1, -2, 4, \{1, 1/3, 1/2, 1/2, 1/2\}, n]]} =\nonumber\\
&&S_1(n)
   S_{1,1,-2,4}\left(\frac{1}{3},\frac{1}{2},\frac{1}{2},\frac{1}{2};n\right)
   +S_{1,1,-2,5}\left(\frac{1}{3},\frac{1}{2},\frac{1}{2},\frac{1}{2};n\right
   )+S_{1,1,-1,4}\left(\frac{1}{3},\frac{1}{2},\frac{1}{2},\frac{1}{2};n\right)
\nonumber\\
&&
   +S_{1,2,-2,4}\left(\frac{1}{3},\frac{1}{2},\frac{1}{2},\frac{1}{2};n\right)
   +S_{2,1,-2,4}\left(\frac{1}{3},\frac{1}{2},\frac{1}{2},\frac{1}{2};n\right)
   -S_{1,1,-2,1,4}\left(\frac{1}{3},\frac{1}{2},\frac{1}{2},1,\frac{1}{2}
   ;n\right)
\nonumber\\
&&
-S_{1,1,-2,4,1}\left(\frac{1}{3},\frac{1}{2},\frac{1}{2},\frac{1}
   {2},1;n\right)-S_{1,1,1,-2,4}\left(\frac{1}{3},\frac{1}{2},1,\frac{1}{2},\frac{1}{2};n\right)
\nonumber\\
&&
-S_{1,1,1,-2,4}\left(\frac{1}{3},1,\frac{1}{2},\frac{1}{2},\frac{1}{2};n\right)
\nonumber
\end{eqnarray}

\medskip

\noindent
{\tt SRemoveTrailing1[S[$a_1$,...,$a_m$,\{$x_1$,...,$x_m$\},$n$]]:} extracts trailing ones of
$\S{a_1,\ldots,a_m}{x_1,\ldots,x_m;n}$.

\smallskip

\noindent
{Example:}
\begin{eqnarray}
&& {\tt SRemoveTrailing1[S[-2, 4, 1, 1, 1, \{1/3, 1/2, 1/2, -1/2, 1\}, n]]} = \nonumber\\
&&
   S_1(n) S_{-2,4,1,1}\left(\frac{1}{3},\frac{1}{2},\frac{1}{2},-\frac{1}{2};n\right)
   +S_{-2,4,1,2}\left(\frac{1}{3},\frac{1}{2},\frac{1}{2},-\frac{1}{2};n\right)
   +S_{-2,4,2,1}\left(\frac{1}{3},\frac{1}{2},\frac{1}{2},-\frac{1}{2};n\right)
\nonumber\\ &&
   +S_{-2,5,1,1}\left(\frac{1}{3},\frac{1}{2},\frac{1}{2},-\frac{1}{2};n\right)
   +S_{-1,4,1,1}\left(\frac{1}{3},\frac{1}{2},\frac{1}{2},-\frac{1}{2};n\right)
   -S_{-2,1,4,1,1}\left(\frac{1}{3},1,\frac{1}{2},\frac{1}{2},-\frac{1}{2};n\right)
\nonumber\\ &&
   -S_{-2,4,1,1,1}\left(\frac{1}{3},\frac{1}{2},\frac{1}{2},1,-\frac{1}{2};n\right)
   -S_{-2,4,1,1,1}\left(\frac{1}{3},\frac{1}{2},1,\frac{1}{2},-\frac{1}{2};n\right)
\nonumber\\ &&
   -S_{1,-2,4,1,1}\left(1,\frac{1}{3},\frac{1}{2},\frac{1}{2},-\frac{1}{2};n\right)
\nonumber
\end{eqnarray}

\medskip

\noindent
{\tt SRemoveTrailingIndex[S[$a_1$,...,$a_m$,\{$x_1$,...,$x_m$\},$n$]]:}
extracts the trailing index $(a_m,x_m)$ of
$\S{a_1,\ldots,a_m}{x_1,\ldots,x_m;n}$.

\smallskip

\noindent
{Examples~:}
\begin{eqnarray}
&&{\tt SRemoveTrailingIndex[S[a_1,...,a_m,\{x_1,...,x_m\},n]]}  = \nonumber\\
&&
S_{a_3}\left(x_3; n\right) S_{a_1, a_2}\left(x_1, x_2; n\right)
+ S_{a_1, a_2 + a_3}\left(x_1, x_2 x_3, n\right)
+ S_{a_1 + a_3, a_2}\left(x_1 x_3, x_2; n\right)
\nonumber \\ &&
- S_{a_1, a_3, a_2} \left(x_1, x_3, x_2; n\right)
- S_{a_3, a_1, a_2} \left(x_3, x_1, x_2; n\right)
\nonumber
\end{eqnarray}

\medskip

\noindent
{\tt TransformH[H[$m_1$,...,$m_n$,$y$],$x$]:}
performs several transforms on the argument of a generalized
polyloarithm.

\smallskip

\noindent
{Examples~:}
\begin{eqnarray}
&& {\tt TransformH[H[-1, 2, 0, 1/x], x]} = \nonumber \\
&&
\left(H_{0,\frac{1}{2}}(1)+H_{0,2}(1)\right)
   \left(-H_{-1}(x)+H_0(x)+\ln(2)\right)
   -H_{-1,0,0}(x)
   -H_{-1,\frac{1}{2},0}(x)
   +H_{0,0,0}(x)
   \nonumber\\ &&
   +H_{0,\frac{1}{2},0}(x)
   +H_{-1,\frac{1}{2},0}(1)
   +H_{-1,2,0}(1)-H_{0,\frac{1}{2},0}(1)+\frac{3 \zeta_3}{4}
\nonumber \\
&& {\tt TransformH[H[-1, 2, 0, 1-x], x]} =
-H_{2,0}(1)H_2(x)-H_{2,-1,1}(x)+H_{-1,2,0}(1)
\nonumber\\
&& {\tt TransformH[H[-1, 2, 0,  (1-x)/(1+x)],x]} = \nonumber \\
&& H_{2,0}(1) \left(-H_{-1}(x)\right)+H_{-1,-1,-1}(x)+H_{-1,-1,1}(x)-H_{-1,-\frac{1}{3},-1}(x)-H_{-1,-\frac{1}{3},1}(x)+H_{-1,2,0}(1)
\nonumber
\end{eqnarray}

\medskip

\noindent
{\tt HarmonicSumsSeries[$expr, \{x, c, ord\}$]:}
tries to compute the series expansion of $expr$ with respect to
$x$ about $c\in\R\cup\infty$ up to order $ord\in\N$. $expr$ may contain harmonic sums,
cyclotomic sums, and their generalizations as well as their associated
polylogarithms.

\smallskip

\noindent
Examples~:
\begin{eqnarray}
&& {\tt HarmonicSumsSeries[H[-2/5, 3/24, 0, 1/4, x], \{x, 0, 10\}]} = \nonumber\\
&&
   \frac{40x^3}{3}
   +35 x^4
   +\frac{2026 x^5}{9}
   +\frac{20081 x^6}{18}
   +\frac{12632239 x^7}{1890}
   +\frac{1205544979 x^8}{30240}
   +\frac{119322150247 x^9}{476280}
\nonumber\\ &&
   +\frac{188972192341 x^{10}}{117600}
\nonumber\\
&& {\tt HarmonicSumsSeries[S[1, -2, 1, n], \{n, 0, 4\}]} = \nonumber\\
&&
 n  \left(-\frac{\zeta_2 \zeta_3}{2}
   -\frac{5 \zeta_5}{8}\right)
+n^2 \left(\frac{41
   \zeta_2^3}{168}+\frac{\zeta_3^2}{4}\right)
   +n^3
   \left(S_{-5,-2}(\infty )+S_{-4,-3}(\infty )-S_{-4,1,-2}(\infty
   )\right.
\nonumber\\ && \left.
+\frac{\zeta_2^2 \zeta_3}{32}-\frac{29 \zeta_2
   \zeta_5}{32}-\zeta_7\right)
+n^4 \left(-S_{-6,-2}(\infty )-S_{-5,-3}(\infty )+S_{-5,1,-2}(\infty
   )-\frac{31 \ln(2) \zeta_2 \zeta_5}{16} \right.
\nonumber\\ && \left.
+S_8(\infty )+s_6
   \zeta_2+\frac{23 \zeta_2^4}{70}-\frac{3 \zeta_2
   \zeta_3^2}{4}+\frac{155 \zeta_3 \zeta_5}{128}\right)
\nonumber \\
&& {\tt HarmonicSumsSeries[S[1, -2, 1, n], \{n, Infinity, 10\}]} = \nonumber\\
&&\frac{1}{8} \Biggl[(-1)^n
   \left(
 \frac{2}{n^3}
-\frac{5}{n^4}
+\frac{4}{n^5}
+\frac{7}{n^6}
-\frac{16}{n^7}
-\frac{27}{n^8}
+\frac{108}{n^9}
+\frac{187}{n^{10}}
\right)
   \bar{\ln}(n)
\nonumber\\ &&
+ \zeta_3 \left(
-\frac{5}{2 n}
+\frac{5}{12 n^2}
-\frac{1}{24 n^4}
+\frac{5}{252 n^6}
-\frac{1}{48 n^8}
+\frac{5}{132 n^{10}}\right)
\nonumber\\ &&
-\frac{3 \zeta_2^2}{5}
+(-1)^n
   \left(
+\frac{3}{n^4}
-\frac{20}{3 n^5}
-\frac{19}{12 n^6}
+\frac{1511}{60 n^7}
+\frac{59}{8 n^8}
-\frac{118529}{630 n^9}
-\frac{130213}{2520 n^{10}}
\right)
-5 \zeta_3 \bar{\ln}({n})\Biggr] + O\left(\frac{1}{n^{10}}\right)
\nonumber \\
&&
{\tt HarmonicSumsSeries[S[1, -2, 1, \{1/2, 1/3, -1\}, n], \{n, Infinity, 3\}]} =
\nonumber\\
&&
\frac{3}{2} S_{1,1}\left(-\frac{1}{6},2;\infty \right)
-\frac{3}{2} S_{1,1}\left(-\frac{1}{6},6;\infty \right)
+\frac{3}{2} \ln(2) S_1\left(-\frac{1}{3};\infty \right)
-\frac{3}{2} \ln(2) S_1\left(-\frac{1}{6};\infty \right)
\nonumber\\ &&
+\frac{3}{2} \ln(2)S_1\left(\frac{1}{6};\infty \right)
-\frac{9\ 2^{-n}S_1\left(-\frac{1}{3};\infty \right)}{n^3}
+\frac{3\ 2^{-n} S_1\left(-\frac{1}{3};\infty \right)}{n^2}
-\frac{3\ 2^{-n-1} S_1\left(-\frac{1}{3};\infty \right)}{n}
\nonumber\\ &&
-\frac{3}{2}S_1\left(-\frac{1}{3};\infty \right) S_1\left(-\frac{1}{6};\infty
   \right)
-\frac{1011}{800} S_1\left(-\frac{1}{6};\infty \right)
-\frac{3}{2}
   S_2\left(-\frac{1}{3};\infty \right)
-\frac{7 \ln(2) 2^{-n}
   3^{2-n}}{125 n^3}
\nonumber\\ &&
+\frac{\ln(2) 2^{-n} 3^{2-n}}{25 n^2}-\frac{7}{25}
   \ln(2) 2^{-n-1} 3^{1-n}-\frac{1}{5} \ln(2) 2^{-n-1} 3^{-n}
   n-\frac{\ln(2) 2^{-n-1} 3^{1-n}}{5 n}-\frac{21
   \ln(2)}{32}
\nonumber\\ &&
+\frac{79 \left(-\frac{1}{3}\right)^n 2^{-n-5}}{343
   n^3}+\frac{63\ 2^{-n-4}}{n^3}-\frac{21\ 2^{-n-4}}{n^2}-\frac{(-1)^n
   2^{-n-4} 3^{1-n}}{49 n^2}+\frac{1}{7} \left(-\frac{1}{3}\right)^n
   2^{-n-3}+\frac{\left(-\frac{1}{3}\right)^n 2^{-n-5}}{7 n}
\nonumber\\ &&
+\frac{21\
   2^{-n-5}}{n}-\frac{3 \zeta_2}{4}+\frac{9}{280} + O\left(\frac{1}{n^4}\right),
\nonumber
\end{eqnarray}

\noindent
with $\bar{\ln}(b) = \ln(n) + \gamma_E$.\\

\medskip

\noindent
{\tt ComputeSSumBasis[$w,\{x_1,x_2,...\},n$]:} computes a basis representation of
$S$-sums with indices $x_i$
at weight $w$ using algebraic, differential and half integer relations.

\smallskip

\noindent
Example:
\begin{eqnarray}
&&{\tt ComputeSSumBasis[2, \{1, 1/2, 1/3\}, n]} = \nonumber\\
&&\Biggl\{\Biggl\{S_{-2}(n),
                  S_2(n),
                  S_{-1,1}(n),
                  S_2\left(\frac{1}{3};n\right),
                  S_2\left(\frac{1}{2};n\right),
                  S_{1,1}\left(1,\frac{1}{3};n\right),
                  S_{1,1}\left(1,\frac{1}{2};n\right)\Biggr\},
\nonumber\\ &&
   \Biggl\{S_{1,-1}(n)\to
   -S_{-1,1}(n)+S_{-2}(n)+S_{-1}(n) S_1(n),
\nonumber\\ &&
S_{1,1}(n)\to \frac{1}{2}
   S_1(n){}^2+\frac{S_2(n)}{2},
\nonumber\\ &&
S_{1,1}\left(\frac{1}{3},1;n\right)\to
   -S_{1,1}\left(1,\frac{1}{3};n\right)+S_1(n)
   S_1\left(\frac{1}{3};n\right)+S_2\left(\frac{1}{3};n\right),
\nonumber\\ &&
S_{1,1}\left(
   \frac{1}{2},1;n\right)\to -S_{1,1}\left(1,\frac{1}{2};n\right)+S_1(n)
   S_1\left(\frac{1}{2};n\right)+S_2\left(\frac{1}{2};n\right),
\nonumber\\ &&
S_{-1,-1}(n)
   \to \frac{1}{2} S_{-1}(n){}^2+\frac{S_2(n)}{2}\Biggr\}\Biggr\}
\nonumber
\end{eqnarray}

\medskip

\noindent
{\tt ComputeHLogBasis[$w,x,$Alphabet $\rightarrow\{a_1,...,a_k\}$]:} computes a basis representation of
generalized polylogarithms with indices $a_i$ at weight $w$ using algebraic relations.

\smallskip

\noindent
Example:
\begin{eqnarray}
&&{\tt ComputeHLogBasis[2,x,Alphabet \rightarrow \{1/2,-1/3,1\}]} =
\nonumber\\ &&
\Biggl\{\Biggl\{H_{-1,-\frac{1}{3}}(x),H_{-1,\frac{1}{2}}(x),H_{-1,1}(x),
H_{-\frac{1}{3},\frac{1}{2}}(x),H_{-\frac{1}{3},1}(x),
H_{\frac{1}{2},1}(x)\Biggr\},
\nonumber\\ &&
\Biggl\{H_{1,1}(x)\to \frac{1}{2} H_1(x){}^2,
   \nonumber\\ &&
   H_{1,\frac{1}{2}}(x)\to
   H_{\frac{1}{2}}(x) H_1(x)-H_{\frac{1}{2},1}(x),
   \nonumber\\ &&
   H_{1,-\frac{1}{3}}(x)\to
   H_{-\frac{1}{3}}(x) H_1(x)-H_{-\frac{1}{3},1}(x),
   \nonumber\\ &&
   H_{1,-1}(x)\to H_{-1}(x)
   H_1(x)-H_{-1,1}(x),
   \nonumber\\ &&
   H_{\frac{1}{2},\frac{1}{2}}(x)\to \frac{1}{2}
   H_{\frac{1}{2}}(x){}^2,
   \nonumber\\ &&
   H_{\frac{1}{2},-\frac{1}{3}}(x)\to
   H_{-\frac{1}{3}}(x)
   H_{\frac{1}{2}}(x)-H_{-\frac{1}{3},\frac{1}{2}}(x),
   \nonumber\\ &&
   H_{\frac{1}{2},-1}(x) \to H_{-1}(x)
   H_{\frac{1}{2}}(x)-H_{-1,\frac{1}{2}}(x),
   \nonumber\\ &&
   H_{-\frac{1}{3},-\frac{1}{3}}(x) \to \frac{1}{2}
   H_{-\frac{1}{3}}(x){}^2,
   \nonumber\\ &&
   H_{-\frac{1}{3},-1}(x)\to
   H_{-1}(x) H_{-\frac{1}{3}}(x)-H_{-1,-\frac{1}{3}}(x),
   \nonumber\\ &&
   H_{-1,-1}(x)\to
   \frac{1}{2} H_{-1}(x){}^2\Biggr\}\Biggr\}
\nonumber
\end{eqnarray}

\medskip

\noindent
{\tt ComputeSSumInfBasis[$w,\{x_1,..,x_k\}$]:}\\ computes relations of $S$-sums at
infinity and weight $w$ with
indices in $\{x_1,...x_k\}$.

\smallskip

\noindent
Example:
\begin{eqnarray}
&& {\tt ComputeSSumInfBasis[2, \{1/2, -1, -1/3, 1\}]} = \nonumber\\
&& \Biggl\{\Biggl\{S_{-2}(\infty ),S_2\left(-\frac{1}{3};\infty \right)\Biggr\},
\nonumber\\ &&
\Biggl\{S_{1,1}\left(1,\frac{1}{2};\infty \right)\to
   S_1(\infty ) S_1\left(\frac{1}{2};\infty \right)-\frac{1}{2}
   S_{-1}(\infty ){}^2,
\nonumber\\ &&
S_{1,1}\left(1,-\frac{1}{3};\infty \right)\to
   S_1(\infty ) S_1\left(-\frac{1}{3};\infty \right)-\frac{1}{2}
   S_1\left(-\frac{1}{3};\infty \right){}^2,
\nonumber\\ &&
S_{1,1}(\infty )\to \frac{1}{2}
   S_1(\infty ){}^2-S_{-2}(\infty ),
\nonumber\\ &&
S_{1,-1}(\infty )\to S_{-1}(\infty )
   S_1(\infty )-\frac{1}{2} S_{-1}(\infty
   ){}^2,
\nonumber\\ &&
S_{1,1}\left(\frac{1}{2},1;\infty \right)\to -S_{-2}(\infty
   ),
\nonumber\\ &&
S_{1,1}\left(-\frac{1}{3},1;\infty \right)\to \frac{1}{2}
   S_1\left(-\frac{1}{3};\infty \right){}^2+S_2\left(-\frac{1}{3};\infty
   \right),
\nonumber\\ &&
S_{-1,1}(\infty )\to \frac{1}{2} S_{-1}(\infty
   ){}^2+S_{-2}(\infty ),
\nonumber\\ &&
S_{-1,-1}(\infty )\to \frac{1}{2} S_{-1}(\infty
   ){}^2-S_{-2}(\infty ),
\nonumber\\ &&
S_2\left(\frac{1}{2};\infty \right)\to -\frac{1}{2}
   S_{-1}(\infty ){}^2-S_{-2}(\infty ),
\nonumber\\ &&
S_2(\infty )\to -2 S_{-2}(\infty
   )\Biggr\}\Biggr\}
\nonumber
\end{eqnarray}

\medskip

\noindent
{\tt ReduceConstants[$expr$]:} reduces\label{Equ:ReduceConstants} (as much as possible) occurring harmonic sums,
$S$-sums or cyclotomic sums
at infinity and harmonic polylogarithms, generalized polylogarithms and cyclotomic
polylogarithms at constants to a set of basis constants using given tables. Via
the option {\tt ToKnownConstants $\rightarrow$ True} a special set of basis constants is used, \ie $\log(2),\zeta_2,\zeta_3,\ldots$
For the multiple zeta values we use the basis
given before in \cite{Vermaseren:1998uu} for weights up to {\sf w = 6}. The
relations were calculated newly and agree with those in \cite{Vermaseren:1998uu}.
Extended tables (for higher weights) both on the harmonic sums at $N \rightarrow \infty$ and the
the harmonic polylogarithms at $x = 1$ were given in association with Refs.~\cite{
Vermaseren:1998uu,Remiddi:1999ew,Blumlein:2009cf} and are available using {\tt
FORM} \cite{Vermaseren:2000nd} codes. It is straightforwardly possible to generate
more extensive tables for special numbers as harmonic sums, cyclotomic
harmonic sums and classes of generalized harmonic sums at infinity and likewise
special values of the corresponding harmonic polylogarithms at a given argument
over respective bases using {\tt HarmonicSums}.

\smallskip

\noindent
Example:
\begin{eqnarray}
&& {\tt ReduceConstants[S[4, 1, -1, Infinity], ToKnownConstants \rightarrow True]}
= \nonumber\\ &&
 \text{Li}_4\left(\frac{1}{2}\right) \zeta_2
+\frac{\ln(2)^4 \zeta_2}{24}
-\frac{5 \ln(2)^2 \zeta_2^2}{8}
+\frac{3 \ln(2) \zeta_2 \zeta_3}{2}
+\frac{93 \ln(2) \zeta_5}{32}
-\frac{5 \text{s}_6}{2}
-\frac{149 \zeta_2^3}{168}
+\frac{49 \zeta_3^2}{64}
\nonumber
\end{eqnarray}

\noindent
Also tables of sums and polylogarithms at general argument are available.
The harmonic sums and polylogarithms are tabulated up to $w = 6$, for the
cyclotomic harmonic sums over the alphabet $\{\pm 1)^i/i^k,(\pm 1)^i/(2i+1)^k\}$
and the alphabet for the special generalized sums discussed in the present paper
both to weight $w = 4$. Depending on
the weight and length of the alphabet the corresponding calculations may become
more demanding. The corresponding tables of relations grow correspondingly.
If the option {\tt Dynamic $\rightarrow$ True} is set, the reduction is
calculated online. This will help if certain sums are not stored in the tables
yet.

\medskip

\noindent
{\tt HToSinf[$expr$]:} transforms generalized polylogarithms and cyclotomic
polylogarithms at 1 to $S$-sums and
cyclotomic sums at infinity.

\smallskip

\noindent
Example:
\begin{eqnarray}
&& {\tt HToSinf[H[1,2,-4,3,1]]} = \nonumber\\ &&
-S_{1,3}\left(1,\frac{1}{3};\infty
   \right)-S_{2,2}\left(\frac{1}{2},\frac{2}{3};\infty
   \right)-S_{3,1}\left(-\frac{1}{4},-\frac{4}{3};\infty
   \right)+S_{1,1,2}\left(1,\frac{1}{2},\frac{2}{3};\infty
   \right)
\nonumber\\ &&
+S_{1,2,1}\left(1,-\frac{1}{4},-\frac{4}{3};\infty
   \right)
+S_{2,1,1}\left(\frac{1}{2},-\frac{1}{2},-\frac{4}{3};\infty
   \right)-S_{1,1,1,1}\left(1,\frac{1}{2},-\frac{1}{2},-\frac{4}{3};\infty
   \right)+S_4\left(\frac{1}{3};\infty \right)
\nonumber
\end{eqnarray}

\medskip

\noindent
{\tt SinfToH[$expr$]:} transforms $S$-sums and cyclotomic sums at infinity to
generalized polylogarithms and
cyclotomic polylogarithms at 1.

\smallskip

\noindent
Example:
\begin{eqnarray}
&& {\tt SinfToH[S[1, 2, 3, \{1/2, -1/3, 1/4\}, Infinity]} = \nonumber\\ &&
H_{0,0,-6,0,0,-24}(1)-H_{0,0,0,0,0,-24}(1)+H_{2,0,-6,0,0,-24}(1)-H_{2,0,0,0,
   0,-24}(1)
\nonumber
\end{eqnarray}

\medskip

\noindent
{\tt HToS[$expr$]:} computes the power series expansion of generalized polylogarithms and cyclotomic
polylogarithms without trailing zeroes.

\smallskip

\noindent
Example:
\begin{eqnarray}
&&{\tt HToS[H[1, 2, -4, 3, x]} = \nonumber
\\
&&
-\sum_{\tau_1 = 1}^\infty
\frac{2^{-\tau _1} x^{\tau _1}
   S_2\left(\frac{2}{3};\tau _1\right)}{\tau _1^2}
+
\sum_{\tau_1 = 1}^\infty \frac{2^{-\tau _1} x^{\tau
   _1} S_{1,1}\left(-\frac{1}{2},-\frac{4}{3};\tau _1\right)}{\tau
   _1^2}
-\sum_{\tau_1 = 1}^\infty
\frac{x^{\tau _1}
   S_3\left(\frac{1}{3};\tau _1\right)}{\tau _1}
\nonumber\\ &&
+\sum_{\tau_1 = 1}^\infty \frac{x^{\tau _1}
   S_{1,2}\left(\frac{1}{2},\frac{2}{3};\tau _1\right)}{\tau _1}
+\sum_{\tau_1 = 1}^\infty \frac{x^{\tau _1}
   S_{2,1}\left(-\frac{1}{4},-\frac{4}{3};\tau _1\right)}{\tau
   _1}
-\sum_{\tau_1 = 1}^\infty \frac{x^{\tau _1}
   S_{1,1,1}\left(\frac{1}{2},-\frac{1}{2},-\frac{4}{3};\tau _1\right)}{\tau
   _1}
\nonumber\\ &&
-\sum_{\tau_1 = 1}^\infty \frac{4^{-\tau _1} (-x)^{\tau
   _1} S_1\left(-\frac{4}{3};\tau _1\right)}{\tau _1^3}
+\sum_{\tau_1 = 1}^\infty\frac{3^{-\tau
   _1} x^{\tau _1}}{\tau _1^4}
\nonumber
\end{eqnarray}

\medskip

\noindent
{\tt SToH[$expr$]:} takes (part of) a power series expansion of a (generalized/cyclotomic) polylogarithm and
finds the (generalized/cyclotomic) polylogratithm(s) it originates form. It is the inverse of {\tt HToS}.

\smallskip

\noindent
Example:

\begin{eqnarray}
&& {\tt SToH\left[\sum_{\tau_1=1}^\infty
4^{-\tau _1} (-x)^{\tau _1}
   S_1\left(-\frac{4}{3};\tau _1\right)\frac{1}{\tau _1^3}\right]} = -H[0, 0, -4,
3, x] + H[0, 0, 0, 3, x] \nonumber
\end{eqnarray}

\medskip

\vspace*{5mm}\noindent
{\bf Acknowledgment.}~This work has been supported in part by DFG
Sonderforschungsbereich Transregio 9, Computergest\"utzte Theoretische
Teilchenphysik, by the Austrian Science Fund (FWF) grant P20347-N18,
and by the EU Network {\sf LHCPHENOnet} PITN-GA-2010-264564.

\let\oldbibliography\thebibliography
\renewcommand{\thebibliography}[1]{%
  \oldbibliography{#1}%
  \setlength{\itemsep}{0pt}%
}

\end{document}